\documentclass{article}
\usepackage{graphicx} 
\usepackage{makecell}
\usepackage{fullpage}
\usepackage{amssymb}
\usepackage{amsmath}
\usepackage{amsthm}
\usepackage{multirow}
\usepackage{enumerate}
\usepackage{graphicx}
\usepackage{mathrsfs}
\usepackage[utf8]{inputenc}

\usepackage{float}
\usepackage{pdfpages}
\usepackage{listings}

\usepackage[T1]{fontenc}  
\usepackage[inline]{enumitem}  
\usepackage{lmodern}  
\usepackage{braket}

\makeatletter%
\newsavebox{\Exbox}\newlength{\len}%
{\end{itemize}}  
\makeatother%

\setlist[description]{font=\normalfont\itshape\space}
\newlength{\jeroenlen}
\newenvironment{example}
{\settowidth{\jeroenlen}{\textit{Remarks.}}%
	\begin{description}[leftmargin=\jeroenlen,labelwidth=0pt,labelsep=0pt]
		\item[\textit{Remarks.\qquad}]%
		\begin{enumerate}[leftmargin=0em,labelsep=0.5em]}
		{\end{enumerate}\end{description}}

\usepackage{color}
\usepackage[pdfstartview=FitH,pdfpagemode=UseNone,colorlinks=true,citecolor=blue,linkcolor=blue,backref=page]{hyperref}

\usepackage{cleveref}
\usepackage{mathtools}
\usepackage{relsize}
\usepackage{amsfonts}
\usepackage[T1]{fontenc}
\usepackage{mathpazo}
\usepackage{bm}
\usepackage{changepage}
\usepackage{tcolorbox}
\usepackage{tikz}
\usepackage{thmtools}
\usepackage{thm-restate}
\usepackage{subcaption}
\usepackage{enumitem}
\usepackage{algorithm}
\usepackage{algcompatible}

\usepackage{dirtytalk}
\usepackage[thinc]{esdiff}
\usepackage[numbers]{natbib}
\usepackage{bbm}

\usepackage{tikz}
\usetikzlibrary{positioning}
\usetikzlibrary{calc}
\usetikzlibrary{decorations.shapes}
\usetikzlibrary{decorations.pathreplacing,angles,quotes}
\usepackage{todonotes}
\usepackage{BOONDOX-cal} 
\usepackage{textcomp}
\usepackage{blkarray}

\allowdisplaybreaks

\definecolor{blueviolet}{RGB}{60,50,200}
\definecolor{oliveg}{RGB}{40,200,30}
\hypersetup{colorlinks=true,       
    linkcolor=blueviolet,
    citecolor=oliveg,
}

\newtheorem{theorem}{Theorem}
\theoremstyle{definition}
\newtheorem{definition}{Definition}[section]
\newtheorem{problem}{Problem}[section]
\newtheorem{proposition}{Proposition}[section] 
\newtheorem{lemma}{Lemma}[section]

\newtheorem{corollary}{Corollary}[section]
\newtheorem{claim}{Claim}[section]

\newtheorem{observation}{Observation}[section]



\algrenewcommand\alglinenumber[1]{\footnotesize #1.}

\newcommand{\veca}{{\mathbf{a}}}
\newcommand{\vecb}{{\mathbf{b}}}
\newcommand{\vecc}{{\mathbf{c}}}

\newcommand{\vece}{{\mathbf{e}}}

\newcommand{\vecg}{{\mathbf{g}}}
\newcommand{\vech}{{\mathbf{h}}}

\newcommand{\vecs}{{\mathbf{s}}}

\newcommand{\vecu}{{\mathbf{u}}}
\newcommand{\vecv}{{\mathbf{v}}}
\newcommand{\vecw}{{\mathbf{w}}}
\newcommand{\vecx}{{\mathbf{x}}}
\newcommand{\vecy}{{\mathbf{y}}}
\newcommand{\vecz}{{\mathbf{z}}}
\newcommand{\veczero}{{\mathbf{0}}}
\newcommand{\vecone}{{\mathbf{1}}}

\newcommand{\B}{\mathbb{B}}
\newcommand{\C}{\mathbb{C}}
\newcommand{\F}{\mathbb{F}}

\newcommand{\N}{\mathbb{N}}
\newcommand{\Q}{\mathbb{Q}}
\newcommand{\R}{\mathbb{R}}

\newcommand{\Z}{\mathbb{Z}}

\newcommand{\cF}{\mathcal{F}}

\newcommand{\NP}{\mathsf{NP}}

\newcommand{\GL}{\mathrm{GL}}

\makeatletter
\newcommand\footnoteref[1]{\protected@xdef\@thefnmark{\ref{#1}}\@footnotemark}
\makeatother

\definecolor{DSgray}{cmyk}{0,0,0,0.7}

\definecolor{anti-flashwhite}{rgb}{0.95, 0.95, 0.96}
\definecolor{britishracinggreen}{rgb}{0.0, 0.26, 0.15}
\definecolor{cornellred}{rgb}{0.7, 0.11, 0.11}
\definecolor{cream}{rgb}{1.0, 0.99, 0.82}
\definecolor{lavender}{rgb}{0.9, 0.9, 0.98}
\definecolor{midnightblue}{rgb}{0.1, 0.1, 0.44}	
\definecolor{deepsaffron}{rgb}{1.0, 0.6, 0.2}
\definecolor{piggypink}{rgb}{0.99, 0.87, 0.9}
\definecolor{dollarbill}{rgb}{0.52, 0.73, 0.4}
\definecolor{crimsonred}{rgb}{0.6, 0.0, 0.0}
\definecolor{persiangreen}{rgb}{0, 0.5, 0.0}
\definecolor{celadon}{rgb}{0.87, 0.99, 0.8}
\definecolor{dukeblue}{rgb}{0.53, 0.0, 0.69}
\definecolor{sealbrown}{rgb}{0.44, 0.28, 0.1}
\definecolor{wheat}{rgb}{0.94, 0.92, 0.84}	
\definecolor{patriarch}{rgb}{0.03, 0.57, 0.82}
\definecolor{cerise}{rgb}{0.96, 0.0, 0.63}
\definecolor{scarlet}{rgb}{0.96, 0.0, 0.63}
\definecolor{almond}{rgb}{0.94, 0.87, 0.8}
\definecolor{aliceblue}{rgb}{0.94, 0.97, 1.0}

\title{Testing Equivalence to the Hamiltonian Cycle Polynomial}
\author{{Agrim Dewan}\\
		\normalsize{Indian Institute of Science}\\
	    \normalsize{\tt{agrimdewan@iisc.ac.in}}}
\date{}
\begin{document}
\maketitle
\begin{abstract}
    The Hamiltonian Cycle polynomial, denoted as $HC_n$, is defined to be the sum of the weighted Hamiltonian Cycles in an $n$-vertex complete digraph, with vertices labeled $1$ to $n$ and edges weighted by formal variables $x_{i,j}$. The Permanent and $HC$, defined as the family $\{HC_n | \ n \geq 1\}$, were studied by Valiant (STOC 1979), with the former shown to be VNP-complete over all fields of characteristic other than $2$, and the latter to be VNP-complete over \emph{every} field. Since its introduction, $HC$ has been studied from the perspective of circuit lower bounds by Jerrum-Snir (JACM 1982), determinantal complexity by Huttenhain-Ikenmeyer (LAA 2016), and its connection with the Permanent and the Determinant polynomials by Goulden-Jackson (EJC 1981) and Grochow (ToC 2017). It has been the most prominent choice for generalising results to all fields, such as in Malod (CCC 2007) and Grochow-Mulmuley-Qiao (ICALP 2016), owing to its VNP-completeness over every field. Hrubes (ToCT, 2016) showed the VNP-completeness of many graph-based polynomial families over every field by using $HC$.
    
    In Kayal (STOC 2012), a randomised polynomial time algorithm was given for the following problem: Given an $n^2$-variate degree-$n$ polynomial $f(\vecx) \in \F[\vecx]$ as a black box, decide if there exists $A \in \GL_{n^2}(\F)$ such that $f(\vecx) = Perm_n(A\vecx)$. Here, the Permanent polynomial $Perm_n$  computes the permanent of the $n \times n$ symbolic matrix $(x_{i,j})$. This problem is known as testing equivalence to the Permanent, or alternatively, ET for Permanent. 
    
    In this work, we study ET for $HC$. While both families are VNP-complete, the efficient ET algorithm for Permanent does not imply the same for $HC$.  Besides, there are crucial differences between the two polynomials that make studying the complexity of ET for $HC$ interesting: The underlying decision problem corresponding to the Permanent is in P (detecting perfect matchings in a bipartite graph), but that for $HC$ (detecting  Hamiltonian cycles in a digraph) is NP-complete. The Permanent polynomial is known to be characterised by its symmetries as shown by Mulmuley-Sohoni (SIAM J. Computing, 2001). This property yields an \emph{efficient} algorithm for the circuit-testing problem for the Permanent, a special case of ET for the Permanent, in which we check whether a given circuit computes the Permanent. In contrast, we show $HC_n$ is \emph{not} characterised by its symmetries. 
    
    In this work, we give a randomised polynomial time ET algorithm for $HC$ with mild constraints on the underlying field. The algorithm is obtained by studying and completely characterising the Lie algebra and the symmetries of $HC_n$. We show that, like the Permanent polynomial, the symmetries of $HC_n$ are generated by permutation and scaling matrices over large enough fields. However, we also show that, unlike the Permanent polynomial, $HC_n$ is \emph{not} characterised by its symmetries. Nevertheless, like the Permanent polynomial, $HC_n$ is downward self-reducible, as shown in Zhang-Bai (TCS 2011), which implies $HC_n$ is characterised by circuit identities and that we can efficiently test whether a given circuit $\mathrm{C}$ computes $HC_n$. We also get a Flip theorem for $HC_n$ as a result of its circuit identities.
\end{abstract}
\clearpage
\newpage
\setcounter{tocdepth}{2}
\tableofcontents
\thispagestyle{empty}
\newpage
\setcounter{page}{1}

\section{Introduction} \label{Section: Introduction}
Two $n$-variate polynomials $f,g \in \F[\vecy]$ are \emph{equivalent}, denoted by $f \sim g$, if $f(\vecy) = g(A\vecy + \vecb)$ for some invertible $A \in \GL_{n}(\F)$ and $\vecb \in \F^{n}$. Clearly, the relation $f \sim g$ is an equivalence relation. The Polynomial Equivalence problem (PE) involves deciding for two given polynomials $f$ and $g$, if $f \sim g$ holds or not. Here, the inputs are assumed to be given as lists of coefficients. On one hand, PE is known to be at least as hard as Graph Isomorphism \cite{AgrawalS05,Kayal11}. On the other hand, the known upper bound on its complexity depends on the underlying field. Over finite fields, PE is in NP $\cap$ co-AM and hence unlikely to be NP-complete, while over $\Q$ we do not even know if it is decidable \cite{Saxenaphd,Thierauf98}. Even when both inputs are restricted, limited results are known, see Appendix \ref{sec-PEres} for a detailed discussion. Thus, understanding the complexity of PE remains a challenge. This has led to a natural variant of PE, called \emph{equivalence testing}, being studied in the literature to make progress towards addressing this challenge.

\paragraph{Equivalence testing.} PE asks to check the equivalence of two arbitrary polynomials. In contrast, for Equivalence Testing (ET), we first fix a polynomial family $\mathcal{F}$ or a circuit\footnote{By a circuit we mean an \emph{arithmetic circuit}, unless stated otherwise. An arithmetic circuit is like a Boolean circuit except it uses $+$ or $\times$ gates in place of AND, OR, and NOT gates, and the edges are labeled by $\F$-elements. The circuit computes a polynomial over $\F$.} class $\mathcal{C}$, the goal then is to decide for a given \emph{single} polynomial $f$, whether $f \sim g$, for some $g$ in $\mathcal{F}$ or in $ \mathcal{C}$, respectively. The two versions are called ET for polynomial families and ET for circuit classes, respectively. The study of ET for polynomial families was initiated in \cite{Kayal11,Kayal12}, where efficient ET algorithms were given for the Power Symmetric and Elementary Symmetric polynomials, the Permanent and the Determinant. Since then, efficient ET algorithms have been given for various important polynomial families. The ET algorithms are efficient even if $f$ is given as a black-box\footnote{Black-box access to a polynomial $f$ means we are given oracle access to $f$. Thus, we can query $f$ at any point $\veca$ of our choosing and obtain $f(\veca)$ in unit time.} or as a circuit. The study of ET for circuit classes is more recent, with the authors of \cite{GuptaST23} showing an efficient ET for read-once formulas (ROFs). The authors of \cite{BDSS24} showed that ET for sparse polynomials (depth-2 circuits) is NP-hard. Later, ET for read-once oblivious algebraic branching programs (ROABPs) was also shown to be NP-hard \cite{RS25,BDGT24}. See Appendix \ref{sec-ETres} for more results on ET for polynomial families and circuit classes.

Among all the polynomial families for which ET has been studied so far, the Permanent (see Definition \ref{def-HCn-Perm}) is the most prominent VNP-complete family.\footnote{The classes VP and VNP are the algebraic analogues of P and NP, respectively, see \cite{Val79a,Burgisser2000} for a rigorous definition. The authors of \cite{GuptaS19} studied ET for the Nisan-Wigderson polynomial (NW), which is in VNP but not known to be VNP-complete. The Pascal determinant, see \cite{Gurvits04} and Chapter 8 in \cite{landsberg2015geometry}, is a generalisation of the Determinant and is VNP-complete. The Pascal Determinant is characterised by its continuous symmetries \cite{landsberg2015geometry}, due to which an ET algorithm for the Pascal Determinant follows almost similarly to that for the Determinant given in Corollary 4.6.4 of \cite{grochowPhD}.} An efficient ET algorithm for a VNP-complete family does not imply the same for other VNP-complete families. Since ET has been shown to be NP-hard for some natural circuit classes, a natural question arises: 
\begin{center}
 \emph{Is ET ``hard'' for some ``natural'' VNP-complete polynomial family?}    
\end{center}
The Hamiltonian Cycle polynomial family, defined below, is another natural VNP-complete family.
\begin{definition}[Permanent and Hamiltonian Cycle polynomial families] \label{def-HCn-Perm}
    Let $n \geq 1$. The Permanent polynomial $Perm_n(X_n)$ and the Hamiltonian Cycle polynomial $HC_n(X_n)$ are homogeneous degree-$n$ polynomials defined on the $n \times n$ matrix of variables $X_n = (x_{i,j})_{i,j \in [n]}$  as:
    \begin{equation}
        Perm_n(X_n) := \sum_{\sigma \in S_n} \prod_{i \in [n]} x_{i, \sigma(i)}
        \quad \text{ and }\quad
        HC_n(X_n) := \sum_{\sigma \in C_n} \prod_{i \in [n]} x_{i, \sigma(i)},
    \end{equation}
respectively, with $HC_1$ defined as the zero polynomial. Here, $S_n$ and $C_n \subsetneq S_n$ are the set of all permutations on $\{1,2\dots n\}$, and the set of all \emph{cyclic} permutations on $\{1,2\dots n\}$, respectively. We denote by $Perm$ the family $\{Perm_n | \ n \geq 1\}$ and by $HC$ the family $\{HC_n | \ n \geq 1\}$. 
\end{definition}
Treating $X_n$ as the adjacency matrix of a complete digraph $G$ with edge weights as formal variables, $HC_n(X_n)$ is the sum of the weights of all the Hamiltonian cycles of $G$. The Permanent polynomial is the sum of all the weights of perfect matchings in a bipartite graph, whose biadjacency matrix is $X_n$. Observe that though $HC_n$ is defined on $n^2$ many variables, it depends only on the off-diagonal $n^2-n$ entries, which we denote as $\vecx$. We treat $HC_n$ as an $(n^2-n)$-variate polynomial in $\vecx$ variables, and the evaluation of $HC_n$ at an $n \times n$ matrix $A$ is denoted by $HC_n(A)$, where the diagonal entries of $A$ will be ignored.

After the Permanent, $HC$ seems to be the most prominent example of a VNP-complete family. In \cite{Val79a}, $HC$ was shown to be VNP-complete over \emph{every} field, while the VNP-completeness of Permanent holds only over fields of characteristic other than $2$. The proof of VNP-completeness of $HC$ by \cite{Val79a} does \emph{not} rely on the VNP-completeness of the Permanent. In fact, $HC$ is VNP-complete even over rings \cite{Malod03}. There have been various works where $HC$ has been studied in multiple contexts, such as lower bounds \cite{JS82,Jukna15,AB87,HI16}, and its connection to the Permanent and the Determinant \cite{GJ81,Burgisser2000,Gro17}. It has been the main choice as a VNP-complete family over all fields whenever such a family was sought to generalize results to all fields \cite{Malod07,KoiranPTT15,GMQ16,IM18,IS22,DRS22}. The author of \cite{Hrubes16} used the completeness of $HC$ over every field to show the existence of more VNP-complete families over every field. See Appendix \ref{sec-HCnmisc} for a detailed discussion on works involving $HC$. Thus, $HC$ is a well-studied, natural, and fundamental VNP-complete family.

In this work, we study the Equivalence Testing problem for $HC$, defined as follows:
\begin{problem}[ET for $HC$]
    Given black box access to an $n^2-n$-variate degree-$n$ polynomial $f(\vecx)$, decide if there exists $A \in \GL_{n^2-n}(\F)$ and $\vecb \in \F^{n^2-n}$ such that $f(\vecx) = HC_n(A\vecx + \vecb)$ and return $A$ and $\vecb$ if they exist.
\end{problem}

\paragraph{ET for $HC_n$.}While both the Permanent and $HC$ are VNP-complete, the efficient ET algorithm for Permanent does not imply the same for $HC$.\footnote{The completeness of the Permanent and $HC$ is with respect to $p$-projections \cite{Val79a}, which means any polynomial $f(\vecy)$ can be obtained from $Perm_m(X)$ (or $HC_n$) by replacing each $x$ variable in $Perm_m$ by a $\vecy$ variable or a field constant. In contrast, $f \sim Perm_m$ happens via \emph{invertible} linear transforms on the variables.  Thus, $g \sim HC_n$ does \emph{not} necessarily mean that $g \sim Perm_m$ for some $m$.} It is a priori unclear why ET for $HC$ could be ``easy'' because there are crucial differences between the Permanent and $HC$. First, the complexity of the \emph{zero-testing problem} for the Permanent and $HC$ is vastly different. The zero-testing problem for a polynomial family $\cF = \{f_n\}$, where $f_n$ is $n$-variate and has integer coefficients, is the task of checking for a \emph{given} point $\veca \in \{0,1\}^{n}$, whether $f(\veca)$ is $0$ or not over $\Z$. Note that $f$ is \emph{not} given to us in any form, that is, we don't even have black-box access to $f$. Zero-testing for the Permanent lies in P because it is the task of detecting a perfect matching in a bipartite graph.\footnote{In fact, it is in quasi-NC$^2$ \cite{FennerGT21}.} Zero-testing for $HC$ is NP-complete because it amounts to detecting a Hamiltonian Cycle in a digraph.\footnote{Similarly, the Permanent has an FPRAS \cite{JerrumSV04} while no FPRAS exists for $HC$ unless NP=RP. An FPRAS for the Permanent is a randomised algorithm which takes as input a non-negative $n \times n$ matrix $A$ and a parameter $ 0 < \epsilon < 1$, and with high probability outputs a value in the range $[(1-\epsilon)Perm_n(A), (1 + \epsilon)Perm_n(A)]$. The running time is $(n\frac{1}{\epsilon})^{O(1)}$.} In both cases, the graph is given as a 0/1 matrix. Second, the Permanent polynomial is known to be characterised by its symmetries \cite{MS01}, see Definition \ref{def-char-sym}, and \cite{grochowPhD} for a proof, but we show $HC_n$ is \emph{not} characterised by its symmetries in Theorem \ref{thm-nonsymchar}. This becomes important in the context of the \emph{circuit testing problem} for a polynomial family $\cF$ which asks to check if a given circuit $\mathrm{C}$ computes $f_n$ for some $n$, i.e., is $\mathrm{C}(\vecy) = f(\vecy)$? The circuit testing problem for a polynomial family $\cF$ is a special case of ET for $\cF$, with $A$ as the identity matrix and $\vecb = \veczero$. Efficient circuit testing algorithms for the Permanent follow by the downward self-reducibility\footnote{For the Permanent, downward self-reducibility means that computing the Permanent of $n \times n$ matrix polytime Turing reduces to computing the Permanent of $n$ many $(n-1) \times (n-1)$ matrices, which follows by the cofactor expansion of the Permanent. In general, a problem is said to be downward self-reducible if solving an instance $I$ of size $n$ Turing reduces to solving instances of smaller size.} of $Perm_n$ \cite{KI04} and also due to characterisation by symmetries of $Perm_n$ \cite{MS01}, see Page 81 in \cite{grochowPhD} for another proof. Since $HC_n$ is not characterised by its symmetries, one cannot hope to perform circuit testing efficiently for it via symmetries.  Given these crucial differences between the two polynomials, it is natural to ask:
\begin{center}
    \emph{Is ET ``easy'' for the Hamiltonian Cycle polynomial?}
\end{center}
In this work, we answer this question positively by showing a randomised polynomial time ET algorithm for $HC$ (see Theorem \ref{thm-ETalgo}). To our knowledge, an efficient ET for $HC$ was \emph{not} known before this work.

The author of \cite{Kayal12} developed the ET algorithm for the Permanent by using the knowledge of the symmetries and the Lie algebra of the Permanent, which have been well studied \cite{MM62,Botta67}. We follow the approach of \cite{Kayal12} and study the Lie algebra and symmetries of $HC_n$ to design an ET algorithm for it, though there are challenges to overcome in analysing the Lie algebra and designing the ET algorithm. The Permanent contains some ``nice'' monomials which are leveraged by \cite{Kayal12} in his algorithm, but these monomials are absent in $HC_n$. Thus, we deviate significantly from the implementation of the scaling matrix recovery step in \cite{Kayal12} and instead use ideas from \cite{GuptaS19}. The last step of the ET algorithm for $Perm_n$ uses circuit testing for a soundness check. Since $HC_n$ is not characterised by its symmetries, we cannot hope to perform circuit testing for it via this route. To perform circuit testing for $HC$, we show it is characterised by circuit identities (Theorems \ref{thm-ckt-id} and \ref{thm-ckt-test}) via its downward self-reducibility (Lemma 4 in \cite{BZ08}). See Section \ref{sec-proof-ideas} for a discussion on the proof ideas and Appendix \ref{sec-prevwork-comp} for a detailed comparison with \cite{Kayal12,GuptaS19}. Table \ref{tab:summary} summarises the results and compares the properties of $HC_n$ with the Permanent polynomial.

\subsection{Our Results} \label{Sec-results}
We first state our assumptions on the computational model. Over $\Q$ and finite fields $\F_q$, we assume the Turing machine model. Over fields like $\R$ and $\C$, we assume a model similar to the BSS model \cite{BlumSS89}, where one can perform an arithmetic operation in unit time. We also assume access to an efficient black-box univariate polynomial factorisation algorithm, an assumption which is justified over fields $\F_q$ \cite{Berlekamp70} and $\Q$ \cite{LLL82}. We consider $HC_n$ for $n \geq 3$. For $n = 2$, $HC_n$ is a monomial, and an efficient ET algorithm follows for it by the factorisation algorithm of \cite{KaltofenT90}.

\begin{theorem}[ET for $HC$] \label{thm-ETalgo}
   Let $\F$ be a field where $|\F| > 3n^5$ and $\text{char}(\F) = 0$ or $\text{char}(\F) > n$. There is a randomised $(n\beta)^{O(1)}$ time algorithm which, when given black-box access to an $(n^2-n)$-variate, homogeneous degree-$n$ polynomial $f(\vecx)$ with $\beta$ as the bit complexity of the coefficients, decides correctly, with high probability, if there exists $A \in \GL_{n^2-n}(\F)$ such that $f(\vecx) = HC_n(A\vecx)$ and outputs $A$ if it exists. Over finite fields, the running time is randomised $(n\log(|\F|))^{O(1)}$.
\end{theorem}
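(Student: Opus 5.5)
We follow the Lie-algebra approach of \cite{Kayal12}, adapted to $HC_n$. The plan is to (i) determine the Lie algebra $\mathfrak{g}_{HC_n}$ of linear derivations annihilating $HC_n$, (ii) use it to reduce an arbitrary equivalent $f$ to one equivalent via a permutation-and-scaling matrix, (iii) recover the scaling and the permutation, and (iv) finish with a circuit-identity soundness check.

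The structural step is to show that over the fields in question $\mathfrak{g}_{HC_n}$ consists exactly of diagonal operators $\sum_{i\neq j}\lambda_{i,j}x_{i,j}\partial_{x_{i,j}}$, with $\lambda_{i,j}=\alpha_i+\beta_j$ and $\sum_i(\alpha_i+\beta_i)=0$. We would prove this by applying a general operator $\sum c_{(i,j),(k,l)}x_{i,j}\partial_{x_{k,l}}$ to $HC_n$ and comparing coefficients of monomials close to Hamiltonian cycles. An off-diagonal coefficient $c_{(i,j),(k,l)}$ with $(i,j)\neq(k,l)$ creates a monomial that cannot cancel against any other. The reason is that replacing one edge of a cycle by another edge yields a non-cycle, and different cycles generate distinct such monomials. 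We expect this case analysis to be the main technical obstacle, because $HC_n$ lacks the rich monomial structure of $Perm_n$. The condition $\mathrm{char}(\F)>n$ enters here: it keeps coefficients that are small integers nonzero. From this description, the Lie algebra is an abelian Cartan-type subalgebra of dimension $2n-2$ (after the diagonal variables are removed), and its weight spaces are the one-dimensional spans of the individual variables $x_{i,j}$. We would also record that the symmetry group is generated by these scalings together with the permutations induced by $S_n$ relabelling and by transposition.

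For the algorithm, given $f=HC_n(A\vecx)$ we have $\mathfrak{g}_f=A^{-1}\mathfrak{g}_{HC_n}A$.
\begin{enumerate}
\item We compute a basis of $\mathfrak{g}_f$ via black-box linear algebra, solving the linear system for derivations annihilating $f$ from random evaluations of $f$ and its partial derivatives, in the style of \cite{Kayal12}.
\item We pick a random element $L$ of $\mathfrak{g}_f$. Since $|\F|>3n^5$, Schwartz--Zippel ensures that $L$ has distinct eigenvalues with high probability.
\item We diagonalise $L$, using univariate factorisation of its characteristic polynomial, to get $B$ such that $f(B\vecx)=HC_n(PD\vecx)$ for a permutation matrix $P$ and an invertible diagonal matrix $D$.
\end{enumerate}
Recovering $P$ and $D$ is the step where Kayal's use of special monomials of $Perm_n$ fails. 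Instead, following \cite{GuptaS19}, we would work with $g=f(B\vecx)$.
\begin{enumerate}
\item We identify which variable sets form monomials, by testing the support via restrictions and partial derivatives of $g$. This gives the hypergraph of Hamiltonian cycles on the $n^2-n$ renamed variables.
\item From this hypergraph we reconstruct the digraph labelling, namely which variable is edge $(i,j)$, up to symmetry. Here we use that two variables lie in a common cycle monomial exactly when the corresponding edges do not share a head or a tail. This is computable from $2$-variable restrictions in polynomial time, and it determines the line-graph-like structure from which the vertices are identified.
\item We read off the scalings from the coefficients of a chosen spanning family of cycles, solving a linear system on logarithms, or multiplicatively via ratios.
\end{enumerate}

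Finally, we have a candidate $A$, and to verify that $f(\vecx)=HC_n(A\vecx)$ we use the characterisation of $HC_n$ by circuit identities, which comes from its downward self-reducibility \cite{BZ08}. We check those identities on $f(A^{-1}\vecx)$ and its restrictions at random points. If any step fails we reject. The error probability in each step is bounded by Schwartz--Zippel using $|\F|>3n^5$, and every step runs in time $(n\beta)^{O(1)}$.
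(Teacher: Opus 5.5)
Your pipeline follows the paper's: a $PS$-reduction by diagonalising a random element of $\mathfrak{g}_f$, permutation recovery from the vanishing pattern of second-order derivatives, scaling recovery from cycle coefficients, and verification via downward self-reducibility. The gap is in the scaling step and in the rank fact it rests on. Comparing coefficients only shows that an element of $\mathfrak{g}_{HC_n}$ is diagonal with $\sum_i\lambda_{i,\sigma(i)}=0$ for every $\sigma\in C_n$. Concluding $\lambda_{i,j}=\alpha_i+\beta_j$ from these equations amounts to saying that the $(n-1)!\times(n^2-n)$ cycle-incidence matrix has rank $(n-1)(n-2)$. You assert this without argument, and it is not automatic: for $n=4$ in characteristic $2$ the rank drops. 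The paper proves it by constructing inductively, in polynomial time, $(n-1)(n-2)$ explicit cycles whose incidence matrix $M^{(n)}$ has a minor of determinant $\pm1$ on the column set $T$ complementary to $\{(1,j)\}_{j\ge 2}\cup\{(i,1)\}_{2\le i\le n-1}\cup\{(2,3)\}$ (Proposition~\ref{prop-lie-dim-lb}, Lemma~\ref{lemma-Mn-det1}). For the $PS$-reduction alone, diagonality plus the easy inclusion is enough, since it already gives an element with distinct eigenvalues. Scaling recovery, however, genuinely needs this unimodular minor.

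The equations $\prod_i s_{i,\sigma_k(i)}=c_k$ are linear over the $\Z$-module $\F^{\times}$, not over a field. Solving them ``on logarithms'' over $\F_q$ means computing discrete logarithms, which is not known to be possible in $(n\log q)^{O(1)}$ time. Solving them ``multiplicatively'' requires what the paper supplies:
\begin{enumerate}
\item normalise the $2n-2$ coordinates outside $T$ to $1$, using the scaling symmetries $s_{i,j}=a_ib_j$ with $\prod_i a_ib_i=1$;
\item query a polynomial-size, explicitly constructed family of cycles whose restriction to $T$ is unimodular, so that each remaining $s_{i,j}$ is a product of integer powers of the $c_k$, with no roots and no logarithms;
\item show that every solution of these $(n-1)(n-2)$ equations satisfies all $(n-1)!$ cycle equations over $\F^{\times}$ (Lemma~\ref{lemma-abeliangroup-general}).
\end{enumerate}
The last point needs integrality, not just $\F$-linear spanning. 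If your queried cycles generate the lattice of all cycle vectors only up to some index $d>1$, two solutions of the queried system can differ on an unqueried cycle by a $d$-th root of unity. The $S'$ you compute may then fail the final verification even though $f\sim HC_n$, which breaks completeness. This happens for $n=4$: the relevant determinant is $-2$, $HC_4$ has discrete scaling symmetries, and the paper has to handle it separately with square roots.

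Two smaller corrections. First, edges that share neither head nor tail still cannot co-occur when they are antiparallel: $x_{i,j}x_{j,i}$ divides no monomial of $HC_n$. In the paper, $x_{j,i}$ is exactly the distinguished singleton of $R_{i,j}$ that anchors the relabelling. Second, the full hypergraph of Hamiltonian cycles has $(n-1)!$ hyperedges, so only the pairwise second-derivative data can be used in polynomial time.
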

\begin{example}
\item The algorithm also works for the more general case where $f = HC_n(A\vecy + \vecb)$, $A \in \F^{(n^2-n) \times m}$ is a full row rank matrix (hence $n^2-n \leq m$) and $\vecb \in \F^{n^2-n}$ is a translation vector. See Theorem 28 in \cite{Kayal12} and Appendix B in \cite{KayalNST17}, which show how to appropriately modify the ET algorithm to handle the general case.

    \item The constraint on the characteristic arises from computing derivatives of $f$ and is imposed to ensure that these derivatives don't vanish. The constraint on the field size is due to the use of the Polynomial Identity Lemma \cite{DemilloL78,Lipton89,Schwartz80,Zippel79}.
\end{example}
The ET algorithm uses the knowledge of the group of symmetries $\mathcal{G}_{HC_n}$ and the Lie Algebra $\mathfrak{g}_{HC_n}$ of $HC_n$, see Definitions \ref{def-groupsym} and \ref{def-lie}. In Propositions \ref{prop-lie-alg-struct} and \ref{prop-lie-basis}, we completely characterise and give a basis for $\mathfrak{g}_{HC_n}$, while Theorem \ref{thm-PS-sym} gives the general structure of $\mathcal{G}_{HC_n}$. 
\begin{theorem}[Structure of $\mathcal{G}_{HC_n}$] \label{thm-PS-sym}
Let $|\F| > \binom{n^2-n}{2}$. If $A$ is a symmetry of $HC_n$, then $A = PS$ for some permutation matrix $P$ and scaling matrix $S$. 
\end{theorem}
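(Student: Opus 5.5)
Following Kayal's approach for the Permanent, I will go through the Lie algebra $\mathfrak{g}_{HC_n}$. By Propositions \ref{prop-lie-alg-struct} and \ref{prop-lie-basis}, $\mathfrak{g}_{HC_n}$ is spanned by diagonal matrices: these are the infinitesimal scalings $\mathrm{diag}(\lambda_{i,j})$ with $\lambda_{i,j}=a_i+b_j$ and $\sum_i (a_i+b_i)=0$ along each cycle, i.e.\ row/column scalings whose product on every Hamiltonian cycle is $1$. Let $A$ be a symmetry, so $HC_n(A\vecx)=HC_n(\vecx)$. The standard fact is that conjugation by $A$ preserves the Lie algebra: $A^{-1}\mathfrak{g}_{HC_n}A=\mathfrak{g}_{HC_n}$. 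So $A$ normalises a space $\mathcal{D}$ of diagonal matrices.

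The first step is to show that $\mathcal{D}$ separates coordinates. For each pair of distinct variables $x_{i,j}\neq x_{k,l}$ I want some $D\in\mathcal{D}$ with $D_{(i,j)}\neq D_{(k,l)}$. This is a short case check. For instance, with $a_i=1$ and $b_i=-1$ for a single vertex $i$, the entries are $+1$ on row $i$, $-1$ on column $i$, and $0$ elsewhere. Differences of such elements, or pairs with $a_i=1$, $b_k=-1$, should separate any two off-diagonal positions for $n\ge 3$. Granting this, a generic element $D=\sum_t c_t D_t$ of $\mathcal{D}$ has pairwise distinct diagonal entries. The condition that two given entries coincide is a nonzero linear form in the $c_t$, and there are $\binom{n^2-n}{2}$ such forms. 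Since $|\F|>\binom{n^2-n}{2}$, some choice of $c_t\in\F$ avoids the zero sets of all of them. This is exactly where the field-size hypothesis enters.

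The second step is to show that $A$ is monomial. Since $A$ normalises $\mathcal{D}$, the matrix $A^{-1}DA=D'$ is again diagonal. Hence $DA=AD'$, so $(D_{p}-D'_{q})A_{p,q}=0$ for all $p,q$. $D'$ is similar to $D$, so it has the same distinct eigenvalues, merely permuted. Each column of $A$ is therefore an eigenvector of $D$, and because the eigenvalues of $D$ are distinct, each column has exactly one nonzero entry. By invertibility of $A$ these entries lie in distinct rows, so $A=PS$ with $P$ a permutation matrix and $S$ an invertible diagonal matrix. (A further, independent check of which $P$ and $S$ actually occur gives the finer structure but is not needed here.)

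The main obstacle is the separation claim in the first step. It depends on the precise description of $\mathfrak{g}_{HC_n}$ from Proposition \ref{prop-lie-basis}, in particular that its diagonal part is large enough. If the Lie algebra were only the row/column scalings with the cycle constraint, I must make sure that, for example, $x_{i,j}$ and $x_{j,i}$ are still separated; here the freedom $a_i\neq a_j$ suffices. I would verify the cases (same row, same column, transposed pair, disjoint) explicitly from the basis.
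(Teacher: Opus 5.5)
Your proposal is correct and is essentially the paper's proof. The paper likewise combines Proposition \ref{prop-lie-alg-struct} with Lemma \ref{lemma-lieconjug} to see that $A$ normalises a Lie algebra of diagonal matrices, gets an element with pairwise distinct entries from $|\F|>\binom{n^2-n}{2}$ via a union bound over pairs (Lemma \ref{prop-dist-eigen}), and then reads off monomiality of $A$ from the intertwining relation; it uses the rows of $A$ via $AB=CA$, where you use the columns via $DA=AD'$. One care point for your separation check: in characteristic $2$ the span of the single-vertex elements $a_p=1,\ b_p=-1$ never separates $x_{i,j}$ from $x_{j,i}$, so for those pairs use $a_i=1,\ b_q=-1$ with $q\notin\{i,j\}$ (available since $n\ge 3$), which works over every field.
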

\begin{example}
    \item Propositions \ref{prop-scalesymcont} and \ref{prop-Psym-gens} describe the permutation and scaling symmetries of $HC_n$ respectively. If we treat $HC_n$ as a polynomial in all the variables of $X_n$, then any $A \in \mathcal{G}_{HC_n}$ acts on $X_n$ as $A(X_n) = PLX_nRP^T$ or $A(X_n) = PLX_n^TRP^T$, where $P$ is a permutation matrix, and $L$ and $R$ are diagonal matrices such that $Perm_n(RL) = 1$. In contrast, any $A \in \mathcal{G}_{Perm_n}$ acts as $A(X_n) = PLX_nRQ$ or $A(X_n) = PLX_n^TRQ$, where $P$ and $Q$ are permutation matrices and $L$, $R$ are as before. 
\end{example}
Unlike $Perm_n$, $HC_n$ is \emph{not} characterised by its symmetries, see Definition \ref{def-char-sym}. 
\begin{theorem}[Non-characterisation by symmetries] \label{thm-nonsymchar}
   Let $n \geq 5$ and $\F$ be such that $|\F| > \binom{n^2-n}{2}$. Then, $HC_n$ is \emph{not} characterised by its symmetries over $\F$.
\end{theorem}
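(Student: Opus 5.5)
The plan is to exhibit an explicit homogeneous degree-$n$ polynomial $g \neq \alpha HC_n$ on the same $n^2-n$ variables whose group of symmetries contains $\mathcal{G}_{HC_n}$. Under Definition \ref{def-char-sym}, characterisation by symmetries means every polynomial invariant under all of $\mathcal{G}_{HC_n}$ is a scalar multiple of $HC_n$, so one such $g$ settles the theorem. Theorem \ref{thm-PS-sym} and the accompanying remark (which uses $|\F| > \binom{n^2-n}{2}$) give the form of every symmetry: it acts on $X_n$ as $X_n \mapsto PLX_nRP^T$ or $X_n \mapsto PLX_n^TRP^T$, with $P$ a permutation matrix and $L,R$ diagonal with $Perm_n(RL)=1$. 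So it suffices to find $g$ invariant under three operations. The first is simultaneous relabelling of vertices, $x_{i,j}\mapsto x_{\pi(i),\pi(j)}$. The second is transposition, $x_{i,j}\mapsto x_{j,i}$. The third is scaling $x_{i,j}\mapsto \ell_i r_j x_{i,j}$ with $\prod_i \ell_i r_i = 1$.

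For the candidate, take a sum of monomials indexed by a different conjugacy class of fixed-point-free permutations. Let $g := \sum_{\sigma \in D} \prod_{i} x_{i,\sigma(i)}$, where $D$ is the set of permutations of cycle type $(n-2,2)$. These are derangements, so the monomials involve only off-diagonal variables. Cycle type $(n-3,3)$, or any other non-$n$-cycle fixed-point-free type, would serve equally well; $n\geq 5$ guarantees such a type exists with all cycles of length at least $2$. I would probably choose $(n-2,2)$ for definiteness.

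The invariance checks are routine.
\begin{enumerate}
\item Under the scaling, a monomial $\prod_i x_{i,\sigma(i)}$ picks up the factor $\prod_i \ell_i \prod_i r_{\sigma(i)} = \prod_i \ell_i r_i = 1$, since $\sigma$ is a permutation. This holds for every $\sigma$, not only cyclic ones.
\item Relabelling by $\pi$ sends the monomial of $\sigma$ to the monomial of $\pi\sigma\pi^{-1}$. Conjugation preserves cycle type, so it permutes $D$.
\item Transposition sends the monomial of $\sigma$ to the monomial of $\sigma^{-1}$. Inversion also preserves cycle type, so it permutes $D$ as well.
\end{enumerate}
Hence every element of $\mathcal{G}_{HC_n}$ fixes $g$. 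Moreover, $g$ and $HC_n$ have disjoint monomial supports: the monomial of a permutation determines that permutation, and no element of $D$ is an $n$-cycle. Both polynomials are nonzero with coefficients $1$, so $g$ is not a scalar multiple of $HC_n$.

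The main obstacle is matching exactly the notion in Definition \ref{def-char-sym}. If it asks only that $HC_n$ be the unique polynomial up to scaling fixed by $\mathcal{G}_{HC_n}$ among degree-$n$ polynomials in these variables, the argument above is complete. If instead it requires $\mathcal{G}_g \supseteq \mathcal{G}_{HC_n}$ to force $g = \alpha HC_n$, the same $g$ works, because we only needed containment. One subtlety to double-check is that the remark's description of $\mathcal{G}_{HC_n}$ really exhausts it, which Theorem \ref{thm-PS-sym} provides under the stated field-size hypothesis. A second is making sure $D$ avoids $n$-cycles when $n$ is small; this is where $n\ge 5$ enters, since for $n=4$ the type $(2,2)$ would still work but the paper's bound presumably reflects its own construction.
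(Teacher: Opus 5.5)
Your argument is correct and is essentially the paper's. The paper also sums the monomials of a non-cyclic derangement of cycle type $(n-2,2)$ (it takes $\pi=(1\ 2)(3\ 4\ \dots\ n)$) over the permutation symmetries, and then checks that every scaling symmetry fixes every derangement monomial. Two differences are worth recording.

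First, your coefficient-one sum over the conjugacy class is more robust than the paper's $g=\sum_{P}m_\pi(P\vecx)$, taken over all $2\cdot n!$ permutation symmetries. That sum equals $4(n-2)$ times your polynomial, so it is identically zero when $\mathrm{char}(\F)=2$ or $\mathrm{char}(\F)\mid n-2$, and the theorem allows those fields.

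Second, your one-line scaling check via $S_{i,j}=\ell_i r_j$ with $\prod_i\ell_i r_i=1$ replaces the paper's two-case computation from \eqref{eqn-scaling-sym-lie}. That rank-one form is not supplied by Theorem~\ref{thm-PS-sym}. It is Proposition~\ref{prop-scalesymcont}, rewritten as in Proposition~\ref{prop-HCntoPermn-scalsym}, and it holds only for $n=3$ or $n\ge 5$. This restriction, together with the existence of a non-cyclic derangement (which needs $n\ge 4$), is why the theorem needs $n\ge 5$.

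Consequently, your aside that type $(2,2)$ would work for $n=4$ is wrong when $\mathrm{char}(\F)\neq 2$. The discrete symmetry $D$ of Observation~\ref{obs-HC4-sym-charnot2} sends each of the three $(2,2)$-monomials to its negative. Indeed, Proposition~\ref{prop-symchar-HC4} shows that $HC_4$ is characterised by its symmetries over such fields.
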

\begin{example}
    \item In Appendix \ref{subsec-charsym-HC4}, we show that $HC_n$, for $n = 3$ and $4$, is characterised by its symmetries over appropriate fields $\F$.
\end{example}
However, it turns out that $HC_n$ is characterised by circuit identities (see Definition \ref{def-char-cktid}) over \emph{any} field $\F$. This follows from the downward self-reducibility of $HC_n$ as shown in Lemma \ref{lemma-HCn-monotone}, and Lemma 4 of \cite{BZ08}.

\begin{theorem}[Circuit Identities] \label{thm-ckt-id}
Over any field $\F$, $HC_n$ is characterised by circuit identities. 
\end{theorem}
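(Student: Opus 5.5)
The plan is to follow the standard route from downward self-reducibility to characterisation by circuit identities, in the spirit of Lemma 4 of \cite{BZ08}. The first step is to fix a downward self-reduction for $HC_n$ that holds as a polynomial identity. Consider paths rather than cycles: let $P_{n}(X_n)$ denote the sum over Hamiltonian paths from vertex $1$ to vertex $n$ of the products of edge weights. Then $HC_n(X_n) = \sum_{j \neq 1} x_{j,1}\, P^{(1\to j)}(X_n)$, and path polynomials expand downward by removing the first vertex. The point is to express $HC_n$ through $HC_{n-1}$ evaluated at matrices whose entries are linear, or at least low-degree, in $X_n$. A natural candidate is contraction of a vertex: for each pair $(i,k)$, a Hamiltonian cycle passing $i \to n \to k$ corresponds to a Hamiltonian cycle on $[n-1]$ that uses edge $(i,k)$ with weight $x_{i,n}x_{n,k}$. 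This gives
\begin{equation*}
HC_n(X_n) = \sum_{k \in [n-1]} \frac{\partial}{\partial y_{\ast}}\Big|\ \cdots
\end{equation*}
More concretely, I would write $HC_n(X_n)=\sum_{k\ne n} x_{n,k}\,\partial_{z} HC_{n-1}(Y^{(k)})$, where $Y^{(k)}$ replaces column $k$ of the $(n-1)\times(n-1)$ block by $z\cdot(x_{i,n})_i$ and $\partial_z$ extracts the linear part. Since $HC_{n-1}$ is multilinear in each column (each cycle uses exactly one entry per column), the $z$-coefficient can be realised by a single evaluation, $HC_{n-1}(Y^{(k)})$ with $z=1$ and column $k$ replaced by $(x_{i,n})_i$. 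This step is Lemma \ref{lemma-HCn-monotone} of the paper, which I will take as given.

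Next, I would define the identities. Let $I_1$ say $g_2 = x_{1,2}x_{2,1}$, and let $I_m$ for $3\le m\le n$ say
\begin{equation*}
g_m(X_m)=\sum_{k\in[m-1]} x_{m,k}\, g_{m-1}\big(Y_m^{(k)}\big),
\end{equation*}
where the diagonal entries are ignored as in the paper. Each identity is a polynomial identity relating circuits, and the conjunction of $I_2,\dots,I_n$ is the ``circuit identity'' certifying a family of circuits $\mathrm{C}_2,\dots,\mathrm{C}_n$, where $\mathrm{C}_{m-1}$ can be obtained from $\mathrm{C}_n$ by substitution, as required by Definition \ref{def-char-cktid}.

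Then I would argue uniqueness by induction on $m$. The base case holds by definition. For the inductive step, if $g_{m-1}=HC_{m-1}$, the right-hand side of $I_m$ is determined and equals $HC_m$ by the self-reduction identity, so $g_m=HC_m$. Field independence follows because everything is an identity over $\Z$, with no divisions and no derivatives, so the result holds over every field, including characteristic $2$ and small fields. The main obstacle I anticipate is verifying the self-reduction identity carefully, in particular that no spurious terms arise from cycles of $HC_{m-1}$ that use a diagonal position or the replaced column twice. Multilinearity per column and the fact that cycle permutations avoid fixed points should handle this. If Definition \ref{def-char-cktid} instead requires $g_{m-1}$ to be obtained from $g_m$ by a projection, I would add the identity $g_{m-1}(X_{m-1}) = g_m(X_{m-1}\oplus\text{vertex } m$ with $x_{m,1}=x_{m-1,m}=1$ and other new edges $0)$ minus a correction. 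Here I would choose a projection that forces the vertex $m$ to be a subdivision of edge $(m-1,1)$, which reproduces $HC_{m-1}$ exactly.
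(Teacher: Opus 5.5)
Your self-reduction $HC_m(X_m)=\sum_{k\in[m-1]}x_{m,k}\,HC_{m-1}(Y^{(k)}_m)$, with column $k$ replaced by $(x_{i,m})_i$, is correct. It is the paper's edge-contraction recursion expanded along the last row rather than the first, and your upward induction is fine.

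\textbf{The gap.} Definition~\ref{def-char-cktid} asks for identities in a \emph{single} unknown $f$ in the $n^2-n$ variables, evaluated at points computed from $\vecx$. Your $I_2,\dots,I_n$ instead have $n-1$ independent unknowns $g_2,\dots,g_n$, so they only pin down the tuple $(HC_2,\dots,HC_n)$. To close the recursion you need substitutions $\phi_m$ with $HC_n(\phi_m(X_m))=HC_m(X_m)$ for every $m<n$, so that $g_m$ can be \emph{defined} as $f\circ\phi_m$. This is exactly Lemma~\ref{lemma-HCn-monotone}. You cited that lemma for the self-reduction, which is not what it states, and in your main line you only assert that ``$\mathrm{C}_{m-1}$ can be obtained from $\mathrm{C}_n$ by substitution''.

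\textbf{Your proposed embedding fails.} Subdividing the single edge $(m-1,1)$ by vertex $m$ (with $x_{m-1,m}=x_{m,1}=1$ and all other new edges $0$) forces every surviving cycle through $m-1\to m\to 1$. The evaluation is therefore $\partial HC_{m-1}/\partial x_{m-1,1}$, not $HC_{m-1}$; for $m=4$ it gives $x_{1,2}x_{2,3}$. With that substitution, $HC_n$ itself would violate your added identity, and no ``correction'' is specified.

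\textbf{The fix.} Route \emph{every} edge entering vertex $1$ through the new vertex. Place $x_{i,1}$ in position $(i,m)$ for all $i\in[2,m-1]$, set the $(m,1)$ entry to $1$, and make all other entries of column $1$, row $m$ and column $m$ zero. The paper's $Y^{(m,n)}$ does exactly this, using a path $n+1\to\dots\to m\to 1$. Alternatively, use $HC_{m-1}(X_{m-1})=\sum_j x_{m-1,j}\,HC_m(\text{subdivision of }(m-1,j))$, which is an added sum, not a subtracted correction.

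With such $\phi_m$ and $\phi_n=\mathrm{id}$, your identities become $f(\phi_m(X_m))=\sum_{k}x_{m,k}\,f(\phi_{m-1}(Y^{(k)}_m))$ for $m\in[3,n]$, together with $f(\phi_2(X_2))=x_{1,2}x_{2,1}$. Your induction then yields the theorem, as in Lemma~\ref{lemma-HCn-downself}.
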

As a consequence of Theorem \ref{thm-ckt-id}, we get Theorems \ref{thm-ckt-test} and \ref{thm-flip-theorem}. We use Theorem \ref{thm-ckt-test} to prove the soundness of the ET algorithm given by Theorem \ref{thm-ETalgo}.
\begin{theorem}[Circuit Testability] \label{thm-ckt-test}
    Let $\mathrm{C}$ be a circuit in $n^2-n$ variables, of size $s$, and degree of output polynomial $d = s^{O(1)}$ over a field $\F$, where $|\F| > 2d$. There is a randomised $(ns)^{O(1)}$ time algorithm over $\F$ which, when given black box access to $\mathrm{C}$, decides correctly with high probability whether $\mathrm{C}(\vecx) = HC_n(\vecx)$ or not. 
\end{theorem}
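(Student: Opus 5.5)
We test $\mathrm{C}(\vecx) = HC_n(\vecx)$ by exploiting downward self-reducibility: we check a family of low-degree circuit identities that pin down $HC_n$ inductively from $HC_2$ (a single monomial). The relevant identity (Lemma \ref{lemma-HCn-monotone}, following Lemma 4 of \cite{BZ08}) expresses $HC_n$ through $HC_{n-1}$. Concretely, delete vertex $n$ and reroute every path $i \to n \to j$ through a new edge $i \to j$ of weight $x_{i,n}x_{n,j}$. Then
\[
HC_n(X_n) = \sum_{i\neq j \in [n-1]} x_{i,n}x_{n,j}\,\partial_{y_{i,j}} HC_{n-1}(Y),\qquad Y = (x_{i,j})_{i,j\in[n-1]},
\]
which holds because every Hamiltonian cycle on $[n]$ passes through $n$ via exactly one pair $(i,j)$. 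To avoid derivatives, which misbehave in small characteristic, we use the equivalent multilinear substitution form: $HC_{n-1}$ is multilinear, so the sum of the coefficients of the new variables in $HC_{n-1}$ evaluated at $y_{i,j} = x_{i,j} + x_{i,n}x_{n,j}$ can be isolated, or we simply replace the derivative by the difference $HC_{n-1}(Y)|_{y_{i,j}=1} - HC_{n-1}(Y)|_{y_{i,j}=0}$. Both forms hold over any field.

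The algorithm runs as follows. From black-box access to $\mathrm{C}$ we build, for each $k = 2,\dots,n$, a circuit $\mathrm{C}_k$ on the $k^2-k$ variables of $X_k$, obtained by setting the variables of $\mathrm{C}$ appropriately. Setting variables to zero does not work, since that kills cycles. Instead we take the \emph{contraction} substitution: fix $x_{k,k+1}=\dots=x_{n-1,n}=x_{n,1}$-style path weights to $1$ and all other edges touching the vertices $k+1,\dots,n$ to $0$, so that $HC_n$ restricts to $HC_k$ on $X_k$. Concretely, put $x_{k,k+1} = x_{k+1,k+2} = \cdots = x_{n-1,n}=1$, set the other edges incident to $\{k+1,\dots,n\}$ to $0$, and redirect the edges $x_{i,k}$ appropriately through $x_{i,n}$ and $x_{n,\cdot}$. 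This gives a projection $\pi_k$ with $HC_n\circ\pi_k = HC_k$, and we set $\mathrm{C}_k = \mathrm{C}\circ\pi_k$.

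We then test with the Polynomial Identity Lemma at random points from a set of size greater than $2d$:
\begin{enumerate}[label=(\roman*)]
\item $\mathrm{C}_2 = x_{1,2}x_{2,1}$;
\item $\mathrm{C}_k(X_k) = \sum_{i\neq j} x_{i,k}x_{k,j}\big(\mathrm{C}_{k-1}|_{y_{i,j}=1}-\mathrm{C}_{k-1}|_{y_{i,j}=0}\big)$ for all $k \le n$.
\end{enumerate}
Each test is an identity between circuits of size $(ns)^{O(1)}$ and degree $O(d)$, so the total time is $(ns)^{O(1)}$. With $n$ tests and a union bound, each run with error at most $O(d)/|\F|$ (amplified by repetition), the failure probability is small.

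Completeness is immediate: if $\mathrm{C} = HC_n$, every identity holds. Soundness is an induction. If all identities hold, then $\mathrm{C}_2 = HC_2$, and identity (ii) determines $\mathrm{C}_k$ uniquely from $\mathrm{C}_{k-1}$, so $\mathrm{C}_k = HC_k$ for every $k$. In particular $\mathrm{C}_n = \mathrm{C}$, since $\pi_n$ is the identity, and hence $\mathrm{C} = HC_n$.

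The main obstacle is the difference-operator step. The difference equals the coefficient of $y_{i,j}$ only when $\mathrm{C}_{k-1}$ is multilinear of degree at most $1$ in $y_{i,j}$. This holds once we know inductively that $\mathrm{C}_{k-1}=HC_{k-1}$, so the induction stays valid. We do need to check that the identity for $HC_k$ in this difference form holds exactly over every field; that is a direct cycle-counting argument. We must also verify carefully that the projections $\pi_k$ really satisfy $HC_n\circ\pi_k = HC_k$; if they do not, we instead define $\mathrm{C}_k$ for $k<n$ purely as fresh restrictions and keep only the top identity on $\mathrm{C}$ itself.
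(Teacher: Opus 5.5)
Your argument works, but the self-reduction you test is not the paper's. The paper (Algorithm~\ref{alg4} with Lemma~\ref{lemma-HCn-downself}) expands along the out-edge of vertex $1$: $HC_k(X_k)=\sum_{i=2}^{k}x_{1,i}\,HC_{k-1}(X^{(i)}_k)$, where $X^{(i)}_k$ is obtained by contracting the edge $1\to i$. Every level is embedded into $HC_n$ through the projection $Y^{(n,k)}$ of Lemma~\ref{lemma-HCn-monotone}, and soundness is the same induction on $k$ that you give.

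You instead delete vertex $k$ and reinsert it into an edge $i\to j$ of a Hamiltonian cycle on $[k-1]$. This produces $\partial HC_{k-1}/\partial x_{i,j}$, which you realise by a finite difference. As you note, that finite difference is justified only by multilinearity, which the inductive hypothesis $\mathrm{C}_{k-1}=HC_{k-1}$ supplies, and that is all the induction needs.

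The two routes trade off as follows. The paper's identity holds with no multilinearity caveat and costs $k-1$ queries per level, against your $2(k-1)(k-2)$. Yours avoids the row/column swapping bookkeeping of the contraction, and its insertion bijection is very transparent. Both run in $(ns)^{O(1)}$ time.

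Do not union-bound over the $n$ tests; that would require $|\F|$ to exceed roughly $n(d+2)$. It is unnecessary because completeness is perfect. If $\mathrm{C}\neq HC_n$, a single false identity of degree at most $d+2$ is caught with probability at least $1-(d+2)/|U|$.

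The one step to repair is $\pi_k$, whose description is inconsistent.
\begin{itemize}
\item Read literally, your first version (weight-$1$ path $k\to k+1\to\cdots\to n\to 1$, every other edge at $\{k+1,\dots,n\}$ zero) yields $\partial HC_k/\partial x_{k,1}$ rather than $HC_k$.
\item In the second version, ``redirect $x_{i,k}$ through $x_{i,n}$'' conflicts with the direction of the path.
\end{itemize}
A correct choice is as follows:
\begin{itemize}
\item make $k\to k+1\to\cdots\to n$ a weight-$1$ path;
\item keep the in-edges $x_{i,k}$ ($i<k$) at $k$;
\item place the $X_k$-variable $x_{k,j}$ on the edge $n\to j$ for $j<k$;
\item set every other variable with an endpoint in $\{k,\dots,n\}$ to $0$.
\end{itemize}
This is the paper's $Y^{(n,k)}$ with vertex $k$ split instead of vertex $1$, and then $\pi_n$ is indeed the identity.

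Your fallback of keeping only the top identity would not be sound, since it leaves $\mathrm{C}_{n-1}$, and hence $\mathrm{C}$, unconstrained. With the corrected $\pi_k$ the fallback is not needed.
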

A flip theorem essentially states that if some function is not computable by small-size circuits, then we can efficiently construct a short list of counterexamples such that any small-size circuit fails on some counterexample in the list. Flip theorem is known for the Permanent polynomial \cite{Mulmuley10,Mulmuley11} and the Nisan-Wigderson polynomial \cite{GuptaS19}. 
\begin{theorem}[Flip Theorem for $HC$] \label{thm-flip-theorem}
    Let $|\F| > n^{O(1)}$. Suppose $HC_n$ is not computable by $n^{O(1)}$ size circuits over $\F$. There is a randomised $n^{O(1)}$ time algorithm which on input $1^{n^2-n}$ outputs $n-1$ matrices $A_1, \dots A_{n-1}$, where $A_i \in \F^{n \times n}$, such that, with high probability, for any polynomial size circuit $\mathrm{C}$ there is an $i \in [n-1]$, $\mathrm{C}(A_i) \neq HC_n(A_i)$. Further, derandomisation of black-box polynomial identity testing implies that the $A_i$'s can be computed in deterministic $n^{O(1)}$ time.
\end{theorem}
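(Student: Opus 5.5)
The plan is to follow the Flip-theorem template of \cite{Mulmuley10,Mulmuley11} and \cite{GuptaS19}, and to derive it from the circuit identities behind Theorem \ref{thm-ckt-id}. The input to the argument is the downward self-reducibility of $HC_n$ (Lemma \ref{lemma-HCn-monotone} together with Lemma 4 of \cite{BZ08}). This gives, for every $m \le n$, a polynomial-size identity expressing $HC_m$ via evaluations of $HC_{m-1}$ at matrices whose entries are affine in the variables of $X_m$. I would encode this as a family of identities $I_3,\dots,I_n$. Each $I_m$ relates a candidate polynomial $g_m$ on $m\times m$ matrices to $g_{m-1}$ evaluated at these projected matrices. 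The base case $g_2 = x_{1,2}x_{2,1}$ is fixed explicitly, and the identities are arranged so that the unique solution of all of them is $g_m = HC_m$ for every $m$. This uniqueness is the content of Theorem \ref{thm-ckt-id}, which I take as given.

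From a circuit $\mathrm{C}$ for $HC_n$ I obtain circuits $\mathrm{C}_m$ for each $m < n$ by projection, i.e.\ by embedding an $m\times m$ matrix into $X_n$. Concretely, I would use the fact that $HC_m$ is a projection of $HC_n$: fix a path through the extra vertices by setting the appropriate entries to $1$ and $0$. Now suppose $\mathrm{C}$ is any polynomial-size circuit. Since $HC_n$ is not computable by polynomial-size circuits, $\mathrm{C} \ne HC_n$. By the characterisation there is then a smallest $m$ at which the identity $I_m$ fails for the induced family $(\mathrm{C}_m)$; if every identity held, uniqueness would force $\mathrm{C}_m = HC_m$ for all $m$, in particular $\mathrm{C} = HC_n$. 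The failure of $I_m$ means that a certain polynomial $Q_m$ is nonzero. Here $Q_m$ is the difference of the two sides of $I_m$, and it is computed by a polynomial-size circuit of polynomial degree.

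The counterexample $A_m$ (one per $m$, giving $n-1$ matrices in total, after re-indexing) is chosen to be a random point, with entries from a subset of $\F$ of size $n^{O(1)}$. The subtle point is that a single list must work \emph{simultaneously} for all polynomial-size circuits. I would handle this with a hitting-set argument in the style of \cite{HeintzS80}. Polynomial-size circuits of polynomial degree admit a hitting set consisting of polynomially many random points, with high probability. More precisely, I would bundle the $n-1$ matrices into one point of a space of polynomial dimension and argue via a universal circuit: the set of all ``bad'' tuples is a nonzero variety cut out by polynomials of polynomial degree, so a random tuple avoids it with high probability when $|\F| > n^{O(1)}$.

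Next I turn the failure of $I_m$ at $A_m$ into a pointwise disagreement $\mathrm{C}(A_i)\neq HC_n(A_i)$. Here I would exploit minimality of $m$: at the smaller index the identities hold, so $\mathrm{C}_{m-1}$ agrees with $HC_{m-1}$ on the relevant projected points. Hence the failure of $I_m$ at $A_m$ is equivalent to $\mathrm{C}_m(A_m) \ne HC_m(A_m)$. Lifting $A_m$ back to an $n\times n$ matrix by the projection embedding gives the desired $A_i$.

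The main obstacle is this final step. It requires the identities to be formulated so that the list consists of only $n-1$ matrices, one per level, and so that failure of an identity translates into disagreement at the listed point itself rather than at auxiliary projected points. I would first try to choose $A_{m-1}$ as one of the projected points of $A_m$, chaining the list recursively, and add randomness via a random shift. Derandomisation is then immediate: replacing the random choice by an explicit hitting set for the polynomial-size circuits $Q_m$, which black-box PIT derandomisation provides, yields a deterministic $n^{O(1)}$ construction.
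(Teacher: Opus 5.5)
Your skeleton matches the paper's proof, which is a single sentence: the matrices $M_2,\dots,M_n$ sampled by Algorithm~\ref{alg4} form the list. The level-by-level reasoning is correct. Let $m$ be the least level at which the identity of Lemma~\ref{lemma-HCn-downself} fails as a polynomial identity for $\mathrm{C}$. The reverse direction of that lemma then gives $\mathrm{C}(Y^{(n,k)})=HC_k$ identically for all $k<m$. Hence $Q_m(M_m)\neq 0$ is exactly equivalent to $\mathrm{C}(Y^{(n,m)}(M_m))\neq HC_m(M_m)=HC_n(Y^{(n,m)}(M_m))$. So the ``main obstacle'' in your last paragraph does not arise, and no chaining or random shift is needed.

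\textbf{The gap.} The genuine gap is the step that makes one list work against all polynomial-size circuits simultaneously. Your claim that the bad tuples $(M_2,\dots,M_n)$ lie in a proper subvariety is false, and no argument can rescue it with one point per level. Given any list $A_1,\dots,A_{n-1}$:
\begin{enumerate}
\item choose a linear form $\ell$ in the off-diagonal entries that separates the distinct $A_i$ (possible once $|\F|>\binom{n-1}{2}$);
\item choose a univariate $p$ of degree at most $n-2$ with $p(\ell(A_i))=HC_n(A_i)$.
\end{enumerate}
Then $\mathrm{C}=p\circ\ell$ has size $O(n^2)$ and degree at most $n-2$, and it agrees with $HC_n$ on the entire list.

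The Heintz--Schnorr dimension count you invoke does not apply here. There the bad set is the projection, along the $\mathrm{poly}(s)$ free constants of a universal size-$s$ circuit, of an incidence variety. That projection is a proper subset only when the number of sample points exceeds the number of those constants.

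\textbf{What is actually proved.} Your argument, and equally the paper's one-line proof, establishes only the per-circuit statement. Fix $\mathrm{C}$ of size $s$ and degree $d$. The level $m$ depends on $\mathrm{C}$, not on the randomness. So Schwartz--Zippel applied to $Q_m$ shows the list contains no counterexample for $\mathrm{C}$ with probability at most $(d+1)/|U|$.

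A list that is good against all size-$s$ circuits at once needs $\mathrm{poly}(n,s)$ random points per level. Its size therefore depends on $s$ rather than being exactly $n-1$. Likewise, derandomised black-box PIT yields a polynomial-size hitting set per level, not a single matrix.

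With the quantifiers as written (first fix one list, then consider all circuits), the statement with $n-1$ matrices cannot hold. The paper's proof does not address this point either.
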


\begin{example}
   \item Note that Theorem \ref{thm-ckt-test} is \emph{not} implied by Theorem \ref{thm-flip-theorem} because $HC_n$ is not known to be computable by $n^{O(1)}$ size circuit. Verifying $\mathrm{C}(A_i) \neq HC_n(A_i)$ for some $i$ needs the evaluations $HC_n(A_i)$, which can not be done efficiently. 
\end{example}
\begin{table} 
    \centering
    \begin{tabular}{|c|c|c|}
        \hline
         \textbf{\emph{Property}} & $HC_n$ \textbf{\emph{(This work)}} & \textbf{\emph{Permanent}} \\
         \hline
         \hline
         Efficient ET algorithm? & Yes & Yes \\
         \hline
         Explicit basis of Lie algebra known? & Yes & Yes \\ 
         \hline
         Characterisation by symmetries? & \textbf{No}, for $n \geq 5$ & \textbf{Almost all fields}  \\ 
         \hline
         Symmetries generated by PS matrices? & Yes  &  Yes  \\
         \hline
         Characterisation by circuit identities? & Yes  &  Yes  \\
         \hline
        Scaling symmetries continuous? 
        & \makecell{Almost all fields, \\ except for $n = 4$ over $\Q,\R,\C$ } & Almost all fields  \\
         \hline
        Flip Theorem known? & Yes & Yes \\
         \hline
        Complexity of Zero testing? & \textbf{NP-complete} & \textbf{P} \\
        \hline
    \end{tabular}
    \caption{A comparison between Permanent and $HC_n$}
    \label{tab:summary}
\end{table}

\subsection{Proof Techniques} \label{sec-proof-ideas}
We follow the Lie algebraic approach of \cite{Kayal12}, which was used to obtain an ET algorithm for the Permanent. We describe the ET algorithm for the $HC$ at a high level. Suppose we are given black-box access to an $n^2-n$-variate degree-$n$ polynomial $f(\vecx)$. The algorithm has four steps: 

\begin{enumerate}
\item \label{item-PSequiv} \emph{Reduction to $PS$-equivalence testing.} Using the structure of the Lie algebra of $HC_n$, compute an invertible $D$ such that $f(D\vecx) = HC_n(PS\vecx)$ for some permutation $P$ and scaling $S$, assuming $f \sim HC_n$. See Algorithm \ref{alg1}. 

\item \label{item-Sequivred} \emph{Reduction to $S$-equivalence testing.} Let $f_1(\vecx) = f(D\vecx)$. Using the knowledge of the permutation symmetries of $HC_n$, and the set of vanishing second-order partial derivatives of $HC_n$, recover $P$ so that $f_1(P\vecx) = HC_n(S\vecx)$, assuming $f \sim HC_n$. Thus, $f_1(P\vecx)$ is $S$-equivalent to $HC_n$. See Algorithm \ref{alg2}.

\item \label{item-Srecov} \emph{Recovering $S$.} Let $f_2(\vecx) = f_1(P\vecx)$, assuming $f \sim HC_n$. Query $f_2$ at $k = O(n^2)$ many points to get constants $c_1, c_2, \dots, c_k$, and solve a system of linear equations where the variables are the entries of $S$ which we wish to recover. See Algorithm \ref{alg3}.

\item \label{item-verify} \emph{Verification $\backslash$ Soundness Test.} Let $B = DPS$. Use the downward self-reducibility, Lemma \ref{lemma-HCn-monotone} or Lemma 4 of \cite{BZ08}, of $HC_n$ to verify if $f(B\vecx) = HC_n(\vecx)$. If so, output $B^{-1}$. See Algorithm \ref{alg4}.
\end{enumerate}
Though the ET algorithm for $HC$ is obtained via the Lie algebraic approach, like it was done for the Permanent in \cite{Kayal12}, there are two major differences between the algorithms. First, the Lie algebra and the symmetries of the Permanent have been well-studied \cite{Botta67,MM62}, and are leveraged by \cite{Kayal12}. To our knowledge, the Lie algebra and symmetries of $HC_n$ have not been studied before. Therefore, we first study and give a complete characterisation of the Lie algebra $\mathfrak{g}_{HC_n}$ of $HC_n$ over all fields (see Propositions \ref{prop-lie-alg-struct} and \ref{prop-lie-basis}) by using ideas from \cite{GuptaS19} to construct a basis for $\mathfrak{g}_{HC_n}$; we also characterise the group of symmetries $\mathcal{G}_{HC_n}$ over large enough fields (Proposition \ref{prop: symmetries-PS}). Second, we deviate from \cite{Kayal12} in the execution of Step \ref{item-Srecov}. The deviation occurs because of an important difference in the monomials of $HC_n$ and those of $Perm_n$. In $Perm_n$, there are enough pairs of monomials which differ in exactly $2$ variables and are leveraged by \cite{Kayal12} to recover $S$ by querying the scaled Permanent polynomial at these monomials. In contrast, \emph{any} two monomials of $HC_n$ differ in at least $3$ variables (see Claim \ref{claim: cycle-disjointness}), which makes it unclear if they can be leveraged in the same way as for $Perm_n$ to recover $S$. Instead, we follow the approach in \cite{GuptaS19} to recover $S$. There are other key technical differences between our analysis and that in \cite{Kayal12,GuptaS19}, see Appendix \ref{sec-prevwork-comp} for a detailed comparison. Refer to Table \ref{tab:summary} for a comparison of results between $HC_n$ and the Permanent.

To prove Theorem \ref{thm-PS-sym}, we leverage the structure of $\mathfrak{g}_{HC_n}$ and conjugacy of Lie algebras of equivalent polynomials (Lemma \ref{lemma-lieconjug}). We prove Theorem \ref{thm-nonsymchar} by using the knowledge of the permutation (Proposition \ref{prop-Psym-gens}) and scaling symmetries (Proposition \ref{prop-scalesymcont}) of $HC_n$. Theorem \ref{thm-ckt-id} is proved by using the downward self-reducibility of $HC_n$ and an adaptation of Lemma 7.13 from \cite{Burgisser2000}. Theorems \ref{thm-ckt-test} and \ref{thm-flip-theorem} then follow by using the identities established in Theorem \ref{thm-ckt-id}.
\section{Preliminaries} \label{Scn: Prelim}
\subsection{Notations and Definitions}
The set of natural numbers is $\N = \{1,2 \dots\}$, while $\Z$ denotes the integers. For $n \in \N$, We use $[n]$ to denote the set $\{1,\dots,n\}$ and $[a,b]$ to denote $\{a, \dots ,b\}$. The set of $n \times n$ invertible matrices over $\F$ is denoted by $GL_n(\F)$. The set of permutations on $[n]$ is denoted by $S_n$ and the set of cyclic permutations on $[n]$ by $C_n$. A field is denoted by $\F$, while $\F^{\times}$ denotes $ \F \backslash \{0\}$. We denote the integers mod $m$ by $\Z_m$. The dimension of a vector space $V$ over a field $\F$ is denoted by $\dim_{\F}(V)$. By $\in_r$ we denote uniform random selection from a set, and by ``b.b.a to $f$'' we mean black-box access to $f$.

We denote the set of variables on which $HC_n$ depends as $\vecx := \{x_{i,j} | \ i,j \in [n], i \neq j\}$, with $|\vecx| = n^2 - n$ and $X_n = \{x_{i,j} | \ i,j \in [n]\}$ to denote the entire set of $n^2$ variables on which $HC_n$ is defined. By scaling matrices, we mean diagonal matrices. We treat a scaling matrix $S \in \F^{m \times m}$ as a vector in $\F^{m}$ by identifying the diagonal as a vector in $\F^m$. Thus, the entries of $S \in \F^{n^2 \times n^2}$, will be referred to as $S_{i,j}$, where $i,j \in [n]$. For a matrix $M \in \F^{m \times m}$ and $S,T \subseteq [m]$, we denote by $M_{\bullet \times T}$ the matrix $M$ restricted to columns in $T$, by $M_{S \times \bullet}$ the matrix $M$ restricted to rows in $S$, and by $M_{S \times T}$ the matrix $M$ restricted to the rows and columns in $S$ and $T$ respectively. By $n^{O(1)}$, we denote a polynomially bounded function of $n$.

\subsection{Algebraic and algorithmic preliminaries}
\begin{lemma}[Computing Derivatives \cite{KayalNST17}] \label{lemma-pd-compute}
Let $f \in \F[\vecy]$ be an $n$-variate degree-$d$ polynomial with bit complexity $\beta$. Given black-box access to $f$, we can obtain in $(nd\beta)^{O(1)}$ time black-box access to a derivative $\frac{\partial f}{ \partial y}$ for any $y \in \vecy$.    
\end{lemma}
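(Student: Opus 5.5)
The plan is to reduce the statement to black-box access for a single variable substitution; since $\frac{\partial f}{\partial y}$ is a polynomial of degree at most $d-1$, it suffices to evaluate it at any queried point $\veca$. Without loss of generality take $y = y_1$, and consider the univariate restriction
\[
g(t) := f(t, a_2, \dots, a_n) = \sum_{k=0}^{d} c_k t^k .
\]
Then $\frac{\partial f}{\partial y_1}(\veca) = g'(a_1) = \sum_{k\ge 1} k c_k a_1^{k-1}$. So the task becomes recovering the coefficients $c_0,\dots,c_d$ of $g$ from black-box queries to $f$.

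The key steps, in order, are as follows.
\begin{enumerate}
\item Pick $d+1$ distinct field elements $t_0,\dots,t_d$; over a small field, pass to an extension of degree $O(\log d)$, which is still polynomial. Query $f(t_i, a_2,\dots,a_n)$ for each $i$.
\item Solve the $(d+1)\times(d+1)$ Vandermonde system for $c_0,\dots,c_d$, either by Gaussian elimination or by Lagrange interpolation. Since the $t_i$ are distinct the matrix is invertible, so the $c_k$ are uniquely determined. This costs $d^{O(1)}$ arithmetic operations.
\item Output $\sum_{k\ge1} k c_k a_1^{k-1}$.
\end{enumerate}
Each evaluation of the derivative therefore costs $d+1$ oracle calls plus polynomial overhead, which yields black-box access to $\frac{\partial f}{\partial y}$.

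The main point needing care is the bit-complexity bound in the Turing model over $\Q$, which is needed to get $(nd\beta)^{O(1)}$ rather than merely a polynomial number of arithmetic operations. I would choose the $t_i$ to be the small integers $0,1,\dots,d$ and run Lagrange interpolation with exact rational arithmetic. The inverse Vandermonde entries then have bit size $O(d\log d)$. The values $f(t_i,\ldots)$ have bit size polynomial in $\beta$, $d$, $n$ and the bit size of $\veca$, because $f$ has at most $\binom{n+d}{d}$ monomials, but the value size is only $\log$ of that count times $d$ and the input size. Hence every intermediate $c_k$ has polynomially bounded bit length. Over finite fields no such issue arises, and the cost is $(nd\log|\F|)^{O(1)}$.

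Finally, in characteristic $p \le d$ the formula $k c_k$ remains valid as the formal derivative, so no restriction is needed for this lemma. The characteristic constraints in the paper concern only nonvanishing of derivatives, not their computability.
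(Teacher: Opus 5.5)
Your proposal is correct, and it is essentially the standard argument behind this lemma; the paper gives no proof of its own and defers to \cite{KayalNST17}, where the argument is to restrict $f$ to a line in the variable $y$, recover the univariate coefficients from $d+1$ queries by interpolation, and differentiate formally. The two assumptions you make without stating them are the same modeling conventions as in the cited source: that the black box accepts queries from a small extension field when $|\F| \le d$, and that the coefficient bit bound $\beta$ controls the bit size of evaluations (for example, integer coefficients).
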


\begin{definition}[Group of Symmetries \cite{GuptaS19}] \label{def-groupsym}
Let $f \in \F[\vecy]$ be an $n$-variate polynomial. The group of symmetries of $f$ over $\F$, denoted by $\mathcal{G}_f$, is the set of matrices $\{A \in \GL_{n}(\F) : f(A\vecy) = f(\vecy)\}$ which also form a group under matrix multiplication.
\end{definition}
\begin{definition}[Lie Algebra of a polynomial ] \label{def-lie}
Let $f \in \F[\vecy]$, where $\vecy := \{y_1,y_2, \dots, y_n\}$. The Lie Algebra associated with $f$, denoted by $\mathfrak{g}_f$, is the set $\{A \in \F^{n \times n} \ | \ \sum_{i,j \in [n]} A_{i,j}y_j \frac{\partial f}{\partial y_i} = 0 \}$, which also forms a vector space over $\F$.
\end{definition}

\begin{lemma}[Conjugacy of Lie Algebras \cite{Kayal12}] \label{lemma-lieconjug}
Let $f \in \F[\vecy]$ be an $n$-variate polynomial. If $g(\vecy) = f(A\vecy)$, where $A \in \GL_{n}(\F)$, then $\mathfrak{g}_g$ = $A^{-1}\cdot \mathfrak{g}_f \cdot A$.
\end{lemma}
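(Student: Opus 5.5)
The plan is to prove the identity by a direct change-of-variables computation, showing both inclusions $A^{-1}\mathfrak{g}_f A \subseteq \mathfrak{g}_g$ and $A\,\mathfrak{g}_g A^{-1} \subseteq \mathfrak{g}_f$. First I reformulate the Lie algebra condition in vector form. Write $\nabla f$ for the column vector $(\partial f/\partial y_1,\dots,\partial f/\partial y_n)^T$. Then $M \in \mathfrak{g}_f$ if and only if
\begin{equation*}
\sum_{i,j} M_{i,j} y_j \frac{\partial f}{\partial y_i} = (\nabla f(\vecy))^T M \vecy = 0
\end{equation*}
holds as a polynomial identity in $\vecy$.

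Next, by the chain rule applied to $g(\vecy) = f(A\vecy)$ we have $\nabla g(\vecy) = A^T (\nabla f)(A\vecy)$. Hence for any matrix $N$,
\begin{equation*}
(\nabla g(\vecy))^T N \vecy = ((\nabla f)(A\vecy))^T A N \vecy .
\end{equation*}
Take $N = A^{-1} M A$ with $M \in \mathfrak{g}_f$. The right-hand side becomes $((\nabla f)(A\vecy))^T M (A\vecy)$. This is the polynomial $(\nabla f(\vecz))^T M \vecz$, which is identically zero, evaluated at the linear substitution $\vecz = A\vecy$, so it is also identically zero. Therefore $A^{-1} M A \in \mathfrak{g}_g$, which gives $A^{-1}\mathfrak{g}_f A \subseteq \mathfrak{g}_g$.

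For the reverse inclusion I apply the same argument with the roles of $f$ and $g$ swapped. Since $f(\vecy) = g(A^{-1}\vecy)$ and $A^{-1}$ is invertible, it yields $A\,\mathfrak{g}_g A^{-1} \subseteq \mathfrak{g}_f$. Conjugating this by $A$ gives $\mathfrak{g}_g \subseteq A^{-1}\mathfrak{g}_f A$, and the two inclusions together give equality.

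There is no serious obstacle. The only points needing care are keeping the transpose in the chain rule straight, and noting that a polynomial identity is preserved under an invertible linear substitution of variables. Because the argument uses only formal derivatives and the chain rule, it is valid over any field $\F$.
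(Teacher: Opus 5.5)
Your proof is correct. The chain-rule identity $\nabla g(\vecy) = A^T(\nabla f)(A\vecy)$ turns the defining condition for $A^{-1}MA \in \mathfrak{g}_g$ into the condition for $M \in \mathfrak{g}_f$ evaluated at $A\vecy$, and swapping $f$ and $g$ via $A^{-1}$ gives the reverse inclusion; the paper does not prove this lemma but imports it from \cite{Kayal12}, whose argument is essentially this same change-of-variables computation.
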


\begin{lemma}[Computing basis of Lie Algebra \cite{Kayal12}] \label{lemma-liebasis-comp}
Given black-box access to an $n$-variate degree $d$ polynomial $f \in \F[\vecy]$, a basis of $\mathfrak{g}_f$ can be computed in randomised $(nd\beta)^{O(1)}$ time, where $\beta$ is the bit complexity of the coefficients.
\end{lemma}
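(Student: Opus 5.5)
The statement to prove is Lemma~\ref{lemma-liebasis-comp}: given black-box access to $f$, compute a basis of $\mathfrak{g}_f$ in randomised $(nd\beta)^{O(1)}$ time. The approach is to turn membership in $\mathfrak{g}_f$ into a homogeneous linear system in the $n^2$ unknown entries of $A$, and then solve it.

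\textbf{Step 1: set up the linear condition.} For $A \in \F^{n\times n}$ write
\[
L_A(f) := \sum_{i,j\in[n]} A_{i,j}\, y_j \frac{\partial f}{\partial y_i}.
\]
This expression is linear in the entries $A_{i,j}$. So $\mathfrak{g}_f$ is the kernel of the linear map
\[
\Phi : \F^{n^2} \to \F[\vecy]_{\le d}, \qquad A \mapsto L_A(f).
\]
By Lemma~\ref{lemma-pd-compute}, we obtain black-box access to each $\frac{\partial f}{\partial y_i}$ in $(nd\beta)^{O(1)}$ time. Multiplying by $y_j$ then gives black-box access to each of the $n^2$ polynomials $g_{i,j} := y_j \frac{\partial f}{\partial y_i}$, each of degree at most $d$. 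The task is now to find a basis of the space of linear dependencies $\{(A_{i,j}) : \sum_{i,j} A_{i,j} g_{i,j} = 0\}$ among $n^2$ black-box polynomials.

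\textbf{Step 2: reduce to evaluations.} Pick $m = n^2$ points $\veca_1,\dots,\veca_m$ independently and uniformly from $T^n$, where $T \subseteq \F$ has size polynomially larger than $d n^2$. If $\F$ is too small, pass to an extension field; this does not change the dimension of the solution space of a linear system defined over $\F$. Form the $m \times n^2$ matrix $M$ with rows
\[
M_{k,(i,j)} = g_{i,j}(\veca_k).
\]
Every dependency of the $g_{i,j}$ lies in $\ker M$. For the converse, let $r$ be the rank of the span $V$ of the $g_{i,j}$. Choose $r$ of them, say $h_1,\dots,h_r$, forming a basis of $V$. The $r\times r$ determinant $\det\bigl(h_\ell(\vecy_k)\bigr)_{k,\ell\le r}$, viewed as a polynomial in fresh variable vectors $\vecy_1,\dots,\vecy_r$, is nonzero of degree at most $rd$; this is the standard fact that linearly independent polynomials have a nonsingular evaluation matrix at generic points. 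By the Polynomial Identity Lemma, with probability at least $1 - rd/|T|$ the matrix $M$ has rank $r$. In that case $\ker M$ has dimension $n^2 - r$, equal to the dimension of the space of dependencies, so the two spaces coincide.

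\textbf{Step 3: solve.} Compute a basis of $\ker M$ by Gaussian elimination, which takes polynomial time. The entries of $M$ are black-box evaluations of bit size $(nd\beta)^{O(1)}$, so bit complexity stays polynomial over $\Q$, and over $\F_q$ the cost is polynomial in $\log q$.

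\textbf{Main obstacle.} The only delicate point is the correctness argument in Step 2, namely that random evaluations preserve the rank of the span $V$. The determinant argument above handles it. The rest is bookkeeping of running time, together with the characteristic caveat already built into Lemma~\ref{lemma-pd-compute}.
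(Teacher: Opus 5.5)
Your proposal is correct and matches the standard argument behind this lemma, which the paper cites from \cite{Kayal12} without reproducing: $\mathfrak{g}_f$ is the space of $\F$-linear dependencies among the $n^2$ black-box polynomials $y_j\,\partial f/\partial y_i$, and you recover that space as the kernel of their evaluation matrix at $n^2$ random points, with the Polynomial Identity Lemma applied to the generic evaluation determinant ensuring the rank is preserved. One small addition: if you pass to an extension field, output the reduced row echelon basis of the kernel, which is automatically defined over $\F$ because the dependency space is $\F$-rational.
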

For $A \in \C^{n \times n}$, define $e^{A} := \sum_{i \in \N} \frac{A^i}{i!}$, the sum always converges. Over $\C$, $\mathfrak{g}_f$ is related to $\mathcal{G}_f$ as in Definition \ref{def-contsym}. 
\begin{definition}[Continuous and Discrete Symmetries ] \label{def-contsym}
Let $f \in \C[\vecy]$. If $A \in$ $\mathfrak{g}_f$, then $e^{tA} \in \mathcal{G}_f$ for all $t \in \R$; see \cite{Hall15} for a proof. The continuous symmetries of $f$ are the elements of the set $\{e^{tA}: A \in \mathcal{G}_f \text{ and } t \in \R\}$. All other symmetries in $\mathcal{G}_f$ are the discrete symmetries of $f$. 
\end{definition}
\begin{definition}[Characterisation by circuit identities \cite{grochowPhD}] \label{def-char-cktid}
    An $n$-variate polynomial $f(\vecy)$ is said to be characterised by circuit identities if there exist $m =n^{O(1)}$ many polynomials $g_1(\vecz), \dots, g_m(\vecz)$, with each $g_i$ computable by a $n^{O(1)}$ size circuit over $\Z$ and $|\vecz| \leq k$, $f$ is the only non-zero polynomial upto scaling that satisfies $g_i(f(\vecy_1),f(\vecy_2) \dots f(\vecy_k)) = 0$, where $\vecy_i$ can be computed from $\vecy$ by a $n^{O(1)}$ size circuit.
\end{definition}
\begin{definition}[Characterisation by symmetries] \label{def-char-sym}
    A homogeneous degree-$d$ polynomial $f \in \F[\vecy]$ is characterised by its symmetries if for every homogeneous degree-$d$ polynomial  $g \in \F[\vecy]$,  $\mathcal{G}_f \subseteq \mathcal{G}_g$ implies $g = c \cdot f$, for somes $c \in \F^{\times}$.
\end{definition}
\section{Lie Algebra and the Symmetries of $HC_n$} \label{Scn: Lie_alg}
In Section \ref{subsec-lie-alg}, we analyse $\mathfrak{g}_{HC_n}$, the Lie Algebra of $HC_n$, and show that it comprises diagonal matrices and that it is a $2n-2$ dimensional vector space over any field $\F$, for $n=3$ and $n \geq 5$. We analyse $\mathfrak{g}_{HC_4}$ and the symmetries of $HC_4$ in Appendix \ref{sec-HC4-results}. In Section \ref{subsec-sym}, we use $\mathfrak{g}_{HC_n}$ to show that over large enough fields the symmetries of $HC_n$ are generated by permutation and scaling matrices (which proves Theorem \ref{thm-PS-sym}) and analyse the permutation and scaling symmetries of $HC_n$. In Section \ref{subsec-non-char}, we show that $HC_n$ is \emph{not} characterised by its symmetries, proving Theorem \ref{thm-nonsymchar}. In Section \ref{subsec-Ckt-id}, we show that $HC_n$ is characterised by circuit identities over any field by using the downward self-reducibility of $HC_n$ and prove Theorems \ref{thm-ckt-id}, \ref{thm-ckt-test} and \ref{thm-flip-theorem}. All missing proofs can be found in Appendix \ref{proofs-lie}.

\subsection{Lie Algebra} \label{subsec-lie-alg}
Proposition \ref{prop-lie-alg-struct}, proved in Appendix \ref{proof-lie-alg-struct}, shows that $\mathfrak{g}_{HC_n}$ comprises diagonal matrices $A \in \F^{(n^2-n) \times (n^2-n)}$, where for all $\sigma \in C_n$ the sum of the diagonal entries in $A$ corresponding to $\sigma$ is $0$. To prove Proposition \ref{prop-lie-alg-struct}, we use Observation \ref{obs: disj-deriv}, which follows from Claim \ref{claim: cycle-disjointness}. Claim \ref{claim: cycle-disjointness} states that any two cyclic permutations, when interpreted as Hamiltonian cycles on the complete digraph, must have at least $3$ different edges.

\begin{proposition} \label{prop-lie-alg-struct}
Let $n \geq 3$. Then for $A \in \F^{(n^2-n) \times (n^2-n)}$, 
\[A \in  \mathfrak{g}_{HC_n} \iff A\text{ is diagonal and }\sum_{i=1}^{n} A_{(i,\sigma(i)),(i,\sigma(i))} = 0 \text{  for all $\sigma \in C_n$.}\] 
\end{proposition}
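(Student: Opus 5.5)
The plan is to expand the defining identity $\sum_{e,e'} A_{e,e'}\, x_{e'} \frac{\partial HC_n}{\partial x_e} = 0$ monomial by monomial, with indices $e,e'$ ranging over directed edges $(i,j)$, $i \neq j$. For $\sigma \in C_n$ write $m_\sigma = \prod_i x_{i,\sigma(i)}$; note that $HC_n$ is multilinear.

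For the reverse direction, assume $A$ is diagonal with $d_e := A_{e,e}$. Then the operator $\sum_e d_e x_e \partial_{x_e}$ sends $m_\sigma$ to $\big(\sum_{e \in \sigma} d_e\big) m_\sigma$. So the expression equals $\sum_{\sigma \in C_n} \big(\sum_i d_{(i,\sigma(i))}\big) m_\sigma$, which vanishes exactly when every cycle sum is zero. Since distinct $\sigma$ give distinct monomials, this argument also gives the forward direction once diagonality is known.

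For the forward direction, the main step is to show that the off-diagonal entries vanish. The term $x_{e'} \partial_{x_e} m_\sigma$ for $e \in \sigma$, $e' \neq e$ is the monomial $m_\sigma / x_e \cdot x_{e'}$. There are two cases.

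\textbf{Case $e' \notin \sigma$.} The monomial is multilinear and differs from $m_\sigma$ by swapping one edge. If the same monomial arose from another pair $(\sigma', e_2, e_2')$, then either:
\begin{itemize}
\item it is $m_{\sigma'}$ times a diagonal-term coefficient, which is impossible because $\sigma$ and $\sigma'$ would differ in at most one edge, contradicting Claim \ref{claim: cycle-disjointness} (at least $3$ different edges); or
\item it is $m_{\sigma'}/x_{e_2} \cdot x_{e_2'}$, in which case $m_\sigma$ and $m_{\sigma'}$ differ in at most two edges. By the same claim this forces $\sigma = \sigma'$, and then $e_2 = e$ and $e_2' = e'$.
\end{itemize}
Hence this monomial appears exactly once in the expansion, with coefficient $A_{e,e'}$, so $A_{e,e'} = 0$. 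This is the content I expect Observation \ref{obs: disj-deriv} to package.

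\textbf{Case $e' \in \sigma$.} Here $e' \neq e$ and the monomial is $x_{e'}^2 \prod_{f \in \sigma \setminus \{e\}} x_f$, which has a square. It can only arise from a pair $(\sigma', e_2, e')$ with $e' \in \sigma'$ and $\sigma' \setminus \{e_2\} = \sigma \setminus \{e\}$. Then $\sigma$ and $\sigma'$ differ in at most one edge, so again $\sigma' = \sigma$ and $e_2 = e$. So the coefficient is $A_{e,e'}$, which must vanish.

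To cover every off-diagonal pair $(e,e')$, I need for each such pair a cycle $\sigma$ containing $e$. Such a cycle exists because every edge $(i,j)$ with $i \neq j$ lies on some Hamiltonian cycle when $n \geq 3$; whichever case applies, the argument gives $A_{e,e'} = 0$.

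The main obstacle is the uniqueness bookkeeping: checking that no two distinct (cycle, edge-swap) pairs produce the same monomial, including when the swapped-in edge $e'$ shares endpoints with $e$. The robust way to handle this is to reduce everything to the symmetric-difference bound of Claim \ref{claim: cycle-disjointness}, as above. With the off-diagonal entries zero, $A$ is diagonal, and the monomial coefficients give the cycle-sum conditions.
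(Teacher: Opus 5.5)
Your proposal is correct and follows essentially the same route as the paper. Both expand the defining identity over the monomials $x_{e'}\,\partial m_\sigma/\partial x_e$. Both then use Claim~\ref{claim: cycle-disjointness} to show that each off-diagonal term yields a monomial occurring exactly once; the paper packages this step as Observation~\ref{obs: disj-deriv} together with the within-cycle condition \eqref{eqn:deriv-equal}. Finally, both read off the cycle-sum conditions from the coefficients of the $m_\sigma$. Your separate treatment of the squared-variable case, and your remark that every edge lies on some Hamiltonian cycle, make explicit two points the paper leaves implicit.
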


\begin{observation}  \label{obs: disj-deriv}
Let $\sigma_1, \sigma_2 \in C_n$ with $\sigma_1 \neq \sigma_2$ and  $m_\sigma $ denote $\prod_{i \in [n]}x_{i,\sigma(i)}$ for $\sigma \in C_n$. Then,
    \begin{equation*}
        x_{i_1,j_1} \frac{\partial m_{\sigma_1}}{\partial x_{i_2,j_2}} \neq x_{i_3,j_3} \frac{\partial m_{\sigma_2}}{\partial x_{i_4,j_4}},
    \end{equation*}
    where $i_k$'s, $j_k$'s $\in [n]$, $j_2 = \sigma_1(i_2)$, and $j_4 = \sigma_2(i_4)$ (to ensure non-zero derivatives).
\end{observation}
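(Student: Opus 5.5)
The plan is to reduce the Observation to a counting statement about shared edges, and then contradict Claim~\ref{claim: cycle-disjointness}.

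\textbf{Step 1: both sides are single monomials.} Since $m_{\sigma}=\prod_{i\in[n]}x_{i,\sigma(i)}$ is a product of $n$ \emph{distinct} variables, the conditions $j_2=\sigma_1(i_2)$ and $j_4=\sigma_2(i_4)$ give
\[
\frac{\partial m_{\sigma_1}}{\partial x_{i_2,j_2}}=\frac{m_{\sigma_1}}{x_{i_2,j_2}}, \qquad \frac{\partial m_{\sigma_2}}{\partial x_{i_4,j_4}}=\frac{m_{\sigma_2}}{x_{i_4,j_4}}.
\]
These are squarefree monomials of degree $n-1$ with coefficient $1$. Hence the left side $M_1=x_{i_1,j_1}\, m_{\sigma_1}/x_{i_2,j_2}$ and the right side $M_2=x_{i_3,j_3}\, m_{\sigma_2}/x_{i_4,j_4}$ are nonzero monomials of degree $n$ with coefficient $1$. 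So it suffices to show that $M_1$ and $M_2$ are different monomials.

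\textbf{Step 2: equality would force $\sigma_1$ and $\sigma_2$ to share $n-2$ edges.} Suppose $M_1=M_2=:M$. Let $E_1=\{x_{i,\sigma_1(i)}: i\neq i_2\}$ and $E_2=\{x_{i,\sigma_2(i)}: i\neq i_4\}$. Each is a set of $n-1$ distinct variables dividing $M$, so both lie in the support of $M$.

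Since $M$ has degree $n$, its support has at most $n$ elements. Inclusion--exclusion then gives
\[
|E_1\cap E_2|\geq (n-1)+(n-1)-n = n-2.
\]
Thus the Hamiltonian cycles corresponding to $\sigma_1$ and $\sigma_2$ share at least $n-2$ of their $n$ edges. Equivalently, they differ in at most $2$ edges.

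\textbf{Step 3: contradiction.} Claim~\ref{claim: cycle-disjointness} says that two distinct cyclic permutations differ in at least $3$ edges. This contradicts Step 2, so $M_1\neq M_2$.

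The only delicate point is the support-size count in Step 2. If $x_{i_1,j_1}$ already appears in $m_{\sigma_1}/x_{i_2,j_2}$, then $M$ is not squarefree. This only makes the support smaller than $n$, which strengthens the inequality, so the argument goes through unchanged. Everything else follows directly from the definitions and Claim~\ref{claim: cycle-disjointness}.
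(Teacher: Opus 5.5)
Your proof is correct and takes the same route as the paper. Both reduce the Observation to Claim~\ref{claim: cycle-disjointness}, using the fact that each side differs from its cycle monomial by one removed variable and one added variable, so equality would leave too few positions where $\sigma_1$ and $\sigma_2$ disagree. The paper gets there by a case split on whether $i_2,i_4$ lie in the three-element disagreement set $S$, plus a separate check for $n=3$; your inclusion--exclusion bound $|E_1\cap E_2|\ge n-2$ covers every case, including $n=3$, in a single step.
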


\begin{claim} \label{claim: cycle-disjointness}
Let $\sigma_1, \sigma_2 \in C_n$ with $\sigma_1 \neq \sigma_2$ . There exists $S \subseteq [n]$, such that $|S| = 3$ and $\sigma_1(i) \neq \sigma_2(i)$ for all $i \in S$.
\end{claim}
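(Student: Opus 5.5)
The claim says two distinct cyclic permutations disagree on at least three points. I would prove it by contradiction, through a counting argument on the disagreement set $D = \{i \in [n] : \sigma_1(i) \neq \sigma_2(i)\}$. Since $\sigma_1 \neq \sigma_2$, $D$ is nonempty. It then suffices to show $|D| \geq 3$, and any $3$-element subset of $D$ serves as $S$.

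First I exclude $|D| = 1$. Suppose $D = \{i\}$. The permutations $\sigma_1$ and $\sigma_2$ are bijections that agree on $[n]\setminus\{i\}$, so they have the same image there. Hence $\sigma_1(i)$ and $\sigma_2(i)$ must both be the unique element of $[n]$ missing from that common image, which forces $\sigma_1(i) = \sigma_2(i)$, a contradiction.

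Next I exclude $|D| = 2$. Suppose $D = \{i, j\}$. By the same image argument, $\{\sigma_1(i), \sigma_1(j)\} = \{\sigma_2(i), \sigma_2(j)\}$. Because $\sigma_1(i) \neq \sigma_2(i)$, we must have $\sigma_2(i) = \sigma_1(j)$ and $\sigma_2(j) = \sigma_1(i)$. This means $\sigma_2 = \sigma_1 \circ (i\ j)$. Now set $a = \sigma_1(i)$ and $b = \sigma_1(j)$, and follow the cycle of $\sigma_2$ starting at $i$. From $i$, $\sigma_2$ goes to $b$. From there it follows the $\sigma_1$-path $b \to \sigma_1(b) \to \cdots$, which coincides with $\sigma_2$ until it reaches $i$ or $j$. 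Along the single $n$-cycle of $\sigma_1$, the path starting at $b = \sigma_1(j)$ reaches $i$ before $j$, and then closes up. So the $\sigma_2$-orbit of $i$ consists only of the arc from $b$ to $i$, which excludes $j$. Hence $\sigma_2$ has at least two cycles, contradicting $\sigma_2 \in C_n$.

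The step I expect to be the main obstacle is stating this cycle-splitting cleanly; the standard fact that multiplying an $n$-cycle by a transposition of two of its elements splits it into two cycles is the key. Once it is in place, $|D| \geq 3$ and the claim follows. This needs $n \geq 3$, which holds because $C_n$ has at least two elements only when $n \geq 3$.
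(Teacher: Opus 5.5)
Your proof is correct, but it takes a different route from the paper's.

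The paper argues positionally. It writes both cycles starting from $1$ and takes the last common entry $m_i$ before the cycle sequences diverge, so $\sigma_1(m_i)\neq\sigma_2(m_i)$. It then takes the $\sigma_2$-predecessor $m_{k_1}$ of $m_{i+1}$ as a second disagreement point. A third point comes from a case split on whether $\sigma_1(m_{k_1}) = m'_{i+1}$, and the harder case needs three subcases of index arithmetic along the cycles.

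You instead bound the disagreement set $D$ from below. $|D|=1$ is impossible for any two permutations, and $|D|=2$ forces $\sigma_2=\sigma_1\circ(i\ j)$. Your orbit-tracing then correctly shows that $\sigma_2$ has a cycle missing $j$, namely the $\sigma_1$-arc from $\sigma_1(j)$ to $i$, and this covers the degenerate case $\sigma_1(j)=i$.

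Your argument is shorter, easier to verify, and more general. Composing with a transposition changes the number of cycles by exactly one, so any two distinct permutations with the same number of cycles differ in at least three points. The step you flagged as the obstacle can be avoided entirely by parity: $\sigma_1^{-1}\sigma_2$ would be a transposition, while $\sigma_1$ and $\sigma_2$ both have sign $(-1)^{n-1}$.

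What the paper's approach gives in return is three explicit disagreement points in cycle notation. However, its only downstream use, in Observation~\ref{obs: disj-deriv}, needs just their existence, so nothing is lost with your route.
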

By Proposition \ref{prop-lie-alg-struct}, any $A \in \mathfrak{g}_{HC_n}$ can be identified with a vector in $\F^{n^2-n}$. Proposition \ref{prop-lie-basis} shows that for $n \neq 4$, $\mathfrak{g}_{HC_n}$ is a $(2n-2)$-dimensional over any $\F$. In Appendix \ref{subsec-lie-HC4}, we analyse $\mathfrak{g}_{HC_4}$ over all fields. 
\begin{proposition} \label{prop-lie-basis}
    Let $\F$ be any field and $n = 3$ or $n \geq 5$. Consider $A^{(k)}$, $B^{(\ell)}$ and $C \in \F^{n^2-n}$, where $k \in [2,n], \ell \in [2,n-1]$, defined as:
    \[A^{(k)}_{(i,j)} = \begin{cases} 
          1 & i = 1 , \\
          -1 & i = k , \\
          0 & \text{ otherwise} 
       \end{cases}, \ \quad B^{(\ell)}_{(i,j)} = \begin{cases} 
          1 &  j = 1, \\
          -1 & j = l, \\
          0 & \text{ otherwise} 
       \end{cases}, \ \quad C_{(i,j)} = \begin{cases} 
          -1 & i = 2, \\
          1 & j = 2, \\
          0 & \text{ otherwise} 
       \end{cases}\]
    Then, $\text{dim}_{\F}(\mathfrak{g}_{HC_n}) = 2n-2$ with $\mathcal{B}_n = \{A^{(2)}, \dots, A^{(n)}, B^{(2)}, \dots, B^{(n-1)}, C\}$ as a basis. 
\end{proposition}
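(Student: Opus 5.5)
By Proposition \ref{prop-lie-alg-struct}, $\mathfrak{g}_{HC_n}$ is the space of weight vectors $a=(a_{i,j})_{i\neq j}\in\F^{n^2-n}$ with $\sum_i a_{i,\sigma(i)}=0$ for every $\sigma\in C_n$. The plan has three steps: show $\mathcal{B}_n\subseteq\mathfrak{g}_{HC_n}$, show $\mathcal{B}_n$ is linearly independent, and show $\dim\mathfrak{g}_{HC_n}\le 2n-2$.

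\emph{Membership.} A cyclic $\sigma$ uses exactly one edge leaving each vertex $i$ and exactly one edge entering each vertex $j$. For $A^{(k)}$ the cycle sum is therefore $1-1=0$. The same argument handles $B^{(\ell)}$. For $C$ the sum is $-1+1=0$, since the edge $(2,2)$ does not exist and so no edge is counted twice. In fact, every vector of the form $a_{i,j}=r_i+c_j$ with $\sum_i r_i+\sum_j c_j=0$ lies in $\mathfrak{g}_{HC_n}$, and all basis elements have this form.

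\emph{Independence.} Suppose $\sum_k\alpha_kA^{(k)}+\sum_\ell\beta_\ell B^{(\ell)}+\gamma C=0$. Evaluating at coordinates chosen to isolate coefficients should force every coefficient to vanish. For instance, compare the entries $(i,j)$ and $(i,j')$ in a fixed row $i$: their difference depends only on column terms, so each $\beta_\ell$ and $\gamma$ vanishes. After that, comparing entries in different rows kills the $\alpha_k$. Here one must use $n\ge3$, so that enough off-diagonal positions exist.

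\emph{Upper bound.} This is the main step. Let $a\in\mathfrak{g}_{HC_n}$. By subtracting a suitable combination of $\mathcal{B}_n$, I normalise $a$ so that $a_{1,j}=0$ for all $j$ and $a_{i,1}=0$ for all $i$, with one additional entry also set to $0$. This uses exactly $2n-2$ degrees of freedom, since the row and column potentials $r,c$ span $2n-1$ parameters minus the one linear constraint. I then aim to show that a normalised $a$ must be zero, using local exchange moves between Hamiltonian cycles. Take two cycles that differ by a reversal or re-routing of a short segment: for example $\dots u\to v\to w\to x\dots$ versus $\dots u\to w\to v\to x\dots$. Subtracting their defining equations gives relations such as $a_{u,v}+a_{v,w}+a_{w,x}=a_{u,w}+a_{w,v}+a_{v,x}$ for all distinct $u,v,w,x$ (here $n\ge4$ is needed so that the rest of the cycle can be completed). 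Further such relations come from three-edge swaps, i.e.\ moving a vertex elsewhere in the cycle.

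From these relations I want to deduce the ``4-cycle condition'' $a_{i,j}+a_{k,l}=a_{i,l}+a_{k,j}$ for distinct indices, which characterises the potential form $r_i+c_j$. The main obstacle is that the swap relations produce only 3-term identities, and they also involve the antisymmetric combinations $a_{v,w}$ versus $a_{w,v}$. This is exactly where $n=4$ breaks down: there are too few cycles, and extra solutions appear. For $n\ge5$, I would try first to combine two vertex-moves that share a common segment, so that the intermediate terms cancel, and to use a fifth vertex as a free ``pivot'' to separate indices. Once the potential form is established, the normalisation together with one cycle equation forces $a=0$. The case $n=3$ is handled directly: $C_3$ contains only the two cycles, and the system can be solved by hand to obtain dimension $6-2=4=2n-2$.
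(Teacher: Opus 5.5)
Your membership check and your independence sketch are fine: within a row, the differences force all column potentials to coincide, $c_n=0$ kills them, and then the row potentials vanish. No division occurs, so this holds over any $\F$. The genuine gap is the upper bound $\dim_{\F}\mathfrak{g}_{HC_n}\le 2n-2$, which is the real content of the proposition. You state the 4-cycle condition $a_{i,j}+a_{k,l}=a_{i,l}+a_{k,j}$ as a goal, call its derivation ``the main obstacle'', and stop at ``I would try''.

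You also do not identify what the derivation must achieve. Because the claim is over \emph{every} field, the 4-cycle functional has to be an integer ($\pm1$) combination of cycle equations, not merely a rational one. This is the true reason $n=4$ is excluded. By Proposition~\ref{prop-lie-struct-HC4}, $\dim\mathfrak{g}_{HC_4}=6=2n-2$ unless $\mathrm{char}(\F)=2$. So the issue is not that ``extra solutions appear'' in general; it is that the needed combination carries a factor $2$.

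The missing step is in fact one line from your own exchange relation. Write $E(u,v,w,x)$ for $a_{u,v}+a_{v,w}+a_{w,x}-a_{u,w}-a_{w,v}-a_{v,x}=0$. Given distinct $u,u',v,w$, choose a fifth vertex $x$; this is exactly where $n\ge5$ is used. Then $E(u,v,w,x)-E(u',v,w,x)$ reads $a_{u,v}+a_{u',w}=a_{u,w}+a_{u',v}$. From this, for $j\ne l$ the difference $a_{i,j}-a_{i,l}$ does not depend on $i\notin\{j,l\}$. These differences are additive in $(j,l)$, hence equal $c_j-c_l$, and setting $r_i:=a_{i,j}-c_j$ gives the potential form, again without any division.

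Your endgame is also wrong as written. You cannot zero out $2n-1$ entries (all $a_{1,j}$, all $a_{i,1}$, plus one more) using only $2n-2$ degrees of freedom. Even the $2n-2$ conditions $a_{1,j}=a_{i,1}=0$ are not always attainable. Suppose $\mathrm{char}(\F)\mid n-2$, and take the vector with $a_{i,j}=1$ for $i,j\in[2,n]$ and $0$ in row and column $1$. Every Hamiltonian cycle has exactly $n-2$ edges avoiding vertex $1$, so this vector lies in $\operatorname{span}\mathcal{B}_n$. It is nonzero and satisfies all those conditions, so ``a normalised $a$ must be zero'' fails there. This is why the paper normalises $z_{2,3}$ and leaves $z_{n,1}$ free.

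The fix is to drop the normalisation. Potential form plus one cycle equation gives $\sum_i r_i+\sum_j c_j=0$. These vectors form a space of dimension $(2n-1)-1$, because the kernel $(-\mathbf{1},\mathbf{1})$ of $(r,c)\mapsto(r_i+c_j)_{i\ne j}$ lies in that hyperplane. This space contains the independent set $\mathcal{B}_n$ of size $2n-2$ and therefore equals its span.

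With these two repairs, your route is a valid and more elementary alternative to the paper's. The paper instead proves $\mathrm{rank}\,M^{HC_n}\ge(n-1)(n-2)$ by inductively building a submatrix $M^{(n)}$ with a $\pm1$ minor, starting from an explicit $n=5$ base case. It then reuses that specific $M^{(n)}$ for the scaling symmetries and Algorithm~\ref{alg3}, although your $\pm1$ derivation would carry over to arbitrary abelian groups just as well.
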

Proposition \ref{prop-lie-basis} is proved in Appendix \ref{proof-lie-basis}. In the proof, we define a matrix $M^{HC_n} \in \F^{(n-1)! \times (n^2-n)}$ to represent the linear equations given by Proposition \ref{prop-lie-alg-struct}. Therefore, the nullspace of $M^{HC_n}$ is $\mathfrak{g}_{HC_n}$. The rows of $M^{HC_n}$ are indexed by $\sigma \in C_n$ and columns by variables $x_{i,j}$ in lex order. We then show that the set $\mathcal{B}_n$ in the proposition is linearly independent over \emph{any} $\F$ and that $\mathcal{B}_n \subset \mathfrak{g}_{HC_n}$. Lastly, we construct in $n^{O(1)}$ time a $(n-1)(n-2) \times (n^2-n)$ submatrix $M^{(n)}$ of $M^{HC_n}$ such that $M^{(n)}$ has full row rank over \emph{any} $\F$ (see Proposition \ref{prop-lie-dim-lb}). Thus, $M^{HC_n}$ has rank at least $(n-1)(n-2)$, which along with the fact that $\mathcal{B}_n \subset \mathfrak{g}_{HC_n}$ proves Proposition \ref{prop-lie-basis}. 

Corollary \ref{corollary-lie-soln-struct} describes the structure of any $\vecz \in \mathfrak{g}_{HC_n}$ and follows by considering the matrix formed by the equations in \eqref{eqn-lin-form-liebasis}, noting that the matrix is full row rank, and that the entries of the elements of $\mathcal{B}_{n}$ in Proposition \ref{prop-lie-basis} satisfy \eqref{eqn-lin-form-liebasis}. Corollary \ref{corollary-continuous-scaling-sym} describes the entries of any continuous symmetry of $HC_n$, and follows from Corollary \ref{corollary-lie-soln-struct} and Definition \ref{def-contsym}. In Appendix \ref{proof-det1-abelgroup}, we further analyse $M^{(n)}$, which proves useful in analysing the scaling symmetries of $HC_n$ over all fields and to prove the correctness of Algorithm \ref{alg3}.

\begin{corollary} \label{corollary-lie-soln-struct}
    Let $n = 3$ or $n \geq 5$ and $\vecz \in \F^{n^2-n}$. Then, $\vecz \in \mathfrak{g}_{HC_n}$ if and only if the entries $z_{i,j}$ satisfy \eqref{eqn-lin-form-liebasis}.
    \begin{equation} \label{eqn-lin-form-liebasis}
\begin{split}
    z_{i,j} &= z_{i,1} + z_{1,j} + z_{2,3} -z_{1,3} -z_{2,1}  \ \ \  i \in [2,n-1], \ j \in [2,n] \ , i \neq j, (i,j) \neq (2,3),\\
    z_{n,1} &= (n-2)(z_{1,3} + z_{2,1} - z_{2,3}) - \sum_{\ell=2}^{n} z_{1,\ell}   - \sum_{\ell=2}^{n-1} z_{\ell,1}, \\
    z_{n,j} &=  z_{1,j} + (n-3)(z_{1,3} + z_{2,1} - z_{2,3})- \sum_{\ell = 2}^{n} z_{1,\ell} - \sum_{\ell=2}^{n-1} z_{\ell,1} \ \  \\
            &= z_{n,1} + z_{1,j} + z_{2,3} -z_{1,3} -z_{2,1}, \ \ j \in [2,n-1].
\end{split}
\end{equation}

\end{corollary}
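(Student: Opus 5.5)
The plan is to prove the two directions separately, using Proposition \ref{prop-lie-basis} together with a dimension count.

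For the forward direction, take $\vecz \in \mathfrak{g}_{HC_n}$. By Proposition \ref{prop-lie-basis}, $\vecz$ is a linear combination of the vectors in $\mathcal{B}_n$. Every equation in \eqref{eqn-lin-form-liebasis} is linear and homogeneous in the entries of $\vecz$, so it suffices to check that each basis vector $A^{(k)}$, $B^{(\ell)}$ and $C$ satisfies them. This is a direct substitution, organised by which indices equal $1$, $2$, $k$, $\ell$ or $n$.

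\begin{itemize}
\item For $A^{(k)}$, the entry $z_{i,j}$ depends only on the row $i$: it equals $1$ if $i=1$, $-1$ if $i=k$, and $0$ otherwise. In the first family of equations, $z_{1,j}-z_{1,3}=0$ and $z_{2,3}-z_{2,1}=0$, so the equation reduces to $z_{i,j}=z_{i,1}$, which holds.
\item For $A^{(k)}$ in the $z_{n,1}$ equation, the first row contributes $\sum_{\ell=2}^{n} z_{1,\ell}=n-1$. The coefficient term is $z_{1,3}+z_{2,1}-z_{2,3}=1$. The sum $\sum_{\ell=2}^{n-1} z_{\ell,1}$ equals $-1$ if $k\le n-1$ and $0$ if $k=n$. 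The right-hand side is therefore $(n-2)-(n-1)+1=0$ when $k<n$, and $-1$ when $k=n$. Both agree with $z_{n,1}$.
\item The vectors $B^{(\ell)}$ and $C$ are handled the same way, using column dependence for $B^{(\ell)}$ and the row-$2$/column-$2$ pattern for $C$.
\item The second expression for $z_{n,j}$ is the first one rewritten after substituting the formula for $z_{n,1}$. Its consistency is a one-line algebraic check.
\end{itemize}

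For the reverse direction, the plan is a counting argument. The system \eqref{eqn-lin-form-liebasis} expresses the entries outside a set of free coordinates in terms of those free coordinates. The free coordinates are $z_{1,j}$ for $j\in[2,n]$, $z_{i,1}$ for $i\in[2,n-1]$, and $z_{2,3}$, which gives $(n-1)+(n-2)+1=2n-2$ of them. The remaining dependent coordinates are:
\begin{itemize}
\item $z_{i,j}$ with $i\in[2,n-1]$, $j\in[2,n]$, $i\ne j$, $(i,j)\neq(2,3)$;
\item $z_{n,1}$;
\item $z_{n,j}$ for $j\in[2,n-1]$.
\end{itemize}
Each of these appears exactly once on a left-hand side, and only free coordinates appear on the right. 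So the coefficient matrix contains an identity block on the dependent coordinates, has full row rank, and its solution space has dimension exactly $2n-2$.

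By the forward direction, $\mathfrak{g}_{HC_n}$ is contained in this solution space. By Proposition \ref{prop-lie-basis}, $\dim \mathfrak{g}_{HC_n}=2n-2$. Two spaces of equal finite dimension with one containing the other must coincide, so every solution of \eqref{eqn-lin-form-liebasis} lies in $\mathfrak{g}_{HC_n}$. This argument works over any field, since it uses no division.

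The main obstacle is the bookkeeping in the forward verification. This is especially true for the $z_{n,1}$ and $z_{n,j}$ equations, where the boundary cases $k=n$, $\ell=2$, and $j=2,3$ need care. For $C$, the term $z_{2,1}=-1$ and the term $z_{1,2}=1$ enter the sums. I would tabulate $z_{1,3}$, $z_{2,1}$, $z_{2,3}$, $\sum_{\ell} z_{1,\ell}$ and $\sum_{\ell} z_{\ell,1}$ for each basis type first, and then check each equation mechanically.
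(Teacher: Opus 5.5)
Your proposal is correct and is essentially the paper's proof. The paper also writes \eqref{eqn-lin-form-liebasis} as a matrix $N=[\,I_{(n-1)(n-2)}\mid M\,]$ over the same dependent coordinates (all $z_{i,j}$ with $i,j\in[2,n]$, $i\neq j$, $(i,j)\neq(2,3)$, plus $z_{n,1}$), concludes that its nullspace has dimension $2n-2$, and checks that the basis $\mathcal{B}_n$ of Proposition \ref{prop-lie-basis} satisfies the equations; your explicit substitution check, including the $k=n$ boundary case for $A^{(k)}$, fills in the step the paper calls ``easily verifiable.''
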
 
\begin{corollary} \label{corollary-continuous-scaling-sym}
Over $\C$, any continuous symmetry $S \in \mathcal{G}_{HC_n}$ is a diagonal matrix, with $S_{i,j}$'s satisfying \eqref{eqn-scaling-sym-lie}. 
    \begin{equation} \label{eqn-scaling-sym-lie}
\begin{split}
    S_{i,j} &=  S_{i,1}S_{1,j}\frac{S_{2,3}}{S_{1,3}S_{2,1}}  \quad i \in [2,n-1], \ j \in [2,n] \ , i \neq j, (i,j) \neq (2,3),\\
    S_{n,1} &= \left(\frac{S_{1,3}S_{2,1}}{S_{2,3}}\right)^{n-2} \left(\prod_{\ell=2}^{n} S_{1,\ell} \prod_{\ell=2}^{n-1} S_{\ell,1}\right)^{-1}, \\
    S_{n,j} &= S_{n,1}S_{1,j} \frac{S_{2,3}}{S_{1,3}S_{2,1}} \quad j \in [2,n-1].
\end{split}
\end{equation}
\end{corollary}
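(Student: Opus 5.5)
The plan is to derive Corollary \ref{corollary-continuous-scaling-sym} from Corollary \ref{corollary-lie-soln-struct} together with the relation between $\mathfrak{g}_{HC_n}$ and continuous symmetries in Definition \ref{def-contsym}. We read ``continuous symmetry'' in the sense intended there: a matrix $S = e^{tA}$ with $A \in \mathfrak{g}_{HC_n}$ and $t \in \R$.

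First, by Proposition \ref{prop-lie-alg-struct}, every $A \in \mathfrak{g}_{HC_n}$ is diagonal. Identify its diagonal with a vector $\vecz \in \C^{n^2-n}$. Then $e^{tA}$ is diagonal with entries
\[
S_{i,j} = e^{t z_{i,j}},
\]
so $S$ is diagonal and every $S_{i,j}$ is nonzero. This allows us to divide by $S_{1,3}S_{2,1}$ and by the other entries that appear in \eqref{eqn-scaling-sym-lie}.

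Second, $\vecz$ satisfies the linear relations \eqref{eqn-lin-form-liebasis} by Corollary \ref{corollary-lie-soln-struct}. Multiply each relation by $t$ and exponentiate. Since the exponential turns integer linear combinations into the corresponding products and quotients, each relation becomes the matching multiplicative identity in \eqref{eqn-scaling-sym-lie}. For example,
\[
z_{i,j} = z_{i,1} + z_{1,j} + z_{2,3} - z_{1,3} - z_{2,1}
\]
becomes
\[
S_{i,j} = \frac{S_{i,1}S_{1,j}S_{2,3}}{S_{1,3}S_{2,1}}.
\]
Likewise, the coefficient $(n-2)$ in the relation for $z_{n,1}$ becomes the exponent $n-2$, and the negated sums become the inverse of the product. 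The relation for $z_{n,j}$ is handled the same way, using its second form $z_{n,1} + z_{1,j} + z_{2,3} - z_{1,3} - z_{2,1}$.

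There is no real obstacle. The only points needing care are that every coefficient in \eqref{eqn-lin-form-liebasis} is an integer, so exponentiation gives genuine monomial identities with integer exponents, and that the hypothesis $n=3$ or $n\ge5$ of Corollary \ref{corollary-lie-soln-struct} carries over to the present statement.
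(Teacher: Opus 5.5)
Your proposal is correct and follows the paper's route: the paper derives this corollary directly from Corollary \ref{corollary-lie-soln-struct} and Definition \ref{def-contsym}. That is, it writes $S = e^{tA}$ with $A \in \mathfrak{g}_{HC_n}$ diagonal and exponentiates the integer-coefficient relations \eqref{eqn-lin-form-liebasis} entrywise, as you do. Your reading of Definition \ref{def-contsym} with $A \in \mathfrak{g}_{HC_n}$, rather than the typo $\mathcal{G}_f$, is the intended one.
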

\subsection{The Symmetries of $HC_n$} \label{subsec-sym}
Lemma \ref{prop-dist-eigen} shows that over large enough fields, $\mathfrak{g}_{HC_n}$ contains an element with distinct eigenvalues. Proposition \ref{prop: symmetries-PS} then shows that over such fields, every symmetry of $HC_n$ is generated by permutation and scaling symmetries, proving Theorem \ref{thm-PS-sym}. The proof of Proposition \ref{prop: symmetries-PS} follows from Lemmas \ref{lemma-lieconjug} and  \ref{prop-dist-eigen} (see Appendix \ref{proof-symmetries-PS}). We then analyse and characterise the permutation and scaling symmetries of $HC_n$.
\begin{lemma} \label{prop-dist-eigen}
    Suppose $|\F| > \binom{n^2-n}{2}$. There exists $A \in \mathfrak{g}_{HC_n}$ such that $A$ has distinct eigenvalues.
\end{lemma}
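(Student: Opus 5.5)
Since every element of $\mathfrak{g}_{HC_n}$ is diagonal (Proposition \ref{prop-lie-alg-struct}), its eigenvalues are just its diagonal entries $z_{i,j}$. So we need some $\vecz \in \mathfrak{g}_{HC_n}$ whose $n^2-n$ coordinates are pairwise distinct. The plan is a Polynomial Identity Lemma argument over the parametrisation of $\mathfrak{g}_{HC_n}$.

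\textbf{Parametrisation.} Write a general element as $\vecz = \sum_{b \in \mathcal{B}_n} t_b\, b$ with $\mathcal{B}_n$ the basis from Proposition \ref{prop-lie-basis}, treating the $2n-2$ coefficients $t_b$ as formal variables. For $n=4$, use the basis from Appendix \ref{subsec-lie-HC4}, or note that the elements of $\mathcal{B}_4$ still lie in $\mathfrak{g}_{HC_4}$ by the same check. Each coordinate $z_{i,j}$ is then a linear form $L_{i,j}(\mathbf{t})$. Consider the product
\[ Q(\mathbf{t}) = \prod_{\{(i,j),(k,l)\}} \bigl(L_{i,j}(\mathbf{t}) - L_{k,l}(\mathbf{t})\bigr) \]
over all $\binom{n^2-n}{2}$ unordered pairs of distinct coordinates. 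This polynomial has degree exactly $\binom{n^2-n}{2}$.

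\textbf{Reduction to a per-pair check.} If $Q$ is a nonzero polynomial, then since $|\F| > \deg Q$ there is a point $\mathbf{t} \in \F^{2n-2}$ with $Q(\mathbf{t}) \neq 0$. (Random choice from a set of size exceeding the degree succeeds with positive probability, or one can use the standard fact that a nonzero polynomial of degree $<|\F|$ in each variable has a nonvanishing point.) The corresponding $\vecz$ then has distinct diagonal entries. Because $\F[\mathbf{t}]$ is an integral domain, $Q \neq 0$ iff each factor is nonzero. So the claim reduces to showing that for every pair $(i,j) \neq (k,l)$ some $\vecz \in \mathfrak{g}_{HC_n}$ has $z_{i,j} \neq z_{k,l}$, over every field.

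\textbf{Main step: separating each pair.} For this I would use the explicit basis vectors, whose entries are in $\{0,\pm1\}$.
\begin{itemize}
\item If $i \neq k$: suppose first one of $i,k$ is $1$, say $i=1$. Then $A^{(k)}$ gives entries $1$ and $-1$, and these differ unless $\mathrm{char}\,\F=2$. To avoid characteristic issues, I would instead pick some $A^{(m)}$ with $m \notin \{i,k\}$ (possible since $n\ge 3$), giving entries $1$ and $0$.
\item If $i \neq k$ with $i,k \neq 1$: $A^{(i)}$ gives $-1$ versus $0$.
\item If $i=k$, then $j \neq l$: use $B^{(\ell)}$ type vectors, or the column-scaling direction $z_{i,j} = c_j - c_i$ pattern. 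More cleanly, note that $\mathfrak{g}_{HC_n}$ contains all vectors $z_{i,j} = r_i + c_j$ with $\sum_i r_i + \sum_j c_j = 0$, since any cycle uses each row and each column once. Choosing $c_j = 1$, $c_l = -1$ and all other parameters $0$ separates $(i,j)$ from $(i,l)$.
\end{itemize}
Symmetrically, for $i \neq k$ take $r_i = 1$, $r_k = -1$, which gives values $1$ and $0$ unless $j$ or $l$ interacts with the chosen $c$'s; since all $c$'s are $0$ here, the difference is $1 - (-1) = 2$. To avoid characteristic $2$, take $r_i = 1$ and $r_m = -1$ for some $m \notin \{i,k\}$, giving $1$ versus $0$.

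\textbf{Where the care goes.} This row/column description makes the per-pair check easy in all characteristics. The main obstacle is only this small-field and characteristic bookkeeping, plus making sure these vectors genuinely lie in $\mathfrak{g}_{HC_n}$ via Proposition \ref{prop-lie-alg-struct}. The field-size hypothesis is used exactly once, in the Polynomial Identity Lemma step.
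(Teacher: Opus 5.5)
Your argument follows the same route as the paper: take a generic combination $\sum_k w_kA^{(k)}+\sum_\ell y_\ell B^{(\ell)}+zC$ of the basis elements, note that its diagonal entries are linear forms, and apply the Polynomial Identity Lemma. Your product $Q$ of degree $\binom{n^2-n}{2}$ gives the same bound as the paper's union bound over the $\binom{n^2-n}{2}$ pairs.

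The only difference is how you certify that each difference $L_{i,j}-L_{k,l}$ is a nonzero form. The paper writes out all entries explicitly and inspects them. You instead exhibit a separating vector of the form $z_{i,j}=r_i+c_j$ with $\sum_i r_i+\sum_j c_j=0$. This is a cleaner way to see that nothing depends on the characteristic. Such vectors do lie in the span of $\mathcal{B}_n$, since every element of $\mathcal{B}_n$ has this form and both spaces are $(2n-2)$-dimensional.

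One step fails as written. In the case $i=k$, $j\neq l$, your choice $c_j=1$, $c_l=-1$ gives $z_{i,j}=1$ and $z_{i,l}=-1$, and these coincide when $\mathrm{char}(\F)=2$. Such fields are allowed by the hypothesis, e.g.\ $\F_{2^m}$ with $2^m>\binom{n^2-n}{2}$. The fix is the device you already used for rows: take $c_j=1$ and $c_m=-1$ for some $m\notin\{j,l\}$, which exists since $n\ge 3$. This gives $1$ versus $0$, and with that change the argument is complete.
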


\begin{proposition}[Theorem \ref{thm-PS-sym} restated] \label{prop: symmetries-PS}
    Suppose $|\F| > \binom{n^2-n}{2}$. If $A \in \mathcal{G}_{HC_n}$, then $A = PS$, where $P$ is a permutation matrix and $S$ is a scaling matrix.
\end{proposition}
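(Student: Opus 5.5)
Let $A \in \mathcal{G}_{HC_n}$, so $HC_n(A\vecx) = HC_n(\vecx)$. By Lemma \ref{lemma-lieconjug} applied with $f = g = HC_n$, we get $\mathfrak{g}_{HC_n} = A^{-1}\mathfrak{g}_{HC_n}A$. By Lemma \ref{prop-dist-eigen} there is $D \in \mathfrak{g}_{HC_n}$ with distinct eigenvalues, and by Proposition \ref{prop-lie-alg-struct} every element of $\mathfrak{g}_{HC_n}$ is diagonal. So $D$ is a diagonal matrix with pairwise distinct diagonal entries $d_1,\dots,d_{n^2-n}$, and $D' := ADA^{-1}$ also lies in $\mathfrak{g}_{HC_n}$. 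Hence $D'$ is diagonal too.

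Since $D' = ADA^{-1}$ is similar to $D$, it has the same multiset of eigenvalues. Its diagonal entries are therefore a permutation of those of $D$: $D' = P D P^{T}$ for a unique permutation matrix $P$, unique because the $d_i$ are distinct. The relation $AD = D'A = PDP^TA$ gives $(P^TA)D = D(P^TA)$. Let $S := P^TA$. Comparing $(i,j)$ entries yields $S_{ij}d_j = d_i S_{ij}$, and since $d_i \ne d_j$ for $i\ne j$, this forces $S_{ij}=0$ off the diagonal. So $S$ is diagonal, it is invertible because $A$ is, and $A = PS$.

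The work is concentrated in Lemma \ref{prop-dist-eigen}, which I take as given. For completeness, here is how it goes. Pick a generic element $\sum_b t_b\, b$ over the basis $\mathcal{B}_n$ of Proposition \ref{prop-lie-basis}. Its diagonal entries are linear forms in the $t_b$, and we need each pairwise difference to be a nonzero form, i.e. no two coordinates $(i,j)$ are identically equal on $\mathfrak{g}_{HC_n}$. This can be checked from Corollary \ref{corollary-lie-soln-struct} or directly from the basis vectors $A^{(k)},B^{(\ell)},C$. The product of these $\binom{n^2-n}{2}$ forms is then a nonzero polynomial of degree $\binom{n^2-n}{2}$, and since $|\F|$ exceeds this degree, some choice of $t \in \F^{2n-2}$ makes it nonvanishing.

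The $n=4$ case, where Proposition \ref{prop-lie-alg-struct} still gives diagonality, needs the analogous check from the appendix. Beyond this, the only subtlety is ensuring the eigenvalues of $D$ lie in $\F$, which is automatic since $D$ is diagonal over $\F$.
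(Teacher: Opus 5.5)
Your proof is correct and takes essentially the same route as the paper. Both use Lemma \ref{lemma-lieconjug} to get $ADA^{-1} \in \mathfrak{g}_{HC_n}$ for a diagonal $D$ with distinct entries (Lemma \ref{prop-dist-eigen}), and both conclude that $A$ has exactly one nonzero entry per row and column. The differences are cosmetic: the paper reads off directly from $AB = CA$ that each row of $A$ has a single nonzero entry and then invokes invertibility, whereas you first extract $P$ by matching eigenvalues and then use that the centraliser of $D$ consists of diagonal matrices.
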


\noindent{\textbf{Scaling symmetries.}} Proposition \ref{prop-scalesymcont}, proved in Appendix \ref{proof-scalesymcont}, shows that any scaling symmetry $S$ of $HC_n$ satisfies \eqref{eqn-scaling-sym-lie} for $n \neq 4$ and follows from Lemma \ref{lemma-abeliangroup-soln}. In Appendix \ref{subsec-Sym-HC4}, we show that $HC_4$ has discrete scaling symmetries over $\Q,$ $\R,$ $\C$ and certain finite fields.

\begin{proposition}[Scaling symmetries are continuous] \label{prop-scalesymcont}
    Let $\F$ be any field and $n = 3$ or $n \geq 5$. If $S \in \mathcal{G}_{HC_n}$ is a scaling symmetry, then the entries $S_{i,j}$ satisfy \eqref{eqn-scaling-sym-lie}.  

\end{proposition}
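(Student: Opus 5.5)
A scaling matrix $S$ (diagonal, invertible) is a symmetry of $HC_n$ exactly when $HC_n(S\vecx)=\sum_{\sigma\in C_n}\left(\prod_{i}S_{i,\sigma(i)}\right)m_\sigma$ equals $HC_n(\vecx)$. The monomials $m_\sigma$ are distinct for distinct $\sigma$, so this is equivalent to the multiplicative system
\[
\prod_{i\in[n]} S_{i,\sigma(i)} = 1 \qquad \text{for all } \sigma\in C_n .
\]
This is the multiplicative analogue of the linear system in Proposition \ref{prop-lie-alg-struct}: the Lie algebra condition says $M^{HC_n}\vecz=0$, and the scaling condition says the vector $(S_{i,j})$ lies in the kernel of the homomorphism $(\F^{\times})^{n^2-n}\to(\F^{\times})^{(n-1)!}$ whose exponent matrix is $M^{HC_n}$. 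So the plan is to redo the linear-algebra argument behind Corollary \ref{corollary-lie-soln-struct} with integer exponents, in the abelian group $\F^\times$, instead of over the field. This is presumably what Lemma \ref{lemma-abeliangroup-soln} packages.

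The key steps are as follows.

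First, the identities \eqref{eqn-scaling-sym-lie} are all consequences of the relations above. Each identity in \eqref{eqn-lin-form-liebasis} is an integer linear combination of rows of $M^{HC_n}$; this is how Corollary \ref{corollary-lie-soln-struct} was derived from the full-row-rank submatrix $M^{(n)}$ and the basis $\mathcal{B}_n$. To make this transfer, I must show the combination has integer coefficients and does not require dividing by any integer, since division could fail in characteristic $p$ or in a non-divisible group.

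Concretely, I would use the explicit rows of $M^{(n)}$. For $(i,j)$ with $i\in[2,n-1]$, $j\in[2,n]$, find two pairs of Hamiltonian cycles whose exponent-vector differences give
\[
e_{(i,j)}+e_{(1,3)}+e_{(2,1)}-e_{(i,1)}-e_{(1,j)}-e_{(2,3)} .
\]
A natural choice is to swap edges in cycles passing through vertices $1,2,3,i,j$, a 3-opt style exchange, so each relation is a $\pm1$ combination of at most four cycle relations. Exponentiating such a relation directly gives the first line of \eqref{eqn-scaling-sym-lie}. The third line is handled the same way with $i=n$.

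Second, the relation for $S_{n,1}$ comes from a single cycle relation. Take, for example, a cycle $\sigma$ containing the edge $(n,1)$ and write $\prod_i S_{i,\sigma(i)}=1$. Then substitute the already-derived expressions for every other $S_{i,\sigma(i)}$ in terms of the free parameters $S_{1,\ell}$, $S_{\ell,1}$, $S_{2,3}$. Solving for $S_{n,1}$ should then reproduce the exponent $n-2$ and the product formula.

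Third, I would check consistency: every cycle relation holds for any $S$ of the form \eqref{eqn-scaling-sym-lie}. This is not needed for the stated direction, but it confirms there are no extra constraints.

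The main obstacle is integrality. The field-level derivation from rank arguments over $\F$ might implicitly invert a matrix whose determinant is not $\pm1$, and then the multiplicative version would only hold up to roots of unity. That failure mode is exactly what happens for $n=4$, which has discrete scaling symmetries. So I would first prove that each needed relation is an explicit $\pm1$ combination of cycle relations, constructing the cycles explicitly and using $n\ge5$ (or $n=3$) to have enough room for the exchanges. Equivalently, I would show that $M^{(n)}$ can be brought to a unimodular form, with determinant $\pm1$ on a maximal minor.
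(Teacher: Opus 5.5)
Your outline is correct, and its fallback (``exhibit a $\pm1$ maximal minor of $M^{(n)}$'') is exactly the paper's proof. The paper reads $\prod_i S_{i,\sigma(i)}=1$ as $M^{HC_n}\vecs=\veczero$ over the abelian group $\F^{\times}$ and applies Lemma~\ref{lemma-abeliangroup-soln}. That lemma rests on Lemma~\ref{lemma-Mn-det1}: $\det(M^{(n)}_{\bullet\times T})=\pm1$, which follows from $M^{(n)}$ having full row rank over every prime field. Hence $M_1^{-1}$ is integral, and $\vecg_1=-M_1^{-1}M_2\vecg_2$ reproduces \eqref{eqn-lin-form-liebasis} in any $\Z$-module.

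Your primary route, an explicit cycle certificate for each identity, is genuinely different. The paper's certificate handles all identities at once and is reused later, in Lemma~\ref{lemma-abeliangroup-general} and in the cofactor formula of Algorithm~\ref{alg3}. It is, however, indirect and depends on the inductive construction of $M^{(n)}$. Your route is elementary and shows exactly where $n\neq 4$ is needed, but only once the hard cases are done, and your sketch does not address them. The generic cases are indeed single exchanges: move vertex $1$ from $2\to1\to3$ to $i\to1\to j$. Your derivation of the $S_{n,1}$ line from one cycle through $(n,1)$ is also fine.

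The hard cases are $i=2$ (with $j\ge4$) and $j=3$ (with $i\ge4$, including $i=n$). There your target collapses to a pure swap, e.g.\ $S_{1,3}S_{2,j}=S_{1,j}S_{2,3}$. By Claim~\ref{claim: cycle-disjointness} this is never a single exchange. The obvious certificate built from two double swaps only yields its square, which is exactly the $n=4$ obstruction. A certificate that works is, for $n=5$,
\[
\frac{m_{(1\,3\,4\,5\,2)}\,m_{(1\,5\,3\,2\,4)}}{m_{(1\,4\,5\,3\,2)}\,m_{(1\,5\,2\,3\,4)}}=\frac{x_{1,3}\,x_{2,4}}{x_{1,4}\,x_{2,3}} .
\]
It extends to $n>5$ by replacing vertex $5$ with a path $5\to6\to\cdots\to n$, since the path edges cancel. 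It extends to any four distinct labels via $P^{(\pi)}$.

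The case $(i,j)=(3,2)$ would need the triangle $2\to1\to3\to2$ inside one cycle, so it is not a single exchange either. It is handled by the adjacent swap $1\to3\to2\to k$ versus $1\to2\to3\to k$, multiplied by the swap $\frac{x_{2,1}x_{3,k}}{x_{2,k}x_{3,1}}$ obtained from the identity above. With these certificates your route becomes a complete, self-contained alternative proof.
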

\noindent{\textbf{Permutation Symmetries.}}  Proposition \ref{prop-psym} shows that $HC_n$ has non-trivial permutation symmetries. In terms of the matrix $X_n$, $P^{(\sigma)}$ acts as $X_n \mapsto QX_nQ^T$, where $Q$ is the $n \times n$ matrix corresponding to $\sigma \in S_n$, and $P^{(T)}$ acts as $X_n \mapsto X_n^T$. Proposition \ref{prop-Psym-gens}, proved in Appendix \ref{proof-Psym-gens}, shows that the permutation symmetries are generated by those described in Proposition \ref{prop-psym}. To prove Proposition \ref{prop-Psym-gens}, we use Observation \ref{obs-HCn-pdzero}, which tells us when the second-order derivatives of $HC_n$ vanish.
\begin{proposition} \label{prop-psym}
Let $\sigma \in S_n$. The matrices $P^{(\sigma)}$ and $P^{(T)}$, as below, are permutation symmetries of $HC_n$.
\[
P^{(\sigma)}_{(i,j),(k,\ell)} := \begin{cases} 
          1 & k = \sigma(i), \ell = \sigma(j), \\
          0 & \text{ otherwise} 
       \end{cases},
\quad P^{(T)}_{(i,j),(k,\ell)} := \begin{cases} 
          1 & k = j, \ell = i, \\
          0 & \text{ otherwise} 
        \end{cases}.
\]
Here $i,j,k,\ell \in [n]$, $i \neq j$ and $k \neq \ell$. Also, $P^{(\sigma)}$ and $P^{(T)}$ commute with one another.
\end{proposition}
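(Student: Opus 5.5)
The plan is to check directly how each matrix acts on the monomials $m_\tau = \prod_{i\in[n]} x_{i,\tau(i)}$, $\tau \in C_n$, and show that the monomial set is permuted. By definition, $P^{(\sigma)}$ sends the variable $x_{i,j}$ to $x_{\sigma(i),\sigma(j)}$. Since $\sigma$ is a bijection and $i\neq j$ implies $\sigma(i)\neq\sigma(j)$, this is a well-defined permutation of $\vecx$, so $P^{(\sigma)}$ is a permutation matrix. Likewise $P^{(T)}$ sends $x_{i,j}$ to $x_{j,i}$, which is an involutive permutation of the off-diagonal variables.

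For $P^{(\sigma)}$, substitution gives
\[
m_\tau(P^{(\sigma)}\vecx) = \prod_{i} x_{\sigma(i),\sigma(\tau(i))} = \prod_{k} x_{k,\sigma\tau\sigma^{-1}(k)} = m_{\sigma\tau\sigma^{-1}},
\]
where we reindexed with $k=\sigma(i)$. Conjugation preserves cycle type, so $\sigma\tau\sigma^{-1}\in C_n$, and $\tau\mapsto\sigma\tau\sigma^{-1}$ is a bijection of $C_n$ with inverse given by conjugation by $\sigma^{-1}$. Summing over $\tau$ therefore gives $HC_n(P^{(\sigma)}\vecx)=HC_n(\vecx)$.

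For $P^{(T)}$, we get $m_\tau(P^{(T)}\vecx)=\prod_i x_{\tau(i),i}=\prod_k x_{k,\tau^{-1}(k)}=m_{\tau^{-1}}$. Since $\tau^{-1}$ is again an $n$-cycle and inversion is a bijection of $C_n$, $HC_n$ is again invariant. Graph-theoretically, this is relabelling the vertices and reversing all edges, and both operations map Hamiltonian cycles bijectively to Hamiltonian cycles.

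For commutativity, compose the two variable maps in either order. Applying $x_{i,j}\mapsto x_{\sigma(i),\sigma(j)}$ and then transposing gives $x_{\sigma(j),\sigma(i)}$, and transposing first and then applying $\sigma$ gives the same result. So the two linear maps agree on every variable, and hence the matrices commute. The only delicate point is the bookkeeping convention: whether $A\vecx$ substitutes according to rows or columns of $A$, so that one is computing $P$ or $P^T$. The argument does not depend on this choice, because $(P^{(\sigma)})^T=P^{(\sigma^{-1})}$ and $P^{(T)}$ is symmetric, so either convention yields a symmetry of the same family and commutativity holds in both. There is no real obstacle beyond this check.
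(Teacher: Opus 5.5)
Your proposal is correct and follows essentially the same route as the paper. The paper also shows that $P^{(\sigma)}$ sends the monomial of $\pi$ to that of the relabelled cycle $(\sigma(1)\ \sigma(m_1)\ \dots\ \sigma(m_n))$, which is $\sigma\pi\sigma^{-1}$, and that $P^{(T)}$ sends it to the monomial of $\pi^{-1}$. It then checks commutativity by showing both products have entry $1$ at $((i,j),(k,\ell))$ exactly when $k=\sigma(j)$ and $\ell=\sigma(i)$, which is your composition-of-substitutions computation. Your explicit remark that conjugation and inversion are bijections of $C_n$ makes the paper's ``maps uniquely'' step slightly more rigorous.
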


\begin{proposition} \label{prop-Psym-gens}
    If $P \in \mathcal{G}_{HC_n}$ is a permutation symmetry, then $P = P^{(\sigma)}$ or $P = P^{(\sigma)}P^{(T)}$ for some $\sigma \in S_n$.
\end{proposition}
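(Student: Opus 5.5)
A permutation symmetry $P$ of $HC_n$ is a bijection $\pi$ on the index set $\{(i,j): i \neq j\}$. The plan is to show that $\pi$ is induced by a vertex permutation $\sigma\in S_n$, possibly composed with transposition. Because $HC_n(P\vecx)=HC_n(\vecx)$, the map $\pi$ sends monomials of $HC_n$ to monomials of $HC_n$. Viewing each index $(i,j)$ as an arc of the complete digraph $K_n$, $\pi$ is therefore a bijection on arcs that maps Hamiltonian cycles to Hamiltonian cycles. It also preserves which second-order partial derivatives vanish: $\partial^2 HC_n/\partial x_{a}\partial x_{b}=0$ iff $\partial^2 HC_n/\partial x_{\pi(a)}\partial x_{\pi(b)}=0$. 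This is the content of Observation \ref{obs-HCn-pdzero}.

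I expect that observation to say the following. For distinct arcs $a=(i,j)$ and $b=(k,\ell)$, the derivative vanishes exactly when no Hamiltonian cycle contains both arcs. That happens when they share a tail ($i=k$), share a head ($j=\ell$), or are antiparallel ($k=j$, $\ell=i$, for $n\ge3$). Define the \emph{conflict graph} $G$ on the $n^2-n$ arcs, with two arcs adjacent iff they are incompatible in this sense. Then $\pi$ is an automorphism of $G$, and the proof reduces to computing $\mathrm{Aut}(G)$.

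To compute $\mathrm{Aut}(G)$, use the maximal cliques of $G$. The out-stars $O_i=\{(i,j)\}$ and in-stars $I_j=\{(i,j)\}$ are cliques of size $n-1$. Any other clique is small: a clique containing both an antiparallel pair and arcs of different types has size at most about $3$. So for $n\ge5$ the cliques of size $n-1$ are exactly the $2n$ stars. (Small $n$, in particular $n=3,4$, I will check directly.) Hence $\pi$ permutes the set of stars. Each arc lies in exactly one out-star and one in-star. Two distinct out-stars are disjoint, while an out-star and an in-star $O_i, I_j$ meet in exactly one arc when $i\ne j$ and are disjoint when $i=j$. So the bipartition of stars into $\{O_i\}$ and $\{I_j\}$ is preserved or swapped: the "intersecting" graph on stars is bipartite $K_{n,n}$ minus a perfect matching, which is connected for $n\ge3$, so its bipartition is preserved or swapped. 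By composing with $P^{(T)}$, assume the bipartition is preserved. Then there are $\alpha,\beta\in S_n$ with $\pi(O_i)=O_{\alpha(i)}$ and $\pi(I_j)=I_{\beta(j)}$, so $\pi(i,j)=(\alpha(i),\beta(j))$. Nonintersection of $O_i$ and $I_i$ forces the image pair to be nonintersecting, which gives $\beta(i)=\alpha(i)$. So $\pi$ is induced by $\sigma=\alpha$, i.e.\ $P=P^{(\sigma)}$ (with the index convention of Proposition \ref{prop-psym}, inverting $\sigma$ if needed). In the swapped case $P=P^{(\sigma)}P^{(T)}$.

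The main obstacle is the clique classification, making sure no other $(n-1)$-clique exists. Suppose a clique $K$ contains two arcs with different tails and different heads; these must be antiparallel, $(i,j),(j,i)$. A third arc must conflict with both, so it lies in $\{(i,\cdot),(\cdot,j)\}\cap\{(j,\cdot),(\cdot,i)\}$, i.e.\ it is $(i,i)$ or $(j,j)$. Neither is allowed. So such $K$ has size $2<n-1$. Otherwise all pairs in $K$ share a tail or a head, and a standard argument shows $K$ is contained in a star or has size $\le 3$ (e.g.\ $(i,j),(i,k),(l,j)$ fails since $(i,k),(l,j)$ conflict only if antiparallel). This gives the claim for $n\ge5$.
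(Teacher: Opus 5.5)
Your argument is correct and takes a genuinely different route from the paper, and it already covers $n=4$, as explained below.

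\textbf{Comparison with the paper.} The paper argues locally, starting from $P(x_{1,2})=x_{i,j}$. It uses that $P$ carries the partition $R_{1,2}=Q_{1,2}\sqcup T_{1,2}\sqcup\{x_{2,1}\}$ of Observation~\ref{obs-HCn-pdzero} onto that of $R_{i,j}$, either with $Q\mapsto Q,\ T\mapsto T$ or with $Q\mapsto T,\ T\mapsto Q$. From this it reads off a bijection on the arcs $x_{1,t}$ and obtains $P(x_{t,1})$ from the singleton part of $R_{1,t}$. It then pins down each remaining $P(x_{a,b})$ as the unique element of an intersection such as $T_{\pi(a),i}\cap Q_{i,\pi(b)}$.

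You argue globally instead. You show the $(n-1)$-cliques of the conflict graph are exactly the $2n$ stars, and that the intersection graph of the stars is $K_{n,n}$ minus a perfect matching. You then read $\sigma$ off the induced action on stars.

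Both proofs use only the vanishing pattern of the second-order partials. So both in fact determine every permutation that preserves that pattern. The paper's propagation is literally the procedure of Algorithm~\ref{alg2}, so it doubles as that algorithm's correctness proof. Yours avoids the two-case bookkeeping and the consistency check of the propagation, and it shows exactly where the size of $n$ matters.

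\textbf{Your clique lemma covers $n=4$.} The lemma is stronger than you state. Suppose three arcs pairwise share a tail or a head, and two of them share tail $a$ (the head case is symmetric). A third arc outside $O_a$ would then have to share both of their heads, which are distinct, so this is impossible. Hence every clique with at least $3$ elements lies in a single star. So the $(n-1)$-cliques are exactly the stars already for $n=4$, and no separate check is needed there.

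\textbf{The case $n=3$.} This case cannot be ``checked directly'', because the statement is false there. For $n=3$ your conflict graph is $K_{3,3}$. Its two sides are the variable sets $\{x_{1,2},x_{2,3},x_{3,1}\}$ and $\{x_{1,3},x_{3,2},x_{2,1}\}$ of the two monomials of $HC_3$. Every permutation of the variables that preserves or swaps these two sets is a symmetry of $HC_3$. That gives $2\cdot 3!\cdot 3!=72$ permutation symmetries, but only $12$ matrices have the form $P^{(\sigma)}$ or $P^{(\sigma)}P^{(T)}$.

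Concretely, the transposition $x_{1,2}\leftrightarrow x_{2,3}$ fixes $HC_3$. A $P^{(\sigma)}$ that fixes $x_{1,3}$ must have $\sigma=\mathrm{id}$. A $P^{(\sigma)}P^{(T)}$ that fixes $x_{1,3}$ forces $\sigma=(1\ 3)$, and that then moves $x_{2,1}$. So the transposition is of neither form.

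The paper's proof breaks at the same point. For $n=3$ the three elements of $R_{1,2}$ are pairwise compatible, so nothing forces $P(x_{2,1})=x_{j,i}$. The proposition therefore needs the hypothesis $n\ge 4$, and your argument, with the sharpened clique lemma, proves exactly that case.
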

\begin{observation} \label{obs-HCn-pdzero}
Let $i,j \in [n]$, $i \neq j$ and $R_{i,j}$ be the set of variables $x_{k,\ell}$, other than $x_{i,j}$, such that $\frac{\partial^2 HC_n}{\partial x_{i,j} \partial x_{k,\ell}} = 0$. Then, 
\[x_{k,\ell} \in R_{i,j} \iff i = k \text{, or } j = \ell, \text{ or } i = \ell \text{ and } j = k.\]
Further, we can partition $R_{i,j}$ as
\[R_{i,j} := Q_{i,j} \sqcup T_{i,j} \sqcup \{x_{j,i}\}, \ Q_{i,j} := \{x_{k,j} \ | \ k \in [n] \backslash \{i,j\}\} \text{ and } T_{i,j} := \{x_{i,k} \ | \ k \in [n]\backslash \{i,j\}\}.\]
The partitions also satisfy:
\begin{enumerate}
    \item \label{PDset-prop-1} $\frac{\partial^2 HC_n}{\partial x_{i_1,j_1}\partial x_{i_2,j_2}} \neq 0$ if $x_{i_1,j_1} = x_{j,i}$ and  $x_{i_2,j_2} \in Q_{i,j} \sqcup T_{i,j}$, or $x_{i_1,j_1} \in T_{i,j} ,x_{i_2,j_2} \in Q_{i,j}$.
    \item \label{PDset-prop-2} $\frac{\partial^2 HC_n}{\partial x_{i_1,j_1}\partial x_{i_2,j_2}} = 0$ if $x_{i_1,j_1},x_{i_2,j_2} \in Q_{i,j}$ or $x_{i_1,j_1},x_{i_2,j_2} \in T_{i,j}$. 
\end{enumerate}
\end{observation}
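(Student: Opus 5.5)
To prove Observation~\ref{obs-HCn-pdzero}, I would argue directly from the monomial structure of $HC_n$. The plan is to use the fact that $\frac{\partial^2 HC_n}{\partial x_{i,j}\partial x_{k,\ell}}$ is the sum of $\prod_{m \neq i,k} x_{m,\sigma(m)}$ over all $\sigma \in C_n$ with $\sigma(i)=j$ and $\sigma(k)=\ell$. Since distinct $\sigma$ give distinct monomials and $HC_n$ is multilinear with coefficients $1$, there is no cancellation, even in positive characteristic. Hence the derivative vanishes if and only if no Hamiltonian cycle in the complete digraph contains both edges $(i,j)$ and $(k,\ell)$. The whole observation therefore reduces to a combinatorial question: which pairs of distinct edges can be extended to a Hamiltonian cycle?

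\emph{Forward direction.} If $i=k$ with $j\neq \ell$, then $\sigma(i)$ would have to take two values, which is impossible. If $j=\ell$ with $i \neq k$, then $\sigma$ would fail to be injective. If $(k,\ell)=(j,i)$, the cycle would contain the $2$-cycle $i\to j\to i$, which cannot happen since $n\ge 3$.

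\emph{Reverse direction.} Suppose $(k,\ell)$ is none of these, so $k\ne i$, $\ell\ne j$, and $(k,\ell)\ne(j,i)$. I would construct a cycle containing both edges:
\begin{itemize}
\item If the edges are disjoint, i.e.\ $\{i,j\}\cap\{k,\ell\}=\emptyset$, take the path $i\to j$, then $k\to \ell$, and list the remaining vertices in any order to close the cycle.
\item If $j=k$, form the path $i\to j\to \ell$; here $\ell \neq i$ by assumption. Append the remaining vertices and close back to $i$.
\item If $\ell=i$, form the path $k\to i\to j$ and complete it the same way.
\end{itemize}
These cases are exhaustive, which gives the stated characterisation. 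The partition $R_{i,j}=Q_{i,j}\sqcup T_{i,j}\sqcup\{x_{j,i}\}$ is then just a relabelling: $T_{i,j}$ collects the case $k=i$, $Q_{i,j}$ the case $\ell=j$, and $x_{j,i}$ the reversed edge. These three sets are disjoint because $x_{j,i}$ has row $j\ne i$ and column $i\ne j$.

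\emph{Properties 1 and 2.} I would apply the characterisation again to each pair.
\begin{itemize}
\item For property~\ref{PDset-prop-2}, two elements $x_{k,j},x_{k',j}$ of $Q_{i,j}$ share a column, so the derivative vanishes. Likewise two elements of $T_{i,j}$ share a row.
\item For property~\ref{PDset-prop-1}, take first $x_{j,i}$ and $x_{k,j}\in Q_{i,j}$. Here $k\ne j$, $j \ne i$, and $(k,j)\neq(i,j)$ because $k\ne i$, so the pair is not of the forbidden form and the derivative is nonzero. The pair $x_{j,i}$ and $x_{i,k} \in T_{i,j}$ is handled symmetrically.
\item Finally, for $x_{i,k}\in T_{i,j}$ and $x_{k',j}\in Q_{i,j}$: the rows satisfy $i\ne k'$, the columns satisfy $k\ne j$, and the pair is a reversal only if $k'=k$ and $j=i$, which is impossible.
\end{itemize}

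The main point needing care is the no-cancellation argument and the completion construction when $n=3$. For $n=3$ the ``remaining vertices'' may be empty, but closing the path still yields a valid $3$-cycle, so the construction goes through. I do not expect a genuine obstacle anywhere in the argument.
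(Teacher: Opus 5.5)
Your proof is correct and takes essentially the same approach as the paper: both reduce the vanishing of $\frac{\partial^2 HC_n}{\partial x_{i,j}\partial x_{k,\ell}}$ to the nonexistence of a Hamiltonian cycle through both edges, rule out the row clash, the column clash and the $2$-cycle directly, and build a completing cycle in the remaining cases (four distinct endpoints, $k=j$, $\ell=i$). Your write-up is somewhat more complete than the paper's, since it states the no-cancellation argument explicitly and checks properties 1 and 2 instead of saying they follow easily; the only slip is that your ``forward'' and ``reverse'' labels are swapped relative to the $\iff$ in the statement, which is harmless.
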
 

If $f = HC_n(P\vecx)$, where $P$ is a permutation matrix, then $\frac{\partial^2 f}{\partial x_{i,j} \partial x_{k,\ell}} = 0$ if and only if $P^{-1}(x_{i,j})$ and $P^{-1}(x_{k,\ell})$ are as specified in Observation \ref{obs-HCn-pdzero}. Thus, $P$ creates a bijection on $R_{i,j}$ which also maps the partitions of $R_{i,j}$ in Observation \ref{obs-HCn-pdzero} appropriately. If $P \in \mathcal{G}_{HC_n}$ is a permutation symmetry, then it creates, for each variable $x_{i,j}$, a bijection between the set $R_{i,j}$ and $R_{k,\ell}$, where $x_{k,\ell} = P(x_{i,j})$, such that the bijection also maps the partitions of $R_{i,j}$ as described in Observation \ref{obs-HCn-pdzero} accordingly to those of $R_{k,\ell}$. analysing these bijections for variables $x_{1,j}$ and $x_{i,1}$, and noting that the image of $R_{1,j} \cap R_{i,1}$ is a singleton shows how $P$ is generated by the permutation symmetries described in Proposition \ref{prop-psym}. 

\subsection{$HC_n$ is not characterised by its symmetries} \label{subsec-non-char}
Proposition \ref{prop-nonchar}, proved in Appendix \ref{proof-nonchar}, shows that for $n \geq 5$, $HC_n$ is \emph{not} characterised by its symmetries, unlike the Permanent polynomial, proving Theorem \ref{thm-nonsymchar}. The proof involves defining a polynomial $g(\vecx)$ as the sum over the image of a monomial, $m_{\sigma}$, under all the permutation symmetries (Proposition \ref{prop-Psym-gens}) of $HC_n$. Here, $\sigma \in S_n \backslash C_n$ such that $\sigma(i) \neq i$ for all $i \in [n]$. It is easy to see that such a $\sigma$ exists for $n \geq 5$, for example $\sigma = (1 \ 2)(3 \ 4 \ \dots \ n)$. Then, it is not hard to observe that every scaling symmetry of $HC_n$ (Proposition \ref{prop-scalesymcont}) is also a scaling symmetry of $g$, but $g \neq c\cdot HC_n$ for any $c \in \F^{\times}$. 
\begin{proposition}[Non-characterisation by symmetries] \label{prop-nonchar}
    Let $n \geq 5$ and $\F$ be any field such that $|\F| > \binom{n^2-n}{2}$. There exists a non-zero polynomial $f$ such that $\mathcal{G}_{HC_n} \subseteq \mathcal{G}_{f} $ but $f \neq c\cdot HC_n$ for any $c \in \F^{\times}$.
\end{proposition}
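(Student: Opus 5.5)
Take $\sigma = (1\ 2)(3\ 4\ \cdots\ n) \in S_n \setminus C_n$, which has no fixed points because $n \geq 5$. Write $m_\sigma = \prod_{i\in[n]} x_{i,\sigma(i)}$; it is a monomial in the off-diagonal variables $\vecx$. Let $\mathcal{P}$ be the group of permutation symmetries of $HC_n$, which by Proposition \ref{prop-Psym-gens} is $\{P^{(\tau)}, P^{(\tau)}P^{(T)} : \tau \in S_n\}$. Set
\[ f(\vecx) := \sum_{\tau \in S_n} \left( m_\sigma(P^{(\tau)}\vecx) + m_\sigma(P^{(\tau)}P^{(T)}\vecx) \right). \]
We show that $\mathcal{G}_{HC_n} \subseteq \mathcal{G}_f$, that $f \neq 0$, and that $f$ is not a scalar multiple of $HC_n$.

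\emph{Non-vanishing and non-proportionality.} $P^{(\tau)}$ sends $m_\pi$ to $m_{\tau\pi\tau^{-1}}$ (up to the convention for direction), and $P^{(T)}$ sends $m_\pi$ to $m_{\pi^{-1}}$. So every term of $f$ is a monomial $m_\pi$ with $\pi$ of the same cycle type as $\sigma$, and each appears with coefficient $1$. Grouping equal terms, $f = c\sum_{\pi \sim \sigma} m_\pi$, where $c = |\mathcal{P}|/|\text{orbit}|$ is the stabiliser size. This could vanish in small characteristic. To avoid that, instead define $f := \sum_{\pi} m_\pi$ over the orbit of $m_\sigma$ under $\mathcal{P}$, with coefficient $1$ each. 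This is nonzero over any field. It is invariant under $\mathcal{P}$ because $\mathcal{P}$ permutes the orbit. Its monomials are supported on non-cyclic permutations, so it shares no monomial with $HC_n$. Hence $f \neq c\cdot HC_n$.

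\emph{Invariance.} By Proposition \ref{prop: symmetries-PS}, which uses $|\F| > \binom{n^2-n}{2}$, every $A\in\mathcal{G}_{HC_n}$ has the form $PS$. We first reduce to the two factors. If $PS$ is a symmetry, then $HC_n(PS\vecx)=HC_n(\vecx)$. Monomials of $HC_n$ are mapped to scaled monomials, so $P$ maps the monomial support of $HC_n$ onto itself, and $P$ is then a permutation symmetry once we check that the coefficients of the permuted monomials all equal $1$. This makes $S$ a scaling symmetry as well, so $\mathcal{G}_{HC_n}$ is generated by $\mathcal{P}$ and the scaling symmetries. The support argument in detail: $HC_n(P\,S\vecx)$ has the same monomials as $HC_n(P\vecx)$, with coefficients $\prod S$-entries. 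Equality with $HC_n$ forces $\mathrm{supp}(HC_n(P\vecx)) = \mathrm{supp}(HC_n)$, and since $HC_n(P\vecx)$ has all coefficients $1$, we get $HC_n(P\vecx)=HC_n$. It therefore suffices to show that each scaling symmetry $S$ fixes every $m_\pi$ in the orbit, that is, $\prod_i S_{i,\pi(i)} = 1$.

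\emph{The main step: the scaling identity.} This is where the work lies. By Proposition \ref{prop-scalesymcont}, $S$ satisfies \eqref{eqn-scaling-sym-lie}. We plan to show that these relations are equivalent to $S_{i,j} = \lambda\, a_i b_j$ for some nonzero $a_i, b_j$ with $\lambda^n \prod_i a_i b_i = 1$, i.e.\ $S = L X R$-type scaling as in the remark after Theorem \ref{thm-PS-sym}. The first line of \eqref{eqn-scaling-sym-lie} gives this rank-one form on rows $1,\dots,n-1$. The third line extends it to row $n$ except for the entry $S_{n,1}$, which is fixed by the second line so that the normalisation $\lambda^n\prod a_ib_i=1$ holds. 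For the concrete check: take $a_i = S_{i,1}/S_{1,1}'$-style parameters, where $S_{1,1}$ is absent, so we set $a_1=1$, $b_j = S_{1,j}$, $a_i = S_{i,1}/b_1$, choose $b_1$ freely, and choose $\lambda$ to absorb $S_{2,3}/(S_{1,3}S_{2,1})$. We then verify, by substituting into the second line, that $\prod_{\text{cycle}} S = 1$ becomes $\lambda^n\prod a_i\prod b_j = 1$.

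With this form in hand, the conclusion is immediate. For any fixed-point-free $\pi\in S_n$,
\[ \prod_i S_{i,\pi(i)} = \lambda^n \prod_i a_i \prod_j b_j, \]
which is the same value as for a cycle, namely $1$. Hence every scaling symmetry fixes each $m_\pi$ in the orbit, and therefore fixes $f$. The only delicate part is the bookkeeping that the three lines of \eqref{eqn-scaling-sym-lie} are exactly the rank-one-times-normalisation form; everything else is routine.
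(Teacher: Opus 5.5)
Your proof is correct. It has the same skeleton as the paper's: symmetrise a fixed-point-free non-cyclic monomial over the permutation symmetries, then use Proposition \ref{prop-scalesymcont} to show that scaling symmetries fix it. Your departures are improvements worth keeping.

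\textbf{The orbit sum is needed.} The paper sums $m_\pi(P\vecx)$ over all $2\cdot n!$ permutation symmetries. Each orbit monomial therefore occurs with multiplicity equal to the stabiliser order $4(n-2)$. That polynomial vanishes identically when $\text{char}(\F)=2$ or $\text{char}(\F)\mid n-2$, and the hypothesis $|\F|>\binom{n^2-n}{2}$ does not rule this out. Your $f$ is nonzero over every field.

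\textbf{The support argument.} Your argument that in $A=PS$ both $P$ and $S$ are separately symmetries fills a step the paper leaves implicit.

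\textbf{The key identity.} The paper computes $\prod_i S_{i,\sigma(i)}$ directly in two cases ($\sigma(n)=1$ or not), using all three lines of \eqref{eqn-scaling-sym-lie}. You instead recognise $S$ as a rank-one (``$LXR$'') scaling. This is the content of Proposition \ref{prop-HCntoPermn-scalsym}, and it makes the product visibly the same for every fixed-point-free $\pi$. The paper's route needs no auxiliary parametrisation; yours explains why the identity holds uniformly.

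\textbf{Your explicit parameters are inconsistent.} With $a_1=1$, $b_j=S_{1,j}$ and $a_i=S_{i,1}/b_1$, the entries $S_{1,j}$ and $S_{i,1}$ force $\lambda=1$. Then $S_{i,j}=S_{i,1}S_{1,j}/b_1$ forces $b_1=\mu^{-1}$, where $\mu=S_{2,3}/(S_{1,3}S_{2,1})$. So $b_1$ is not free, and $\lambda$ absorbs nothing.

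A consistent choice is $\lambda=\mu$, $a_1=b_1=\mu^{-1}$, and $a_i=S_{i,1}$, $b_j=S_{1,j}$ for $i,j\ge 2$. Lines one and three of \eqref{eqn-scaling-sym-lie} then give $S_{i,j}=\lambda a_ib_j$ for all $i\ne j$ (line one holds trivially at $(2,3)$).

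You never need the second line. The constant $\lambda^n\prod_i a_i\prod_j b_j$ equals $\prod_i S_{i,\pi(i)}$ for any $\pi\in C_n$, and that product is $1$ because $S$ is a symmetry.
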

More generally, in Appendix \ref{sec-scalesymrelate-hcperm} we show how to obtain a scaling symmetry of the Permanent polynomial from that of $HC_n$. In contrast, for $n = 3$ and $n = 4$, we show $HC_n$ is characterised by its symmetries over appropriate fields in Appendix \ref{subsec-charsym-HC4}.

\subsection{Circuit Identities, Circuit Testing and Flip Theorem} \label{subsec-Ckt-id}
Though $HC_n$ is not characterised by its symmetries, it is characterised by circuit identities (see Definition \ref{def-char-cktid}), which follows from $HC_n$ being downward self-reducible, as shown in Lemma 4 of \cite{BZ08}. We give a proof of downward self-reducibility in the proof of Lemma \ref{lemma-HCn-downself} for completeness (see Appendix \ref{proof-HCn-downself}). The proof given here was discovered by the author independently before coming across \cite{BZ08}. To prove the circuit identities, we use Lemma \ref{lemma-HCn-monotone}, an adaptation of Lemma 7.13 in \cite{Burgisser2000}, which shows that $HC_n$ can be expressed as $HC_m$ for all $m > n \geq 2$. Lemma \ref{lemma-HCn-downself} then shows that $HC_n$ is downward self-reducible over any field, and can be proved using induction on $n$. Note that we use the matrix $X_n$ with $x_{i,i}$ replaced by $0$'s in the statement and proof of Lemmas \ref{lemma-HCn-monotone} and \ref{lemma-HCn-downself} for ease of exposition. 

\begin{lemma} \label{lemma-HCn-monotone}
    Let $n,m \in \N$ with $m > n \geq 2$. In $O(m^2)$ time, we can construct a $m \times m$ matrix $Y^{(m,n)}$ from $X_n$ such that $HC_m(Y^{(m,n)}) = HC_n(\vecx)$.
\end{lemma}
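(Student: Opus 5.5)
We construct $Y^{(m,n)}$ by subdividing one vertex of the digraph into a forced path, so that every Hamiltonian cycle of the new graph corresponds bijectively, and with the same weight, to a Hamiltonian cycle of the original graph. It suffices to treat the case $m = n+1$ and then iterate $m-n$ times. The iterated matrix is built in $O(m^2)$ time overall: each step only copies entries and adds one row and column, and we can write down the final $m\times m$ matrix directly.

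\textbf{The one-step construction.} Given the $n \times n$ matrix $X_n$ with zero diagonal, define the $(n+1)\times(n+1)$ matrix $Y = Y^{(n+1,n)}$ as follows. Split vertex $n$ into two vertices, $n$ and $n+1$. All edges entering the old vertex $n$ now enter $n$: set $Y_{i,n} = x_{i,n}$ for $i \in [n-1]$. All edges leaving the old vertex $n$ now leave $n+1$: set $Y_{n+1,j} = x_{n,j}$ for $j \in [n-1]$. Add the single forced edge $Y_{n,n+1} = 1$. Keep $Y_{i,j} = x_{i,j}$ for $i,j \in [n-1]$, $i\neq j$. Every other entry is $0$. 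This covers row $n$ apart from column $n+1$, column $n+1$ apart from row $n$, the entry $Y_{n+1,n}$, and the diagonal.

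\textbf{Bijection of cycles.} Take a cyclic permutation $\tau \in C_{n+1}$ whose monomial $\prod_i Y_{i,\tau(i)}$ is nonzero. Row $n$ has only one nonzero entry, so $\tau(n) = n+1$. Row $n+1$ is nonzero only in columns $[n-1]$, so $\tau(n+1) \in [n-1]$. Column $n$ is nonzero only in rows $[n-1]$, so $\tau^{-1}(n) \in [n-1]$. Contracting the arc $n \to n+1$ therefore yields a map $\sigma \in C_n$: set $\sigma(n) = \tau(n+1)$, set $\sigma(i) = \tau(i)$ for $i \in [n-1]$, and note that the predecessor of $n$ is unchanged. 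This $\sigma$ is a single $n$-cycle, and its weight matches that of $\tau$ because the contracted arc has weight $1$:
\[
\prod_{i\in[n+1]} Y_{i,\tau(i)} = \prod_{i\in[n]} x_{i,\sigma(i)}.
\]
Conversely, every $\sigma \in C_n$ lifts uniquely to such a $\tau$ by inserting $n+1$ immediately after $n$. The contraction is therefore a weight-preserving bijection, and summing over it gives $HC_{n+1}(Y) = HC_n(\vecx)$.

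The argument needs $n \ge 2$ so that $[n-1]$ is nonempty and the cycle structure makes sense. This holds by hypothesis, and it propagates under iteration since $n+1 \ge 2$ as well. The main care point is the bookkeeping: we must check that $\sigma$ is genuinely a single $n$-cycle, and that no other nonzero term of $HC_{n+1}(Y)$ arises. For instance, a cycle might try to use $Y_{n+1,n}$, but that entry is $0$; and row $n$ forces the arc $n \to n+1$. Both points follow from the zero pattern described above.
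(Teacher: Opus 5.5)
Your proof is correct and is essentially the paper's argument. The paper also replaces a vertex by a forced path of weight-$1$ edges, redirecting the in-edges of vertex $1$ to vertex $n+1$ and adding the path $n+1\to n+2\to\dots\to m\to 1$ in one shot, and it checks the same weight-preserving bijection of Hamiltonian cycles; you instead split vertex $n$ one step at a time and iterate. The only point to make explicit is the closed form of the iterated matrix (the path $n\to n+1\to\dots\to m$, with row $m$ carrying $x_{n,j}$ for $j\in[n-1]$), since recopying the whole matrix at each of the $m-n$ steps would cost $O(m^3)$ rather than $O(m^2)$.
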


\begin{lemma} \label{lemma-HCn-downself}
    Let $f \in \F[\vecx]$. Then, $f = HC_n$ if and only if $f$ satisfies the identities
    \[ f(Y^{(n,k)}) = \sum_{i=2}^{k}x_{1,i}f(Y^{(n,k-1)}_{i}) \quad k \in [3,n], \text{ and }f(Y^{(n,2)}) = x_{1,2}x_{2,1},\]
    where $Y^{(n,k)}$ is constructed from $X_k$ as described in the proof of Lemma \ref{lemma-HCn-monotone}. The matrix $Y^{(n,k-1)}_i$ is obtained from $X_k$ by first swapping row $i$ with row $2$ and column $i$ with column $2$ in $X_k$, removing the 1st row and 2nd column to obtain the $(k-1) \times (k-1)$ matrix $X^{(i)}_{k}$, and constructing $Y^{(n,k-1)}$ using $X^{(i)}_{k}$ by Lemma \ref{lemma-HCn-monotone}.
\end{lemma}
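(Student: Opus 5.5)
We prove the two directions separately. The forward direction is the cofactor-style expansion of $HC$ along the first row, transported through the construction of Lemma \ref{lemma-HCn-monotone}. The backward direction is an induction on $k$: we show the identities force $f(Y^{(n,k)}) = HC_k(X_k)$ for every $k \in [2,n]$. At $k=n$ we take $Y^{(n,n)} = X_n$, since no padding is needed, and this gives $f = HC_n$.

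\emph{Forward direction.} Assume $f = HC_n$. By Lemma \ref{lemma-HCn-monotone}, $HC_n(Y^{(n,k)}) = HC_k(X_k)$ and $HC_n(Y^{(n,k-1)}_i) = HC_{k-1}(X^{(i)}_k)$. So it suffices to prove the purely combinatorial identity
\[HC_k(X_k) = \sum_{i=2}^{k} x_{1,i}\, HC_{k-1}(X^{(i)}_k),\]
together with $HC_2 = x_{1,2}x_{2,1}$. To do this, group the Hamiltonian cycles on $[k]$ by the successor $i=\sigma(1)$ of vertex $1$. Fix $i$ and contract the edge $1\to i$ into a single vertex whose out-edges are those of $i$ and whose in-edges are those of $1$. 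Then cycles with $\sigma(1)=i$ correspond bijectively, and weight-preservingly after factoring out $x_{1,i}$, to Hamiltonian cycles on the $k-1$ contracted vertices.

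We then check that this contraction is exactly the matrix operation in the statement. Swapping row/column $i$ with $2$ moves $i$ into position $2$. Deleting row $1$ (the out-edges of $1$, now used up) and column $2$ (the in-edges of $i$) leaves a $(k-1)\times(k-1)$ matrix. Its first row is row $i$, and its first column is column $1$, which holds the in-edges of vertex $1$. Hence the new vertex $1$ plays the role of the merged vertex. One point needs care: the merged vertex's diagonal entry is $x_{i,1}$, the edge $i\to 1$. This entry is ignored by $HC_{k-1}$, and correctly so, since with $k\ge 3$ a Hamiltonian cycle containing $1\to i$ cannot also use $i\to 1$. The case $k=3$ should be checked by hand.

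\emph{Backward direction.} Assume $f$ satisfies the identities. Induct on $k$. The base case $f(Y^{(n,2)}) = x_{1,2}x_{2,1} = HC_2(X_2)$ is given. For the step, the matrices $Y^{(n,k-1)}_i$ are by definition $Y^{(n,k-1)}$ evaluated at $X^{(i)}_k$. The induction hypothesis is a polynomial identity in the entries of $X_{k-1}$, so it remains true after substituting $X_{k-1} := X^{(i)}_k$, giving $f(Y^{(n,k-1)}_i) = HC_{k-1}(X^{(i)}_k)$. The right-hand side of the identity therefore equals $\sum_i x_{1,i}HC_{k-1}(X^{(i)}_k)$, which is $HC_k(X_k)$ by the combinatorial identity above. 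At $k=n$ this gives $f = HC_n$.

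\emph{Main obstacle.} The real work is confirming that the construction $Y^{(n,k)}$ from Lemma \ref{lemma-HCn-monotone} is a well-defined substitution. Its entries must be polynomials (in fact variables or constants) in $X_k$, with $Y^{(n,n)}=X_n$. Without this, the composition step in the induction is not legitimate. A secondary but necessary check is the index bookkeeping in the contraction bijection, in particular that deleting column $2$ rather than column $1$ gives the correct merged vertex.
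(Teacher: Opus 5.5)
Your proposal is correct and follows the paper's proof. The forward direction reduces, via Lemma~\ref{lemma-HCn-monotone}, to the expansion $HC_k(X_k)=\sum_{i=2}^{k}x_{1,i}\,HC_{k-1}(X^{(i)}_k)$ obtained by grouping cycles by $\sigma(1)$, and the backward direction inductively forces $f(Y^{(n,k)})=HC_k(X_k)$ up to $k=n$. Your edge-contraction bijection is a cleaner way to verify what the paper checks by case analysis on $\pi(1),\pi(i-1)$, covering all $k\ge 3$ uniformly; your induction on $k$ with $n$ fixed tidies the paper's induction; and the ``main obstacle'' you flag is immediate from the explicit entries of $Y^{(m,n)}$ once $Y^{(n,n)}:=X_n$ is adopted as a convention, as you do.
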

\paragraph{Proof of Theorem \ref{thm-ckt-id}.} Represent the $n-1$ identities in Lemma \ref{lemma-HCn-downself} as polynomials $g_1(\vecx,\vecz), \dots, g_{n-1}(\vecx,\vecz)$, where $|\vecz| = n-1$, $g_i$'s are  degree $2$ polynomials over $\Z$ and have $n^{O(1)}$ size circuit over $\Z$. This establishes circuit identities for $HC_n$ and proves Theorem \ref{thm-ckt-id}. 

\paragraph{Proof of Theorems \ref{thm-ckt-test} and \ref{thm-flip-theorem}.} 
Algorithm \ref{alg4}, when given black-box access to a size-$s$ circuit $\mathrm{C}$ in $\vecx$ variables and of degree $d = s^{O(1)}$, tests $\mathrm{C}$ on the $n-1$ identities in Lemma \ref{lemma-HCn-downself}. The degree of each identity, when we use $\mathrm{C}$ in it, is at most $d$. If $\mathrm{C}$ computes $HC_n$, then  Algorithm \ref{alg4} will always return ``Yes''. If $\mathrm{C}$ does \emph{not} compute $HC_n$, then some identity fails to hold for $\mathrm{C}$. By the Polynomial Identity Lemma and the fact $|U| > 2d$, the probability that some identity fails to hold is at least $\frac{1}{2}$. By repeating Algorithm \ref{alg4} $s^{O(1)}$ times, we can boost the success probability to $1 - (\frac{1}{2})^{s^{O(1)}}$. This proves Theorem \ref{thm-ckt-test}. Theorem \ref{thm-flip-theorem} follows easily because the random matrices sampled by Algorithm \ref{alg4} provide the list of counterexamples against polynomial-size circuits, assuming $HC_n$ is not computable by polynomial-size circuits.

\begin{algorithm} 
\caption{Circuit Testing for $HC_n$} \label{alg4}
\textbf{Input}: B.b.a to size-$s$ circuit $\mathrm{C}$ in $\vecx$ variables and of degree $s^{O(1)}$. \\
\textbf{Output}: ``Yes'' if $\mathrm{C}(\vecx) = HC_n(\vecx)$, else ``No''. 

\begin{algorithmic}[1]
\State Let $U \subseteq \F$ with $|U| > 2d$. Choose $n-1$ matrices $M_2, M_3, \dots, M_{n}$, where $M_i$ are $i \times i$ matrices with entries chosen uniformly at random from $U$.
\State For $i \in [2,n]$, check if the $i$'th identity in Lemma \ref{lemma-HCn-downself} does not hold for $\mathrm{C}$ when evaluated at $M_i$. Return ``No'' if so. 
\State Return ``Yes''.
\end{algorithmic}
\end{algorithm}

\section{Equivalence Testing for $HC$} \label{sec-ETalgo}
In this section, we present the ET algorithm for $HC_n$, and argue its correctness and efficiency by using the structural results proven in Section \ref{Scn: Lie_alg}. In Section \ref{subsec-PSreduction}, we show the reduction to $PS$-equivalence; the correctness of the reduction follows from Lemma \ref{prop-dist-eigen}. In Section \ref{subsec-Pequiv-test}, we show how to recover a permutation; the correctness follows from Observation \ref{obs-HCn-pdzero}. The problem then reduces to checking scaling equivalence with $HC_n$. In Section \ref{subsec-Sequiv-test}, we show how to recover a scaling matrix by using Proposition \ref{prop-scalesymcont} and Lemma \ref{lemma-abeliangroup-soln}. All missing proofs can be found in Appendix \ref{sec-ETalgo-proofs}. We present the analysis assuming $f = HC_n(A\vecx)$ and that all steps which employ randomisation execute correctly. Every step where randomisation is used has a small probability of failing due to the assumption on the field size. We run Algorithm \ref{alg4} at the end to verify the correctness of the recovered transform. If $f \neq HC_n(A\vecx)$, Algorithm \ref{alg4} will output ``No'' with high probability.  If $f = HC_n(A\vecx)$, Algorithm \ref{alg4} always accepts.  

\subsection{Reduction to PS-equivalence} \label{subsec-PSreduction}
Algorithm \ref{alg1} shows the reduction to $PS$-equivalence testing. If $f = HC_n(A\vecx)$, then by Lemma \ref{lemma-lieconjug}, $\dim_{\F}(\mathfrak{g}_f)$ is $2n-2$.We then compute a random element of $\mathfrak{g}_f$ and diagonalise it. The algorithm outputs the diagonalising matrix if it exists. The correctness and complexity analysis of Algorithm \ref{alg1} is given by Proposition \ref{prop:PS-equiv-reduction}.
\begin{proposition} \label{prop:PS-equiv-reduction}
    If $f = HC_n(A\vecx)$ for some $A \in GL_{n^2-n}(\F)$, with high probability Algorithm \ref{alg1} computes in randomised $n^{O(1)}$ time a $D \in GL_{n^2-n}(\F)$ such that $f(D\vecx) = HC_n(PS\vecx)$ .
\end{proposition}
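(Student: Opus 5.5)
The plan follows Kayal's reduction for the Permanent. By Lemma \ref{lemma-liebasis-comp}, we first compute a basis $L_1,\dots,L_{2n-2}$ of $\mathfrak{g}_f$ in randomised $n^{O(1)}$ time. Assume $f = HC_n(A\vecx)$. Then Lemma \ref{lemma-lieconjug} gives $\mathfrak{g}_f = A^{-1}\mathfrak{g}_{HC_n}A$. By Proposition \ref{prop-lie-basis}, this space has dimension $2n-2$, so a dimension mismatch lets us reject. We then pick random coefficients $r_1,\dots,r_{2n-2}$ from a set $U\subseteq\F$ and form $L=\sum_i r_i L_i = A^{-1}\Lambda A$. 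Here $\Lambda=\sum_i r_i \Lambda_i\in\mathfrak{g}_{HC_n}$ is diagonal, by Proposition \ref{prop-lie-alg-struct}.

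Next we argue that $\Lambda$ has distinct diagonal entries with high probability. The diagonal entries of $\Lambda$ are linear forms in $r_1,\dots,r_{2n-2}$. Lemma \ref{prop-dist-eigen} supplies an element of $\mathfrak{g}_{HC_n}$ with distinct eigenvalues. Hence each difference of two diagonal entries is a nonzero linear form in the $r_i$, and so is the polynomial $\prod_{p<q}(\Lambda_{pp}-\Lambda_{qq})$, of degree $\binom{n^2-n}{2}$. By the Polynomial Identity Lemma and the field size assumption, a random choice makes it nonzero with good probability. If needed, we pass to a suitable extension or enlarge $U$, and we can boost the success probability by repetition.

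Given such an $L$, the eigenvalues of $L$ are the diagonal entries of $\Lambda$, and they all lie in $\F$. We compute the characteristic polynomial of $L$ and factor it using the assumed univariate factorisation oracle; it splits into distinct linear factors over $\F$. For each eigenvalue $\lambda$ we find an eigenvector $\vecv_\lambda$ by solving a linear system, and we let $D$ be the matrix with these eigenvectors as columns. Then $L=D\Lambda' D^{-1}$ with $\Lambda'$ diagonal.

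The key step is showing that this $D$ works. Since $A^{-1}\Lambda A = D\Lambda' D^{-1}$, the matrix $AD$ conjugates the diagonal $\Lambda'$ to the diagonal $\Lambda$. Its $j$-th column $AD\vece_j$ is therefore an eigenvector of $\Lambda$ for eigenvalue $\Lambda'_{jj}$. Because $\Lambda$ has distinct eigenvalues, each eigenspace is one-dimensional and spanned by a standard basis vector. So $AD$ sends each $\vece_j$ to a scalar multiple of a distinct standard basis vector, which gives $AD = PS$ for a permutation matrix $P$ and an invertible scaling matrix $S$. Hence $f(D\vecx)=HC_n(AD\vecx)=HC_n(PS\vecx)$.

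All steps are linear algebra on $(n^2-n)\times(n^2-n)$ matrices plus one univariate factorisation, so the running time is $n^{O(1)}$. The main obstacle is the random step: ensuring that the random element of $\mathfrak{g}_f$ is diagonalisable over $\F$ itself with distinct eigenvalues. This rests on Lemma \ref{prop-dist-eigen} together with the Polynomial Identity Lemma degree bound. The conjugation argument and the eigenvector computation are routine once that holds.
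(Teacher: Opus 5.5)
Your proof is correct and follows the paper's argument essentially step for step. A random element of $\mathfrak{g}_f$ is conjugate via Lemma \ref{lemma-lieconjug} to a diagonal element of $\mathfrak{g}_{HC_n}$, and Lemma \ref{prop-dist-eigen} together with the Polynomial Identity Lemma gives it distinct eigenvalues in $\F$. Since $AD$ intertwines two diagonal matrices with distinct entries, $AD = PS$. You phrase this last step via the columns of $AD$ being eigenvectors, while the paper uses the row argument of Proposition \ref{prop: symmetries-PS}; the two are the same.

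The only superfluous point is your hedge about passing to an extension field. It is never needed, because $|U| = 3n^5$ already bounds the failure probability by $\binom{n^2-n}{2}/|U| < 1/(3n)$.
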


\begin{algorithm}
\caption{Reduction to PS-equivalence Test} \label{alg1}

\textbf{Input}: B.b.a to $f(\vecx)$.\\
\textbf{Output}: $D \in \GL_{n^2-n}(\F)$ s.t. $f(D\vecx) = HC_n(PS\vecx)$  if $f(\vecx) = HC_n(A\vecx)$.

\begin{algorithmic}[1]    
\State Compute a basis $\{B_1, B_2 , \dots, B_k \}$ of $\mathfrak{g}_f$ from b.b.a to $f$. If $k \neq 2n-2$, abort and output ``$f$ not equivalent to $HC_n$''.
\State Let $U \subseteq \F$ be such that $|U| = 3n^5$. Let $a_1, a_2 \dots a_k \in_r U$ and $C := \sum_{i \in [k]}a_i B_i$. Compute and output a matrix $D$ such that $D^{-1}CD$ is a diagonal matrix. If such a $D$ does not exist, abort output ``$f$ not equivalent to $HC_n$''.
\end{algorithmic}
\end{algorithm}
\subsection{P-equivalence test} \label{subsec-Pequiv-test}
Let $f(\vecx) = HC_n(P\vecx)$. Using the permutation symmetries and second-order partial derivatives of $HC_n$, Algorithm \ref{alg2} recovers a permutation $P'$ such that $f(\vecx) = HC_n(P'\vecx)$, where $P'$ is $P$ up to a permutation symmetry. Claim \ref{claim-Ptest-assum} and Proposition \ref{prop-ptest-correct} argue the correctness.
\begin{claim} \label{claim-Ptest-assum}
    Without loss of generality, $P(x_{1,2}) = x_{1,2}$.
\end{claim}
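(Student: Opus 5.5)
The claim is a symmetry reduction: the permutation symmetries of $HC_n$ act transitively on the variables, so we may replace $P$ by $P$ composed with a suitable symmetry. Suppose $f(\vecx) = HC_n(P\vecx)$ and $P$ maps $x_{1,2}$ to some variable $x_{a,b}$, where $a \neq b$ (we identify $P$ with the induced bijection on the variables, taking care to fix which of $P$ or $P^{-1}$ acts, consistent with how Observation \ref{obs-HCn-pdzero} was applied just before). We want a permutation symmetry $Q \in \mathcal{G}_{HC_n}$ with $HC_n(P\vecx) = HC_n(QP\vecx)$ and such that $QP$ sends $x_{1,2}$ to $x_{1,2}$. Replacing $P$ by $P' := QP$ changes neither $f$ nor the goal, because the algorithm only needs some $P'$ equal to $P$ up to a permutation symmetry.

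For the construction, choose $\tau \in S_n$ with $\tau(a) = 1$ and $\tau(b) = 2$. Since $a \neq b$, such a $\tau$ exists (for instance, a product of at most two transpositions). By Proposition \ref{prop-psym}, the matrix $P^{(\tau)}$ is a permutation symmetry of $HC_n$, and by its definition it sends the coordinate $(a,b)$ to $(\tau(a),\tau(b)) = (1,2)$. Take $Q = P^{(\tau)}$, or its inverse $P^{(\tau^{-1})}$, according to the left/right convention of the action. Then $HC_n(QP\vecx) = HC_n(P\vecx) = f(\vecx)$, since $Q$ is a symmetry, and $QP$ fixes $x_{1,2}$.

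The main point needing care is bookkeeping rather than an obstacle. One must check that the matrix convention, i.e.\ whether a row index $(i,j)$ of $P^{(\sigma)}$ maps to column $(\sigma(i),\sigma(j))$ or conversely, is matched with the direction in which $P$ moves $x_{1,2}$; choosing between $\tau$ and $\tau^{-1}$ handles this. One should also note that the final output of the algorithm, $P'$ (and later the scaling matrix and the transform $B$), is only required up to symmetries of $HC_n$, so the reduction costs nothing. In particular, we never need $P^{(T)}$ here; it matters only in the subsequent analysis that pins down the rest of $P$.
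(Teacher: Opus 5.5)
Your proposal is correct and is essentially the paper's proof: the paper also left-multiplies $P$ by a permutation symmetry $P^{(\sigma)}$ from Proposition \ref{prop-psym}, using $HC_n(P^{(\sigma)}P\vecx) = HC_n(P\vecx)$, with $\sigma$ chosen to send $1,2$ to the pair of indices determined by $P$. The paper settles the convention bookkeeping you flag concretely. It starts from the $(i,j)$ with $P(x_{i,j}) = x_{1,2}$, i.e.\ $P_{(i,j),(1,2)} = 1$, takes $\sigma(1) = i$ and $\sigma(2) = j$, and checks directly that $(P^{(\sigma)}P)_{(1,2),(1,2)} = P^{(\sigma)}_{(1,2),(i,j)}P_{(i,j),(1,2)} = 1$.
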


\begin{proposition} \label{prop-ptest-correct}
    If $f(\vecx) = HC_n(P\vecx)$, then with high probability Algorithm \ref{alg2} computes $P'$ in randomised $n^{O(1)}$ time such that $P' = \tilde{P}P$, where $\tilde{P}$ is as described in Proposition \ref{prop-psym}.  
\end{proposition}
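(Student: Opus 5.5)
The plan is to recover $P$, up to a permutation symmetry, purely from the \emph{vanishing pattern} of the second-order partial derivatives of $f$, using Observation \ref{obs-HCn-pdzero}.

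\textbf{Computing the pattern.} Since $f(\vecx)=HC_n(P\vecx)$, we have $\frac{\partial^2 f}{\partial x_a\partial x_b}=0$ if and only if $\frac{\partial^2 HC_n}{\partial P(x_a)\partial P(x_b)}=0$. So the graph $Z_f$ on the $n^2-n$ variables, with $a\sim b$ iff the second derivative vanishes, is isomorphic via $P$ to the corresponding graph $Z_{HC_n}$.

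For every pair of variables, I obtain black-box access to $\frac{\partial^2 f}{\partial x_a\partial x_b}$ by two applications of Lemma \ref{lemma-pd-compute}. I then test it for zero by evaluating at a random point of $U^{n^2-n}$. The degree is at most $n$, so by the Polynomial Identity Lemma and a union bound over $O(n^4)$ pairs, all tests are correct with high probability given $|\F|>3n^5$. This gives $Z_f$ in $n^{O(1)}$ time.

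\textbf{Reconstructing the labels.} By Claim \ref{claim-Ptest-assum}, I fix some variable $a$ and declare it to play the role of $x_{1,2}$. Let $R_a$ be its neighbourhood in $Z_f$. By Observation \ref{obs-HCn-pdzero}, $R_a$ splits as follows:
\begin{itemize}
\item the element playing $x_{2,1}$ is the unique element of $R_a$ that is non-adjacent in $Z_f$ to all other elements of $R_a$;
\item the remaining $2n-4$ elements form two cliques $T,Q$ of size $n-2$, with no $Z_f$-edges between them.
\end{itemize}
I declare one of the cliques to be the row set $T_{1,2}$ and the other the column set $Q_{1,2}$. This arbitrary choice is exactly the ambiguity of $P^{(T)}$. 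I then label the elements of $T$ arbitrarily as $x_{1,3},\dots,x_{1,n}$; this arbitrary choice is exactly the ambiguity of $P^{(\sigma)}$ with $\sigma(1)=1$ and $\sigma(2)=2$.

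The remaining variables are labelled consistently as follows.
\begin{itemize}
\item For each $k\ge 3$, $x_{k,1}$ is the unique element of $R_{x_{1,k}}$ non-adjacent to the rest of $R_{x_{1,k}}$.
\item For $k\ge 3$, $x_{k,2}$ is the unique element of $Q$ lying in $R_{x_{k,1}}$, since the row-$k$ set $T_{k,1}$ meets $Q_{1,2}$ only in $x_{k,2}$.
\item In $R_{x_{k,1}}$, the clique meeting the already-labelled column-$1$ variables $\{x_{\ell,1}\}$ is $Q_{k,1}$. The other clique is the row set $T_{k,1}$, and similarly for $R_{x_{1,j}}$.
\item For $i,j\ge 2$ with $i\neq j$, $x_{i,j}$ is the unique variable lying in the row-$i$ set $T_{i,1}$ and in the column-$j$ set $Q_{1,j}$.
\end{itemize}
Setting $P'$ to be the permutation matrix sending each variable to its assigned label gives $f(\vecx)=HC_n(P'\vecx)$.

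\textbf{Correctness.} I would show that every step above is forced by the structure of $Z_{HC_n}$. That is, once the two free choices are made (which clique is $T$, and how $T$ is ordered), the labelling coincides with $P$ composed with the unique $P^{(\sigma)}$, or $P^{(\sigma)}P^{(T)}$, realising those choices. This follows from Propositions \ref{prop-psym} and \ref{prop-Psym-gens}, and yields $P'=\tilde P P$.

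\textbf{Main obstacle.} The delicate step is this uniqueness and consistency argument. I need to check that each set used above really is identified uniquely from $Z_f$ (for example, that the "isolated" element of $R_a$ is unique and that $T_{i,1}\cap Q_{1,j}$ is a singleton for $n\ge 3$). I also need that these local choices never conflict. I would verify both by directly translating each step through $P$ into a statement about $Z_{HC_n}$, where Observation \ref{obs-HCn-pdzero} makes it explicit.
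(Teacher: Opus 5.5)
Your proposal is correct and follows essentially the paper's route: the vanishing pattern of second-order derivatives from Observation~\ref{obs-HCn-pdzero}, the normalisation of Claim~\ref{claim-Ptest-assum}, the two free choices (which clique of $R_a$ is the row set, and how it is ordered) absorbed by $P^{(T)}$ and $P^{(\sigma)}$, and propagation through the isolated elements of the sets $R'$. Your final step takes $x_{a,b}$ to be the single common element of the row clique at $x_{a,1}$ and the column clique at $x_{1,b}$; this is the correct form of the intersection step of Algorithm~\ref{alg2} (the full intersection of the two $R'$ sets also contains the images of $x_{b,1}$ and $x_{1,a}$, so it has three elements) and matches the argument in the proof of Proposition~\ref{prop-Psym-gens}, though like the paper you implicitly need $n\ge 4$, since for $n=3$ all three elements of $R_a$ are pairwise non-adjacent and the isolated element is not unique.
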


\begin{algorithm} 
\caption{P-equivalence Test} \label{alg2}

\textbf{Input}: B.b.a to $f(\vecx)$.\\
\textbf{Output}: $P' \in \GL_{n^2-n}(\F)$ s.t. $HC_n(P'\vecx) = f(\vecx)$, if $f(\vecx) = HC_n(P\vecx)$. 

\begin{algorithmic}[1]    
\State Compute b.b.a to $\frac{\partial^2 HC_n}{\partial x_{i,j} \partial x_{k,\ell}}$, where $i \neq j$ and $k \neq \ell$. Check if $\frac{\partial^2 HC_n}{\partial x_{i,j}\partial x_{k,\ell}}$ is identically zero or not, and store this information.

\State Compute the set $R'_{1,2} := \{x_{k,\ell} \ | \frac{\partial^2 HC_n}{\partial x_{1,2} \partial x_{k,\ell}} = 0 \text{ and } (k,\ell) \neq (1,2) \}$. Partition $R'_{1,2}$ as $\{x_{i,j}\}\sqcup Q'_{1,2}\sqcup T'_{1,2}$ such that items \ref{PDset-prop-1} and \ref{PDset-prop-2} in Observation \ref{obs-HCn-pdzero} hold. If no such partition exists, abort and output ``$f$ not equivalent to $HC_n$''.
    
\State Set $P'(x_{1,2}) := x_{1,2}$, $P'(x_{2,1}) := x_{i,j}$ and $P'(x_{1,t}) := x_{i_t,j_t}$, where $x_{i_t,j_t} \in T'_{1,2}$ and $t \in [3,n]$.

\State For all $x_{i_t,j_t} \in T'_{1,2}$, compute $R'_{i_t,j_t}$ as in Step 2 and partition it. Let $\{x_{k_t,\ell_t}\}$ be the singleton set in the partitioning. Set $P'(x_{t,1}) := x_{k_t,\ell_t}$. Abort and output ``$f$ not equivalent to $HC_n$'' if the partition does not exist. 
        
\State For $a,b \in [2,n]$, $a \neq b$, compute $R'_{k_a,\ell_a} \cap R'_{{i_b,j_b}}$, where $P'(x_{a,1}) = x_{k_a,\ell_a}$ and $P'(x_{1,b}) = x_{i_b,j_b}$, as computed in earlier steps. If $R'_{k_a,\ell_a} \cap R'_{{i_b,j_b}}$ is a singleton $ \{x_{c,d}\}$,  set $P'(x_{a,b}) := x_{c,d}$, else abort and output ``$f$ not equivalent to $HC_n$''.

\State Output $P'$.
\end{algorithmic}
\end{algorithm}

\subsection{S-equivalence test} \label{subsec-Sequiv-test}
We assume $n \neq 4$. Suppose $f(\vecx) = HC_n(S\vecx)$, where $S \in GL_{n^2-n}(\F)$ is a scaling matrix. Algorithm \ref{alg3} gives the scaling equivalence test over $\F_q$. The same algorithm, with some changes, will also work over other fields $\F$, see Appendix \ref{proof-Sequiv-allfields} for details. The correctness of Step 1 follows from Claim \ref{claim-Ssym-assum}, while that of the remaining steps follows from Proposition \ref{prop-stest-correct}. In Appendix \ref{subsec-Sequiv-HC4}, we show an S-equivalence test for $HC_4$ over $\Q$, $\R$, and finite fields.

\begin{claim} \label{claim-Ssym-assum}
    Without loss of generality, $S'_{1,j} = 1$, $S'_{2,3} = 1$ and $S'_{i,1} = 1$, where $i \in [2,n-1], j \in [2,n]$. 
\end{claim}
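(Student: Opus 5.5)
The plan is to absorb a scaling symmetry of $HC_n$ into the unknown scaling matrix. Suppose $f(\vecx) = HC_n(S\vecx)$ with $S$ an invertible scaling matrix. If $T$ is any scaling symmetry of $HC_n$, then $HC_n(TS\vecx) = HC_n(S\vecx) = f(\vecx)$, so $S' := TS$ is also a valid answer. It therefore suffices to exhibit a diagonal $T$ in $\mathcal{G}_{HC_n}$ for which $TS$ has the normalised entries claimed.

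The key steps are as follows.

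\begin{enumerate}
\item Parametrise the candidates for $T$ by Corollary \ref{corollary-continuous-scaling-sym}, or more precisely by the multiplicative form of \eqref{eqn-scaling-sym-lie}. The free coordinates are $T_{1,j}$ for $j \in [2,n]$, $T_{i,1}$ for $i \in [2,n-1]$, and $T_{2,3}$. This gives $(n-1)+(n-2)+1 = 2n-2$ free values, matching $\dim \mathfrak{g}_{HC_n}$.

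\item Choose the free coordinates to cancel the corresponding entries of $S$:
\[
T_{1,j} := S_{1,j}^{-1}, \qquad T_{i,1} := S_{i,1}^{-1}, \qquad T_{2,3} := S_{2,3}^{-1}.
\]
Define the remaining entries of $T$ by \eqref{eqn-scaling-sym-lie}. All quantities are well defined and nonzero because $S$ is invertible.

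\item Check that this $T$ is a symmetry over an arbitrary $\F$. This cannot be done by exponentiating Lie algebra elements, since that only works over $\C$. Instead, verify directly that $T$ fixes every monomial $m_\sigma$ for $\sigma \in C_n$, i.e. that $\prod_{i} T_{i,\sigma(i)} = 1$ for every cyclic $\sigma$.
\end{enumerate}

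The last step is the main obstacle. The paper's route via Proposition \ref{prop-lie-basis} and Corollary \ref{corollary-lie-soln-struct} rests on a purely combinatorial fact: every row of $M^{HC_n}$ lies in the integer row span of the relations \eqref{eqn-lin-form-liebasis}. Equivalently, the Laurent monomial $\prod_i T_{i,\sigma(i)}$ reduces to $1$ after substituting \eqref{eqn-scaling-sym-lie}.

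To make this field-independent, I would use the explicit vectors of $\mathcal{B}_n$. Each basis vector corresponds to a family of scalings, all of which are symmetries because their exponents sum to zero along every Hamiltonian cycle:
\begin{itemize}
\item $A^{(k)}$: scale row $1$ by $\lambda$ and row $k$ by $\lambda^{-1}$. Every cycle uses exactly one edge out of each vertex, so the product is $\lambda\cdot\lambda^{-1} = 1$.
\item $B^{(\ell)}$: the same argument applied to columns, since every cycle uses exactly one edge into each vertex.
\item $C$: scale row $2$ by $\lambda^{-1}$ and column $2$ by $\lambda$. Each cycle uses exactly one edge out of vertex $2$ and one edge into it, so the product is again $1$.
\end{itemize}
These multiplicative one-parameter families are symmetries over every field.

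I would build $T$ as a product of such elementary symmetries, choosing the parameters in order:
\begin{enumerate}
\item Use the row and column scalings, i.e. a diagonal similarity $X \mapsto L X L^{-1}$ combined with row scalings whose product is $1$, to set $S'_{1,j} = 1$ for $j \in [2,n]$ and $S'_{i,1} = 1$ for $i \in [2,n-1]$.
\item Then use the remaining independent family to set $S'_{2,3} = 1$.
\end{enumerate}

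Solving for these parameters is triangular. The scalings act as $x_{i,j} \mapsto r_i c_j x_{i,j}$ subject to $\prod_i r_i c_i = 1$, which gives $2n-1$ parameters and $2n-2$ effective degrees of freedom. I would solve the equations in the following order:
\begin{enumerate}
\item Set $r_1 = 1$.
\item Take $c_j = S_{1,j}^{-1}$ for $j \geq 2$.
\item Take $r_i = S_{i,1}^{-1}c_1^{-1}$ for $i \in [2,n-1]$.
\item Leave $c_1$ and $r_n$ linked by the product constraint.
\item Choose $c_1$ so that $r_2 c_3 S_{2,3} = 1$. This equation is $S_{2,1}^{-1}c_1^{-1}S_{1,3}^{-1}S_{2,3} = 1$, which has the unique solution $c_1 = S_{2,3}S_{1,3}^{-1}S_{2,1}^{-1}$.
\item Finally fix $r_n$ by the product constraint.
\end{enumerate}

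One point needs checking: whether the map $x_{i,j} \mapsto r_i c_j x_{i,j}$ is a symmetry merely under $\prod_i r_i c_i = 1$. It is, because a cycle uses each row once and each column once, so the product along any cycle is $\prod_i r_i \prod_j c_j = 1$. This avoids needing \eqref{eqn-scaling-sym-lie} at all.
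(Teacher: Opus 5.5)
Your proof is correct, and it reaches the conclusion by a different, more elementary route than the paper.

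\textbf{The paper's route.} The paper defines $D$ by $D_{1,j}=S_{1,j}^{-1}$, $D_{i,1}=S_{i,1}^{-1}$ and $D_{2,3}=S_{2,3}^{-1}$. It fills in the remaining entries via \eqref{eqn-scaling-sym-lie} and asserts that $D\in\mathcal{G}_{HC_n}$ ``by construction''. This tacitly uses the converse of Proposition \ref{prop-scalesymcont}: every diagonal matrix satisfying \eqref{eqn-scaling-sym-lie} fixes every cycle monomial. That converse is not argued at this point. It is effectively verified only later, by the computation in the proof of Proposition \ref{prop-nonchar}.

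\textbf{Your route.} You build the normalising symmetry as $x_{i,j}\mapsto r_ic_jx_{i,j}$ with $\prod_i r_ic_i=1$. Its invariance is immediate over any field, because a Hamiltonian cycle uses each row and each column exactly once. Your triangular solve is correct: $r_1=1$, $c_j=S_{1,j}^{-1}$, $c_1=S_{2,3}S_{1,3}^{-1}S_{2,1}^{-1}$, $r_i=S_{i,1}^{-1}c_1^{-1}$ for $i\in[2,n-1]$, and $r_n$ from the constraint.

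\textbf{The two matrices coincide.} The system \eqref{eqn-scaling-sym-lie} says exactly that $D_{i,j}=r_ic_j$ with $\prod_i r_ic_i=1$, where $r_1=1$ and $c_1=D_{1,3}D_{2,1}/D_{2,3}$ (compare Proposition \ref{prop-HCntoPermn-scalsym}). So your $T$ satisfies \eqref{eqn-scaling-sym-lie} and has the same free coordinates as $D$. Since those equations determine the remaining entries, $T=D$.

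\textbf{What each buys.} Your version is self-contained: it needs no Lie algebra, no $M^{(n)}$ and no Corollary \ref{corollary-continuous-scaling-sym}, and it holds for every $n\geq 3$, including $n=4$. It also makes explicit what the paper's ``by construction'' leaves implicit. The paper's version ties the normalisation directly to the parametrisation \eqref{eqn-scaling-sym-lie}, on which Algorithm \ref{alg3} is built.

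Given your final construction, the first half of your proposal can be dropped. That is the parametrisation via Corollary \ref{corollary-continuous-scaling-sym} and the discussion of integer row spans, and neither is needed.
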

\begin{proposition} \label{prop-stest-correct}
    If $f(\vecx) = HC_n(S\vecx)$, then in deterministic $(n \log q)^{O(1)}$ time Algorithm \ref{alg3} outputs an $S'$ such that $S'S$ is a scaling symmetry of $HC_n$.
\end{proposition}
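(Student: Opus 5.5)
We need to show Algorithm 3 (not shown) outputs $S'$ with $S'S$ a scaling symmetry of $HC_n$. Since the algorithm itself is not visible, the plan assumes, as sketched in the proof-techniques section, that it queries $f$ at $O(n^2)$ points, obtains constants, and solves a linear system (over the exponent group, via discrete logarithms or by a multiplicative system) for the entries of $S'$.

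The plan has four steps. First, I would use Claim \ref{claim-Ssym-assum} to normalise. By Corollary \ref{corollary-continuous-scaling-sym}, the scaling symmetries satisfying \eqref{eqn-scaling-sym-lie} are parametrised by free values $S_{1,j}$ ($j\in[2,n]$), $S_{i,1}$ ($i\in[2,n-1]$) and $S_{2,3}$. Composing $S$ with a suitable such symmetry $T$ makes the corresponding entries of $TS$ equal to $1$. Because $HC_n(TS\vecx)=HC_n(S\vecx)$, we may therefore assume these $2n-2$ entries of $S$ are $1$. This is the basis for the wlog statement; every other entry of $S$ is then an unknown.

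Second, for each cyclic $\sigma$, the coefficient of $m_\sigma$ in $f$ is $\prod_i S_{i,\sigma(i)}$. This coefficient can be extracted from the black box in $n^{O(1)}$ time, since setting all variables outside the support of $m_\sigma$ to zero leaves the single monomial $c_\sigma m_\sigma$, because distinct cycles differ (Claim \ref{claim: cycle-disjointness}). Evaluating at all-ones on that support gives $c_\sigma$. I would query $c_\sigma$ for the $(n-1)(n-2)$ cycles indexing the rows of the full-row-rank submatrix $M^{(n)}$ from Proposition \ref{prop-lie-dim-lb}.

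Third, this gives a multiplicative system $\prod_{(i,j)\in\sigma} S_{i,j}=c_\sigma$ with $(n-1)(n-2)$ unknowns after normalisation. Over $\F_q$, I would take a generator $g$ of $\F_q^{\times}$, write $S_{i,j}=g^{e_{i,j}}$ and $c_\sigma = g^{d_\sigma}$, and solve $M^{(n)} e = d$ over $\Z_{q-1}$. The discrete logarithms are not needed if we instead solve multiplicatively by Gaussian-elimination-style operations using only multiplication, division and integer powers. This works provided the restriction of $M^{(n)}$ to the non-normalised columns is unimodular, i.e.\ has determinant $\pm1$. That is exactly the analysis of $M^{(n)}$ referred to in Appendix \ref{proof-det1-abelgroup} (Lemma \ref{lemma-abeliangroup-soln}): the system has a solution in any abelian group, and it is unique. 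With determinant $\pm 1$, the integer inverse matrix expresses each $S_{i,j}$ as a product of integer powers of the $c_\sigma$, computable by repeated squaring in $(n\log q)^{O(1)}$ time.

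Fourth, let $S'$ be the inverse of the recovered normalised scaling. Then $S'S$ agrees with a symmetry on the normalised coordinates, and on all cycles in the chosen set its product is $1$. Since the recovered entries are the unique solution, $S'S$ coincides with the scaling symmetry $T^{-1}$ fixed in the first step; equivalently, by Proposition \ref{prop-scalesymcont}, it satisfies \eqref{eqn-scaling-sym-lie} and so lies in $\mathcal{G}_{HC_n}$.

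The main obstacle is the unimodularity step: showing that the chosen square submatrix of $M^{(n)}$ (rows are cycles, columns are the non-free variables) has determinant $\pm1$. I would attempt this by ordering the cycles so that the matrix is triangular. Each chosen cycle should introduce exactly one new variable $x_{i,j}$ with $i\in[2,n]$, $j\ne1$, beyond those already solved, using cycles of the form $1\to i\to j\to\cdots$ built through the normalised edges.
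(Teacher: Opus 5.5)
Your overall route is the paper's. You normalise via Claim \ref{claim-Ssym-assum}, read off the coefficients $c_k$ of the cycles indexing the rows of $M^{(n)}$, and solve the multiplicative system $M\vecy=\vece$ with $M=M^{(n)}_{\bullet\times T}$. The solution is obtained by raising the $c_k$ to integer exponents given by the cofactors of $M$, so no discrete logarithms are needed; this uses $\det M=\pm1$.

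Your final step differs slightly but harmlessly. The paper invokes Lemma \ref{lemma-abeliangroup-general}: any solution of the $M^{(n)}$-subsystem solves all cycle equations, so $HC_n(S'\vecx)=f$ and $S'S^{-1}$ is a symmetry. You instead argue that the normalised system has a unique solution and that the normalised true scaling $TS$ is one, so the recovered matrix equals $TS$. Both arguments rest on the same unimodularity. Taking $S'$ to be the inverse of the recovered matrix only matches the literal ``$S'S$'' in the statement; Algorithm \ref{alg3} outputs the recovered matrix itself. Also, Proposition \ref{prop-scalesymcont} goes in the opposite direction from the one you use at the end of your fourth step, but your uniqueness argument does not need it.

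The one step you propose to prove yourself fails as described. No ordering of cycles can make $M$ triangular with each cycle introducing exactly one new unknown. In any $n$-cycle:
\begin{itemize}
\item the edge leaving $1$ is normalised;
\item the edge entering $1$ is normalised unless it is $(n,1)$;
\item the remaining $n-2$ edges lie inside $[2,n]$, and at most one of them is $(2,3)$.
\end{itemize}
So every row of $M$ has at least $n-3\ge 2$ nonzero entries, and no row can be the first (or last) row of a triangular matrix; your first cycle would already have at least two unknowns.

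The unimodularity you need is Lemma \ref{lemma-Mn-det1}, which is proved by a global argument rather than a triangular one. $M^{(n)}$ and the explicit matrix $N$ from the proof of Corollary \ref{corollary-lie-soln-struct}, whose restriction to the columns $T$ is the identity, have the same nullspace over every field. Hence they have the same row space, so $BM^{(n)}=N$ for an invertible $B$. Restricting to the columns $T$ gives $BM=I$, so the integer $\det M$ is nonzero in every field and therefore equals $\pm1$. Cite that lemma, as you partly do when pointing to the appendix, instead of the triangular argument, and the proof is complete.
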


\begin{algorithm} 
\caption{S-equivalence test} \label{alg3}
\textbf{Input}: B.b.a to $f(\vecx)$. 
    
\textbf{Output}: $S' \in \GL_{n^2-n}(\F_q)$ s.t. $HC_n(S'\vecx) = f(\vecx)$ and $S' = \tilde{S}S$, where $\tilde{S}$ is a scaling symmetry, if $f(\vecx) = HC_n(S\vecx)$. 

\begin{algorithmic}[1]
\State Set $S'_{1,j} = 1$, $S'_{2,3} = 1$ and $S'_{i,1} = 1$, where $i \in [2,n-1], j \in [2,n]$.

\State Query $f$ to obtain the coefficients of monomials corresponding to the rows $\sigma_1, \dots ,\sigma_{(n-1)(n-2)}$ of the matrix $M^{(n)}$ constructed in Proposition \ref{prop-lie-dim-lb}. Let $c_k$ be the $k$'th coefficient. If $c_k = 0$ for some $k$, then abort and ouput ``$f$ not equivalent to $HC_n$''. 

\State Let $T := \{(i,j) | \ (i,j) \neq (1,k), \ k \in [2,n] \text{ and } (i,j) \neq (\ell,1), \ \ell \in [2,n-1] \text{ and }(i,j) \neq (2,3) \}$. Consider the matrix $M := M^{(n)}_{\bullet \times T}$. Compute $\beta = (det(M))^{-1}$ in $\Z_{q-1}$.

\State For each row $\sigma_k$, $k \in [(n-1)(n-2)]$ and $(i,j) \in T$, compute the cofactor of $M$ with respect to $\sigma_k$ and $(i,j)$, and denote it as $\alpha_{k,(i,j)}$. Set $S'_{i,j} = \prod_{k = 1}^{(n-1)(n-2)}c^{\beta \alpha_{k,(i,j)} \text{ mod } q-1}_{k}$.
\State Output $S'$.
\end{algorithmic}
\end{algorithm}

\section{Acknowledgements}
The author thanks Chandan Saha for suggesting the problem and Abhiram Aravind for going through the proofs, which helped improve the presentation of the paper. The author also thanks the anonymous reviewers of MFCS 2026 for their constructive and informative feedback, particularly one reviewer for pointing to the Pascal Determinant polynomial family, which is VNP-complete and for which an ET algorithm can be designed using knowledge of its continuous symmetries.
\bibliographystyle{alpha}
\bibliography{bibl}

\appendix
\addtocontents{toc}{\protect\setcounter{tocdepth}{1}}
\section{Comparison with \cite{Kayal12} and \cite{GuptaS19}} \label{sec-prevwork-comp}
Lie Algebra has been used to develop ET algorithms for many polynomial families; refer to Appendix \ref{sec-related} for an overview. As our techniques are based on \cite{Kayal12,GuptaS19}, we compare our techniques with theirs. Since $HC_n$ is a linear combination of the Immanant polynomials \cite{Merris83,Burgisser2000}, and the symmetries and Lie algebra of Immanants are known \cite{Ye11}, it is possible that the symmetries of $HC_n$ can be studied via those of the Immanant polynomials. However, the definition of the Immanant polynomial is representation theoretic, and the proof technique of \cite{Ye11} also involves some representation theory, while $HC_n$ has a simpler and intuitive definition, and our proof technique does \emph{not} need representation theory. Moreover, $HC_n$ is a linear combination of the Immanant polynomials as long as the characteristic of the underlying field does \emph{not} divide $n$, thus the analysis of $HC_n$ via this linear combination may not hold over all fields. It is also unclear whether the proof of \cite{Ye11} will directly yield an ET algorithm or hold for $HC$, because at certain points \emph{non-cyclic} permutations are needed in their proof.

\paragraph{Comparison with \cite{Kayal12}.} The author of \cite{Kayal12} uses the symmetries of the Permanent polynomial to describe an explicit basis for the Lie algebra of the Permanent polynomial. In contrast, the symmetries and the Lie Algebra of $HC_n$ have not been studied before to our knowledge, so we analyse the Lie Algebra first and use it to argue about the symmetries. We use ideas from \cite{GuptaS19} to establish an explicit basis for the Lie algebra of $HC_n$, see \nameref{para-gs19comp} for more details.

In Proposition \ref{prop-lie-alg-struct}, we give a basis for $\mathfrak{g}_{HC_n}$ over \emph{any} field, except for $n=4$. All but one of the elements of this basis are similar to those of the basis for $\mathfrak{g}_{Perm_n}$, the Lie algebra of the Permanent polynomial, given in Proposition 39 in \cite{Kayal12}. The single differing basis element for $HC_n$ ensures the linear independence of the basis elements over \emph{every} field. In contrast, the basis given for $\mathfrak{g}_{Perm_n}$ is a basis as long as the characteristic of the underlying field does \emph{not} divide $n$, the degree of the Permanent polynomial. For $HC_4$, we give a basis over characteristic $2$ fields separately from other fields. This exception arises because $\mathfrak{g}_{HC_4}$ is determined by the \emph{nullspace} of $6$ linear equations in $12$ variables, each corresponding to a permutation in $C_4$, such that all the equations are linearly dependent over characteristic $2$ fields. Five of the linear equations are linearly independent over any field, implying that the sixth equation is a linear combination of the five equations over characteristic $2$ fields. This causes $\mathfrak{g}_{HC_4}$ to be $6$ dimensional over fields of characteristic other than $2$ and $7$ dimensional over characteristic $2$ fields. Using this, we show that $HC_4$ has discrete symmetries over $\Q$, $\R$, $\C$ and 
certain finite fields. See Appendix \ref{sec-HC4-results} for details.

Our implementation of Step \ref{item-Sequivred} differs from that for the Permanent polynomial because the set of vanishing second-order partial derivatives of $HC_n$ differs from that for the Permanent polynomial. As mentioned earlier, our execution of Step \ref{item-Srecov} for $HC_n$ completely differs from that for the Permanent. For the Permanent, there are sufficiently many pairs of monomials such that for each pair, the monomials differ in exactly two variables. For example, the monomials $x_{1,1}x_{2,2}x_{3,3}\dots x_{n,n}$ and $x_{1,2}x_{2,1}x_{3,3}\dots x_{n,n}$ share $n-2$ variables. Assuming that $x_{1,1},x_{1,2}$ and $x_{2,1}$ are unscaled, which follows from the scaling symmetries of the Permanent, we query the scaled Permanent polynomial at these two monomials, which gives two scalars $\lambda_1$ and $\lambda_2$. Then, $\frac{\lambda_1}{\lambda_2}$ is the factor by which $x_{2,2}$ is scaled. In this way, $S$ can be recovered from black-box access to a scaled Permanent polynomial. In contrast, any two monomials of $HC_n$ differ in at least $3$ variables (see Claim \ref{claim: cycle-disjointness}). It is then unclear whether Step \ref{item-Srecov} can be performed in a similar way to the Permanent polynomial. Instead, we implement Step \ref{item-Srecov} by solving a system of linear equations as was done in \cite{GuptaS19}. For the Permanent polynomial, Step \ref{item-verify} can be performed via downward self-reducibility \cite{KI04}, as was done in \cite{Kayal12}, or by using circuit identities arising from the characterisation by symmetries of the Permanent, see Page 81, Proof of Theorem 3.2.1 in \cite{grochowPhD}. The latter identities also yield a \emph{more} efficient circuit testing algorithm. In contrast, though $HC_n$ is not characterised by its symmetries, we use its downward self-reducibility to verify if the given input is equivalent to $HC_n$ under the recovered transform.

\paragraph{Comparison with \cite{GuptaS19}.} \label{para-gs19comp} The authors of \cite{GuptaS19} studied the Lie Algebra and symmetries of the Nisan-Wigderson polynomial, denoted NW, and gave an ET algorithm for NW in a special case where $A$ is a block-permuted permutation scaling matrix. The monomials of NW correspond to the evaluations of a low-degree univariate polynomial over a finite field, while those of $HC_n$ correspond to cyclic permutations. This leads to many structural differences between the symmetries and the Lie algebras of $HC_n$ and NW.

The main idea we use from \cite{GuptaS19} is their argument for establishing a basis for $\mathfrak{g}_{NW}$, the Lie algebra of NW. They argue an upper bound on the nullity of a matrix $M$, where the rows of $M$ are the linear equations that determine $\mathfrak{g}_{NW}$, and the nullspace of $M$ is $\mathfrak{g}_{NW}$. To establish this upper bound, they lower bound the rank of $M$ by constructing an appropriate full rank submatrix $M'$ of $M$. Then, they show an explicit set of linearly independent vectors that lies in the nullspace of $M$ and has the same size as the upper bound, thus getting a basis of $\mathfrak{g}_{NW}$. They describe $M'$ explicitly and use it to describe all scaling symmetries of NW over $\Q$, $\R$, and certain finite fields, and give a scaling equivalence test for NW over $\R$ and certain finite fields. 

While we follow the same idea at a high level, the crucial difference is that our argument for $HC_n$ holds over \emph{any} field, as shown in Proposition \ref{prop-lie-basis}. This is in contrast to the case of NW, where the nullity upper bound argument holds as long as the characteristic of the field does not divide $d$, the degree of NW. While the authors of \cite{GuptaS19} describe $M'$ explicitly, we construct $M'$ efficiently via induction in our case, as shown in Proposition \ref{prop-lie-dim-lb}. We also leverage $M'$ to describe all scaling symmetries of $HC_n$ and give a scaling equivalence test for it over \emph{all} fields.

\section{Other related works} \label{sec-related}
\subsection{Results on PE} \label{sec-PEres}
As stated in Section \ref{Section: Introduction}, the complexity of PE is not well understood. PE is known to be in PSPACE over $\C$, in EEXP over $\R$, in NP $\cap$ co-AM over finite fields, and over $\Q$ it is not even known to be decidable \cite{Saxenaphd}. Even when both the inputs are restricted, limited results are known. The variant in which both inputs are degree two polynomials is called Quadratic form equivalence (QFE). Due to well-known classification results for such polynomials \cite{Lam04,Ara11}, QFE has polynomial time algorithms over $\R$, $\C$, finite fields, and $\Q$ (assuming access to an integer factoring oracle) \cite{Saxenaphd,Wallenborn13}. In contrast, Cubic form equivalence (CFE), where both inputs are degree three polynomials, is at least as hard as graph isomorphism \cite{AgrawalS05,Kayal11}. Further, CFE is polynomial time equivalent to isomorphism/equivalence problems like algebra isomorphism, trilinear form-equivalence, etc., as shown by \cite{GrochowQ23a}. Recently, in \cite{GuptaST23}, the authors studied PE for orbits of ROFs, a significant generalization of QFE, and gave a randomised polynomial time algorithm for PE for orbits of additive-constant-free ROFs, a mild restriction of general ROFs.

A variant of PE is the Shift Equivalence Test problem (SET), as referred to by \cite{DvirOS14}, where we have to decide for given $f,g \in \F[\vecx]$ whether $f(\vecx) = g(\vecx + \vecb)$ for some $\vecb \in \F^{|\vecx|}$. The author of \cite{Grigoriev97} showed that when $f$ and $g$ are given verbosely as a list of coefficients of all $\binom{n+d}{d}$ monomials, there is a deterministic algorithm over characteristic $0$ fields, a randomised algorithm over prime residue fields and a quantum algorithm for characteristic $2$ fields, all with running time polynomial in $\binom{n+d}{d}$. Later, \cite{DvirOS14} gave a randomised $(nds)^{O(1)}$ algorithm when $f$ and $g$ are given as black-boxes with degree bound $d$ and circuit size bound $s$. The author of \cite{Kayal12} also gave a randomised polynomial-time algorithm for SET; see Theorem 28 in the paper, and Appendix B of \cite{KayalNST17} for another proof.

In \cite{BlaserRS17}, the authors studied the Scaling equivalence problem, yet another variant of PE, which involves checking for two given polynomials $f$ and $g$ whether there exists a scaling matrix $S \in \GL_{|\vecx|}(\F)$ such that $f(\vecx) = g(S\vecx)$.  They gave a randomised polynomial-time algorithm for the scaling equivalence problem over $\R$ when the inputs are given as black boxes. 

\subsection{Results on ET} \label{sec-ETres}
As mentioned in Section \ref{Section: Introduction}, the study of ET for polynomial families was initiated in \cite{Kayal11} where randomised polynomial time ET algorithms were given for the Power Symmetric and the Elementary Symmetric polynomials. Following this, efficient ET algorithms were given for several other important polynomial families and circuit classes, see Table \ref{tab:ET-res}. The authors of \cite{KayalNS19} used ET for the determinant to give an efficient average-case reconstruction algorithm for low-width Algebraic Branching Programs (ABPs). Thus, ET algorithms have been used to design efficient reconstruction algorithms. 

It is clear from Table \ref{tab:ET-res} that studying ET for polynomial families has led to efficient algorithms in every case so far. In contrast, read-once formulas (ROFs) and $t$-design polynomials for low $t$, a subclass of sparse polynomials (depth-2 circuits), are the only circuit classes for which we have efficient ET algorithms so far. Sparse polynomials form a natural circuit class for which ET was shown to be NP-hard \cite{BDSS24}. Later, the authors of \cite{BDGT24} showed the NP-hardness of testing permutation equivalence to read-once oblivious algebraic branching programs (ROABPs). The authors of \cite{RS25} adapted the techniques of \cite{BDSS24,BDGT24} to show ET for ROABPs is NP-hard. 

\begin{table}
    \centering
    \begin{tabular}{|c|c|c|}
        \hline
       &   \makecell{\textbf{\emph{Algorithmic results}} \\ \textbf{\emph{(randomised polynomial time)}}}   & \\
            \hline
         \textbf{\emph{Family/Circuit class}} & \textbf{\emph{Technique}} & \textbf{\emph{Reference}} \\
         \hline
        \hline
         Determinant & \text{Lie Algebra} & \cite{Kayal12,grochowPhD,GargGK019} \\
         \hline
         Permanent  &  \text{Lie Algebra}& \cite{Kayal12}  \\
         \hline
         IMM  &  \text{Lie Algebra} & \cite{KayalNST17} \\ 
         \hline
         Trace-IMM  &  \text{Lie Algebra} & \cite{MurthyNS20} \\
         \hline
         Sum-Product  &  \makecell{Hessian,Lie Algebra, \\ Vector Space Decomposition} & \cite{Kayal11,GuptaST23,BDS24,MS21}  \\
         \hline
         Power Symmetric  &  \text{Hessian}&  \cite{Kayal11} \\
         \hline
         Elementary Symmetric  & \text{Second-order partial derivative} & \cite{Kayal11}  \\
         \hline
         Vandermonde  &  \text{Analysis of product of linear forms} & \cite{RR19} \\
         \hline
         Continuant & \makecell{Interpolating sets, Directional derivatives \\ analysis of product of linear forms} &  \cite{MS21} \\
         \hline
         Design Polynomials  & \text{Lie Algebra, Vector Space Decomposition} &  \cite{GuptaS19,BDS24} \\
         \hline
         ROF & \text{Hessian} & \cite{GuptaST23} \\
         \hline
         Hamiltonian Cycle polynomial & \text{Lie Algebra} & \textbf{This work}\\
        \hline \hline
       &  \textbf{\emph{Hardness results}} & \\
                    \hline
         \text{Sparse polynomials} & NP-hardness  & \cite{BDSS24}\\
         \hline
     ROABP  & NP-hardness & \cite{BDGT24,RS25} \\    
    \hline
    \end{tabular}
    \caption{A list of ET results}
    \label{tab:ET-res}
\end{table}
\paragraph{Other hardness results.}The author of \cite{Kayal12} showed that the more general problem of checking if a polynomial is an affine projection\footnote{An $n$-variate polynomial $f(\vecx)$ is an affine projection of $m$-variate $g(\vecy)$ if there exist an $A \in \F^{m \times n}$ and a $\vecb \in \F^{m}$ such that $f(\vecx) = g(A\vecx + \vecb)$.} of another polynomial is $\NP$-hard via a reduction from Graph $3$-Colorability. More recently, the authors of \cite{BDJ25} improved upon the result in \cite{Kayal12} by showing a reduction from polynomial solvability over any field to checking affine equivalence of polynomials. They also showed that the SparseShift problem\footnote{The problem involves checking if for a given $f(\vecx)$ whether $f(\vecx + \vecb)$ has strictly fewer monomials than $f$ for some shift $\vecb \in \F^{|\vecx|}$.}, a variant of the Shift Equivalence Testing problem, is at least as hard as polynomial solvability over any integral domain including fields. This result is an improvement over \cite{ChillaraGS23}, where the authors showed the same result but over integral domains which are \emph{not} fields. 

\subsection{Related works involving $HC$} \label{sec-HCnmisc}

\paragraph{Connections and comparisons between Permanent and $HC$.} The authors of \cite{GJ81} showed an identity expressing $HC_n$ as a sum of the product of the evaluation of the Permanent and the Determinant polynomials at various submatrices of $X_n$; see \cite{SG25} for an accessible proof. The author of \cite{Merris83} showed that $HC_n$ can also be written as a linear combination of Immanant polynomials, which are a generalization of the Permanent and the Determinant polynomials. In \cite{Burgisser2000}, the author showed that $HC_{n-2}$ can be written as a rational linear combination of $p$-projections of $HC_n$ and the Determinant polynomial.\footnote{An $n$-variate polynomial $f(\vecx)$ is a $p$-projection of $m$-variate $g(\vecy)$ if $m = n^{O(1)}$ and there exists a mapping $\phi: \vecy \mapsto \F \sqcup \vecx$ such that $g(\phi(\vecy)) = f(\vecx)$. The projection is monotone if the scalars lie in $\F^{\geq 0}$, the set of non-negative elements of $\F$.}

The Permanent polynomial and $HC_n$ have been studied with respect to lower bounds and upper bounds, with some results as presented in Table \ref{tab:comparison-lb}. The authors of \cite{JS82} studied the computation of polynomials by circuits over semi-rings\footnote{A semi-ring $(S,+,\times,0,1)$ comprises a set $S$ such that $(S,+,0)$ and $(S,\times,1)$ are commutative monoids, $\times$ distributes over $+$, and $a\times 0 $ is $0$ for all $a \in S$.}, which led to monotone arithmetic circuit lower bounds. They showed a $2^{\Omega(n)}$ monotone arithmetic circuit lower bound for $HC_n$ and the Permanent over the semi-ring $(\R^{\geq 0},+,\times,0,1)$. They also showed that for a polynomial $f$, monotone arithmetic circuit lower bounds over the Boolean semi-ring $\B = (\{0,1\},\lor,\land,0,1)$ imply monotone arithmetic circuit lower bounds over $(\R^{\geq 0},+,\times,0,1)$. Note, monotone Boolean circuit lower bounds for the Boolean function defined by $f$ imply monotone arithmetic circuit lower bounds for $f$ over $\B$.

\begin{table}
    \centering
    \begin{tabular}{|c|c|c|}
        \hline
          \textbf{\emph{Result}}  & \textbf{\emph{Permanent}} & $HC_n$ \\
        \hline
        \hline
        Monotone circuit lower bound  & $n(2^{n-1}-1)$ \cite{JS82}  & $(n-1)((n-2)2^{n-3} + 1)$ \cite{JS82} \\
        \hline
        Tropical circuit lower bound & $2^{\Omega(n)}$ \cite{Jukna15} & $2^{\Omega(n)}$ \cite{Jukna15} \\
        \hline
         Monotone Boolean circuit lower bound & $2^{n^{1/3 - o(1)}}$\cite{CGRSS26} & $2^{\tilde{\Omega}(n^{1/4})}$ \cite{BM25} \\
        \hline
         Determinantal complexity lower bound & $\frac{n^2}{2}$ \cite{MR04} &  \textbf{Unknown}\\
        \hline
        \makecell{Binary determinantal complexity \\ upper bound} & $2^{n} - 1$ \cite{Gre11} & $(n-1)2^{n-2}+1$ \cite{HI16}\\
        \hline
    \end{tabular}
    \caption{Some bounds for the Permanent and $HC_n$}
    \label{tab:comparison-lb}
\end{table}

A monotone Boolean circuit lower bound for the Hamiltonian Cycle function follows from the fact that the Clique function is a monotone Boolean projection\footnote{An $n$-variate monotone Boolean function $f(\vecx)$ is a monotone-projection of the $m$-variate monotone Boolean function $g(\vecy)$ if there exists a mapping $\phi: \vecy \mapsto \{0,1\}\sqcup \vecx$ such that $g(\phi(\vecy)) = f(\vecx)$.} of $HC_n$, as observed by \cite{AB87,Val79a}, and such lower bounds for the Clique function have been studied \cite{AB87,CKR22,BM25}, with the most recent lower bound being $2^{\tilde{\Omega}(n^{1/2})}$. Prior to the improved monotone Boolean circuit lower bound for the Permanent function by \cite{CGRSS26}, the best known bound was $n^{\Omega(\log n)}$ by \cite{Razborov85}, which also implied a $n^{\Omega(\log n)}$ monotone arithmetic circuit lower bound for the Permanent polynomial over $\B$. Before this improvement, \cite{Jukna14} posed the question whether $HC$ is a monotone $p$-projection of the Permanent. If the answer were yes, then we would get a $2^{n^{\Omega(1)}}$ monotone arithmetic circuit lower bound for the Permanent polynomial over $\B$, because the Clique polynomial family is a monotone $p$-projection of $HC$ \cite{Val79a}. However, the author of \cite{Gro17} showed that $HC$ is \emph{not} a monotone $p$-projection of the Permanent over $\B$, $\R$ and other semi-rings, which also shows that the Permanent is \emph{not} VNP-complete under monotone $p$-projections over $\R^{\geq 0}$ for polynomials with non-negative coefficients.\footnote{Later, \cite{MS18} showed that even the Clique polynomial is not a monotone affine projection of the Permanent polynomial.} It is \emph{unknown} whether $HC$ is VNP-complete under monotone $p$-projections over $\R$ for such polynomials. 

The lower bounds established by \cite{JS82} for the Permanent polynomial and $HC_n$ also hold over the tropical semi-rings $(\R,\text{min},+,+\infty,0)$ and $(\R^{\geq 0},\text{min},+,+\infty,0)$. Circuits over tropical semi-rings are called tropical circuits. The author of \cite{Jukna15} studied tropical circuits, motivated by the observation that many dynamic programming algorithms correspond to tropical circuits over tropical semi-rings. He showed that the power of tropical circuits lies between that of monotone Boolean circuits and monotone arithmetic circuits, and also showed $2^{\Omega(n)}$ tropical circuit lower bounds for the Permanent and $HC_n$, among other polynomials, over the tropical semi-rings $(\N,\text{min},+,+\infty,0)$ and $(\N,\text{max},+,0,0)$, where the former is a proper sub semi-ring of $(\R,\text{min},+,+\infty,0)$.

In \cite{Malod03}, the classes VP$^{0}$ and VNP$^{0}$ were defined using constant-free circuits, where the only constants that appear as inputs are $1$, $0$ or $-1$. In the work, $HC$ was shown to be VNP$^{0}$-complete, which also implies $HC$ is VNP-complete over \emph{any} ring. The VP$^{0}$ vs VNP$^{0}$ question (equivalently, is $HC$ in VP$^{0}$?) is connected to the $\tau$-conjecture concerning the number of integer roots of univariate integer polynomials, see \cite{Burg24} for an overview. With this context, showing VP $\neq$ VNP (Valiant's conjecture) implies a superpolynomial lower bound on the determinantal complexity of the Permanent polynomial, and VP$^{0}$ $\neq$ VNP$^{0}$ will show a superpolynomial lower bound on the binary determinantal complexity of $HC_n$.\footnote{ The determinantal complexity of a polynomial $f$, dc$(f)$, is the smallest $m \in \N$ such that $f = det(M)$, where $M$ is a $m \times m$ matrix with entries as affine forms. The binary determinantal complexity, bdc$(f)$, is defined similarly, with $M$ being a matrix with entries as $0$, $1$ or variables. Clearly, bdc$(f) \geq$ dc$(f)$.} Table \ref{tab:comparison-lb} shows that the Permanent polynomial has a quadratic determinantal complexity lower bound, while for $HC_n$ \emph{no} such lower bound is known.

\paragraph{Usefulness of $HC$ as a VNP-complete family.} In various works \cite{Malod07,KoiranPTT15,GMQ16,IM18,IS22}, $HC$ has been the primary choice of a VNP-complete family over all fields for generalizing results to all fields.\footnote{In \cite{GMQ16}, the authors work over algebraically closed fields, and using $HC$, their result generalizes to all closed fields.} The author of \cite{Hrubes16} noted the lack of VNP-complete families over characteristic $2$ fields and showed VNP-completeness of multiple graph-based polynomial families over such fields. He used $HC$ to show the completeness of Clique$^{*}$ polynomial family, a variant of the Clique polynomial family known to be VNP-complete over fields of characteristic other than $2$ \cite{Burgisser2000}, over characteristic $2$ fields. He then showed that four other graph-based polynomial families are VNP-complete over all fields using the Clique$^{*}$ polynomial family. Thus, the VNP-completeness of all five polynomial families relies on that of $HC$.

In \cite{DRS22}, the authors used $HC$ to show that a variant of the monomial prediction problem, where in this variant the input polynomial is a composition of ``easy'' functions, is $\oplus$P-complete over finite fields and NP-hard over $\Z$. The motivation for studying this variant comes from cryptography. They also adapt the proof to show the $\#$P-completeness of a problem motivated by machine learning.

\section{Missing Proofs from Section \ref{Scn: Lie_alg}} \label{proofs-lie}
\subsection{Proof of Proposition \ref{prop-lie-alg-struct}} \label{proof-lie-alg-struct}
We prove the forward direction first. If $A \in \mathfrak{g}_{HC_n}$, then, by definition, $A$ satisfies 
    \begin{equation} \label{eqn: HCn-lie}
     \sum_{\substack{i_1,j_1,i_2,j_2 \in [n],  \\ i_1 \neq j_1, i_2 \neq j_2}} A_{(i_1,j_1),(i_2,j_2)}x_{i_2,j_2}\frac{\partial HC_n}{\partial x_{i_1,j_1}}  = 0.   
    \end{equation}

    Let $m_{\sigma} := \prod_{i \in [n]}x_{i,\sigma(i)}$ for all $\sigma \in C_n$. By linearity of the derivative operator, it suffices to analyse the monomials generated from $m_\sigma$ in \eqref{eqn: HCn-lie}.
    For any $\sigma$, we have that,
    \begin{equation} \label{eqn:deriv-equal}
    \begin{split}
        x_{i_1,j_1} \frac{\partial m_{\sigma}}{\partial x_{i_2,j_2}} = x_{i_3,j_3} \frac{\partial m_{\sigma}}{\partial x_{i_4,j_4}} \iff
        (i_1,j_1) = (i_2,j_2) \text{ and } (i_3,j_3) = (i_4,j_4).
    \end{split}            
    \end{equation}
    assuming $(i_2,j_2) \neq (i_4,j_4)$ and the derivatives are non-zero.  From Observation \ref{obs: disj-deriv}, we get that the coefficient of the monomial $x_{i_1,j_1} \frac{\partial m_{\sigma}}{\partial x_{i_2,j_2}}$, where $(i_1,j_1) \neq (i_2,j_2)$ and $j_2 = \sigma(i_2)$, in \eqref{eqn: HCn-lie} is just $A_{(i_1,j_1),(i_2,j_2)}$. From \eqref{eqn:deriv-equal} and Observation \ref{obs: disj-deriv}, the coefficient of $m_{\sigma}$ is $\sum_{i \in [n]}A_{(i,\sigma(i)),(i,\sigma(i))}$. Thus, we get that all the off-diagonal entries $A_{(i_1,j_1),(i_2,j_2)}$ are $0$ while the diagonal entries $A_{(i,j),(i,j)}$ satisfy  
    
    \[\sum_{i \in [n]}A_{(i,\sigma(i)),(i,\sigma(i))} = 0 \quad \forall  \sigma \in C_n.\]
    The reverse direction can be verified easily by using Observation \ref{claim: cycle-disjointness} and the condition in \eqref{eqn:deriv-equal}.

\subsection{Proof of Observation \ref{obs: disj-deriv}} \label{proof-disj-deriv}
    We assume throughout the proof that the derivatives we consider are non-zero. For $n = 3$, $|C_3| = 2$ with $\sigma_1 = (1 \ 2 \ 3)$ and $\sigma_2 = (1 \ 2 \ 3)$. Then, we have
        \begin{equation*}
        \begin{split}
            x_{i_1,j_1} \frac{\partial x_{1,2}x_{2,3}x_{3,1}}{\partial x_{i_2,j_2}} &= x_{i_3,j_3} \frac{\partial x_{1,3}x_{3,2}x_{2,1}}{\partial x_{i_3,j_3}} \\
            \iff x_{i_1,j_1} x_{i_4,j_4} x_{1,2}x_{2,3}x_{3,1} &= x_{i_2,j_2} x_{i_3,j_3} x_{1,3}x_{3,2}x_{2,1}. 
        \end{split}
        \end{equation*}
    It can be verified that the last equality cannot hold for any choice of $x_{i,j}$ variables, proving the observation statement for $n=3$. Now assume $n > 3$. As $\sigma_1 \neq \sigma_2$, by Claim \ref{claim: cycle-disjointness}, there exists $S = \{k_1,k_2,k_3\} \subseteq [n]$ such that $\sigma_1(k_{\ell}) \neq \sigma_2(k_{\ell})$ where $\ell \in [3]$. Consider the following cases:
    \begin{enumerate}
        \item $i_2 \in S$ and $i_4 \in S$: In this case, we have that

        \begin{equation*}
            x_{i_1,j_1} \frac{\partial m_{\sigma_1}}{\partial x_{i_2,j_2}} = x_{i_1,j_1}\frac{\prod_{i \in S}x_{i,\sigma_1(i)}}{x_{i_2,j_2}} \dot \prod_{i \notin S}x_{i,\sigma_1(i)}
        \end{equation*}
        and
        \begin{equation*}
            x_{i_3,j_3} \frac{\partial m_{\sigma_2}}{\partial x_{i_4,j_4}} = x_{i_3,j_3}\frac{\prod_{i \in S}x_{i,\sigma_2(i)}}{x_{i_4,j_4}} \dot \prod_{i \notin S}x_{i,\sigma_2(i)}
        \end{equation*}
        
        \item $i_2 \in S$ and $i_4 \notin S$: In this case we have,
        \begin{equation*}
            x_{i_1,j_1} \frac{\partial m_{\sigma_1}}{\partial x_{i_2,j_2}} = x_{i_1,j_1} \frac{\prod_{i \in S}x_{i,\sigma_1(i)}}{x_{i_2,j_2}} \prod_{i \notin S}x_{i,\sigma_1(i)}
        \end{equation*}
        and
        \begin{equation*}
            x_{i_3,j_3} \frac{\partial m_{\sigma_2}}{\partial x_{i_4,j_4}} = x_{i_3,j_3}\prod_{i \in S}x_{i,\sigma_2(i)}\frac{\prod_{i \notin S}x_{i,\sigma_2(i)}}{x_{i_4,j_4}} 
        \end{equation*}

        Note that the case $i_2 \notin S$ and $i_4 \in S$ is the same as this case with the appropriate changes in the resulting monomials.
        
        \item $i_2 \notin S$ and $i_4 \notin S$
            \begin{equation*}
            x_{i_1,j_1} \frac{\partial m_{\sigma_1}}{\partial x_{i_2,j_2}} = x_{i_1,j_1}  \prod_{i \in S}x_{i,\sigma_1(i)}\frac{\prod_{i \notin S}x_{i,\sigma_1(i)}}{x_{i_2,j_2}}
        \end{equation*}
        and
        \begin{equation*}
            x_{i_3,j_3} \frac{\partial m_{\sigma_2}}{\partial x_{i_4,j_4}} = x_{i_3,j_3}\prod_{i \in S}x_{i,\sigma_2(i)}\frac{\prod_{i \notin S}x_{i,\sigma_2(i)}}{x_{i_4,j_4}} 
        \end{equation*}
        
        By the unique factorization theorem for multivariate polynomials, if the resulting monomials were equal, then their factors must be the same. However, as $|S| = 3$, it is not hard to observe in all of the above cases that regardless of the choice of $x_{i_3,j_3}$, there will always exist a $k \in S$, such that $x_{k,\sigma_1(k)}$ is not divisible by any $x_{\ell,\sigma_2(\ell)}$ for $\ell \in [n]$ or $x_{i_3,j_3}$. Similarly, regardless of any choice of $x_{i_1,j_1}$, there exists a $k' \in S$ such that $x_{k',\sigma_2(k')}$ is not divisible by any $x_{\ell,\sigma_1(\ell)}$ for $\ell \in [n]$ or $x_{i_1,j_1}$.
    \end{enumerate}
\subsection{Proof of Claim \ref{claim: cycle-disjointness}} \label{proof-cycle-disjoint}
For $n=3$, there are only two cyclic permutations $(1 \ 2 \ 3)$ and $(1 \ 3 \ 2)$. It is then easy to see that the claim holds for $n=3$ with $S = \{1,2,3\}$. Hence, we assume $n \geq 4$. As $\sigma_1 \neq \sigma_2$ and both are $n$-cycles, we can write $\sigma_1$ and $\sigma_2$ as

\begin{equation*}
\begin{split}
    \sigma_1 : (1 \ m_1 \ m_2 \ \dots \ m_i \ m_{i+1} \ \dots \ m_n), \\
    \sigma_2 : (1 \ m_1 \ m_2 \ \dots \ m_i \ m'_{i+1} \ \dots \ m'_n),   
\end{split}
\end{equation*}
where $i < n-1$ (otherwise $\sigma_1 = \sigma_2$), $m_j = \sigma_1^{j-1}(1)$ for $j \in [n]$, $m'_j = \sigma_2^{j-1}(1)$ for $j \in [i+1,n]$, with $m'_{i+1} \neq m_{i+1}$ while $m'_j = m_j$ for all $j \in [i]$. Thus, $\sigma_1({m_i}) \neq \sigma_2(m_i)$.

As $\sigma_2$ is an $n$-cycle, there exists $k_2 \in [i+1, n-1]$ such that $\sigma_2(m'_{k_2}) = m_{i+1}$. As $\sigma_1$ is an $n$-cycle, there exists $k_1 \in [i+2,n]$ such that $m_{k_1} = m'_{k_2}$ and $\sigma_1(m_{k_1}) \neq m_{i+1}$. Thus, $\sigma_1(m_{k_1}) \neq \sigma_2(m_{k_1})$. 

We now further leverage the fact that $\sigma_1,\sigma_2$ are $n$-cycles to find an $m_j$ such that $\sigma_1(m_j) \neq \sigma_2(m_j)$. Consider the following cases:

\begin{enumerate}
    \item $m'_{i+1} \neq m_{k_1+1}$: This implies $\sigma_{1}(m_{k_1}) \neq m'_{i+1}$. As $\sigma_1$ and $\sigma_2$ are permutations, there exists a $j_1 \in [i+1,n-1] \backslash \{k_1\}$ such that $\sigma_1(m_{j_1}) = m'_{i+1}$ and $\sigma_2(m_{j_1}) \neq m'_{i+1}$.

    \item $m'_{i+1} = m_{k_1+1}$: In this case, we show that there exists a $j_1 \in [i+1,k_1-1]$ such that $\sigma_1(m_{j_1}) \neq \sigma_2(m_{j_1})$. 
    
    Suppose to the contrary that there did not exist such a $j_1$, then $\sigma_1(m_j) = \sigma_2(m_j)$ for all $j \in [i+1,k_1-1]$. Since $m'_{k_2+1} = \sigma_2(m'_{k_2}) = m_{i+1}$, we also get from the previous assumption that $m_{i + j} = m'_{k_2+j}$ for $j \in [1,k_1-i]$. Based on this, we have the following subcases, each of which leads to a contradiction. Thus, the claimed $j_1$ exists.
    \begin{enumerate}
        \item $k_1 - i > n - k_2$: Taking $j = n-k_2+1$, we get that $m_{i+n-k_2+1} = m'_{k_2+n-k_2+1} = 1$. This implies $\sigma_1^{i+n-k_2}(1) = 1$ and hence $i = k_2 \mod n$, a contradiction as $i < k_2 < n$. 

        \item $k_1 - i < n - k_2$: Taking $j = k_1-i$, we get that $m'_{k_2+k_1-i} = m_{k_1}$. As $m_{k_1} = m'_{k_2}$, we get that, $m'_{k_2+k_1-i} = m'_{k_2}$. This implies $\sigma_2^{k_2+k_1-i-1}(1) = \sigma_2^{k_2 - 1}(1)$ and hence $i = k_1 \mod n$, a contradiction as $0 < k_1 - i < n$. 

        \item $k_1 - i = n - k_2$: Taking $j = k_1-i$, we get that $m_{i+k_1-i} = m'_{k_2+k_1-i}$ implying $m_{k_1} = m'_{k_2 + k_1 - i} = m'_{k_2 + n - k_2} = m'_n$. Thus, we get that $m'_n = m_{k_1} = m'_{k_2}$, a contradiction as $0 < k_2 < n$.
    \end{enumerate}
\end{enumerate}
Take $S$ to be $\{m_i,m_{k_1},m_{j_1}\}$.

\subsection{Proof of Proposition \ref{prop-lie-basis}} \label{proof-lie-basis}
Define the matrix $M^{HC_n} \in \F^{(n-1)! \times (n^2-n)}$, with rows indexed by $\sigma \in C_n$ and columns by $(i,j)$ in lexicographic order of the variables $x_{i,j}$, as
     \begin{equation} \label{eqn-LieMat-def}
         M^{HC_n}_{(\sigma,(i,j))} = \begin{cases} 
          1 & \sigma(i) = j, \\
          0 & \text{ otherwise}
       \end{cases}
     \end{equation}
where $\sigma \in C_n$. The matrix $M^{HC_n}$ represents the system of linear equations described by Proposition \ref{prop-lie-alg-struct}, and hence its nullspace is $\mathfrak{g}_{HC_n}$. We analyse $M^{HC_n}$ for $n \neq 4$ here to show the set $\mathcal{B}_n$ as described in the proposition statement is a basis for $\mathfrak{g}_{HC_n}$. For $n = 3$, $M^{HC_3}$ is as in \eqref{eqn-Lie-HC3}. 
\begin{equation} \label{eqn-Lie-HC3}                        
    \begin{blockarray}{ccccccc}
        & \text{\small{(1,2)}} &  \text{\small{(1,3)}} &  \text{\small{(2,1)}} &  \text{\small{(2,3)}} &  \text{\small{(3,1)}} &  \text{\small{(3,2)}}  \\
        \begin{block}{c(cccccc@{\hspace*{1pt}})}
            (1,2,3) & 1 & 0 & 0 & 1 & 1 & 0  \\
            (1,3,2) & 0 & 1 & 1 & 0 & 0 & 1  \\
        \end{block}
    \end{blockarray}  
\end{equation}
Clearly, $M^{HC_3}$ is full row rank over any $\F$. Hence, $\dim (\mathfrak{g}_{HC_3}) = 4$ over any $\F$. The set $\mathcal{B}_3$ described in the proposition statement can be easily verified to lie in $\mathfrak{g}_{HC_3}$. Stacking the elements of $\mathcal{B}_3$ together, we get the following matrix 
\begin{equation} \label{eqn-basis-lin-indep-3}                       
    \begin{blockarray}{cccccc|ccccc}
        & A^{(2)} & A^{(3)} & B^{(2)} & C  \\
        \begin{block}{c(ccccc|ccccc@{\hspace*{5pt}})}
            (1,2) & 1 & 1 & -1 & 1 \\
            (1,3) & 1 & 1 & 0 & 0  \\
            (2,1) & -1 & 0 & 1 & -1  \\
            (2,3) & -1 & 0 & 0 & -1  \\
            (3,1) & 0 & -1 & 1 & 0  \\
            (3,2) & 0 & -1 & -1 & 1  \\
        \end{block}
    \end{blockarray}  
\end{equation}
It can be easily verified that the matrix in \eqref{eqn-basis-lin-indep-3} has rank $4$ by noting that the submatrix indexed by the first four rows has determinant $1$, proving the statement for $n = 3$. Now, we assume $n \geq 5$. Note that for any $\sigma \in C_n$, $k \in [2,n]$ and $\ell \in [2,n-1]$, we have
\[\sum_{i=1}^{n}A^{(k)}_{(i,\sigma(i))} = A^{(k)}_{(1,\sigma(1))} + A^{(k)}_{(k,\sigma(k))} = 1 - 1 = 0,\]
\[\sum_{i=1}^{n}B^{(\ell)}_{(i,\sigma(i))} = B^{(\ell)}_{(\sigma^{-1}(1),1)} + B^{(\ell)}_{(\sigma^{-1}(\ell),\ell)} = 1 - 1 = 0 \text{  and}\]
\[\sum_{i=1}^{n}C_{(i,\sigma(i))} = C_{(2,\sigma(2))} + C_{(\sigma^{-1}(2),2)} = -1 + 1 = 0.\]
Thus, $\mathcal{B}_n \subset \mathfrak{g}_{HC_n}$. We now show that $\mathcal{B}_n$ is $\F$-linearly independent, implying $\dim_{\F}(\mathfrak{g}_{HC_n}) \geq 2n-2$. Let $M \in \F^{(n^2-n) \times (2n-2)}$ be the matrix formed by stacking the elements of $\mathcal{B}_n$ as columns, and let $R$ be the set 
\[R = \{(1,j) \ | \ j \in [2,n-1] \} \sqcup \{ (2,3) \} \sqcup \{(i,1) \ | \ i \in [2,n-1]\} \sqcup \{(1,n)\}.\] 
Then, upto reordering of rows, the submatrix $M_{R \times \bullet}$ is as follows
\begin{equation} \label{eqn-basis-lin-indep}                       
    \begin{blockarray}{ccccccccccc}
        & A^{(2)} & A^{(3)} & \dots & A^{(n-1)} & A^{(n)} & B^{(2)} & B^{(3)} & \dots & B^{(n-1)} & C \\
        \begin{block}{c(ccccc|ccccc@{\hspace*{5pt}})}
            (1,2) & 1 & 1 & \dots & 1 & 1 & -1 & 0 & \dots & 0 & 1 \\
            (1,3) & 1 & 1 & \dots & 1 & 1 & 0 & -1 & \dots & 0 & 0 \\
            \vdots & \vdots & \vdots & \vdots & \vdots & \vdots & \vdots & \vdots & \vdots & \vdots & \vdots  \\
            (1,n-1) & 1 & 1 & \dots & 1 & 1 & 0 & 0 & \dots & -1 & 0 \\
            (2,3) & -1 & 0 & \dots & 0 & 0 & 0 & -1 & \dots & 0 & -1 \\
        \cline{2-11}
            (2,1) & -1 & 0 & \dots & 0 & 0 & 1 & 1 & \dots & 1 & -1 \\ 
            (3,1) & 0 & -1 & \dots & 0 & 0 & 1 & 1 & \dots & 1 & 0 \\
            \vdots & \vdots & \vdots & \vdots & \vdots & \vdots & \vdots & \vdots & \vdots & \vdots & \vdots  \\
            (n-1,1) & 0 & 0 & \dots & -1 & 0 & 1 & 1 & \dots & 1 & 0 \\
            (1,n) & 1 & 1 & \dots & 1 & 1 & 0 & 0 & \dots & 0 & 0 \\
        \end{block}
    \end{blockarray}  
\end{equation}
We show that $det(M_{R \times \bullet}) = \pm 1$. On $M_{R \times \bullet}$, apply the elementary row operations $R_i \mapsto R_i - R_{2n-2}$, where $R_i$ refers to the $i$'th row, for all $i \in [1,n-2]$ and $R_{n-1} \mapsto R_{n-1} - R_n$ to get the following matrix: 

\begin{equation*}                      
    \begin{blockarray}{ccccccccccc}
        & A^{(2)} & A^{(3)} & \dots & A^{(n-1)} & A^{(n)} & B^{(2)} & B^{(3)} & \dots & B^{(n-1)} & C \\
        \begin{block}{c(ccccc|ccccc@{\hspace*{5pt}})}
            (1,2) & 0 & 0 & \dots & 0 & 0 & -1 & 0 & \dots & 0 & 1 \\
            (1,3) & 0 & 0 & \dots & 0 & 0 & 0 & -1 & \dots & 0 & 0 \\
            \vdots & \vdots & \vdots & \vdots & \vdots & \vdots & \vdots & \vdots & \vdots & \vdots & \vdots  \\
            (1,n-1) & 0 & 0 & \dots & 0 & 0 & 0 & 0 & \dots & -1 & 0 \\
            (2,3) & 0 & 0 & \dots & 0 & 0 & -1 & -2 & \dots & -1 & 0 \\
        \cline{2-11}
            (2,1) & -1 & 0 & \dots & 0 & 0 & 1 & 1 & \dots & 1 & -1 \\ 
            (3,1) & 0 & -1 & \dots & 0 & 0 & 1 & 1 & \dots & 1 & 0 \\
            \vdots & \vdots & \vdots & \vdots & \vdots & \vdots & \vdots & \vdots & \vdots & \vdots & \vdots  \\
            (n-1,1) & 0 & 0 & \dots & -1 & 0 & 1 & 1 & \dots & 1 & 0 \\
            (1,n) & 1 & 1 & \dots & 1 & 1 & 0 & 0 & \dots & 0 & 0 \\
        \end{block}
    \end{blockarray}  
\end{equation*}
As the resulting matrix is block triangular, $det(M_{R \times \bullet})$ is, up to a sign, the product of the determinants of the lower-left block matrix and the upper-right block matrix in the above matrix. It is easily verified that the lower-left block matrix has determinant $(-1)^{n-2}$, while the upper-right block matrix has determinant $(-1)^{n-1}$. Hence, $det(M_{R \times \bullet}) =\pm 1$ and $\mathcal{B}_n$ is $\F$-linearly independent. 

Proposition \ref{prop-lie-dim-lb} shows a rank lower bound of $(n-1)(n-2)$ for $M^{HC_n}$ implying, by the Rank-Nullity Theorem, that $\text{dim}_{\F}(\mathfrak{g}_{HC_n}) \leq 2n-2$, proving $\mathcal{B}_n$ is a basis and hence Proposition \ref{prop-lie-basis}. Note that Proposition \ref{prop-lie-dim-lb} also shows how to efficiently construct a row basis $M^{(n)}$ of $M^{HC_n}$, which we leverage in Algorithm \ref{alg3}. For $n = 3$, we have $M^{(3)} = M^{HC_3}$.

\begin{proposition} \label{prop-lie-dim-lb}
 Let $n \geq 5$. We can construct in time $n^{O(1)}$ a submatrix $M^{(n)} \in \F^{(n-1)(n-2) \times (n^2-n)}$ of $M^{HC_n}$ such that $M^{(n)}$ contains a $(n-1)(n-2) \times (n-1)(n-2)$ submatrix with determinant $\pm 1$. 
\end{proposition}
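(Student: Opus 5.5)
The plan is to construct $M^{(n)}$ explicitly by induction on $n$ and certify unimodularity through a block-triangular decomposition. Let
\[T_n := \{(i,j) : i \neq j\} \setminus \big(\{(1,k) : k \in [2,n]\} \cup \{(\ell,1) : \ell \in [2,n-1]\} \cup \{(2,3)\}\big).\]
This is the complement of the $2n-2$ ``free'' coordinates $R$ used in the proof of Proposition \ref{prop-lie-basis}, and $|T_n| = (n^2-n) - (2n-2) = (n-1)(n-2)$. I will choose $(n-1)(n-2)$ cycles $\sigma \in C_n$ so that the square submatrix $M^{(n)}_{\bullet \times T_n}$ has determinant $\pm 1$. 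This also matches the matrix $M = M^{(n)}_{\bullet\times T}$ used in Algorithm \ref{alg3}.

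The base case $n=5$ needs $12$ cycles from $C_5$ on the $12$ columns of $T_5$. I would pick them greedily, aiming for a triangular pattern. Order the columns of $T_5$, and for each column $(i,j)$ choose a $5$-cycle that uses the edge $(i,j)$ and otherwise uses only free edges or columns earlier in the order. Any cycle that fails this I would patch by a small unimodular elimination checked by hand.

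For the inductive step from $n-1$ to $n$, I would lift each cycle $\sigma$ on $[n-1]$ chosen for $M^{(n-1)}$ by inserting vertex $n$ into one of its edges. That is, I replace the edge $(a,\sigma(a))$ by the two edges $(a,n)$ and $(n,\sigma(a))$. The insertion point will be chosen so that the deleted edge and at least one of the two new edges are free coordinates of $R$; the natural first choice is $a = \sigma^{-1}(1)$ or $a=1$. Then the lifted row, restricted to the old columns $T_{n-1}\cap T_n$, agrees with the old row up to the single column that changes status.

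The columns that are new in $T_n$ relative to $T_{n-1}$ are:
\begin{itemize}
\item $(i,n)$ for $i\in[2,n-1]$;
\item $(n,j)$ for $j\in[2,n-1]$;
\item $(n,1)$.
\end{itemize}
Meanwhile $(n-1,1)$ leaves $T$, since it becomes free. The net change is $2(n-2)$, so I need $2n-4$ fresh cycles. For these I would take explicit cycles such as $(1\ \cdots\ i\ n\ \cdots)$ and $(1\ \cdots\ n\ j\ \cdots)$, where every other edge is either free or lies in $T_{n-1}$. Each fresh cycle hits exactly one new column, giving a pivot. With rows ordered as lifted rows first and fresh rows second, and columns ordered as old columns first and new columns second, the matrix becomes block triangular. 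Its determinant is then $\pm\det(\text{old block})\cdot\det(\text{new block}) = \pm 1$ by induction. The construction is clearly $n^{O(1)}$ time, since each step writes down $O(n)$ explicit cycles.

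The main obstacle I expect is the bookkeeping at the boundary between the two vertex sets. The column $(n-1,1)$ is a pivot column for $n-1$ but is free for $n$, while $(n,1)$ is newly constrained. Also, lifted rows may acquire a $1$ in a new column $(n,\sigma(a))$, which would break triangularity. To handle this I would first try applying the relabelling symmetry $P^{(\tau)}$ of Proposition \ref{prop-psym}, with $\tau$ the transposition $(n-1\ n)$, to the rows of $M^{(n-1)}$ embedded in $[n]$, so that the old pivot column $(n-1,1)$ becomes $(n,1)$. If residual entries of lifted rows in new columns remain, I would clear them by row operations using the fresh rows. These are elementary operations with integer coefficients, so they do not change the determinant, and triangularity is recovered.
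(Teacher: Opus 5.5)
Your inductive step has a genuine gap. With your column split (old columns $T_{n-1}$, or their image under $P^{(\tau)}$; new columns the rest of $T_n$), neither off-diagonal block can vanish, so the matrix is never block triangular.

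Let $v$ be the inserted vertex: $v=n$, or $v=n-1$ after relabelling by $\tau=(n-1\ n)$.

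\emph{Every row hits a new column.} Every $n$-cycle uses an out-edge and an in-edge of $v$. For $v=n$, the out-edge $(n,j)$ always lies in $T_n\setminus T_{n-1}$, since $(n,1)\in T_n$. For $v=n-1$, both edges avoid new columns only if they are $(n-1,1)$ and $(1,n-1)$ simultaneously, which is impossible. So every row, lifted or fresh, has a $1$ among the new columns. Concretely, inserting $n$ after $1$ leaves the entry $(n,\sigma(1))$. The choice $a=\sigma^{-1}(1)$ is worse: it creates $(a,n)$ and $(n,1)$, neither of which is free.

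\emph{Every row hits an old column once $n\ge 6$.} Among the $n-2$ vertices of $[2,n]\setminus\{v\}$, at most one points to $1$, one to $v$, and one uses $(2,3)$. Hence every $n$-cycle has at least $n-5\ge 1$ entries in old columns, so the fresh rows cannot avoid the old block either. This also refutes ``each fresh cycle hits exactly one new column'': for $v=n$, a cycle aimed at $(i,n)$ also uses some $(n,k)$.

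\emph{Why your fix does not close the step.} Clearing the upper-right block $E$ with the fresh rows preserves the total determinant. But it replaces the old block $A$ by $A-EG^{-1}F$, and presupposes $G$ unimodular. Column operations instead give $\det A\cdot\det(G-FA^{-1}E)$. The hypothesis $\det A=\pm1$ controls neither, so the step does not go through as proposed. The relabelling by $\tau$ fixes the size mismatch caused by $(n-1,1)$ becoming free, but not this.

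\emph{Base case.} Every $5$-cycle has at least two $1$'s in $T_5$ (the out-edge of $5$ plus an out-edge of one of $2,3,4$). So no triangular ordering exists, and your base case is an unspecified computation. The paper instead lists $12$ explicit cycles.

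The missing idea is to stop insisting on $T_n$ during the induction, since the proposition only asks for \emph{some} unimodular maximal minor. Instead, choose the new columns according to the edge that all lifted cycles share. The paper lifts via $\phi_2$ (Claim \ref{claim-Sn-to-Sn+1}), so every lifted cycle contains $1\to 2$. It takes as new columns exactly the edges incompatible with $(1,2)$ in an $n$-cycle: $(1,j)$ and $(i,2)$ for $i,j\ge 3$, together with $(2,1)$. Three things then follow:
\begin{itemize}
\item the lifted rows vanish identically on these columns;
\item the remaining columns other than $(1,2)$ are in bijection with those of $M^{HC_{n-1}}$, so the old block is literally $M^{(n-1)}$;
\item the determinant factors as $\det(N)\det(C')$, whatever the lower-left block is.
\end{itemize}
In your labelling this is the same lift as inserting $n$ after $1$, but with new columns $(1,j)$, $(i,n)$ for $i,j\in[2,n-1]$ and $(n,1)$, minus one of them, rather than $T_n\setminus T_{n-1}$. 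Unimodularity on the specific set $T$ you wanted, which Algorithm \ref{alg3} uses, is obtained afterwards in Lemma \ref{lemma-Mn-det1}. It comes from equality of row spaces with $N$ over every field, plus integrality of $\det(M^{(n)}_{\bullet\times T})$.
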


 \begin{proof}
    This will be a proof by induction on $n$.
\paragraph{Base case.} For $n = 5$, let $R$ be the following set of $5$-cycles
     \[R = \{(1 \ 2 \ 3 \ 4 \ 5), (1 \ 3 \ 2 \ 4 \ 5), (1 \ 4 \ 2 \ 3 \ 5), (1 \ 5 \ 2 \ 3 \ 4), (1 \ 3 \ 4 \ 5 \ 2), (1 \ 2 \ 4 \ 3 \ 5), \]\[(1 \ 2 \ 5 \ 3 \ 4), (1 \ 2 \ 4 \ 5 \ 3),(1 \ 3 \ 4 \ 2 \ 5), (1 \ 2 \ 3 \ 5 \ 4), (1 \ 3 \ 2 \ 5 \ 4), (1 \ 3 \ 5 \ 2 \ 4).
      \} \]
Then, we set $M^{(5)} = M^{HC_5}_{R \times \bullet }$, where each row is indexed by a permutation $\sigma \in R$ and column by $(i,j)$, corresponding to the variable $x_{i,j}$, ordered lexicographically.
     \begin{equation} \label{eqn-base-case-matrix}
          \begin{blockarray}{ccccc|cccc|cccc|cccc|cccc}
        \begin{block}{c(cccc|cccc|cccc|cccc|cccc@{\hspace*{5pt}})}
      (1 \ 2 \ 3 \ 4 \ 5) & 1 & 0 & 0 & 0 & 0 & 1 & 0 & 0 & 0 & 0 & 1 & 0 & 0 & 0 & 0 & 1 & 1 & 0 & 0 & 0\cr
      (1 \ 3 \ 2 \ 4 \ 5) & 0 & 1 & 0 & 0 & 0 & 0 & 1 & 0 & 0 & 1 & 0 & 0 & 0 & 0 & 0 & 1 & 1 & 0 & 0 & 0\cr      
      (1 \ 4 \ 2 \ 3 \ 5) & 0 & 0 & 1 & 0 & 0 & 1 & 0 & 0 & 0 & 0 & 0 & 1 & 0 & 1 & 0 & 0 & 1 & 0 & 0 & 0\cr 
      (1 \ 5 \ 2 \ 3 \ 4) & 0 & 0 & 0 & 1 & 0 & 1 & 0 & 0 & 0 & 0 & 1 & 0 & 1 & 0 & 0 & 0 & 0 & 1 & 0 & 0\cr 
        \cline{2-21}
      (1 \ 3 \ 4 \ 5 \ 2) & 0 & 1 & 0 & 0 & 1 & 0 & 0 & 0 & 0 & 0 & 1 & 0 & 0 & 0 & 0 & 1 & 0 & 1 & 0 & 0\cr 
      (1 \ 2 \ 4 \ 3 \ 5) & 1 & 0 & 0 & 0 & 0 & 0 & 1 & 0 & 0 & 0 & 0 & 1 & 0 & 0 & 1 & 0 & 1 & 0 & 0 & 0\cr 
      (1 \ 2 \ 5 \ 3 \ 4) & 1 & 0 & 0 & 0 & 0 & 0 & 0 & 1 & 0 & 0 & 1 & 0 & 1 & 0 & 0 & 0 & 0 & 0 & 1 & 0\cr 
      (1 \ 2 \ 4 \ 5 \ 3) & 1 & 0 & 0 & 0 & 0 & 0 & 1 & 0 & 1 & 0 & 0 & 0 & 0 & 0 & 0 & 1 & 0 & 0 & 1 & 0\cr 
        \cline{2-21}
      (1 \ 3 \ 4 \ 2 \ 5) & 0 & 1 & 0 & 0 & 0 & 0 & 0 & 1 & 0 & 0 & 1 & 0 & 0 & 1 & 0 & 0 & 1 & 0 & 0 & 0\cr 
      (1 \ 2 \ 3 \ 5 \ 4) & 1 & 0 & 0 & 0 & 0 & 1 & 0 & 0 & 0 & 0 & 0 & 1 & 1 & 0 & 0 & 0 & 0 & 0 & 0 & 1\cr 
      (1 \ 3 \ 2 \ 5 \ 4) & 0 & 1 & 0 & 0 & 0 & 0 & 0 & 1 & 0 & 1 & 0 & 0 & 1 & 0 & 0 & 0 & 0 & 0 & 0 & 1\cr 
      (1 \ 3 \ 5 \ 2 \ 4) & 0 & 1 & 0 & 0 & 0 & 0 & 1 & 0 & 0 & 0 & 0 & 1 & 1 & 0 & 0 & 0 & 0 & 1 & 0 & 0\cr
        \end{block}
        \end{blockarray}  
     \end{equation}
    By Gaussian elimination, it can be verified that $M^{(5)}$ has full row rank (see \eqref{eqn-base-case-matrix-reduced}). Moreover, the pivot entries are $\pm 1$, hence $M^{(5)}$ has a submatrix with determinant $\pm 1$.

\begin{equation} \label{eqn-base-case-matrix-reduced}
          \begin{blockarray}{ccccc|cccc|cccc|cccc|cccc}
        \begin{block}{c(cccc|cccc|cccc|cccc|cccc@{\hspace*{5pt}})}
      (1 \ 2 \ 3 \ 4 \ 5) & 1 & 0 & 0 & 0 & 0 & 0 & 0 & 1 & 0 & 0 & 0 & 1 & 0 & 0 & 2 & -1 & 1 & 0 & -1 & 1\cr
      (1 \ 3 \ 2 \ 4 \ 5) & 0 & 1 & 0 & 0 & 0 & 0 & 0 & 1 & 0 & 0 & 0 & 1 & 0 & 0 & 2 &-1 & 1 & 1 &-2 & 1 \cr      
      (1 \ 4 \ 2 \ 3 \ 5) & 0 & 0 & 1 & 0 & 0 & 0 & 0 & 1 & 0 & 0 & 0 & 1 & 0 & 0 & 2 & -1 & 1 & 1 & -1 & 0\cr 
      (1 \ 5 \ 2 \ 3 \ 4) & 0 & 0 & 0 & 1 & 0 & 0 & 0 & 1 & 0 & 0 & 0 & 1 & 0 & 0 & 3 & -2 & 1 & 1 & -2 & 1\cr 
        \cline{2-21}
      (1 \ 3 \ 4 \ 5 \ 2) & 0 & 0 & 0 & 0 & 1 & 0 & 0 & -1 & 0 & 0 & 0 & 0 & 0 & 0 & -1 & 1 & -1 & 0 & 1 & 0\cr 
      (1 \ 2 \ 4 \ 3 \ 5) & 0 & 0 & 0 & 0 & 0 & 1 & 0 & -1 & 0 & 0 & 0 & 0 & 0 & 0 & -1 & 1 & 0 & 0 & 0 & 0\cr 
      (1 \ 2 \ 5 \ 3 \ 4) & 0 & 0 & 0 & 0 & 0 & 0 & 1 & -1 & 0 & 0 & 0 & 0 & 0 & 0 & -1 & 1 & 0 & 0 & 1 & -1\cr 
      (1 \ 2 \ 4 \ 5 \ 3) & 0 & 0 & 0 & 0 & 0 & 0 & 0 & 0 & 1 & 0 & 0 & -1 & 0 & 0 & -1 & 1 & -1 & 0 & 1 & 0\cr 
        \cline{2-21}
      (1 \ 3 \ 4 \ 2 \ 5) & 0 & 0 & 0 & 0 & 0 & 0 & 0 & 0 & 0 & 1 & 0 & -1 & 0 & 0 & -1 & 1 & 0 & -1 & 1 & 0\cr 
      (1 \ 2 \ 3 \ 5 \ 4) & 0 & 0 & 0 & 0 & 0 & 0 & 0 & 0 & 0 & 0 & 1 & -1 & 0 & 0 & -1 & 1 & 0 & 0 & 1 &-1\cr 
      (1 \ 3 \ 2 \ 5 \ 4) & 0 & 0 & 0 & 0 & 0 & 0 & 0 & 0 & 0 & 0 & 0 & 0 & 1 & 0 & -1 & 0 & -1 & 0 & 1 & 0\cr 
      (1 \ 3 \ 5 \ 2 \ 4) & 0 & 0 & 0 & 0 & 0 & 0 & 0 & 0 & 0 & 0 & 0 & 0 & 0 & 1 & -1 & 0 & 0 & -1 & 1 & 0\cr
        \end{block}
        \end{blockarray}  
     \end{equation}

\paragraph{Inductive hypothesis.} Suppose that we can construct $M^{(n-1)}$ in $O((n-1)^5)$ time, where $n > 5$. By Claim \ref{claim-Sn-to-Sn+1}, which we prove later, we can extend $M^{(n-1)}$ to a submatrix of $M^{HC_n}$. Thus, we treat $M^{(n-1)}$ as a $(n-2)(n-3) \times (n^2-n)$ submatrix of $M^{HC_n}$.  
\begin{claim} \label{claim-Sn-to-Sn+1}
        Let $n \geq 3$ and $R' = \{\sigma \in C_n | \ \sigma(1) = 2\}$ and $T = \{(i,j) | \ i \neq 1, j \neq 2 \text{ and }(i,j) \neq (2,1) \}$. Then $M^{HC_n}_{R' \times T} = M^{HC_{n-1}}$. In particular, $M^{(n-1)}$ can be extended to a submatrix $M'$ of $M^{HC_n}$ in $O(n^4)$ time such that $M'$ has a $(n-2)(n-3) \times (n-2)(n-3)$ submatrix with determinant $\pm 1$.
\end{claim}

Now, let $R_1 = \{ \tau_3, \tau_4, \dots, \tau_{n-1}, \tau_n\} \subseteq C_n$ such that for $i \in [3,n]$, $\tau_i(1) = i$, $\tau_i(2) \neq 1$ and $\tau_i(n) = 2$. Let $R_2 = \{ \sigma_3, \sigma_4, \dots, \sigma_{n-1}, \sigma_n\} \subseteq C_n$ such that for $i \in [3,n-2]$, $\sigma_i(1) = i$, $\sigma_i(i+1) = 2$ and $\sigma_i(2) = 1$, $\sigma_{n-1}(1) = n-1$, $\sigma_{n-1}(2)  = 1$, $\sigma_{n-1}(3) = 2$, and  $\sigma_{n}(1) = n$, $\sigma_{n}(2)  \neq 1$, $\sigma_{n}(3) = 2$. Note that $|R_1| = |R_2| = n-2$ and $n > 5$ ensures $R_1$ and $R_2$ exist. An explicit choice of $R_1$ and $R_2$, which can be constructed in $O(n^2)$ time, is as follows 
\begin{equation} \label{eqn-R1-induct}
\begin{split}
    R_1 = \{(1 \ 3 \ n \ 2 \ n-1 \ n - 2 \ \dots \ 7 \ 6 \ 5 \ 4), \\ (1 \ 4 \ n \ 2 \ n-1 \ n - 2 \ \dots \ 7 \ 6 \ 5 \ 3), \\ (1 \ 5 \ n \ 2 \ n-1 \ n - 2 \ \dots \ 7 \ 6 \ 4 \ 3), \\ (1 \ 6 \ n \ 2 \ n-1 \ n-2  \ \dots \ 7 \ 5 \ 4 \ 3), \\ \vdots \\ (1 \ n-3 \ n \ 2 \ n-1  \ n-2 \ \dots \ 6 \ 5 \ 4 \ 3),  \\ (1 \ n-2 \ n \ 2 \ n-1  \ n-3 \ \dots \ 6 \ 5 \ 4 \ 3), \\ (1 \ n-1 \ n \ 2 \ n-2 \ n - 3 \ \dots \ 6 \ 5 \ 4 \ 3), \\ (1 \ n \ 2 \ n-1 \ n-2 \ n - 3 \ \dots \ 6 \ 5 \ 4 \ 3)\}    
\end{split}
\end{equation}
\begin{equation}\label{eqn-R2-induct}
\begin{split}
    R_2 = \{(1 \ 3 \ n \ n-1 \ n-2  \ \dots \ 7 \ 6 \ 5 \ 4 \ 2), \\ (1 \ 4 \ n \ n-1 \ n-2  \ \dots \ 7 \ 6 \ 3 \ 5 \ 2), \\ (1 \ 5 \ n \ n-1 \ n-2  \ \dots \ 7 \ 4 \ 3 \ 6 \ 2), \\ (1 \ 6 \ n \ n-1 \ n-2  \ \dots \ 5 \ 4 \ 3 \ 7 \ 2), \\ \vdots \\ (1 \ n-3 \ n \ n-1 \ n-4  \ \dots \ 5 \ 4 \ 3 \ n-2 \ 2), \\ (1 \ n-2 \ n \ n-3 \ n-4  \ \dots \ 5 \ 4 \ 3 \ n-1 \ 2), \\ (1 \ n-1  \ n \ n-2 \ n-3  \ \dots \ 6 \ 5 \ 4 \ 3 \ 2), \\(1 \ n \ 3 \ 2 \ n-1 \ n-2  \ \dots \ 7 \ 6 \ 5 \ 4)\}    
\end{split}
\end{equation}
Consider the matrix $M^{(n)} = M^{HC_n}_{\tilde{R} \times \tilde{T}}$, where $\tilde{R} = R \sqcup R_1 \sqcup R_2$ with $R$ as the set of permutations in $C_n$ corresponding to the rows of $M^{(n-1)}$, $R_1$ as in \eqref{eqn-R1-induct} and $R_2$ as in \eqref{eqn-R2-induct}, and $\tilde{T} = T \sqcup T_1  \sqcup T_2$ with $T$ as per Claim \ref{claim-Sn-to-Sn+1}, $T_1 = \{(1,j) \ | \ j \in [3,n]\}$ and $T_2 = \{(2,1)\} \sqcup \{(i,2) \ | \ i \in [3,n] \}$. The matrix $M^{(n)}$ is as: 
    \begin{equation} \label{eqn-induct-step}
    M^{(n)} = \begin{blockarray}{ccc}
        & T & T_1 \sqcup T_2  \\
        \begin{block}{c(c|c@{\hspace*{5pt}})}
        R & M^{(n-1)} & 0_{(n-2)(n-3) \times (2n-3)}  \\
        \cline{2-3}
        R_1 \sqcup R_2 & B & C \\
        \end{block}
        \end{blockarray}  
    \end{equation}
where $C \in \F^{(2n-4) \times (2n-3)}$ is a $0/1$ matrix. Claim \ref{claim-Sn-to-Sn+1} ensures that $M^{(n-1)}$ has only zero entries corresponding to the columns indexed by $T_1 \sqcup T_2$. By the inductive hypothesis, there exists a $(n-2)(n-3) \times (n-2)(n-3)$ submatrix $N$ of $M^{(n-1)}$ such that $det(N) = \pm 1$. Let $T' \subset T$ be the set of $(n-2)(n-3)$ columns, such that $N = M^{(n-1)}_{\bullet \times T'}$. Let $C' = C_{\bullet \times T_1 \sqcup (T_2 \backslash \{(n,2)\})}$. Note that $C'$ is a $(2n-4) \times (2n-4)$ matrix. Since $M^{(n)}$ is block-diagonal, we get that
\[det(M^{(n)}_{\bullet \times T' \sqcup T_1 \sqcup (T_2 \backslash \{(n,2)\})}) = det(N)\cdot det(C').\] 
The matrix $C'$, up to a reordering of rows and columns, is as in \eqref{eqn-induct-step2}, where the columns are ordered lexicographically with respect to the elements of $T_1 \sqcup (T_2 \backslash \{(n,2)\})$.
    \begin{equation} \label{eqn-induct-step2}
    C = \begin{blockarray}{cccccccccccccc}
        & (1,3) &  & \dots  & & & (1,n) & (2,1)& (3,2) & (4,2) &  & \dots &  & (n-1,2)  \\
        \begin{block}{c(cccccc|ccccccc@{\hspace*{5pt}})}
            \tau_{3} & 1 & 0 & \dots & 0 & 0 & 0 & 0 & 0 & 0 & 0 & \dots & 0 & 0 \\ 
            \tau_{4} & 0 & 1 & \dots & 0 & 0 & 0 & 0 & 0 & 0 & 0 & \dots & 0 & 0 \\
            \vdots & \vdots & \vdots & \vdots & \vdots & \vdots & \vdots & \vdots & \vdots & \vdots & \vdots & \vdots & \vdots & \vdots  \\
            \tau_{n-2} & 0 & 0 & \dots & 1 & 0 & 0 & 0 & 0 & 0 & 0 &\dots & 0 & 0 \\
            \tau_{n-1} & 0 & 0 & \dots & 0 & 1 & 0 & 0 & 0 & 0 & 0 &\dots & 0 & 0 \\
            \tau_{n} & 0 & 0 & \dots & 0 & 0 & 1 & 0 & 0 & 0 & 0 &\dots & 0 & 0 \\
        \cline{2-14}
            \sigma_{3} & 1 & 0 & \dots & 0 & 0 & 0 & 1 & 0 & 1 & 0 &\dots & 0 & 0 \\ 
            \sigma_{4} & 0 & 1 & \dots & 0 & 0 & 0 & 1 & 0 & 0 & 1 &\dots & 0 & 0 \\
            \vdots & \vdots & \vdots & \vdots & \vdots & \vdots & \vdots & \vdots & \vdots & \vdots & \vdots & \vdots & \vdots & \vdots  \\
            \sigma_{n-2} & 0 & 0 & \dots & 1 & 0 & 0 & 1 & 0 & 0 & 0 & \dots & 0 & 1 \\
            \sigma_{n-1} & 0 & 0 & \dots & 0 & 1 & 0 & 1 & 1 & 0 & 0 & \dots & 0 & 0 \\
            \sigma_{n} & 0 & 0 & \dots & 0 & 0 & 1 & 0 & 1 & 0 & 0 & \dots & 0 & 0 \\
            \end{block}
        \end{blockarray}  
    \end{equation}
It is then not hard to see that $det(C) = \pm 1$. Since $det(N) = \pm 1$, this completes the inductive step. 

The entire argument can be converted to a recursive algorithm to compute $M^{(n)}$ in $O(n^5)$ time. The matrix $M^{(5)}$ in \eqref{eqn-base-case-matrix} is the matrix for $n = 5$, the base case. By using Claim \ref{claim-Sn-to-Sn+1} we can map an already constructed $M^{(n-1)}$ to an $(n-2)(n-3) \times (n^2-n)$ matrix in $O(n^4)$ time. The rows corresponding to the sets in \eqref{eqn-R1-induct} and \eqref{eqn-R2-induct} can then be added to the resulting matrix in $O(n^4)$ time to get $M^{(n)}$. Since the recursion runs for $O(n)$ many steps and each step needs $O(n^4)$ time, $M^{(n)}$ can be consturcted in $O(n^{5})$ time.
\end{proof}
\begin{proof}[Proof of Claim \ref{claim-Sn-to-Sn+1}] We first show an injective map $\phi_{2}: C_{n-1} \rightarrow C_n$ which maps $\pi \in C_{n-1}$ to a $\sigma \in C_n$ such that $\sigma(1) = 2$. Let $\pi \in C_{n-1}$ where $\pi = (1 \ j_1 \ j_2 \ \dots \ j_{n-2})$ with $j_i = \pi^{i}(1)$, $i \in [n-2]$. Then $\phi_2$ acts as follows: 
\[ \phi_2:  (1 \ j_1 \ j_2 \ \dots \ j_{n-2}) \mapsto (1 \ 2 \ j_1+1 \ j_2+1 \ \dots \ j_{n-2}+1).\]          
Formally, $\sigma = \phi_2(\pi)$ is defined as
\[ \sigma(1) = 2, \ \sigma(i+1) = \pi(i)+1 \text{ and } \sigma(k+1) = 1.\]   
where $i,k \in [n-1]$ and $\pi(k) = 1$. Clearly, $\sigma$ is an $n$-cycle. We show $\phi_2$ is injective.
                
Suppose $\pi_1,\pi_2 \in C_{n-1}$ such that their respective images $\sigma_1$ and $\sigma_2$ under $\phi_2$ satisfy $\sigma_1 = \sigma_2$ . Thus, $\sigma_1(i) = \sigma_2(i)$ for all $i \in [n]$. Let $\pi_1(k) = 1$ with $k \in [2,n-1]$, then from the definition of $\phi_2$, and $\sigma_2 = \sigma_1$, we get that $\sigma_2(k+1) = \sigma_1(k+1) = 1$. This implies $\pi_2(k) = 1$, for otherwise $\sigma_2(k+1) = \pi_2(k)+1 \neq 1$ a contradiction. We also have that for all $i \in [n-1] \backslash \{k\}$, $\sigma_2(i+1) = \sigma_1(i+1) = \pi_1(i) + 1$. Since $\sigma_2(i+1) = \pi_2(i) + 1$, we get that $\pi_1(i) + 1 = \pi_2(i) + 1$ or $\pi_1(i)= \pi_2(i)$ for all $i \in [n-1] \backslash \{k\}$. Thus, $\pi_1 = \pi_2$.
        
Hence, $\phi_{2}$ is injective into $C_n$. Since $\phi_2$ is injective, it is a bijection from $C_{n-1} \to R'$, where $R'$, as in the claim statement, is the subset of $C_n$ comprising $n$-cycles $\sigma$ such that $\sigma(1) = 2$. Thus, the matrix $M^{HC_n}_{R' \times T}$ must be the same as $M^{HC_{n-1}}$. It is easy to see that we can compute $\phi_2(\pi)$ in $O(n)$ time for any $\pi \in C_{n-1}$ if we treat $\pi$ as a list of length $n$. Thus, we can extend $M^{(n-1)}$ to a matrix with $n^2-n$ columns in $O(n^4)$ time by using $\phi_2$ to map each row of $M^{(n-1)}$ to a row of $M^{HC_n}$. Note that the rank of the extended matrix is at least $(n-2)(n-3)$. This is because under the action of $\phi_2$ all the columns $(i,j)$ and $(i,1)$ of $M^{(n-1)}$, where $i, j \in [n-1]$ with $j \neq 1$, map to $(i+1,j+1)$ and $(i+1,1)$ respectively in the extended matrix. Thus, the structure of $M^{(n-1)}$ is preserved in the extended matrix, hence the rank is also at least $(n-2)(n-3)$.
\end{proof}

\subsection{Proof of Corollary \ref{corollary-lie-soln-struct}} \label{proof-lie-soln-struct}
Let $T$ be the set 
\[ T = \{(i,j) | \ i,j \in [2,n], \ i \neq j, (i,j) \neq (2,3)\} \sqcup \{(n,1)\}.\] 
Let $T^c$ denote the complement of $T$. Note that $|T| = (n-1)(n-2)$. 

From the equations in \eqref{eqn-lin-form-liebasis} form the matrix $N \in \F^{(n-1)(n-2) \times (n^2-n)}$ such that $N\vecz = 0$, with each row expressing $z_{i,j}$, where $(i,j) \in T$, in terms of $z_{k, \ell}$, where $(k,\ell) \in T^{c}$. The rows of $N$ are indexed by $z_{i,j}$, while the columns are indexed by all $\vecz$ variables. Let the elements of $T$ and $T^c$ be ordered by lexicographic ordering. After reordering the rows of $N$ in lexicographic order, $N$ can be written as
\begin{equation*}
        \begin{blockarray}{cc}
            T & T^c \\
            \begin{block}{(c|c@{\hspace*{5pt}})}
                I_{(n-1)(n-2)} & M\\
            \end{block}
        \end{blockarray},      
\end{equation*}
where $I_{(n-1)(n-2)}$ is the $(n-1)(n-2) \times (n-1)(n-2)$ identity matrix and row $(i,j)$ of $M$ corresponds to the linear form for $z_{i,j}$ in \eqref{eqn-lin-form-liebasis}. Clearly, $N$ has full row rank, and thus the dimension of the null space of $N$ is $2n-2$. It is easily verifiable that the elements of $\mathcal{B}_n$ described in Proposition \ref{prop-lie-basis} satisfy the equations in \eqref{eqn-lin-form-liebasis}, proving the corollary.

\subsection{Further analysis of $M^{(n)}$} \label{proof-det1-abelgroup}
Lemma \ref{lemma-Mn-det1} shows that a subset $T$ of the columns of $M^{(n)}$ is such that $|T| = (n-1)(n-2)$ and $M^{(n)}_{\bullet \times T}$ has determinant $\pm 1$. Lemma \ref{lemma-abeliangroup-soln}, proved using Lemma \ref{lemma-Mn-det1}, is used to analyse the structure of the scaling symmetries of $HC_n$ over all fields. Lemma  \ref{lemma-abeliangroup-general}, proved using Lemma \ref{lemma-abeliangroup-soln}, shows that solving the system of linear equations represented by $M^{HC_n}$ over an Abelian group reduces to solving the system of linear equations represented by $M^{(n)}$. Lemma \ref{lemma-abeliangroup-general} is used to prove the correctness of Algorithm \ref{alg3} over all fields. 

Note that for Lemmas \ref{lemma-abeliangroup-soln} and \ref{lemma-abeliangroup-general} we work over an Abelian group $G$, where we treat $+$ as the group operation, $0$ as the group identity. For $g \in G$ and $k \in \Z^{+}$, $kg$ means $g$ added to itself $k$ times, $(-k)g$ means $-g$, the inverse of $g$, added to itself $k$ times, and $0g$ gives the group identity $0$.

\begin{lemma} \label{lemma-Mn-det1}
    Let $\F$ be any field, $M^{(n)}$ be the matrix as constructed in the proof of Proposition \ref{prop-lie-basis} (See Proposition \ref{prop-lie-dim-lb}).    Then, $det(M^{(n)}_{\bullet \times T}) = \pm 1$, where $T$ is as
    \[T = \{(k, \ell) \ | \ (k,\ell) \neq (1,j) , \ j \in [2,n] \text{ and } (k,\ell) \neq (i,1) , \ i \in [2,n-1] \text{ and } (k,\ell) \neq (2,3) \}.\] 
\end{lemma}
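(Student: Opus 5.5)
The complement $T^c$ of $T$ consists of the $2n-2$ coordinates $(1,j)$ for $j\in[2,n]$, $(i,1)$ for $i\in[2,n-1]$, and $(2,3)$. These are exactly the free coordinates in Corollary \ref{corollary-lie-soln-struct}. So $|T|=(n-1)(n-2)$, and $M^{(n)}_{\bullet\times T}$ is square. My plan is to show it is unimodular, using two facts. First, over $\Q$, any $\vecz$ in the nullspace of $M^{(n)}$ is determined by $\vecz_{T^c}$. Second, the integrality structure of \eqref{eqn-lin-form-liebasis} should force the determinant to be $\pm1$, not just nonzero.

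\textbf{Step 1: nonsingularity.} I work first over $\Q$ and then reduce mod $p$. By Proposition \ref{prop-lie-dim-lb}, $M^{(n)}$ has full row rank $(n-1)(n-2)$ over every field. Since it is a row basis of $M^{HC_n}$, its nullspace is $\mathfrak{g}_{HC_n}$. By Corollary \ref{corollary-lie-soln-struct}, a vector in $\mathfrak{g}_{HC_n}$ with $\vecz_{T^c}=0$ is identically $0$. Hence the columns of $M^{(n)}_{\bullet\times T}$ are linearly independent, because $M^{(n)}_{\bullet\times T}\vecz_T=0$ forces $\vecz_T=0$. Corollary \ref{corollary-lie-soln-struct} and Proposition \ref{prop-lie-dim-lb} both hold over every field $\F$, including $\F_p$ for every prime $p$. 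So $\det(M^{(n)}_{\bullet\times T})$ is a nonzero integer that is not divisible by any prime $p$. Therefore it equals $\pm1$.

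\textbf{Step 2: the subtle point.} The argument needs the nullspace of $M^{(n)}$ over $\F_p$ to equal $\mathfrak{g}_{HC_n}$ over $\F_p$. This holds because $M^{(n)}$ has full row rank $(n-1)(n-2)$ over $\F_p$, which equals $\mathrm{rank}(M^{HC_n})$ there. The upper bound on that rank comes from Proposition \ref{prop-lie-basis}: $\mathfrak{g}_{HC_n}$ has dimension exactly $2n-2$ over any field. The rows of $M^{(n)}$ are rows of $M^{HC_n}$, so the nullspace of $M^{(n)}$ contains $\mathfrak{g}_{HC_n}$. Both spaces have dimension $2n-2$, so they are equal. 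Corollary \ref{corollary-lie-soln-struct} then applies over $\F_p$ verbatim.

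\textbf{Step 3: alternative if needed.} If a direct certificate is preferred, I would use induction along the block form \eqref{eqn-induct-step}. Restricted to $T$, the column set splits into $T$ from Claim \ref{claim-Sn-to-Sn+1}, shifted, together with $\{(i,2):i\in[3,n]\}$ and $(n,1)$. The resulting matrix is block lower triangular, with the $M^{(n-1)}$ block on its analogous column set and a $(2n-4)\times(2n-4)$ block from the rows $R_1\sqcup R_2$. The determinant of the latter would have to be computed explicitly. Matching the inductive column set to the shifted $T$ is fiddly, so I expect this bookkeeping to be the main obstacle. For that reason I would rely on the field-independent argument of Steps 1 and 2, with base case $n=5$ checkable from \eqref{eqn-base-case-matrix-reduced}.
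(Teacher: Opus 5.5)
Your proof is correct and follows essentially the paper's argument: both use the fact that, over every field, the nullspace of $M^{(n)}$ coincides with the solution space of \eqref{eqn-lin-form-liebasis}. Hence $M^{(n)}_{\bullet\times T}$ is nonsingular modulo every prime, which forces its integer determinant to be $\pm 1$. The paper phrases this through equal row spaces: it takes an invertible $B$ with $BM^{(n)} = N$, which gives $\det(B)\det(M^{(n)}_{\bullet\times T}) = 1$ over any field. You instead argue column independence directly and spell out the reduction modulo $p$, which makes that last step slightly more explicit than in the paper.
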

\begin{proof}
Consider the matrix $N$ as constructed in Appendix \ref{proof-lie-soln-struct} and the matrix $M^{(n)}$ as constructed in Proposition \ref{prop-lie-dim-lb}. Both $M^{(n)}$ and $N$ are $(n-1)(n-2) \times (n^2-n)$ matrices. From the argument in Appendix \ref{proof-lie-soln-struct}, it follows that both matrices also have the \emph{same} nullspace over \emph{any} $\F$. It then follows from a standard Linear Algebra result, see Section 2.5 in \cite{HK2006}, that their row spaces must be the same. Thus, there exists $B \in \GL_{(n-1)(n-2)}(\F)$ such that 
                \[B M^{(n)} = N\]
implying
    \[B M^{(n)}_{\bullet \times T} = N_{\bullet \times T} = I_{(n-1)(n-2)}\]
where the last equality follows from the construction of $N$. Thus,
                \[det(B) det(M^{(n)}_{\bullet \times T}) = 1\]
Now, $M$ being a $0/1$ matrix implies $det(M^{(n)}_{\bullet \times T})$ is an integer. Since the above matrix product above holds over \emph{any} $\F$, we get that $det(M^{(n)}_{\bullet \times T}) = \pm 1$ and $det(B) = \pm 1$. Further, if $M_1 = M^{(n)}_{\bullet \times T}$, then $B = M^{-1}_1$. Thus,
            \[B M^{(n)} = N\]
can be written as
\begin{equation*}
        M_1^{-1} \cdot \begin{blockarray}{cc}
            \begin{block}{(c|c@{\hspace*{5pt}})}
            M_1 & M_2 \\
            \end{block}  \end{blockarray}  =
         \begin{blockarray}{cc}
            \begin{block}{(c|c@{\hspace*{5pt}})}
            I_{(n-1)(n-2)} & M\\
            \end{block}
        \end{blockarray}  
\end{equation*}
In particular, we get that $-M_1^{-1}M_2$ produces the linear combination of $z_{i,j}$'s corresponding to the RHS of the equations in \eqref{eqn-lin-form-liebasis}.

\end{proof}
\begin{lemma} \label{lemma-abeliangroup-soln}
    Let $G$ be an Abelian group (equivalently a $\Z$-module) and $g_{i,j} \in G$, with $i,j \in [n]$ and $i \neq j$, such that $\sum_{i=1}^{n} g_{i,\sigma(i)} = 0$ holds for all $\sigma \in C_n$. Then, the $g_{i,j}$'s satisfy \eqref{eqn-lin-form-liebasis}.
\end{lemma}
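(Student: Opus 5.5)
The plan is to transfer the field-level computation in Lemma \ref{lemma-Mn-det1} to an arbitrary Abelian group $G$. That lemma works because all matrices involved are integer matrices and the key determinant is $\pm 1$.

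Write $\vecg = (g_{i,j})$ as a vector in $G^{n^2-n}$. An integer matrix $K \in \Z^{r \times (n^2-n)}$ acts on $\vecg$ by $(K\vecg)_k = \sum_{(i,j)} K_{k,(i,j)} g_{i,j}$, using the $\Z$-module structure. The hypothesis says exactly that $M^{HC_n}\vecg = 0$ in $G^{(n-1)!}$. In particular, since $M^{(n)}$ is a row-submatrix of $M^{HC_n}$ (Proposition \ref{prop-lie-dim-lb}), we get $M^{(n)}\vecg = 0$. This is the only information about $\vecg$ I intend to use.

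Split the columns as in Lemma \ref{lemma-Mn-det1}: $M^{(n)} = (M_1 \mid M_2)$, where $M_1 = M^{(n)}_{\bullet \times T}$ and $M_2 = M^{(n)}_{\bullet \times T^c}$. Correspondingly split $\vecg = (\vecg_T, \vecg_{T^c})$. Then $M_1 \vecg_T + M_2 \vecg_{T^c} = 0$.

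Since $M_1$ is an integer matrix with $\det M_1 = \pm 1$ (Lemma \ref{lemma-Mn-det1}), its inverse $M_1^{-1} = \pm \operatorname{adj}(M_1)$ is again an integer matrix. Integer matrices act associatively on $G^m$: $(AB)\vecg = A(B\vecg)$, because $G$ is a $\Z$-module. Multiplying on the left by $M_1^{-1}$ therefore gives
\[ \vecg_T = -M_1^{-1}M_2\, \vecg_{T^c} \]
in $G^{(n-1)(n-2)}$.

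It remains to identify the integer matrix $-M_1^{-1}M_2$ with the coefficient matrix of the right-hand sides of \eqref{eqn-lin-form-liebasis}. In the proof of Lemma \ref{lemma-Mn-det1} this identity, $M_1^{-1}(M_1 \mid M_2) = (I \mid M)$ with $N = (I \mid M)$, was established over every field. Both sides are integer matrices, and equality over $\Q$ is equality in $\Z$, so it holds as an identity of integer matrices. Hence $\vecg_T = -M_1^{-1}M_2\,\vecg_{T^c}$ reads coordinatewise exactly as the equations \eqref{eqn-lin-form-liebasis} with $z$ replaced by $g$. For this to make sense, the coefficients in \eqref{eqn-lin-form-liebasis}, such as $n-2$ and $n-3$, must be read as integers acting on $G$, which is consistent with the convention stated before the lemma.

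The main obstacle is a bookkeeping mismatch. The set $T$ in Lemma \ref{lemma-Mn-det1} excludes the $(1,j)$, the $(i,1)$ for $i \le n-1$, and $(2,3)$, and so contains $(n,1)$. This agrees with the set $T$ used in Appendix \ref{proof-lie-soln-struct}, where $(n,1)$ is a dependent variable. I will check that the two index sets coincide, so that the identity $M_1^{-1}M_2 = -M$ really pairs the rows with the equations for $z_{i,j}$, $(i,j)\in T$. I also need to handle $n=3$ separately, with $M^{(3)} = M^{HC_3}$, by a direct check.
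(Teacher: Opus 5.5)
Your proposal is correct and is essentially the paper's proof. Like the paper, you restrict to the rows of $M^{(n)}$, split off the columns in the set $T$ of Lemma \ref{lemma-Mn-det1}, invert the unimodular integer matrix $M_1$ (valid over any $\Z$-module), and identify $-M_1^{-1}M_2$ with the right-hand sides of \eqref{eqn-lin-form-liebasis} via the identity $M_1^{-1}M^{(n)}=N$ from that lemma's proof. The extra bookkeeping you add goes through and is left implicit in the paper: this $T$ does coincide with the one in Appendix \ref{proof-lie-soln-struct}, and the direct $n=3$ check works. Keep one sign convention, though: with $N=(I\mid M)$ you get $M_1^{-1}M_2=M$, and the right-hand sides are $-M\vecg_{T^c}$, not $M_1^{-1}M_2=-M$ as you wrote at the end.
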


\begin{proof}
The system of linear equations in the statement can be written as $M^{HC_n}.\vecg = \veczero$ because the rows of $M^{HC_n}$ correspond to such a system of linear equations. Now,

\[M^{HC_n}\vecg = \veczero \implies M^{(n)}\vecg = \veczero\]
because $M^{(n)}$, as constructed in Proposition \ref{prop-lie-dim-lb}, contains a subset of the rows of $M^{HC_n}$. Note, now the number of equations is $(n-1)(n-2)$. We can write the system $M^{(n)}\vecg = \veczero$ as
    \begin{equation*} 
        \begin{blockarray}{c|c}
            \begin{block}{(c|c@{\hspace*{5pt}})}
            M_1 & M_2 \\
            \end{block}
        \end{blockarray}  
        \cdot
        \begin{blockarray}{(c)}
            \vecg_1 \\
            \vecg_2
        \end{blockarray}  
        = \veczero,              
    \end{equation*}
where $M_1 = M^{(n)}_{\bullet \times T}$, where $T$ is as in Lemma \ref{lemma-Mn-det1}, $M_2 = M^{(n)}_{\bullet \times T^{c}}$, where  $T^c$ is the complement of $T$, $\vecg_1$ are the entries $g_{i,j}$ with $(i,j) \in T$, and $\vecg_2$ are the remaining entries. We then get that 
\[M_1\vecg_1 + M_2\vecg_2 = \veczero\]
Since $M_1$ is a $0/1$ matrix with $det(M_1) = \pm 1$, then $M_1^{-1}$ is also an integer matrix. Thus, multiplying on the left by $M_1^{-1}$ on both sides of the above equation is an invertible operation, which gives
\[\vecg_1 +  M_1^{-1}M_2\vecg_2 = \veczero \implies \vecg_1 = -M_1^{-1}M_2\vecg_2.\]
From the proof of Lemma \ref{lemma-Mn-det1}, it follows that the matrix product $-M_1^{-1}M_2$ produces a linear combination of $\vecg_2$ entries which is exactly the same as the linear forms in the RHS of the equations in \eqref{eqn-lin-form-liebasis}. Thus, we get that the $g_{i,j}$'s satisfy \eqref{eqn-lin-form-liebasis}. 
\end{proof}
\begin{lemma} \label{lemma-abeliangroup-general}
    Let $G$ be an Abelian group such that 
    \begin{equation} \label{eqn-entire-sys}
      \sum_{i=1}^{n} x_{i,\sigma(i)} = h_{\sigma} \ \sigma \in C_n,  
    \end{equation}
    where $h_{\sigma} \in G$, has a solution. Then, any solution to
    \begin{equation} \label{eqn-Mn-sys}
    \sum_{i=1}^{n} x_{i,\sigma(i)} = h_{\sigma} \ \sigma \in R,
    \end{equation}
    where $R$ is the set permutations corresponding to rows of $M^{(n)}$, is also a solution to \eqref{eqn-entire-sys}.    
\end{lemma}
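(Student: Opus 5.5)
The plan is to reduce the inhomogeneous statement to the homogeneous one already handled in Lemma \ref{lemma-abeliangroup-soln}, by subtracting a known solution. By hypothesis, \eqref{eqn-entire-sys} has a solution; fix one and call it $\vecx^{*} = (x^{*}_{i,j})$. Let $\vecy = (y_{i,j})$ be an arbitrary solution of the smaller system \eqref{eqn-Mn-sys}. Since $R \subseteq C_n$, $\vecx^{*}$ also satisfies \eqref{eqn-Mn-sys}. Hence the difference $\vecg := \vecy - \vecx^{*}$, computed in the Abelian group $G$, satisfies the homogeneous system
\[ \sum_{i=1}^{n} g_{i,\sigma(i)} = 0 \quad \text{for all } \sigma \in R, \]
that is, $M^{(n)}\vecg = \veczero$ over $G$. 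The goal is to show $M^{HC_n}\vecg = \veczero$. Then $\vecy = \vecx^{*} + \vecg$ satisfies every equation of \eqref{eqn-entire-sys}, which is exactly the claim.

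To get from $M^{(n)}\vecg = \veczero$ to $M^{HC_n}\vecg = \veczero$, I would reuse the argument in the proof of Lemma \ref{lemma-abeliangroup-soln}, read in the reverse direction.

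First, split the columns into $T$ (as in Lemma \ref{lemma-Mn-det1}) and $T^{c}$, and write $M^{(n)} = (M_1 \mid M_2)$. Because $\det(M_1) = \pm 1$ and $M_1$ is an integer matrix, $M_1^{-1}$ is an integer matrix. Multiplying by $M_1^{-1}$ is therefore a legitimate invertible $\Z$-linear operation on $G^{(n-1)(n-2)}$. This gives $\vecg_1 = -M_1^{-1}M_2\vecg_2$, which is precisely the system \eqref{eqn-lin-form-liebasis} for the entries of $\vecg$.

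Second, I need that every $\vecg$ satisfying \eqref{eqn-lin-form-liebasis} also satisfies $\sum_i g_{i,\sigma(i)} = 0$ for every $\sigma \in C_n$. The cleanest route is to show that each row of $M^{HC_n}$ is an integer combination of the rows of the matrix $N = (I \mid M)$ from Appendix \ref{proof-lie-soln-struct}.

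Over $\Q$, $N$ and $M^{HC_n}$ have the same nullspace (Corollary \ref{corollary-lie-soln-struct}). So for each $\sigma$ the row $M^{HC_n}_{\sigma}$ equals $\vecc_\sigma N$ for some rational vector $\vecc_\sigma$. Restricting to the columns $T$ and using $N_{\bullet\times T} = I$ forces $\vecc_\sigma = (M^{HC_n}_{\sigma})_{T}$, which is a $0/1$ integer vector. Thus $M^{HC_n} = M^{HC_n}_{\bullet\times T}\, N$ holds as an identity of integer matrices.

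Applying this integer identity to $\vecg$ gives $M^{HC_n}\vecg = M^{HC_n}_{\bullet\times T}(N\vecg) = \veczero$ in $G$, because $N\vecg = \veczero$ is just \eqref{eqn-lin-form-liebasis}.

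The main obstacle is justifying integrality, since over a general $\Z$-module we cannot divide. Both uses are covered by the same fact: identities among integer matrices act on $G$-vectors. The inverse of $M_1$ is integral by Lemma \ref{lemma-Mn-det1}. The combination coefficients $\vecc_\sigma$ are integral because they are read off directly from the identity block of $N$. One point needs checking: that $N$ itself has integer entries. This is clear from \eqref{eqn-lin-form-liebasis}, whose coefficients are integers such as $n-2$ and $n-3$.
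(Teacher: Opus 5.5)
Your proof is correct and follows the same route as the paper: subtract a fixed solution of the full system so that the difference satisfies $M^{(n)}\vecg=\veczero$, then pass to $M^{HC_n}\vecg=\veczero$. The paper justifies that last passage only by citing the proof of Lemma \ref{lemma-abeliangroup-soln}, which yields just $M^{(n)}\vecg=\veczero \Rightarrow$ \eqref{eqn-lin-form-liebasis}. Your integer identity $M^{HC_n}=M^{HC_n}_{\bullet\times T}N$ supplies the converse implication over an arbitrary Abelian group, which the paper leaves implicit.
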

\begin{proof}
    Let $\vecu$ be a solution to \eqref{eqn-entire-sys}, which exists by assumption. Then $\vecu$ is also a solution to \eqref{eqn-Mn-sys}, since these equations are a subset of those of \eqref{eqn-entire-sys}. Let $\vecv$ be a solution to \eqref{eqn-Mn-sys}. Then for $\sigma \in R$, we have that,
    \[\sum_{i=1}^{n} (u_{i,\sigma(i)} - v_{i,\sigma(i)}) = \sum_{i=1}^{n} u_{i,\sigma(i)} - \sum_{i=1}^{n} v_{i,\sigma(i)} = h_{\sigma} - h_{\sigma} = 0.\ \]
    where all equalities follow from the fact that $G$ is Abelian. Thus, $\vecu - \vecv$ is a solution to $M^{(n)}.\vecx = \veczero$ which implies $\vecu - \vecv$ is also a solution to $M^{HC_n}.\vecx = \veczero$ by the proof of Lemma \ref{lemma-abeliangroup-soln}. Now, note that
    \[M^{HC_n}.\vecv = M^{HC_n}.(\vecv - \vecu + \vecu) =\]
    \[M^{HC_n}.(\vecv - \vecu) + M^{HC_n}. \vecu = \vech. \]
    Here, $\vech$ is the vector with entries as $h_\sigma$. The equalities follow from $G$ being Abelian, the linearity of matrix operations, and the aforementioned observation about $\vecu - \vecv$. 
\end{proof}

\subsection{Proof of Lemma \ref{prop-dist-eigen}} \label{proof-dist-eigen}
Consider the matrix $M$ as
\[M = \sum_{k=2}^{n}w_k A^{(k)} + \sum_{\ell=2}^{n-1}y_{k}B^{(\ell)} + zC\]
where $w_k$'s, $y_{\ell}$'s and $z$ are formal variables. Assuming $i \neq j$, the entries of $M$ are as 
     \begin{equation*} 
         M_{i,j} = \begin{cases} 
          \sum_{k = 2}^{n} w_k - y_2 + z  & i = 1, j = 2 \\
          \sum_{k = 2}^{n} w_k - y_j  & i = 1, j \in [3,n-1] \\
          \sum_{k = 2}^{n} w_k & i = 1, j = n \\
 
          \sum_{\ell = 2}^{n-1} y_{\ell} - w_2 - z  & i = 2, j = 1 \\

          - w_2 - y_j - z  & i = 2, j \in [3,n-1] \\
          - w_2 - z  & i = 2, j = n \\

          \sum_{\ell = 2}^{n-1} y_{\ell} - w_i  & i \in [3,n], j = 1 \\
          - w_i - y_2 + z  & i \in [3,n], j = 2 \\
          - w_i - y_j  & i \in [3,n], j \in [3,n-1], i \neq j\\
          - w_i  & i \in [3,n-1], j = n \\
       \end{cases}
     \end{equation*}
It can be seen that all the entries of $M$ are distinct linear polynomials, hence the difference of any two entries of $M$ is a non-zero linear polynomial. Let $S \subseteq \F$, with $|S| > \binom{n^2-n}{2}$. Then, by the Polynomial Identity Lemma, for a random choice of the $w_k$'s, $y_{\ell}$'s and $z$ variables, two entries of $M$ are the same with probability at most $\frac{1}{|S|}$. Applying union bound on all $\binom{n^2-n}{2}$ pairs of entries, we get that there exists a pair of equal entries in $M$ with probability at most $\frac{\binom{n^2-n}{2}}{|S|} < 1$. Thus, with probability at least $1 - \frac{\binom{n^2-n}{2}}{|S|} > 0$, $M$ has distinct diagonal entries. Hence, there exists an $M \in \mathfrak{g}_{HC_n}$ with distinct eigenvalues.

\subsection{Proof of Proposition \ref{prop: symmetries-PS}} \label{proof-symmetries-PS}
    Let $A \in \mathcal{G}_{HC_n}$, thus $HC_n(A\vecx) = HC_n(\vecx)$ implying $\mathfrak{g}_{HC_n(A\vecx)} = \mathfrak{g}_{HC_n}$. By the conjugacy of Lie Algebras of equivalent polynomials, Lemma \ref{lemma-lieconjug}, we also get that for every $B \in \mathfrak{g}_{HC_n(\vecx)} $, there is a $C \in \mathfrak{g}_{HC_n}$ such that $B = A^{-1}CA$. Thus, $AB = CA$, with both $B$ and $C$ being diagonal matrices. Choose $B$ to be a matrix where all the entries are distinct, which exists by Lemma \ref{prop-dist-eigen}. Consider the $(i,j)$'th row of $A$. As $A$ is invertible, $A_{(i,j),(k,\ell)} \neq 0$ for some $k,\ell \in [n]$, $k \neq \ell$. Now, $AB = CA$ implies $(AB)_{((i,j),(k,\ell))} = (CA)_{((i,j),(k,\ell))}$. Thus,
    \[A_{(i,j),(k,\ell)}B_{(k,\ell),(k,\ell)} = A_{(i,j),(k,\ell)}C_{(i,j),(i,j)}.\] 
    As $A_{(i,j),(k,\ell)} \neq 0$ we get $B_{(k,\ell),(k,\ell)} = C_{(i,j),(i,j)}$. Suppose $A_{(i,j),(k_1,\ell_1)} \neq 0$ for some $(k_1, \ell_1) \neq (k,\ell)$. Then, we also get that 
    \[A_{(i,j),(k_1,\ell_1)}B_{(k_1,\ell_1),(k_1,\ell_1)} = A_{(i,j),(k_1,\ell_1)}C_{(i,j),(i,j)},\] 
    implying 
    \[B_{(k,\ell_1),(k,\ell_1)} = C_{(i,j),(i,j)} = B_{(k,\ell),(k,\ell)},\]
    a contradiction as all entries of $B$ are distinct. Thus, every row of $A$ has exactly one non-zero entry. Since $A$ is invertible, it must be that $A = PS$ for some permutation $P$ and scaling $S$.
\subsection{Proof of Proposition \ref{prop-scalesymcont}} \label{proof-scalesymcont}
Let $S$ be a scaling symmetry of $HC_n$. Then, we have that 
\[\prod_{i=1}^{n} S_{i,\sigma(i)} = 1 \quad \forall \sigma \in C_n\]
This is a system of linear equations over the Abelian Group $\F^{\times}$, where we treat group multiplication as addition and $1$ (multiplicative identity of $\F^{\times}$) as $0$. Thus, we get that the above system of equations corresponds to
\[M^{HC_n}\vecs = \veczero,\]
where the entries of $\vecs$ are the entries of $S$. By Lemma \ref{lemma-abeliangroup-soln}, we get that the entries of $S$ must satisfy the equations in \eqref{eqn-scaling-sym-lie}, which are the multiplicative version of \eqref{eqn-lin-form-liebasis}. Hence, $S$ is a continuous scaling symmetry.

\subsection{Proof of Proposition \ref{prop-psym}} \label{proof-psym}
Let $\pi \in C_n$. We can write $\pi$ as $(1 \ m_1 \ m_2 \dots \ m_n)$. The monomial corresponding to $\pi$ in $HC_n$ is $\prod_{i \in [n]} x_{i,\pi(i)} = x_{1,m_1} x_{m_1,m_2} \dots x_{m_{n-1},m_n} x_{m_n,1}$. Under the action of $P^{(\sigma)}$, we have that
\[\prod_{i \in [n]} x_{i,\pi(i)} \mapsto \prod_{i \in [n]} x_{\sigma(i),\sigma(\pi(i))}.\]
Note
\[\prod_{i \in [n]} x_{\sigma(i),\sigma(\pi(i))} = x_{\sigma(1),\sigma(m_1)} x_{\sigma(m_1),\sigma(m_2)} \dots x_{\sigma(m_{n-1}),\sigma(m_n)} x_{\sigma(m_n),\sigma(1)}.\]
As $\sigma \in S_n$, the cycle $\pi$ maps uniquely to $\tau = (\sigma(1) \ \sigma(m_1) \ \sigma(m_2) \dots \ \sigma(m_n))$. Hence $P^{(\sigma)}$ is a permutation symmetry of $HC_n$. 
Now we consider $P^{(T)}$. Under the action of $P^{(T)}$, we have
\[\prod_{i \in [n]} x_{i,\pi(i)} \mapsto \prod_{i \in [n]} x_{\pi(i),i}.\]
Thus, the cycle $\pi = (1 \ m_1 \ m_2 \dots \ m_n)$ maps, under $P^{(T)}$, to the $n$-cycle $ (1 \ m_{n} \ m_{n-1} \dots m_2 \ m_1)$, which is $\pi^{-1}$. Since the inverse of a permutation is unique, $P^{(T)}$ maps each monomial to a unique monomial and hence is also a permutation symmetry of $HC_n$. To see that $P^{(\sigma)}$ and $P^{(T)}$ commute, note
\[(P^{(\sigma)}P^{(T)})_{(i,j),(k,\ell)} = \sum_{(a,b), a \neq b} P^{(\sigma)}_{(i,j),(a,b)}P^{(T)}_{(a,b),(k,\ell)}.\]
From the definitions of $P^{(\sigma)}$ and $P^{(T)}$, we get that $P^{(\sigma)}_{(i,j),(a,b)}P^{(T)}_{(a,b),(k,\ell)} = 1$ if and only if $(a,b) = (\sigma(i),\sigma(j))$ and $(a,b) = (\ell,k)$. Equivalently, $P^{(\sigma)}_{(i,j),(a,b)}P^{(T)}_{(a,b),(k,\ell)} = 1$ if and only if $k = \sigma(j)$ and $ \ell =\sigma(i)$. Thus, $(P^{(\sigma)}P^{(T)})_{(i,j),(k,\ell)} = 1$ if $k = \sigma(j)$ and $\ell =\sigma(i)$, otherwise it is zero. Similarly, $(P^{(T)}P^{(\sigma)})_{(i,j),(k,\ell)} = 1$ if $k = \sigma(j)$ and $\ell = \sigma(i)$, otherwise it is zero. Hence, we get $P^{(\sigma)}P^{(T)} = P^{(T)}P^{(\sigma)}$.

\subsection{Proof of Proposition \ref{prop-Psym-gens}} \label{proof-Psym-gens}
    Let $P$ be a permutation symmetry, thus $HC_n(P\vecx) = HC_n(\vecx)$. Suppose $P(x_{1,2}) = x_{i,j}$. By Observation \ref{obs-HCn-pdzero}, $R_{i,j}$ is as
      \[R_{i,j} := Q_{i,j} \sqcup T_{i,j} \sqcup\{x_{j,i}\}.\]
From Observation \ref{obs-HCn-pdzero} and $P$ being a symmetry, we get that $P$ induces a bijection $\Phi$ from $R_{1,2}$ to $R_{i,j}$ such that,
\begin{equation} \label{Psym-case1}
  \Phi: Q_{1,2} \mapsto Q_{i,j} , \ T_{1,2} \mapsto T_{i,j}, \ \{x_{2,1}\} \mapsto \{x_{j,i}\}  
\end{equation}
or
\begin{equation} \label{Psym-case2}
\Phi: Q_{1,2} \mapsto T_{i,j} , \ T_{1,2} \mapsto Q_{i,j}, \ \{x_{2,1}\} \mapsto \{x_{j,i}\}
\end{equation}
We get that $P(x_{2,1}) = x_{j,i}$. Suppose \eqref{Psym-case1} holds. Since $T_{1,2} \mapsto T_{i,j}$ under $\Phi$, there exists a bijection $\pi: [3,n]
\rightarrow [n] \backslash \{i,j\}$, such that $\Phi(x_{1,t}) = x_{i,\pi(t)}$  with $t \in [3,n]$. Thus, $P(x_{1,t}) = x_{i,\pi(t)}$. Hence, $P$ also induces a bijection similar to $\Phi$ from $R_{1,t}$ to $R_{i,\pi(t)}$. Since $\{x_{t,1}\} \subset R_{1,t}$, we get that $P(x_{t,1}) = x_{\pi(t),i}$. Thus, $R_{t,1}$ maps bijectively to $R_{\pi(t),i}$ under $P$ as well. Extend $\pi$ to a permutation on $[n]$ by defining $\pi(1) = i$ and $\pi(2) = j$. Now, for any $x_{i_1,j_1}$, with $i_1,j_1 \in [2,n]$ and $i_1 \neq j_1$, it is easy to see that $x_{i_1,j_1} \in T_{i_1,1}$ and $x_{i_1,j_1} \in Q_{1,j_1}$. Thus, $P(x_{i_1,j_1}) \in T_{\pi(i_1),i}$ and $P(x_{i_1,j_1}) \in Q_{i,\pi(j_1)}$ which implies $P(x_{i_1,j_1}) = x_{\pi(i_1),\pi(j_1)}$. Hence, for all $i,j \in [n]$, $i \neq j$, we have that $P(x_{i,j}) = x_{\pi(i),\pi(j)} = P^{(\pi)}(x_{i,j})$. Thus, $P = P^{(\pi)}$. 

Now, suppose \eqref{Psym-case2} holds. Since $T_{1,2} \mapsto Q_{i,j}$ under $\Phi$, there is a bijection $\pi : [3,n] \rightarrow [n] \backslash \{i,j\}$, such that $\Phi(x_{1,t}) = x_{\pi(t),j}$. An argument similar as above shows that $P(x_{1,t}) = x_{\pi(t),j}$ for all $t \in [3,n]$ and further $P(x_{t,1}) = x_{j,\pi(t)}$. Extend $\pi$ to a permutation on $[n]$ by defining $\pi(1) = j$, $\pi(2) = i$. Like before, for any $x_{i_1,j_1}$, with $i_1,j_1 \in [2,n]$ and $i_1 \neq j_1$, $x_{i_1,j_1} \in T_{i_1,1}$ and $x_{i_1,j_1} \in Q_{1,j_1}$. Thus, $P(x_{i_1,j_1}) \in Q_{j,\pi(i_1)}$ and $P(x_{i_1,j_1}) \in T_{i,\pi(j_1)}$ which implies $P(x_{i_1,j_1}) = x_{\pi(j_1),\pi(i_1)}$. Hence, for all $i,j \in [n]$, $i \neq j$, we have that $P(x_{i,j}) = x_{\pi(j),\pi(i)} = P^{(T)}P^{(\pi)}(x_{i,j})$. Thus, $P = P^{(T)}P^{(\pi)}$. 

\subsection{Proof of Observation \ref{obs-HCn-pdzero}} \label{proof-HCn-pdzero}
The reverse implication easily follows from the fact that the monomials of $HC_n$ correspond to $n$-cycles. We prove the forward direction. So, suppose $x_{k, \ell} \in R_{i,j}$, that is, $\frac{\partial^2 HC_n}{\partial x_{i,j}\partial x_{k,\ell}} = 0$. If $i = k$ or $j = \ell$, then we are done by the reverse direction. Hence, assume $i \neq k$ and $j \neq \ell$. Note we also have that $i \neq j$ and $k \neq \ell$. Then, we have the following possibilities:
\begin{enumerate}
    \item $i \neq \ell$ and $k \neq j$: For $n = 3$, this case is not possible as $i,j,k,\ell$ are all distinct. For $n \geq 4$, it is easy to see that there exists $\pi \in C_n$, such that $\pi(i) = j$, $\pi(k) = \ell$. Hence, 
    \[\frac{\partial^2 HC_n}{\partial x_{i,j}\partial x_{k,\ell}} = \sum_{\substack{\pi \in C_n, \\  \pi(i) = j, \\ \pi(k) = \ell} } \prod_{i_1 \in [n] \backslash \{i,k\}}x_{i_1,\pi(i_1) } \neq 0,\]
    a contradiction.
    
    \item $i = \ell$ and $k \neq j$:  There exists $\pi \in C_n$, such that $\pi(i) = j$, $\pi(k) = i = \ell$. Hence, 
        \[\frac{\partial^2 HC_n}{\partial x_{i,j} \partial x_{k,\ell}} = \sum_{\substack{\pi \in C_n, \\  \pi(i) = j, \\ \pi(k) = i} } \prod_{i_1 \in [n] \backslash \{i,k\}}x_{i_1,\pi(i_1) } \neq 0,\]
    a contradiction. Note that the case $i \neq \ell$ and $k = j$ gives a contradiction in the same way.  

    \item $i = \ell$ and $k = j$: This is the assumption in the reverse direction.
\end{enumerate}

Hence, only the last case is possible, proving the forward direction. The partition of $R_{i,j}$ as described in the Observation statement follows easily by using the conditions proved for the vanishing of the second-order derivatives of $HC_n$.

\subsection{Proof of Proposition \ref{prop-nonchar}} \label{proof-nonchar}
    Let $\pi \in S_n \backslash C_n$ such that $\pi(i) \neq i$ for all $i \in [n]$. For $n \geq 5$, it can be seen that such $\pi$'s exist, a concrete example is $\pi = (1 \ 2)(3 \ 4 \ \dots \ n)$. Let $m_{\pi} := \prod_{i \in [n]} x_{i,\pi(i)}$. 
    Consider the polynomial 
    \[g(\vecx) = \sum_{P \in \mathcal{G}_{HC_n}}m_{\pi}(P\vecx),\]
    where $P$ goes over all permutation symmetries of $HC_n$. Clearly, then $g(P\vecx) = g(\vecx)$ for all permutation matrices $P \in \mathcal{G}_{HC_n}$. Now, let $S \in \mathcal{G}_{HC_n}$. By Proposition \ref{prop-scalesymcont}, we have that the entries of $S$ are as described in \eqref{eqn-scaling-sym-lie}. Any monomial in $g(\vecx)$ corresponds to some $\sigma \in S_n \backslash C_n$ such that $\sigma(i) \neq i$ for all $i \in [n]$. This is because the permutation symmetries of $HC_n$ preserve the cyclic decomposition of any permutation, which can be observed from the proof of Proposition \ref{prop-psym} (see Section \ref{proof-psym}). We now show for any $\sigma \in S_n \backslash C_n$, where $\sigma(i) \neq i$ for all $i \in [n]$, that $m_{\sigma}(S\vecx) = m_{\sigma}(\vecx)$ by showing that $\prod_{i=1}^{n} S_{i,\sigma(i)} = 1$ for such $\sigma$'s. Thus, any monomial of $g$ is preserved under $S$, implying $S \in \mathcal{G}_g$. Thus, $\mathcal{G}_{HC_n} \subseteq \mathcal{G}_g$ but $g \neq c \cdot HC_n$ for any $c \in \F^{\times}$.  
    
    If $\sigma(n) = 1$, then
    \begin{equation*}
        \begin{split}
      \prod_{i=1}^{n} S_{i,\sigma(i)} &= S_{1,\sigma(1)} S_{n,1} \prod_{i=2}^{n-1} S_{i,\sigma(i)} = S_{1,\sigma(1)} S_{n,1} \prod_{i=2}^{n-1} \left(S_{i,1}S_{1,\sigma(i)} \cdot \frac{S_{2,3}}{S_{1,3}S_{2,1}}\right)\\
      &=
      \left(\frac{S_{2,3}}{S_{1,3}S_{2,1}}\right)^{n-2} S_{1,\sigma(1)} S_{n,1} \prod_{i=2}^{n-1} S_{i,1}S_{1,\sigma(i)} \\
      &= \left(\frac{S_{2,3}}{S_{1,3}S_{2,1}}\right)^{n-2} S_{1,\sigma(1)}  \left(\frac{S_{1,3}S_{2,1}}{S_{2,3}}\right)^{n-2} \left(\prod_{i=2}^{n} S_{1,i} \prod_{i=2}^{n-1} S_{i,1}\right)^{-1} \prod_{i=2}^{n-1} S_{i,1}S_{1,\sigma(i)}\\
      &= \frac{\prod_{i=1}^{n-1}S_{1,\sigma(i)} \cdot \prod_{i=2}^{n-1}S_{i,1} }{\prod_{i=2}^{n}S_{1,i} \cdot \prod_{i=2}^{n-1}S_{i,1}  } = 1 \\
        \end{split}.
    \end{equation*}
The last equality holds because $\sigma(n) = 1$ implies $\sigma$ is a bijection from $[n-1]$ to $ [2,n]$. 

If $\sigma(n) = k$, where $k \in [2,n-1]$, then $\sigma(j) = 1$ for some $j \in [2,n-1]$. We then get that
\begin{equation*}
    \begin{split}
    \prod_{i=1}^{n} S_{i,\sigma(i)} &= S_{1,\sigma(1)} S_{n,k} \prod_{i=2}^{n-1} S_{i,\sigma(i)} = S_{1,\sigma(1)} S_{n,k} S_{j,1} \prod_{i \neq 1,j,n} \left(S_{i,1}S_{1,\sigma(i)} \frac{S_{2,3}}{S_{1,3}S_{2,1}}\right) \\ &=
 S_{1,\sigma(1)} S_{n,1} S_{1,k} \frac{S_{2,3}}{S_{1,3}S_{2,1}} S_{j,1} \prod_{i \neq 1,j,n} \left(S_{i,1}S_{1,\sigma(i)} \frac{S_{2,3}}{S_{1,3}S_{2,1}}\right)\\ 
 &= S_{1,\sigma(1)} \left(\frac{S_{1,3}S_{2,1}}{S_{2,3}}\right)^{n-2} \left(\prod_{i=2}^{n} S_{1,i} \prod_{i=2}^{n-1} S_{i,1}\right)^{-1}  \left(\frac{S_{2,3}}{S_{1,3}S_{2,1}}\right)^{n-2} S_{1,k} S_{j,1} \prod_{i \neq 1,j,n} S_{i,1}S_{1,\sigma(i)} \\ 
 &= \frac{(S_{1,\sigma(1)}S_{1,k}\prod_{i \neq 1,j,n}S_{1,\sigma(i)})\cdot (S_{j,1} \prod_{i \neq j} S_{i,1})}{\prod_{i=2}^{n}S_{1,i} \cdot \prod_{i = 2}^{n-1}S_{i,1}} = 1.\\
    \end{split}
\end{equation*}
The last equality holds because $\sigma(j) = 1$ implies $\sigma$ is a bijection from $[n] \backslash \{j\}$ to $[2,n]$. 

\subsection{Scaling symmetries of $Perm_n$ and $HC_n$} \label{sec-scalesymrelate-hcperm}
It is easy to observe that any scaling symmetry of $Perm_n$ also gives a scaling symmetry for $HC_n$ because the monomials of $HC_n$ are a subset of those of $Perm_n$. We record this as an observation.
\begin{observation}
If $S \in \mathcal{G}_{Perm_n}$ is a scaling symmetry of $\text{Perm}_n$, then a matrix $S' \in \text{GL}_{n^2-n}(\F)$ can be constructed from $S$ such that $S'$ is a scaling symmetry of $HC_n$.
\end{observation}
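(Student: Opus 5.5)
The plan is to build $S'$ directly from $S$ by restricting to the off-diagonal variables. View a scaling symmetry $S \in \mathcal{G}_{Perm_n}$ as a diagonal $n^2 \times n^2$ matrix with diagonal entries $S_{i,j}$, $i,j \in [n]$. Define $S' \in \GL_{n^2-n}(\F)$ as the diagonal matrix whose entry at $(i,j)$, for $i \neq j$, equals $S_{i,j}$. We discard the diagonal entries $S_{i,i}$, since $HC_n$ does not depend on $x_{i,i}$. Because $S$ is invertible, every $S_{i,j}$ is nonzero, so $S'$ is invertible. The construction is clearly efficient.

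Next we check that $S'$ is a symmetry. Since $S$ is diagonal, $Perm_n(S X_n) = \sum_{\sigma \in S_n} \big(\prod_i S_{i,\sigma(i)}\big) \prod_i x_{i,\sigma(i)}$. Distinct permutations give distinct monomials, so comparing coefficients with $Perm_n(X_n)$ shows that $S \in \mathcal{G}_{Perm_n}$ is equivalent to
\[\prod_{i=1}^n S_{i,\sigma(i)} = 1 \quad \text{for all } \sigma \in S_n.\]
By the same computation, $HC_n(S'\vecx) = \sum_{\sigma \in C_n} \big(\prod_i S'_{i,\sigma(i)}\big) m_\sigma$. For $\sigma \in C_n$ with $n \geq 2$, we have $\sigma(i) \neq i$ for every $i$. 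Hence each factor $S'_{i,\sigma(i)}$ is well defined and equals $S_{i,\sigma(i)}$. Since $C_n \subseteq S_n$, each such product equals $1$ by the displayed condition. Therefore $HC_n(S'\vecx) = HC_n(\vecx)$, and $S'$ is a scaling symmetry of $HC_n$.

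There is no real obstacle here. The one point needing care is that the diagonal entries $S_{i,i}$ never appear in any cyclic monomial, which is exactly why the restriction to the $n^2-n$ off-diagonal coordinates is legitimate.
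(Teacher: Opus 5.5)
Your proposal is correct and follows the same route as the paper, which justifies the observation in one line: the monomials of $HC_n$ are a subset of those of $Perm_n$. You make that explicit by restricting $S$ to the off-diagonal coordinates and using $\prod_i S_{i,\sigma(i)} = 1$ for every $\sigma \in C_n \subseteq S_n$, noting that $n$-cycles have no fixed points so no $S_{i,i}$ is ever needed.
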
 
We now show the converse, that is, any scaling symmetry $S \in \mathcal{G}_{HC_n}$ can be used to derive a scaling symmetry $S'$ of $Perm_n$, where $n \neq 4$. In Appendix \ref{subsec-charsym-HC4}, we show for $HC_4$ an $S$ over appropriate fields such that $S \in \mathcal{G}_{HC_4}$ but $S$ cannot be extended to a scaling symmetry of the Permanent.

\begin{proposition} \label{prop-HCntoPermn-scalsym}
    Let $S \in \mathcal{G}_{HC_n}$ be a scaling symmetry. Consider the diagonal matrix $S' \in \F^{n^2 \times n^2}$ defined as:
    \begin{equation} \label{eqn-HCNtoPermn-scalsym}
    \begin{split}
        S'_{1,j} &=  S_{1,j} \quad j \in [2,n], \quad S'_{i,1} =  S_{i,1} \quad i \in [2,n-1], \\ S'_{1,1} &= \frac{S_{1,3}S_{2,1}}{S_{2,3}}, \quad S'_{n,1} = \frac{S_{1,1}^{'n-2}}{\prod_{i=2}^{n}S'_{1,i} \cdot \prod_{i=2}^{n-1}S'_{i,1}} \\
        S'_{i,j} &= \frac{S'_{1,j}S'_{i,1}}{S'_{1,1}} \quad i,j \in [2,n]. 
    \end{split}
    \end{equation}
    Then $S' \in \mathcal{G}_{Perm_n}$.
\end{proposition}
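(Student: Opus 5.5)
The plan is to show that $S'$ has rank-one (outer product) structure: $S'_{i,j} = r_i c_j$ for every $i,j \in [n]$. Then for every $\sigma \in S_n$,
\[
\prod_{i=1}^{n} S'_{i,\sigma(i)} = \prod_{i=1}^n r_i \prod_{j=1}^n c_j
\]
is independent of $\sigma$. It then only remains to check that this single constant equals $1$. Since a scaling matrix fixes $Perm_n$ if and only if $\prod_i S'_{i,\sigma(i)} = 1$ for all $\sigma \in S_n$, this gives $S' \in \mathcal{G}_{Perm_n}$.

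\textbf{Step 1: the factorisation.} Set $r_1 := 1$, $r_i := S'_{i,1}/S'_{1,1}$ for $i \in [2,n]$, and $c_j := S'_{1,j}$ for $j \in [n]$. These are well defined because all entries of $S$, and hence of $S'$, are nonzero. I will verify $S'_{i,j} = r_i c_j$ in three cases.
\begin{itemize}
\item For $i=1$: $r_1 c_j = S'_{1,j}$ trivially.
\item For $j=1$: $r_i c_1 = (S'_{i,1}/S'_{1,1}) S'_{1,1} = S'_{i,1}$. This covers $i = n$ as well, whatever value $S'_{n,1}$ was assigned by \eqref{eqn-HCNtoPermn-scalsym}.
\item For $i,j \in [2,n]$, including $i=j$: the last line of \eqref{eqn-HCNtoPermn-scalsym} gives exactly $S'_{i,j} = S'_{i,1}S'_{1,j}/S'_{1,1} = r_i c_j$.
\end{itemize}

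\textbf{Step 2: the constant is $1$.} Compute
\[
\prod_{i=1}^n r_i \prod_{j=1}^n c_j = \frac{S'_{n,1}\prod_{i=2}^{n-1} S'_{i,1}}{S'^{\,n-1}_{1,1}} \cdot S'_{1,1}\prod_{j=2}^{n} S'_{1,j}.
\]
Substituting the defining formula for $S'_{n,1}$, the products $\prod_{i=2}^{n-1}S'_{i,1}$ and $\prod_{j=2}^n S'_{1,j}$ cancel. What remains is $S'^{\,n-2}_{1,1}/S'^{\,n-2}_{1,1} = 1$.

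\textbf{Step 3: consistency with $S$.} To justify the phrase ``derived from $S$'', I would also check that the off-diagonal entries of $S'$ agree with those of $S$, so that $S'$ genuinely extends $S$. By Proposition \ref{prop-scalesymcont} (this is where $n \neq 4$ is used), $S$ satisfies \eqref{eqn-scaling-sym-lie}. With $S'_{1,1} = S_{1,3}S_{2,1}/S_{2,3}$, the formula $S'_{i,j} = S'_{i,1}S'_{1,j}/S'_{1,1}$ coincides with $S_{i,j} = S_{i,1}S_{1,j}S_{2,3}/(S_{1,3}S_{2,1})$. The formula for $S'_{n,1}$ coincides with that for $S_{n,1}$ in \eqref{eqn-scaling-sym-lie}. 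The relations for $S_{n,j}$ then also match.

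I expect no real obstacle here. The only point requiring care is bookkeeping: the row index $n$ is treated separately in \eqref{eqn-HCNtoPermn-scalsym}, so I must confirm that the case $i=n$, $j \ge 2$ is still covered by the generic last line of \eqref{eqn-HCNtoPermn-scalsym}, and that the exponent $n-2$ cancels exactly in Step 2.
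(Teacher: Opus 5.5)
Your proof is correct and is essentially the paper's argument: the paper also uses the factorisation $S'_{i,j} = S'_{i,1}S'_{1,j}/S'_{1,1}$ to write $\prod_{i} S'_{i,\pi(i)} = \prod_{i} S'_{i,1}S'_{1,\pi(i)}/S'^{\,n}_{1,1}$, which is independent of $\pi$, and then uses the definition of $S'_{n,1}$ to show this common value is $1$. Your Step 3 is correct but not needed for the conclusion, since neither argument uses the hypothesis that $S$ is a symmetry of $HC_n$ to establish $S' \in \mathcal{G}_{Perm_n}$.
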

\begin{proof}
    Let $\pi \in S_n$. Then there exists $j \in [n]$ such that $\pi(j) = 1$. Now,
    \[\prod_{i=1}^{n} S'_{i,\pi(i)} = \frac{\prod_{i=1}^{n}S'_{i,1}S'_{1,\pi(i)}}{S'^{n}_{1,1}} = \frac{\prod_{i=1}^{n}S'_{i,1}\prod_{i=1}^{n}S'_{1,\pi(i)}}{S'^{n}_{1,1}} \]
    \[ = \frac{S_{1,1}^{'2}S'_{n,1}\prod_{i=2}^{n-1}S'_{i,1}\prod_{i=1, i \neq j}^{n}S'_{1,\pi(i)}}{S^{'n}_{1,1}} = 1\]
    by using the definition of $S'_{n,1}$, proving the statement.
\end{proof}

\subsection{ Proof of Lemma \ref{lemma-HCn-monotone}} \label{proof-HCn-mono}
    Let $X_n$ denote the matrix of $n \times n$ many variables. We show first how any $\pi \in C_n$ can be extended to a $\sigma \in C_m$, where $m > n$. Using this extension, we define a $m \times m$ matrix $Y^{(m,n)}$ such that $HC_m(Y^{(m,n)}) = HC_n(\vecx)$. Let $j \in [n] \backslash \{1\}$ such that $\pi(j) = 1$. The mapping $\Phi: C_n \to C_m$ maps $\pi$ to $\sigma$ as 
    \[\sigma(i) = \pi(i) \quad i \in [n] \backslash \{j\}, \quad \sigma(j) = n+1, \] 
    \[\sigma(k) = k+1 \quad k \in [n+1,m-1], \text{  and } \sigma(m) = 1.\] 
    It is not hard to see that $\Phi$ is an injective map from $C_n$ to $C_m$. The entries of $Y^{(m,n)}$ are defined as follows
     \begin{equation*}
         Y^{(m,n)}_{i,j} = \begin{cases} 
        x_{i,j} & i \neq j, \ i \in [1,n], \ j \in [2,n] \\ 
        x_{i,1} & i \in [2,n], \ j = n+1 \\
        1 & i \in [n+1,m-1], j = i + 1 \text{ or } i = m, j = 1 \\ 
        0 & \text{ otherwise},\\
       \end{cases}
     \end{equation*}
    Then, $Y^{(m,n)}$ is as follows
        \begin{equation} \label{eqn-HCn-proj}
    \begin{blockarray}{cccccccccccccc}
        & 1 & 2 & 3 & \dots  & n-1 & n & n+1 & n+2 & n+3 & \dots & m-1 & m \\
        \begin{block}{c(cccccc|ccccccc@{\hspace*{2pt}})}
          1 & 0 & x_{1,2} & x_{1,3} & \dots  & x_{1,n-1} & x_{1,n} & 0 & 0 & 0 & \dots & 0 & 0  \\
          2 & 0 & 0 & x_{2,3} & \dots  & x_{2,n-1} & x_{2,n} & x_{2,1} &  0 & 0 &\dots & 0 & 0 \\
          3 & 0 & x_{3,2} & 0 & \dots  & x_{1,n-1} & x_{3,n} & x_{3,1} & 0 & 0 &\dots & 0 & 0 \\
           \vdots & \vdots & \vdots  & \vdots & \vdots & \vdots  & \vdots & \vdots & \vdots & \vdots & \vdots & \vdots & \vdots\\
          n-1 & 0 & x_{n-1,2} & x_{n-1,3} & \dots  & 0 & x_{n-1,n} & x_{n-1,1} & 0 & 0 & \dots & 0 & 0 \\
          n & 0 & x_{n,2} & x_{n,3} & \dots & x_{n,n-1} & 0 & x_{n,1} & 0 & 0 &\dots & 0 & 0  \\
          \cline{2-14}
          n+1 & 0 & 0 & 0 & \dots & 0  & 0 & 0 & 1 & 0 & \dots & 0 & 0  \\
          n+2 & 0 & 0 & 0 & \dots & 0 & 0 & 0 & 0 & 1 & \dots & 0 & 0 \\
           \vdots & \vdots & \vdots  & \vdots & \vdots & \vdots  & \vdots & \vdots & \vdots & \vdots & \vdots & \vdots & \vdots  \\
          m-1 & 0 & 0 & 0 & \dots & 0  & 0 & 0 & 0 & 0 & \dots & 0 & 1  \\
          m & 1 & 0 & 0 & \dots & 0  & 0 & 0 & 0 & 0 & \dots  & 0 & 0\\
        \end{block}
        \end{blockarray}  
    \end{equation}
    Now, let $\sigma \in C_n$. In $HC_m(Y^{(m,n)})$, the term corresponding to $\sigma$ is $\prod_{i=1}^{m}Y^{(m,n)}_{i,\sigma(i)}$. It follows from the definition of $Y^{(m,n)}$ that for the term to be non-zero, it must be that $\sigma(m) = 1$, $\sigma(i) = i+1$ for $i \in [n+1,m-1]$, and $\sigma(1) \neq n+1$. Thus, $\sigma$ is  $(1 \ k_1 \ k_2 \ \dots \ k_n \ n+1 \ n+2 \ \dots \ m-1 \ m)$ where $k_i \in [2,n]$ which implies,
    \[\prod_{i=1}^{m}Y^{(m,n)}_{i,\sigma(i)} = \prod_{i=1}^{n}Y^{(m,n)}_{i,\sigma(i)} \prod_{i=n+1}^{m}Y^{(m,n)}_{i,\sigma(i)} = \prod_{i=1}^{n}Y^{(m,n)}_{i,\sigma(i)}.\]
    For the remaining $i \in [n] \backslash \{k_n\}$, $\sigma(i) \in [2,n]$. This further gives
    \[\prod_{i=1}^{n}Y^{(m,n)}_{i,\sigma(i)} = Y^{(m,n)}_{k_n,n+1}\prod_{i=1, i \neq k_n}^{n}Y^{(m,n)}_{i,\sigma(i)} = x_{k_n,1}  \prod_{i=1, i \neq k_n}^{n}x_{i,\sigma(i)}.\]
    The $n$-cycle $\pi \in C_n$, defined as $\pi = (1 \ k_1 \ k_2 \ \dots \ k_n)$ clearly satisfies $\Phi(\pi) = \sigma$. Since $\phi$ is injective, $\pi$ is the unique pre-image of $\sigma$ under $\Phi$. Hence, we can write
    \[x_{k_n,1} \prod_{i=1, i \neq k_n}^{n}x_{i,\sigma(i)} = \prod_{i=1}^{n} x_{i,\pi(i)}.\]
    Hence, every non-zero term in $HC_m(Y^{(m,n)})$ is a monomial of $HC_n(\vecx)$. Conversely, it is easy to see from the map $\Phi$ that every $\pi \in C_n$ maps to a unique $\sigma \in C_n$ and that $\prod_{i=1}^{m}Y^{(m,n)}_{i,\sigma(i)} = \prod_{i=1}^{n} x_{i,\pi(i)}$. Thus, $HC_m(Y^{(m,n)}) = HC_n(\vecx)$. It is easy to see that $Y^{(m,n)}$ can be constructed in $O(m^2)$ time.

\subsection{Proof of Lemma \ref{lemma-HCn-downself}} \label{proof-HCn-downself}
    We first prove the forward direction. It suffices to show that for all $n \geq 3$ it holds that 
    \[HC_n(\vecx) = \sum_{i=2}^{n}x_{1,i}HC_{n-1}(X^{(i)}_{n-1}),\]
   where the matrix $X^{(i)}_{n-1}$ is obtained from $X_n$ by first swapping row $i$ with row $2$ and column $i$ with column $2$, and then removing row $1$ and column $2$ from the resulting matrix. By Lemma \ref{lemma-HCn-monotone}, for all $k < n$, we can express $HC_k(X_k)$ as $HC_n(Y^{(n,k)})$ and $HC_{k-1}(X^{(i)}_{k-1})$ as $HC_n(Y^{(n,k-1)}_i)$, completing the proof of the forward direction.\footnote{We assume the diagonal entries of $X_n$ are replaced by $0$'s} For $n = 3$ and $n = 4$, it can be easily verified that the above equality holds. Thus, we assume $n \geq 5$. We denote $X^{(i)}_{n-1}$ by $Z^{(i)}$ from now on for ease of exposition. Based on the above description, $Z^{(i)}$ looks as
    \begin{equation} \label{eqn-HCn-downself}
    \begin{blockarray}{cccccccccc}
         & 1 & 2 & 3 & \dots & i-2 & i-1 & i & \dots  & n -1   \\
        \begin{block}{c(ccccccccc@{\hspace*{2pt}})}
         1 & x_{i,1} & x_{i,3} & x_{i,4} & \dots & x_{i,i-1}  & x_{i,2} & x_{i,i+1} & \dots  & x_{i,n}\\
        2 & x_{3,1} & 0 & x_{3,4} & \dots & x_{3,i-1}  & x_{3,2} & x_{3,i+1} & \dots  & x_{3,n}\\
        3 & x_{4,1} & x_{4,3} & 0 & \dots & x_{4,i-1}  & x_{4,2} & x_{4,i+1} & \dots  & x_{4,n} \\
        \vdots & \vdots & \vdots  & \vdots & \vdots & \vdots  & \vdots & \vdots & \vdots  & \vdots \\
        i-2 & x_{i-1,1} & x_{i-1,3} & x_{i-1,4} & \dots & 0  & x_{i-1,2} & x_{i-1,i+1} & \dots  & x_{i-1,n}\\
        i-1 & x_{2,1} & x_{2,3} & x_{2,4} & \dots & x_{2,i-1}  & 0 & x_{2,i+1} & \dots  & x_{2,n}\\
        i & x_{i+1,1} & x_{i+1,3} & x_{i+1,4} & \dots & x_{i+1,i-1}  & x_{i+1,2} & 0 & \dots & x_{i+1,n}\\
        \vdots & \vdots & \vdots  & \vdots & \vdots & \vdots  & \vdots & \vdots & \vdots & \vdots \\
        n-2 & x_{n-1,1} & x_{n-1,3} & x_{n-1,4} & \dots & x_{n-1,i-1}  & x_{n-1,2} & x_{n-1,i+1} & \dots  & x_{n-1,n}\\
        n-1 & x_{n,1} & x_{n,3} & x_{n,4} & \dots & x_{n,i-1}  & x_{n,2} & x_{n,i+1} & \dots  & 0\\
        \end{block}
        \end{blockarray}  
    \end{equation}   
   Formally, the entries of $Z^{(i)}$ are defined as follows \begin{equation*}
         Z^{(i)}_{k,\ell} = \begin{cases} 
        x_{k+1,\ell+1} & k,\ell \in [n-1] \backslash \{1,i-1\}, \\ 
        x_{k+1,1} & k \in [n-1] \backslash \{1,i-1\}, \ell = 1 \\ 
        x_{k+1,2} & k \in [n-1] \backslash \{1,i-1\}, \ell = i-1, \\ 
        x_{i,\ell+1} & k = 1,\ell \in [n-1] \backslash \{1,i-1\}, \\ 
        x_{i,2} & k = 1, \ell = i-1, \\ 
        x_{2,\ell+1} & k = i - 1,\ell \in [n-1] \backslash \{1,i-1\}, \\ 
        x_{2,1} & k = i - 1,\ell = 1, \\ 
        x_{i,1} & k = 1, \ell = 1 \\
        0 & \text{ otherwise},\\
       \end{cases}
     \end{equation*}
    Note that we can replace $x_{i,1}$ by $0$ in $X^{(i)}_{n-1}$ as $HC_{n-1}$ does not depend on the diagonal entries. We show that
    \[HC_{n-1}(Z^{(i)}) = \sum_{\substack{\sigma \in C_n, \\ \sigma(1) = i}}\prod_{j=2}^{n}x_{j,\sigma(j)},\]
    Showing the above equality for a single $i$ suffices as the proof is similar for all $i$. Let $\pi \in C_{n-1}$. Then in $HC_{n-1}(Z^{(i)})$, the term corresponding to $\pi$ is
    \[\prod_{j=1}^{n-1}Z^{(i)}_{j, \pi(j)}.\]
     Based on the values of $\pi(1)$ and $\pi(i-1)$, we have the following cases:
    \begin{enumerate}
        \item $\pi(1) = i-1$. Then $\pi(i-1) = k$ and $\pi(\ell) = 1$, where $k, \ell \in [n-1] \backslash \{1,i-1\}$ and $k \neq \ell$ as $n \geq 5$. In this case
        \[\prod_{j=1}^{n-1}Z^{(i)}_{j, \pi(j)} = Z^{(i)}_{1,i-1} Z^{(i)}_{i-1,k} Z^{(i)}_{\ell,1} \prod_{j \neq 1,i-1,\ell} Z^{(i)}_{j, \pi(j)} = x_{i,2} x_{2,k+1} x_{\ell+1,1} \prod_{j \neq 1,i-1,\ell} x_{j+1, \pi(j)+1} \]
        \item $\pi(1) \neq i-1$ and $\pi(i-1) = 1$. Then $\pi(1) = k$ and $\pi(\ell) = i-1$, where $k, \ell \in [n-1] \backslash \{1,i-1\}$ and $k \neq \ell$ as $n \geq 5$. In this case
        \[\prod_{j=1}^{n-1}Z^{(i)}_{j, \pi(j)} = Z^{(i)}_{1,k} Z^{(i)}_{i-1,1} Z^{(i)}_{\ell,i-1} \prod_{j \neq 1,i-1,\ell} Z^{(i)}_{j, \pi(j)} = x_{i,k+1}x_{2,1}x_{\ell+1,2}\prod_{j \neq 1,i-1,\ell} x_{j+1, \pi(j)+1}\]
        \item $\pi(1) \neq i-1$ and $\pi(i-1) \neq 1$. Let $\pi(1) = j_1, \pi(i-1) = j_2$, $\pi(\ell) = i-1$ and $\pi(k) = 1$, with $j_1,j_2,\ell,k \in [n-1]$. In this case
        \[\prod_{j=1}^{n-1}Z^{(i)}_{j, \pi(j)} = Z^{(i)}_{1,j_1} Z^{(i)}_{i-1,j_2} Z^{(i)}_{\ell,i-1} Z^{(i)}_{k,1}\prod_{j \neq 1,i-1,\ell,k} Z^{(i)}_{j, \pi(j)}\]
        \[ = x_{i,j_1+1}x_{2,j_2 + 1}x_{\ell+1,2}x_{k+1,1}\prod_{j \neq 1,i-1,\ell,k} x_{j+1, \pi(j)+1}\]
    \end{enumerate}
    In all of the above cases, $\pi$ maps to a unique $\sigma \in C_n$ such that $\sigma(1) = i$ and $\prod_{j=1}^{n}x_{j,\sigma(j)} = x_{1,i} \prod_{j=1}^{n-1}Z^{(i)}_{j, \pi(j)}$. Thus, $x_{1,i}HC_{n-1}(X^{(i)}_{n-1})$ is the sum of all Hamiltonian cycles which visit the edge $(1,i)$. Hence,
    \[HC_n(\vecx) = \sum_{i=2}^{n}x_{1,i}\big(\sum_{\substack{\sigma \in C_n, \\ \sigma(1) = i}}\prod_{j=2}^{n}x_{j,\sigma(j)}\big) = \sum_{i=2}^{n}x_{1,i}HC_{n-1}(X^{(i)}_{n-1}), \]   
completing the proof of the forward direction. For the reverse direction, we use induction on $n$.
    
\emph{Base case $n = 2$.} We have by assumption that $f(Y^{(2,2)}) = f(X_2) =  x_{1,2}x_{2,1} = HC_2(X_2)$.
    
\emph{Inductive hypothesis.} Suppose the reverse direction holds for all $k \in [2,n-1]$. Let $f$ be such that it satisfies all $n-1$ identities, where $n > 2$. Then, we have that $f(Y^{(n,k)}) = HC_k(X_k)$ for $k \in [2,n-1]$. Thus, $HC_{n-1}(X^{(i)}_{n-1}) = HC_n(Y^{(n,n-1)}_i) = f(Y^{(n,n-1)}_i)$. Hence, 
    \[f(\vecx) = f(Y^{(n,n)}) = \sum_{i=2}^{n}x_{1,i}f(Y^{(n,n-1)}_i) = \sum_{i=2}^{n}x_{1,i}HC_{n-1}(X^{(i)}_{n-1}) = HC_n(\vecx).\] 
This proves the hypothesis for $n$.  

\subsection{Analysis for $HC_4$} \label{sec-HC4-results}
We first analyse $\mathfrak{g}_{HC_4}$ in Section \ref{subsec-lie-HC4} over all fields. We then determine the scaling symmetries of $HC_4$ over finite fields, $\Q$, $\R$ and $\C$ in Section \ref{subsec-Sym-HC4}. In Section \ref{subsec-charsym-HC4}, we show that $HC_4$ and $HC_3$ are characterised by its symmetries over appropriate fields. 

\subsubsection{Lie Algebra} \label{subsec-lie-HC4}
Proposition \ref{prop-lie-struct-HC4} shows that over characteristic $2$ fields, $\dim_{\F}(\mathfrak{g}_{HC_4})$ is $7$, while over other fields $\dim_{\F}(\mathfrak{g}_{HC_4})$ is $6$. 

\begin{proposition} \label{prop-lie-struct-HC4}
    Let $\F$ be any field. If $\text{char}(\F) \neq 2$, then $\dim_{\F}(\mathfrak{g}_{HC_4}) = 6$ and the set $\mathcal{B}_n$ as described in Proposition \ref{prop-lie-alg-struct} is a basis of $\mathfrak{g}_{HC_4}$ over $\F$. If $\text{char}(\F)$ is $2$, then $\dim_{\F}(\mathfrak{g}_{HC_4}) = 7$ and a basis of $\mathfrak{g}_{HC_4}$ over $\F$ is the set of vectors $\mathcal{B}_n \sqcup \{D\}$, represented as the following matrix
    \begin{equation} \label{eqn-basis-lie-HC4}
    \begin{blockarray}{cccccccc}
        & A^{(2)} & A^{(3)} & A^{(4)} & B^{(2)} & B^{(3)} & C  & D  \\
        \begin{block}{c(ccccccc@{\hspace*{5pt}})}
            (1,2) & 1 & 1 & 1 & 1 & 0 & 1 & 1 \\
            (1,3) & 1 & 1 & 1 & 0 & 1 & 0 & 0 \\
            (1,4) & 1 & 1 & 1 & 0 & 0 & 0 & 0  \\
            (2,1) & 1 & 0 & 0 & 1 & 1 & 1 & 0 \\
            (2,3) & 1 & 0 & 0 & 0 & 1 & 1 & 1 \\
            (2,4) & 1 & 0 & 0 & 0 & 0 & 1 & 0  \\
            (3,1) & 0 & 1 & 0 & 1 & 1 & 0 & 1\\
            (3,2) & 0 & 1 & 0 & 1 & 0 & 1 & 0 \\
            (3,4) & 0 & 1 & 0 & 0 & 0 & 0 & 0  \\
            (4,1) & 0 & 0 & 1 & 1 & 1 & 0 & 0 \\
            (4,2) & 0 & 0 & 1 & 1 & 0 & 1 & 0 \\
            (4,3) & 0 & 0 & 1 & 0 & 1 & 0 & 0  \\
        \end{block}
    \end{blockarray}.  
\end{equation}

\end{proposition}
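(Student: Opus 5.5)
By Proposition~\ref{prop-lie-alg-struct}, which holds for all $n\geq 3$, $\mathfrak{g}_{HC_4}$ is the nullspace of the $6\times 12$ matrix $M^{HC_4}$ defined in \eqref{eqn-LieMat-def}. Its rows are indexed by the six $4$-cycles $(1\,2\,3\,4),(1\,2\,4\,3),(1\,3\,2\,4),(1\,3\,4\,2),(1\,4\,2\,3),(1\,4\,3\,2)$. So $\dim_{\F}(\mathfrak{g}_{HC_4}) = 12-\mathrm{rank}_{\F}(M^{HC_4})$, and the whole proposition reduces to two tasks: computing this rank as a function of $\mathrm{char}(\F)$, and exhibiting enough linearly independent null vectors.

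\emph{Step 1 (a universal dependency in characteristic $2$).} Every edge $(i,j)$ lies on exactly $(4-2)!=2$ Hamiltonian cycles of the complete digraph on $4$ vertices. Hence every column of $M^{HC_4}$ contains exactly two ones, and the sum of all six rows equals $2\cdot\vecone$. Over a characteristic-$2$ field this sum is $\veczero$, so $\mathrm{rank}(M^{HC_4})\leq 5$.

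\emph{Step 2 (rank bounds).} I will choose five of the cycles and five columns and check that the resulting $5\times 5$ $0/1$ minor has determinant $\pm 1$. This gives $\mathrm{rank}\geq 5$ over every field, and hence rank exactly $5$ in characteristic $2$. For characteristic $\neq 2$, I will exhibit a $6\times 6$ minor of $M^{HC_4}$ whose integer determinant is $\pm 2^k$ for some $k\geq 1$; I expect $\pm 2$, consistent with Step~1. Such a minor is invertible exactly when $\mathrm{char}(\F)\neq 2$, so the rank is $6$ there. Finding a good choice of columns, and computing this determinant cleanly (for instance by row reduction that isolates the factor $2$), is the only real computation. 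I expect it to be the main obstacle, although it is a finite check. Steps 1 and 2 together give $\dim = 6$ when $\mathrm{char}(\F)\neq 2$ and $\dim = 7$ when $\mathrm{char}(\F)=2$.

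\emph{Step 3 (the bases).} The verification in the proof of Proposition~\ref{prop-lie-basis} that $A^{(k)}$, $B^{(\ell)}$ and $C$ satisfy $\sum_i z_{(i,\sigma(i))}=0$ for every $\sigma\in C_n$ does not use $n\neq 4$, so $\mathcal{B}_4\subset\mathfrak{g}_{HC_4}$. Likewise, the row-reduction argument showing $\det(M_{R\times\bullet})=\pm1$ specialises to $n=4$ with $R=\{(1,2),(1,3),(2,3),(2,1),(3,1),(1,4)\}$. So $\mathcal{B}_4$ is linearly independent over every field, and for $\mathrm{char}(\F)\neq 2$ it is a basis by the dimension count. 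In characteristic $2$, I will first check that $D\in\mathfrak{g}_{HC_4}$ by verifying that each of the six $4$-cycles uses an even number of the edges in the support of $D$. I will then show that the displayed $12\times 7$ matrix has a $7\times 7$ minor with odd determinant, so the seven vectors are independent over $\F$. With $\dim=7$, this gives the stated basis $\mathcal{B}_4\sqcup\{D\}$.
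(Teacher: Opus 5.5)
Your proposal is correct and follows essentially the same route as the paper. Both arguments determine $\mathrm{rank}_{\F}(M^{HC_4})$ by an explicit finite computation and then reuse the membership and independence arguments for $\mathcal{B}_n$, checking $D$ and a $7\times 7$ minor of odd determinant; the paper does the rank computation by row-reducing to a last row with entries $\pm 2$, while your observation that the six rows sum to $2\cdot\vecone$ explains the characteristic-$2$ dependency more cleanly. The minors you still need do exist: rows $(1\,2\,3\,4),(1\,2\,4\,3),(1\,3\,2\,4),(1\,3\,4\,2),(1\,4\,2\,3)$ with columns $(1,2),(1,3),(1,4),(2,1),(2,3)$ give determinant $-1$, and all six rows with columns $(2,4),(3,2),(3,4),(4,1),(4,2),(4,3)$ give determinant $-2$ (this is the matrix $M_1$ in Lemma~\ref{lemma-ring-soln}).
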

\begin{proof}
Note that Proposition \ref{prop-lie-alg-struct} already describes the structure of any $M \in \mathfrak{g}_{HC_4}$ as a diagonal matrix, the entries of which are the nullspace of the matrix $M^{HC_4}$. We analyse the nullspace of $M^{HC_4}$ over all fields.

\begin{equation} \label{eqn-liemat-HC4}
M^{HC_4} = \begin{blockarray}{ccccccccccccc}
        \begin{block}{c(ccc|ccc|ccc|ccc@{\hspace*{5pt}})}
            (1,2,3,4) & 1 & 0 & 0 & 0 & 1 & 0 & 0 & 0 & 1 & 1 & 0 & 0  \\
            (1,2,4,3) & 1 & 0 & 0 & 0 & 0 & 1 & 1 & 0 & 0 & 0 & 0 & 1  \\
            (1,3,2,4) & 0 & 1 & 0 & 0 & 0 & 1 & 0 & 1 & 0 & 1 & 0 & 0  \\
            (1,3,4,2) & 0 & 1 & 0 & 1 & 0 & 0 & 0 & 0 & 1 & 0 & 1 & 0  \\
            (1,4,2,3) & 0 & 0 & 1 & 0 & 1 & 0 & 1 & 0 & 0 & 0 & 1 & 0  \\
            (1,4,3,2) & 0 & 0 & 1 & 1 & 0 & 0 & 0 & 1 & 0 & 0 & 0 & 1  \\
        \end{block}
    \end{blockarray}  
\end{equation}
\paragraph{Fields of characteristic other than $2$.} By applying Gaussian Elimination on $M^{HC_4}$, we get the following row reduced matrix. Note that each step of Gaussian Elimination uses only additions or subtractions of the rows.
\begin{equation*}                        
    \begin{blockarray}{ccccccccccccc}
        \begin{block}{c(ccc|ccc|ccc|ccc@{\hspace*{5pt}})}
            (1,2,3,4) & 1 & 0 & 0 & 0 & 1 & 0 & 0 & 0 & 1 & 1 & 0 & 0  \\
            (1,2,4,3) & 0 & 0 & 0 & 0 & -1 & 1 & 1 & 0 & -1 & -1 & 0 & 1  \\
            (1,3,2,4) & 0 & 1 & 0 & 0 & 0 & 1 & 0 & 1 & 0 & 1 & 0 & 0  \\
            (1,3,4,2) & 0 & 0 & 0 & 1 & 0 & -1 & 0 & -1 & 1 & -1 & 1 & 0  \\
            (1,4,2,3) & 0 & 0 & 1 & 0 & 1 & 0 & 1 & 0 & 0 & 0 & 1 & 0  \\
            (1,4,3,2) & 0 & 0 & 0 & 0 & 0 & 0 & -2 & 2 & 0 & 2 & -2 & 0  \\
        \end{block}
    \end{blockarray} 
\end{equation*}
As the characteristic is not equal to $2$, $M^{HC_4}$ has row rank $6$ and therefore the dimension of $\mathfrak{g}_{HC_4}$ is $6$ over such fields. The set $\mathcal{B}_n$ described in Proposition \ref{prop-lie-basis} is easily verified to be a basis for $\mathfrak{g}_{HC_4}$. The linear independence of $\mathcal{B}_n$ over any field has already been argued in Section \ref{proof-lie-basis}.

\paragraph{Characteristic $2$ fields.} As before, after Gaussian Elimination the row reduced form of $M^{HC_4}$ is as follows
\begin{equation*}                        
    \begin{blockarray}{ccccccccccccc}
        \begin{block}{c(ccc|ccc|ccc|ccc@{\hspace*{5pt}})}
            (1,2,3,4) & 1 & 0 & 0 & 0 & 1 & 0 & 0 & 0 & 1 & 1 & 0 & 0  \\
            (1,2,4,3) & 0 & 0 & 0 & 0 & 1 & 1 & 1 & 0 & 1 & 1 & 0 & 1  \\
            (1,3,2,4) & 0 & 1 & 0 & 0 & 0 & 1 & 0 & 1 & 0 & 1 & 0 & 0  \\
            (1,3,4,2) & 0 & 0 & 0 & 1 & 0 & 1 & 0 & 1 & 1 & 1 & 1 & 0  \\
            (1,4,2,3) & 0 & 0 & 1 & 0 & 1 & 0 & 1 & 0 & 0 & 0 & 1 & 0  \\
            (1,4,3,2) & 0 & 0 & 0 & 0 & 0 & 0 & 0 & 0 & 0 & 0 & 0 & 0  \\
        \end{block}
    \end{blockarray} 
\end{equation*}
Thus, the first five rows are linearly independent while the last row is a linear combination of the first five rows over characteristic $2$ fields. Hence, the dimension of $\mathfrak{g}_{HC_4}$ is $7$ over such fields. We now show that the columns of the matrix $M$ in \eqref{eqn-basis-lie-HC4} form a basis for $\mathfrak{g}_{HC_4}$. 

The first six columns of $M$ are precisely the basis elements in the non-characteristic $2$ case. It can be easily checked that the last column lies in $\mathfrak{g}_{HC_4}$. The submatrix indexed by the first $7$ rows of $M$ is 
\begin{equation*}                
    \begin{blockarray}{cccccccc}
        & A^{(2)} & A^{(3)} & A^{(4)} & B^{(2)} & B^{(3)} & C  & D  \\
        \begin{block}{c(ccccccc@{\hspace*{5pt}})}
            (1,2) & 1 & 1 & 1 & 1 & 0 & 1 & 1 \\
            (1,3) & 1 & 1 & 1 & 0 & 1 & 0 & 0 \\
            (1,4) & 1 & 1 & 1 & 0 & 0 & 0 & 0  \\
            (2,1) & 1 & 0 & 0 & 1 & 1 & 1 & 0 \\
            (2,3) & 1 & 0 & 0 & 0 & 1 & 1 & 1 \\
            (2,4) & 1 & 0 & 0 & 0 & 0 & 1 & 0  \\
            (3,1) & 0 & 1 & 0 & 1 & 1 & 0 & 1\\
        \end{block}
    \end{blockarray}.  
\end{equation*}
By Gaussian Elimination, it can be verified that the determinant of this submatrix is $1$. Hence, all the columns are linearly independent and thus form a basis for $\mathfrak{g}_{HC_4}$ over characteristic $2$ fields.  \end{proof}

Proposition \ref{prop-lie-descript-char2} describes how the entries of any $\vecz \in \mathfrak{g}_{HC_4}$ are related and can be proved similarly to Corollary \ref{corollary-lie-soln-struct}. Lemma \ref{lemma-ring-soln} is used in analysing the scaling symmetries of $HC_4$ and can be proved similarly as Lemma \ref{lemma-abeliangroup-soln}.

\begin{proposition} \label{prop-lie-descript-char2}
Over characteristic $2$ fields, $\vecz \in \mathfrak{g}_{HC_4}$ if and only if the entries of $\vecz$ satisfy\begin{equation} \label{eqn-lin-form-liebasis-char2}
    \begin{split}
    z_{3,2} &= z_{3,1} + z_{1,2} + z_{2,4} + z_{1,4} + z_{2,1}, \\
    z_{3,4} &= z_{3,1} + z_{1,4} + z_{2,3} + z_{1,3} + z_{2,1}, \\
    z_{4,1} &= z_{1,2} + z_{1,3} + z_{1,4} + z_{2,1} + z_{3,1}, \\
    z_{4,2} &= z_{1,4} + z_{2,3} + z_{3,1} \\
            &= z_{4,1} + z_{1,2} + z_{2,3} + z_{1,3} + z_{2,1}, \\
    z_{4,3} &= z_{1,2} + z_{2,4} + z_{3,1} \\
            &= z_{4,1} + z_{1,3} + z_{2,4} + z_{1,4} + z_{2,1}.
    \end{split}
\end{equation}

\end{proposition}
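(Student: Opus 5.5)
The plan is to mirror the proof of Corollary \ref{corollary-lie-soln-struct}, using the dimension count from Proposition \ref{prop-lie-struct-HC4} in place of Proposition \ref{prop-lie-basis}. We split the twelve coordinates of $\vecz$ into two groups. The seven \emph{free} coordinates are $T^c=\{(1,2),(1,3),(1,4),(2,1),(2,3),(2,4),(3,1)\}$. The five \emph{dependent} ones are $T=\{(3,2),(3,4),(4,1),(4,2),(4,3)\}$. Note that $T^c$ is exactly the set of rows of the $7\times 7$ submatrix shown to have determinant $1$ in the proof of Proposition \ref{prop-lie-struct-HC4}.

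First, we check that the two displayed expressions for $z_{4,2}$ agree in characteristic $2$, and likewise for $z_{4,3}$. Substituting the expression for $z_{4,1}$ gives
\[z_{4,1}+z_{1,2}+z_{2,3}+z_{1,3}+z_{2,1} = z_{1,4}+z_{2,3}+z_{3,1}\pmod 2,\]
and the analogous computation works for $z_{4,3}$. So \eqref{eqn-lin-form-liebasis-char2} is really a system of five equations, one for each $z_{i,j}$ with $(i,j)\in T$. Each equation expresses that dependent coordinate in terms of the coordinates in $T^c$.

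Second, we write this system as $N\vecz=\veczero$ with $N\in\F^{5\times 12}$. After ordering the columns as $T$ followed by $T^c$, the matrix has the form $N=(I_5\mid M)$. It therefore has full row rank $5$ over every field, and its nullspace has dimension $12-5=7$. By Proposition \ref{prop-lie-struct-HC4}, in characteristic $2$ we also have $\dim_\F\mathfrak{g}_{HC_4}=7$, with basis the columns $A^{(2)},A^{(3)},A^{(4)},B^{(2)},B^{(3)},C,D$ of \eqref{eqn-basis-lie-HC4}.

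It then suffices to show that each of these seven vectors satisfies the five equations, computed mod $2$. This puts a $7$-dimensional space inside the $7$-dimensional nullspace of $N$, so the two spaces coincide, which gives both directions of the equivalence. As a cross-check, one could instead show directly that every row of $N$ lies in the $\F_2$-row space of $M^{HC_4}$ from \eqref{eqn-liemat-HC4}. For example, the $z_{4,2}$ equation should be the sum of the rows for $(1,2,3,4)$ and $(1,4,2,3)$, possibly plus further rows.

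The only real work is the bookkeeping: $35$ checks of a vector against an equation, all mod $2$. The step needing the most care is the column $D$, and the equations for $z_{4,2}$ and $z_{4,3}$, where the characteristic-$2$ sign collapse is essential. In characteristic $\neq 2$ these identities fail, which is consistent with $\dim_\F\mathfrak{g}_{HC_4}$ dropping to $6$ there. If a check fails, I would recompute the row reduction of $M^{HC_4}$ over $\F_2$ and solve for the pivot variables in $T$ to recover the correct linear forms.
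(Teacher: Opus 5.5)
Your proposal is correct and follows the paper's own route: the paper says this is proved ``similarly to Corollary~\ref{corollary-lie-soln-struct}''. That means writing the five equations as $(I_5\mid M)\vecz=\veczero$, getting nullity $7$, and checking that the seven basis vectors of Proposition~\ref{prop-lie-struct-HC4} satisfy them (they do, including $D$). One small slip in your optional cross-check: the $z_{4,2}$ equation is exactly the row for $(1\,4\,2\,3)$ by itself, and the $z_{4,3}$ equation is exactly the row for $(1\,2\,4\,3)$, not the sum of the $(1\,2\,3\,4)$ and $(1\,4\,2\,3)$ rows.
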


\begin{lemma} \label{lemma-ring-soln}
    Let $R$ be a commutative ring with identity such that $2$ is invertible in $R$. Consider the system of linear equations given by $M^{(HC_4)}\vecx = 0$ over $R$. Then, the entries of any solution $\vecu$ to this system of equations satisfy \eqref{eqn-lin-form-liebasis}.
\end{lemma}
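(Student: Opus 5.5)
The plan is to imitate the proof of Lemma \ref{lemma-abeliangroup-soln}, with $n=4$. The one change is that the determinant $\pm 1$ there is replaced by a determinant that is a unit in any ring in which $2$ is invertible.

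Specialise \eqref{eqn-lin-form-liebasis} to $n=4$. It expresses the entries indexed by
\[T=\{(2,4),(3,2),(3,4),(4,1),(4,2),(4,3)\}\]
as integer linear forms in the remaining six entries, indexed by $T^c=\{(1,2),(1,3),(1,4),(2,1),(2,3),(3,1)\}$. Split $M^{HC_4}$ from \eqref{eqn-liemat-HC4} as $(M_1\mid M_2)$ with $M_1=M^{HC_4}_{\bullet\times T}$ and $M_2=M^{HC_4}_{\bullet\times T^c}$. Correspondingly split a solution $\vecu$ as $(\vecu_1,\vecu_2)$.

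\textbf{Step 1 (main computation): $\det(M_1)=\pm 2$.} I would extract the $6\times 6$ block directly from \eqref{eqn-liemat-HC4} and row reduce it using only additions and subtractions of rows, as in the proof of Proposition \ref{prop-lie-struct-HC4}. The char $\neq 2$ reduction there ends with the row $(0,\dots,-2,2,0,2,-2,0)$, which suggests exactly one pivot equal to $\pm 2$ and the others $\pm 1$. I expect this to be the main obstacle: the column set must be chosen so that the single non-unit pivot is $\pm 2$ and not $\pm 4$ or $0$. If $T$ misbehaves, I would first recheck the computation against the char-$2$ reduction, where this determinant must vanish, consistent with $\mathfrak{g}_{HC_4}$ jumping to dimension $7$.

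\textbf{Step 2: the rational identity lives in $\Z[1/2]$.} Over $\Q$, the matrix $M^{HC_4}$ has rank $6$ by Proposition \ref{prop-lie-struct-HC4}. Form $N=(I_6\mid M')$ from \eqref{eqn-lin-form-liebasis} as in Appendix \ref{proof-lie-soln-struct}, where row $(i,j)$ of $M'$ holds the linear form for $z_{i,j}$ with its sign changed. Since $\mathcal{B}_4$ satisfies \eqref{eqn-lin-form-liebasis}, $N$ and $M^{HC_4}$ have the same nullspace over $\Q$, and hence the same row space, exactly as in Lemma \ref{lemma-Mn-det1}. Restricting to the columns $T$ gives $M_1^{-1}M^{HC_4}=N$, so $-M_1^{-1}M_2=-M'$, i.e.\ it reproduces the linear forms in \eqref{eqn-lin-form-liebasis}. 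By Step 1, $M_1^{-1}=\operatorname{adj}(M_1)/\det(M_1)$ has entries in $\Z[1/2]$. Thus $M_1^{-1}M_1=I$ and $-M_1^{-1}M_2=-M'$ are identities of matrices over $\Z[1/2]$.

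\textbf{Step 3: transfer to $R$.} Since $2$ is invertible in $R$, there is a unique ring map $\Z[1/2]\to R$. The identities of Step 2 map to identities over $R$. Now let $M^{HC_4}\vecu=\veczero$ over $R$, i.e.\ $M_1\vecu_1+M_2\vecu_2=\veczero$. Multiplying on the left by the image of $M_1^{-1}$ gives
\[\vecu_1=-M_1^{-1}M_2\vecu_2=-M'\vecu_2,\]
which is precisely the statement that the entries of $\vecu$ satisfy \eqref{eqn-lin-form-liebasis}.
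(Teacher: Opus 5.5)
Your proposal is correct and follows essentially the paper's proof. The paper uses the same column split $T=\{(2,4),(3,2),(3,4),(4,1),(4,2),(4,3)\}$, finds $\det(M_1)=-2$ (so your Step~1 goes through as you expected), multiplies by $\operatorname{adj}(M_1)$, and cancels the factor $-2$ using that $2$ is invertible in $R$. The only difference is in matching the result to \eqref{eqn-lin-form-liebasis}. The paper writes out $\operatorname{adj}(M_1)\,(M_1\mid M_2)$ explicitly and reads off the linear forms. You instead obtain $-M_1^{-1}M_2=-M'$ from the row-space comparison with $N$ over $\mathbb{Q}$ and transport it to $R$ via $\mathbb{Z}[1/2]\to R$. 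This is equally valid, given the routine check that $\mathcal{B}_4$ satisfies \eqref{eqn-lin-form-liebasis} at $n=4$, which does hold.
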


\begin{proof}
Let $\vecu$ be a solution. We can then write
    \begin{equation*} 
        \begin{blockarray}{c|c}
            \begin{block}{(c|c@{\hspace*{5pt}})}
            M_1 & M_2 \\
            \end{block}
        \end{blockarray}  
        \cdot
        \begin{blockarray}{(c)}
            \vecu_1 \\
            \vecu_2
        \end{blockarray}
        = \veczero,              
    \end{equation*}
where $M_1$ is the $6 \times 6$ matrix indexed by the set of columns $\{(2,4),(3,2),(3,4),(4,1),(4,2),(4,3)\}$, $M_2$ is the matrix indexed by the remaining columns, $\vecu_1$ and $\vecu_2$ are entries of $\vecu$ corresponding to the set of columns of $M_1$ and $M_2$ respectively. Thus, $M_1$ and $M_2$ are as follows
\begin{equation*}
  M_1 =  \begin{blockarray}{cccccc}
        \begin{block}{(cccccc)@{\hspace*{5pt}})}
               0 & 0 & 1 & 1 & 0 & 0  \\
               1 & 0 & 0 & 0 & 0 & 1  \\
               1 & 1 & 0 & 1 & 0 & 0  \\
               0 & 0 & 1 & 0 & 1 & 0  \\
               0 & 0 & 0 & 0 & 1 & 0  \\
               0 & 1 & 0 & 0 & 0 & 1  \\
        \end{block}
    \end{blockarray} 
    \quad
M_2 =   \begin{blockarray}{cccccc}
        \begin{block}{(cccccc)@{\hspace*{5pt}})}
            1 & 0 & 0 & 0 & 1 & 0 \\
            1 & 0 & 0 & 0 & 0 & 1 \\
            0 & 1 & 0 & 0 & 0 & 0 \\
            0 & 1 & 0 & 1 & 0 & 0 \\
            0 & 0 & 1 & 0 & 1 & 1 \\
            0 & 0 & 1 & 1 & 0 & 0 \\
        \end{block}
    \end{blockarray} \\  
\end{equation*}
It can be verified by Gaussian Elimination that $det(M_1)$ is $-2$. Since $2$ is invertible over $R$, then so is $-2$, implying that the adjoint of $M_1$ is well defined over $R$. Multiplying on the left by the adjoint of $M_1$ gives
\begin{equation*}
     \begin{blockarray}{cccccccccccc}
        \begin{block}{(cccccc|cccccc)@{\hspace*{5pt}})}
               -2 & 0 & 0 & 0 & 0 & 0 & 0 & -2 & 2 & 0 & 2 & 0  \\
               0 & -2 & 0 & 0 & 0 & 0 & 2 & -2 & 0 & -2 & 2 & 2 \\
               0 & 0 & -2 & 0 & 0 & 0 & 0 & -2 & 2 & -2 & 2 & 2 \\
               0 & 0 & 0 & -2 & 0 & 0 & -2 & 2 & -2 & 2 & -4 & -2 \\
               0 & 0 & 0 & 0 & -2 & 0 & 0 & 0 & -2 & 0 & -2 & -2 \\
               0 & 0 & 0 & 0 & 0 & -2 & -2 & 2 & -2 & 0 & -2 & -2\\
        \end{block}
    \end{blockarray} 
    \cdot
        \begin{blockarray}{(c)}
            \vecu_1 \\
            \vecu_2
        \end{blockarray} \\  
        = \veczero.  
\end{equation*}
This then implies 
\begin{equation*}
          -2 \cdot \begin{blockarray}{(c)}
            u_{2,4} + u_{1,3} - u_{1,4} - u_{2,3} \\
            u_{3,2} - u_{1,2} + u_{1,3} + u_{2,1} - u_{2,3} - u_{3,1} \\
            u_{3,4} + u_{1,3} - u_{1,4} + u_{2,1} - u_{2,3} - u_{3,1} \\
            u_{4,1} + u_{1,2} - u_{1,3} + u_{1,4} - u_{2,1} + 2u_{2,3} + u_{3,1} \\
            u_{4,2} + u_{1,4} + u_{2,3} + u_{3,1} \\
            u_{4,3} + u_{1,2} - u_{1,3} + u_{1,4} + u_{2,3} + u_{3,1}
        \end{blockarray} \\  
 = \veczero
\end{equation*}
Since $2$ is invertible in the ring, we can cancel out $-2$ which gives
\begin{equation*}
          \cdot \begin{blockarray}{(c)}
            u_{2,4} + u_{1,3} - u_{1,4} - u_{2,3} \\
            u_{3,2} - u_{1,2} + u_{1,3} + u_{2,1} - u_{2,3} - u_{3,1} \\
            u_{3,4} + u_{1,3} - u_{1,4} + u_{2,1} - u_{2,3} - u_{3,1} \\
            u_{4,1} + u_{1,2} - u_{1,3} + u_{1,4} - u_{2,1} + 2u_{2,3} + u_{3,1} \\
            u_{4,2} + u_{1,4} + u_{2,3} + u_{3,1} \\
            u_{4,3} + u_{1,2} - u_{1,3} + u_{1,4} + u_{2,3} + u_{3,1}
        \end{blockarray} \\  
 = \veczero
\end{equation*}
which essentially means that the entries of $\vecu$ satisfy the linear equations in \eqref{eqn-lin-form-liebasis}, proving the lemma statement.
\end{proof}
\subsubsection{Symmetries} \label{subsec-Sym-HC4}
Lemma \ref{prop-dist-eigen} and Proposition \ref{prop: symmetries-PS} can be proved for $HC_4$ similarly to the general case. Thus, over fields $\F$ such that $|\F| >\binom{12}{2}$, the symmetries of $HC_4$ are generated by permutation and scaling matrices. 

The permutation symmetries of $HC_4$ are the same as those described in Proposition \ref{prop-Psym-gens}. Observation \ref{obs-HC4-sym-charnot2} shows that $HC_4$, over fields of characteristic other than $2$ has a scaling symmetry $D$ which is discrete, that is, $D$ does not satisfy \eqref{eqn-scaling-sym-lie}. The observation can be verified easily using \eqref{eqn-scaling-sym-lie}. In particular, over $\Q$, $\R$, $\C$ and over $\F_q$, such that $2 \mid q - 1$, $HC_4$ has a discrete symmetry.\footnote{$2 \mid q-1$ implies $q$ is the power of an odd prime. Hence, $\text{char}(\F_q) \neq 2$ in this case.} Over characteristic $2$ fields, the symmetry $D$ becomes the identity matrix. Lemma \ref{lemma-scalsym-HC4} describes the scaling symmetries of $HC_4$ over $\Q$, $\R$ and finite fields $\F_q$, where $2 \nmid q - 1$, that is, $q = 2^k$ for some $k > 1$. Lemma \ref{lemma-HC4-discretesym-fq} describes the symmetries over finite fields $\F_q$ where $2 \mid q-1$, and its proof can be adapted for $\C$ and general fields $\F$. 

\begin{observation} \label{obs-HC4-sym-charnot2}
    Let $\F$ be such that $\text{char}(\F) \neq 2$. Then,
    \begin{enumerate}
        \item \label{item-dsym-HC4-gen} The diagonal matrix $D$ with $D_{1,2} = D_{2,3} = D_{3,1} = -1$ and the rest of the entries as $1$ is a discrete scaling symmetry of $HC_4$.
        \item \label{item-csym-HC4-gen} The diagonal matrix $S$, with entries satisfying
        \begin{equation} \label{eqn-contsym-HC4-gen}
    \begin{split}
            S_{1,2} &= \frac{S_{4,2}}{S_{2,4}S_{3,2}S_{4,1}S_{4,3}}, \quad S_{1,3} = \frac{1}{S_{2,4}S_{3,2}S_{4,1}}, \quad S_{1,4} = \frac{S_{3,4}S_{4,2}}{S_{2,4}S^2_{3,2}S_{4,1}S_{4,3}}, \\ 
            S_{2,1} &= \frac{S_{2,4}S_{3,2}S_{4,1}}{S_{3,4}S_{4,2}}, \quad S_{2,3} = \frac{S_{2,4}S_{3,2}S_{4,3}}{S_{3,4}S_{4,2}}, \quad S_{3,1} = \frac{S_{3,2}S_{4,1}}{S_{4,2}}, \\ 
    \end{split}
\end{equation}
        is a continuous scaling symmetry of $HC_4$.
    \end{enumerate}
    When $\text{char}(\F) = 2$, then $D$ is the identity matrix.
\end{observation}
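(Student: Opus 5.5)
The plan is to verify all three claims directly on the six monomials of $HC_4$. A diagonal matrix is a scaling symmetry exactly when, for every $\sigma\in C_4$, the product of its entries over the edges of $\sigma$ equals $1$ (the monomials are distinct and each is simply rescaled). The six cycles and their edge sets are:
\[
\begin{aligned}
(1\,2\,3\,4)&: \{12,23,34,41\}, & (1\,2\,4\,3)&: \{12,24,43,31\}, & (1\,3\,2\,4)&: \{13,32,24,41\},\\
(1\,3\,4\,2)&: \{13,34,42,21\}, & (1\,4\,2\,3)&: \{14,42,23,31\}, & (1\,4\,3\,2)&: \{14,43,32,21\}.
\end{aligned}
\]

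For item \ref{item-dsym-HC4-gen}, I count how many of the three $-1$ positions $\{12,23,31\}$ each cycle uses. The counts are $2,2,0,0,2,0$, which are all even. Hence every product is $1$ and $D\in\mathcal{G}_{HC_4}$.

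To see that $D$ is discrete, I check that $D$ violates the multiplicative relations \eqref{eqn-scaling-sym-lie} that continuous symmetries must satisfy. For example, the relation for $(i,j)=(2,4)$ gives
\[
S_{2,4}=\frac{S_{2,1}S_{1,4}S_{2,3}}{S_{1,3}S_{2,1}},
\]
whose right-hand side evaluates to $-1$ at $D$, while $D_{2,4}=1$. Since $\mathrm{char}(\F)\neq 2$, we have $-1\neq 1$, so the relation fails. To justify that continuous symmetries of $HC_4$ satisfy \eqref{eqn-scaling-sym-lie} when the characteristic is not $2$, I use Proposition \ref{prop-lie-struct-HC4}: there $\mathfrak{g}_{HC_4}$ is spanned by $\mathcal{B}_4$. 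Lemma \ref{lemma-ring-soln} then shows that every element of $\mathfrak{g}_{HC_4}$ satisfies \eqref{eqn-lin-form-liebasis}. Exponentiating, as in Corollary \ref{corollary-continuous-scaling-sym}, turns these linear relations into the multiplicative ones.

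For item \ref{item-csym-HC4-gen}, I take the six entries $S_{2,4},S_{3,2},S_{3,4},S_{4,1},S_{4,2},S_{4,3}$ as free parameters. These are exactly the columns of $M_1$ in Lemma \ref{lemma-ring-soln}. I then substitute \eqref{eqn-contsym-HC4-gen} into each of the six cycle products and check that it cancels to $1$. For instance,
\[
S_{1,3}S_{3,2}S_{2,4}S_{4,1}=\frac{S_{3,2}S_{2,4}S_{4,1}}{S_{2,4}S_{3,2}S_{4,1}}=1,
\]
and the other five products are similar cancellations.

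For continuity over $\C$, I write each free parameter as $e^{t a_{k}}$. The dependent entries are then $e^{t\cdot(\text{linear form in the } a_k)}$, and these linear forms are the additive versions of \eqref{eqn-contsym-HC4-gen}. The resulting exponent vector satisfies $M^{HC_4}\vecz=0$ by the same cancellation, read additively. So $S=e^{tA}$ with $A\in\mathfrak{g}_{HC_4}$, and $S$ is continuous in the sense of Definition \ref{def-contsym}.

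The last claim is immediate: in characteristic $2$ we have $-1=1$, so $D=I$.

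The main obstacle is the bookkeeping of the twelve exponent cancellations in item \ref{item-csym-HC4-gen}. I would organize it by recording, for each cycle, the exponent vector of the six free parameters and checking that it is zero. Equivalently, I would check that $M_1^{-1}M_2$ from Lemma \ref{lemma-ring-soln} reproduces the exponents in \eqref{eqn-contsym-HC4-gen}.
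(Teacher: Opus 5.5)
Your computations are all correct, and your route is the same direct verification the paper intends; the paper only says the observation ``can be verified easily using \eqref{eqn-scaling-sym-lie}''. The following all check out:
\begin{itemize}
\item the six edge sets;
\item the parity counts $2,2,0,0,2,0$ for $D$;
\item the failure of the $(2,4)$-relation $S_{2,4}=S_{1,4}S_{2,3}/S_{1,3}$ at $D$ (right side $-1$, left side $1$);
\item the six cancellations showing that $S$ is a symmetry.
\end{itemize}

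The one step not yet covered is that $S$ is \emph{continuous} when $\F$ is a general field of characteristic $\neq 2$. Your argument $S=e^{tA}$ with $A\in\mathfrak{g}_{HC_4}$ only makes sense over $\C$ (Definition \ref{def-contsym}). Over an arbitrary $\F$, the paper takes ``continuous'' to mean ``satisfies \eqref{eqn-scaling-sym-lie}'', and so do you when you declare $D$ discrete. Being a symmetry does not give this for free, since $D$ is a symmetry that violates it. So you must check the six relations of \eqref{eqn-scaling-sym-lie} for $S$.

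This is quick. From \eqref{eqn-contsym-HC4-gen} one gets $S_{1,3}S_{2,1}=1/(S_{3,4}S_{4,2})$ and $S_{2,3}/(S_{1,3}S_{2,1})=S_{2,4}S_{3,2}S_{4,3}$. After that, $S_{1,4}S_{2,3}/S_{1,3}=S_{2,4}$ and $S_{4,1}S_{2,3}/S_{2,1}=S_{4,3}$. The relations for $(3,2)$, $(3,4)$, $(4,1)$, $(4,2)$ reduce to identities in the same way.

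Alternatively, run your additive argument over $\Q$ instead of $\C$. The exponent vectors of \eqref{eqn-contsym-HC4-gen} lie in $\ker M^{HC_4}$ over $\Q$. Lemma \ref{lemma-ring-soln} with $R=\Q$ then gives \eqref{eqn-lin-form-liebasis} as identities between integer linear forms. These hold multiplicatively in $\F^{\times}$ for every $\F$.

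Finally, your ``equivalently'' remark has the direction reversed. $-M_1^{-1}M_2$ expresses the $M_1$-columns (your free parameters $S_{2,4},\dots,S_{4,3}$) in terms of the other six entries. That is the shape of \eqref{eqn-lin-form-liebasis}, not of \eqref{eqn-contsym-HC4-gen}. Read correctly, substituting the exponents of \eqref{eqn-contsym-HC4-gen} into $\vecu_1=-M_1^{-1}M_2\vecu_2$ and recovering $\vecu_1$ is exactly the continuity check described above.
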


\begin{lemma} \label{lemma-scalsym-HC4}
Over $\F_q$ such that $2 \nmid q-1$,  every scaling symmetry $S$ of $HC_4$ is continuous, that is, the $S_{i,j}$'s satisfy the equations in \eqref{eqn-scaling-sym-lie}. Over $\R$ and $\Q$, the entries $S_{i,j}$ are of the form $(-1)^{z_{i,j}} \cdot r_{i,j}$ with $z_{i,j} \in \F_2$ and $r_{i,j} > 0$ such that the $r_{i,j}$'s satisfy the equations in \eqref{eqn-scaling-sym-lie} and $z_{i,j}$'s satisfy \eqref{eqn-lin-form-liebasis-char2}. The symmetries over $\Q$ and $\R$ given by $S'_{i,j} = (-1)^{z_{i,j}}$ are discrete symmetries if and only if $z_{2,4} \neq z_{1,4}+z_{2,3}+z_{1,3}$.
\end{lemma}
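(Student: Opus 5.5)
The condition on a scaling matrix is multiplicative: $S$ is a scaling symmetry of $HC_4$ if and only if $\prod_{i=1}^4 S_{i,\sigma(i)} = 1$ for each of the six $\sigma \in C_4$. In additive notation this is the system $M^{HC_4}\vecs = \veczero$ over the abelian group $\F^{\times}$, with $M^{HC_4}$ as in \eqref{eqn-liemat-HC4}. I will handle each field by decomposing $\F^{\times}$ into pieces on which the available elimination results apply.

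\emph{Finite fields with $2 \nmid q-1$.} Here $\F_q^{\times}$ is cyclic of odd order $q-1$, so it is a $\Z_{q-1}$-module in which $2$ is invertible. Lemma \ref{lemma-ring-soln} applies with $R = \Z_{q-1}$, viewing $\F_q^{\times}$ through a generator as $\Z_{q-1}$. Its proof only multiplies by the adjugate of $M_1$ (an integer matrix) and cancels the factor $-2$, so the exponents satisfy \eqref{eqn-lin-form-liebasis}. Exponentiating gives \eqref{eqn-scaling-sym-lie}, which means every scaling symmetry is continuous.

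\emph{$\Q$ and $\R$.} Write each nonzero entry uniquely as $S_{i,j} = (-1)^{z_{i,j}} r_{i,j}$ with $r_{i,j} > 0$ and $z_{i,j} \in \F_2$, using $\F^{\times} \cong \{\pm1\} \times \F^{>0}$. Each product condition splits into two conditions: $\prod r = 1$ and $\sum z = 0 \bmod 2$.

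For the positive part, $\R^{>0}$ is uniquely $2$-divisible, so it is a module over $\Z[1/2]$. For $\Q^{>0}$ I cannot use this directly, because square roots need not exist. Instead I will rerun the proof of Lemma \ref{lemma-ring-soln} in the torsion-free group $\Q^{>0}$: multiply by the adjugate to get $-2\cdot(\text{linear form}) = 0$, then cancel the $2$ because $\Q^{>0}$ is torsion-free. This cancellation is the only place where torsion-freeness is needed, and it holds for both $\Q^{>0}$ and $\R^{>0}$. So the $r_{i,j}$ satisfy \eqref{eqn-scaling-sym-lie}.

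For the sign part, the $z_{i,j}$ solve $M^{HC_4}\vecz = 0$ over $\F_2$. By Proposition \ref{prop-lie-descript-char2}, this is equivalent to \eqref{eqn-lin-form-liebasis-char2}. Conversely, any such $r$ and $z$ combine to a symmetry.

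\emph{Discreteness criterion.} I will show that $S'$ with $S'_{i,j} = (-1)^{z_{i,j}}$ is continuous exactly when $z$ also satisfies \eqref{eqn-lin-form-liebasis} modulo $2$, since $\pm1$ entries satisfy \eqref{eqn-scaling-sym-lie} iff their exponents satisfy \eqref{eqn-lin-form-liebasis} mod $2$. The $\F_2$-solution space of \eqref{eqn-lin-form-liebasis-char2} is $7$-dimensional. The solutions of \eqref{eqn-lin-form-liebasis} form the $6$-dimensional span of $\mathcal{B}_4$, which stays linearly independent mod $2$ by the determinant $\pm1$ argument. So the latter is a hyperplane in the former, cut out by one extra linear equation. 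Comparing the two systems, the first equation of \eqref{eqn-lin-form-liebasis} for $(i,j) = (2,4)$ reads $z_{2,4} = z_{2,1} + z_{1,4} + z_{2,3} - z_{1,3} - z_{2,1} = z_{1,4} + z_{2,3} + z_{1,3}$ mod $2$, and \eqref{eqn-lin-form-liebasis-char2} leaves $z_{2,4}$ free. So the extra equation is $z_{2,4} = z_{1,4}+z_{2,3}+z_{1,3}$. Its violation is exactly discreteness, which matches the example $D$ of Observation \ref{obs-HC4-sym-charnot2}.

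The main obstacle I expect is the final bookkeeping: checking that the remaining equations of \eqref{eqn-lin-form-liebasis} mod $2$ are implied by \eqref{eqn-lin-form-liebasis-char2} together with that single extra equation. I will verify this by direct substitution.
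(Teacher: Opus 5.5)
Your proposal is correct, and its skeleton matches the paper's. For $q=2^k$ you use a generator of $\F_q^{\times}$ plus Lemma~\ref{lemma-ring-soln} over $\Z_{q-1}$. Over $\Q$ and $\R$ you split $S_{i,j}=(-1)^{z_{i,j}}r_{i,j}$ and let Proposition~\ref{prop-lie-descript-char2} handle the sign exponents.

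Two sub-steps go differently.

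\emph{Positive parts.} The paper writes $r_{i,j}=2^{w_{i,j}}$, solves the resulting linear system over $\R$ via the $6$-dimensional real Lie algebra of $HC_4$, and obtains $\Q$ by restriction from $\R$. You instead rerun the adjugate computation of Lemma~\ref{lemma-ring-soln} inside the multiplicative groups $\Q^{>0}$ and $\R^{>0}$, and cancel the factor $-2$ by torsion-freeness. This avoids logarithms and the detour through $\R$. It also isolates the real mechanism: only the absence of $2$-torsion matters. That is exactly what separates $\F_{2^k}^{\times}$ and the positive reals from $\F_q^{\times}$ with $q$ odd and from $\R^{\times}$, where $-1$ produces the discrete symmetry of Observation~\ref{obs-HC4-sym-charnot2}.

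\emph{Discreteness criterion.} The paper only asserts that both directions are easily verified, whereas you give a dimension count. That count already completes the proof, so the ``final bookkeeping'' you flag as the main obstacle is not needed. Take the form $z_{2,4}+z_{1,4}+z_{2,3}+z_{1,3}$.
\begin{enumerate}
\item It vanishes on the $6$-dimensional space $H$ of solutions of \eqref{eqn-lin-form-liebasis} mod $2$, and $H$ lies in $V$ by your $\mathcal{B}_4$ argument.
\item It is not identically zero on the $7$-dimensional space $V$ cut out by \eqref{eqn-lin-form-liebasis-char2}, since all four of its variables are free there.
\end{enumerate}
Hence its kernel in $V$ is a $6$-dimensional subspace containing $H$, and therefore equals $H$.
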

\begin{proof}
Let $S$ be a scaling symmetry of $HC_4$. Then we have
    \[ \prod_{i=1}^{4} S_{i,\sigma(i)} = 1  \quad \sigma \in C_4.\]
    \emph{Over $\F_q$.} Note that $2 \nmid q-1$ implies $q = 2^k$. Let $\gamma$ be a generator of $\F_q^{\times}$. Then, we can write $S_{i,j}$ as $\gamma^{z_{i,j}}$. Thus, we can rewrite the above equations as
    \[ \sum_{i=1}^{4} z_{i,\sigma(i)} = 0  \ \ \text{ over } \Z_{q-1},\]
    Since $2 \nmid q-1$, therefore $2$ is invertible in $\Z_{q-1}$. Thus, by Lemma \ref{lemma-ring-soln}, we get that the $z_{i,j}$'s are as described in \eqref{eqn-lin-form-liebasis}. Consequently, the entries $S_{i,j}$ satisfy \eqref{eqn-scaling-sym-lie}.  

    \noindent\emph{Over $\R$.} 
    We can write $S_{i,j}$ as $(-1)^{z_{i,j}} \cdot 2^{w_{i,j}}$, where $2^{w_{i,j}} = r_{i,j}$. Thus, we can rewrite the above equations as
    \[ \sum_{i=1}^{4} z_{i,\sigma(i)} = 0  \ \ \text{ over } \F_2, \text{ and } \sum_{i=1}^{4} w_{i,\sigma(i)} = 0  \ \ \text{ over } \R.\]
    Note that over $\F_2$ and $\R$, we get, from the analysis in Section \ref{subsec-lie-HC4}, that the solutions to both these systems of equations must be described by the $\mathfrak{g}_{HC_4}$ basis elements over $\F_2$ and $\R$, respectively. Hence, the $z_{i,j}$'s satisfy \eqref{eqn-lin-form-liebasis-char2} while the $r_{i,j}$'s satisfy \eqref{eqn-scaling-sym-lie}. Since $\Q$ is a subfield of $\R$, every scaling symmetry over $\Q$ must satisfy the same conditions. The claim regarding $S'_{i,j}$ follows by noting that $z_{2,4} \neq z_{1,4} + z_{2,3} + z_{1,3}$ implies $z_{2,4} = 1 + z_{1,4} + z_{2,3} + z_{1,3}$ and then showing $S'_{i,j}$ does \emph{not} satisfy \eqref{eqn-scaling-sym-lie}.  If $z_{2,4} = z_{1,4} + z_{2,3} + z_{1,3}$, then it is easily verified that $S'_{i,j}$ satisfy \eqref{eqn-scaling-sym-lie}.
\end{proof}
\begin{lemma} \label{lemma-HC4-discretesym-fq}
    Let $\F_q$ be such that $2 \mid q-1$. If $S' \in \mathcal{G}_{HC_4}$ is a scaling symmetry over such fields $\F$, then $S'$ is $D$, $S$, or $DS$, where $D$ and $S$ are as in items \ref{item-dsym-HC4-gen} and \ref{item-csym-HC4-gen} of Observation \ref{obs-HC4-sym-charnot2}. 
\end{lemma}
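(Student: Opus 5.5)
The plan is to turn the multiplicative conditions into linear equations over a cyclic group and split that group into its $2$-part and its odd part. Fix a generator $\gamma$ of $\F_q^{\times}$ and write $S'_{i,j} = \gamma^{z_{i,j}}$ with $z_{i,j} \in \Z_{q-1}$. The condition $\prod_{i=1}^{4} S'_{i,\sigma(i)} = 1$ for all $\sigma \in C_4$ becomes $M^{HC_4}\vecz = \veczero$ over $\Z_{q-1}$. Write $q-1 = 2^a m$ with $m$ odd and $a \geq 1$. By the Chinese remainder theorem, $\Z_{q-1} \cong \Z_{2^a} \times \Z_m$, and the system splits into two systems, one over each factor.

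The odd factor is easy. Over $\Z_m$, $2$ is invertible, so Lemma \ref{lemma-ring-soln} shows that this component satisfies \eqref{eqn-lin-form-liebasis}. Multiplicatively, it is therefore a continuous symmetry satisfying \eqref{eqn-scaling-sym-lie}.

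The main obstacle is the $\Z_{2^a}$ component, because Lemma \ref{lemma-ring-soln} does not apply there. I would reuse its computation: multiplying the system by the adjugate of $M_1$ gives $-2\,\ell_k(\vecu) = 0$ for the six linear forms $\ell_k$ listed in that proof. Over $\Z_{2^a}$ this only says $\ell_k(\vecu) \in \{0, 2^{a-1}\}$. So every solution is a solution of \eqref{eqn-lin-form-liebasis} plus a correction that takes values in $2^{a-1}\Z_{2^a} \cong \F_2$.

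To pin down the correction, I would reduce to the char-$2$ analysis.
\begin{itemize}
\item First observe that $D$ corresponds to $z_{1,2} = z_{2,3} = z_{3,1} = 2^{a-1}$ and $0$ elsewhere. This is a solution that fails \eqref{eqn-lin-form-liebasis}, since $S$ is continuous and $D$ is discrete by Observation \ref{obs-HC4-sym-charnot2}.
\item Next, subtract from $\vecu$ a solution of \eqref{eqn-lin-form-liebasis} chosen to match $\vecu$ on the six free coordinates $T^c$. The difference $\vecw$ is supported on the six dependent coordinates, with values in $2^{a-1}\Z_{2^a}$.
\item It must still satisfy $M^{HC_4}\vecw = \veczero$. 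Identifying $2^{a-1}\Z_{2^a}$ with $\F_2$, this becomes $M^{HC_4}$ over $\F_2$, restricted to the columns of $M_1$. Since $\det M_1 = -2 \equiv 0$, this restricted system has a one-dimensional kernel over $\F_2$. It is spanned by the vector obtained by adding $D$ to a suitable continuous solution.
\end{itemize}

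It follows that every solution is either continuous or continuous plus $D$. Translating back multiplicatively, $S'$ equals $S$ or $DS$, where $S$ satisfies \eqref{eqn-contsym-HC4-gen}; the case $D$ alone is $S = I$. The remaining checks are mechanical: $\ker_{\F_2} M_1$ is one-dimensional, which can be read off from the char-$2$ row reduction in Proposition \ref{prop-lie-struct-HC4} (the rank drops by exactly one), and \eqref{eqn-contsym-HC4-gen} is just a reparametrization of \eqref{eqn-scaling-sym-lie} for $n=4$.
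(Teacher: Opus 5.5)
Your argument is correct, but it is organised differently from the paper's. The paper never splits $\Z_{q-1}$. A few unimodular row operations on $M^{HC_4}$ turn the last row into $2(-z_{3,1}+z_{3,2}+z_{4,1}-z_{4,2})=0$. Since $\frac{q-1}{2}$ is the only element of order $2$ in $\Z_{q-1}$, this splits into the cases where the form equals $0$ or $\frac{q-1}{2}$. Further row reduction then solves each case explicitly; the columns $(1,2),(1,3),(1,4),(2,1),(2,3),(3,1)$ end up as a signed permutation matrix. This gives \eqref{eqn-sol1-sym-HC4} and \eqref{eqn-sol2-sym-HC4} directly in the coordinates of \eqref{eqn-contsym-HC4-gen}.

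You instead reuse the adjugate computation of Lemma \ref{lemma-ring-soln} to write every solution as a continuous one plus a $2$-torsion correction on the columns of $M_1$. You then identify the possible corrections with $\ker_{\F_2}M_1$ and use the non-continuity of $D$ to name its generator. Your route is more structural: it shows why exactly one discrete class appears, tracing it to the same factor $2$ that raises $\dim_{\F}\mathfrak{g}_{HC_4}$ to $7$ in characteristic $2$. The paper's computation is explicit, which spares it your separate check that \eqref{eqn-scaling-sym-lie} and \eqref{eqn-contsym-HC4-gen} describe the same set. They do, because the integer change between the two sets of free coordinates is unimodular.

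Two small remarks. First, the CRT split is unnecessary. The adjugate argument runs verbatim over $\Z_{q-1}$, whose $2$-torsion is $\{0,\frac{q-1}{2}\}$, and in that direct form it extends to $\F^{\times}$ for any $\F$ with $\text{char}(\F)\neq 2$. Second, one-dimensionality of $\ker_{\F_2}M_1$ does not follow from the rank of $M^{HC_4}$ dropping by one, which only gives $\mathrm{rank}_{\F_2}M_1\leq 5$. It does follow from $|\det M_1|=2$, which forces the Smith form $\mathrm{diag}(1,1,1,1,1,2)$. Alternatively, use Proposition \ref{prop-lie-descript-char2}: zeroing the six coordinates outside $M_1$ leaves only $z_{2,4}$ free and forces $z_{3,2}=z_{4,3}=z_{2,4}$ and $z_{3,4}=z_{4,1}=z_{4,2}=0$.
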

\begin{proof}
Suppose $S'$ is a scaling symmetry of $HC_4$. Then we get
    \[ \prod_{i=1}^{4} S'_{i,\sigma(i)} = 1  \quad \sigma \in C_4.\]
Let $\gamma$ be a generator of $F_q^{\times}$ and $S'_{i,j} = \gamma^{z_{i,j}}$. Then we can rewrite the above system as
    \[ \sum_{i=1}^{4} z_{i,\sigma(i)} = 0  \ \ \text{ over } \Z_{q-1},\]
We can express this system as   
\begin{equation} \label{eqn-sym-HC4-fq0}
\begin{blockarray}{cccccccccccc}
        \begin{block}{(ccc|ccc|ccc|ccc@{\hspace*{5pt}})}
             1 & 0 & 0 & 0 & 1 & 0 & 0 & 0 & 1 & 1 & 0 & 0  \\
             1 & 0 & 0 & 0 & 0 & 1 & 1 & 0 & 0 & 0 & 0 & 1  \\
              0 & 1 & 0 & 0 & 0 & 1 & 0 & 1 & 0 & 1 & 0 & 0  \\
              0 & 1 & 0 & 1 & 0 & 0 & 0 & 0 & 1 & 0 & 1 & 0  \\
              0 & 0 & 1 & 0 & 1 & 0 & 1 & 0 & 0 & 0 & 1 & 0  \\
             0 & 0 & 1 & 1 & 0 & 0 & 0 & 1 & 0 & 0 & 0 & 1  \\
        \end{block}
    \end{blockarray}  
    \cdot \vecz
     = \veczero.
\end{equation}
Let $R_i$ denote the $i$'th row of the above matrix. Applying the elementary row operations, $R_2 \to R_2 - R_1$, $R_4 \to R_4 - R_3$, $R_6 \to R_6 - R_5 - R_4 - R_2$  gives us \eqref{eqn-sym-HC4-fq1}. Note that all these operations are invertible, hence the solution set is the same.
\begin{equation} \label{eqn-sym-HC4-fq1}
    \begin{blockarray}{ccccccccccccc}
        \begin{block}{c(ccc|ccc|ccc|ccc@{\hspace*{5pt}})}
            & 1 & 0 & 0 & 0 & 1 & 0 & 0 & 0 & 1 & 1 & 0 & 0  \\
             & 0 & 0 & 0 & 0 & -1 & 1 & 1 & 0 & -1 & -1 & 0 & 1  \\
            & 0 & 1 & 0 & 0 & 0 & 1 & 0 & 1 & 0 & 1 & 0 & 0  \\
             & 0 & 0 & 0 & 1 & 0 & -1 & 0 & -1 & 1 & -1 & 1 & 0  \\
             & 0 & 0 & 1 & 0 & 1 & 0 & 1 & 0 & 0 & 0 & 1 & 0  \\
             & 0 & 0 & 0 & 0 & 0 & 0 & -2 & 2 & 0 & 2 & -2 & 0  \\
        \end{block}
    \end{blockarray} 
      \cdot  \vecz = \veczero .
\end{equation}
It is not hard to see that $\vecz$ is a solution to \eqref{eqn-sym-HC4-fq1} if and only if $\vecz$ is a solution to \eqref{eqn-sym-HC4-fq2} or is in the nullspace (over $\Z_{q-1})$ of the matrix in \eqref{eqn-sym-HC4-fq2}. This follows because $\frac{q-1}{2}$ is the only element of order $2$ under addition in $\Z_{q-1}$.

\begin{equation}  \label{eqn-sym-HC4-fq2}                      
    \begin{blockarray}{cccccccccccc}
        \begin{block}{(ccc|ccc|ccc|ccc@{\hspace*{5pt}})}
             1 & 0 & 0 & 0 & 1 & 0 & 0 & 0 & 1 & 1 & 0 & 0  \\
              0 & 0 & 0 & 0 & -1 & 1 & 1 & 0 & -1 & -1 & 0 & 1  \\
             0 & 1 & 0 & 0 & 0 & 1 & 0 & 1 & 0 & 1 & 0 & 0  \\
              0 & 0 & 0 & 1 & 0 & -1 & 0 & -1 & 1 & -1 & 1 & 0  \\
              0 & 0 & 1 & 0 & 1 & 0 & 1 & 0 & 0 & 0 & 1 & 0  \\
              0 & 0 & 0 & 0 & 0 & 0 & -1 & 1 & 0 & 1 & -1 & 0  \\
        \end{block}
    \end{blockarray} 
      \cdot  \vecz = 
      \begin{blockarray}{c}
          \begin{block}{(c)}
              0 \\
              0 \\
              0 \\
              0 \\
              0 \\
              \frac{q-1}{2} \\
          \end{block}
      \end{blockarray}
\end{equation}
Applying the elementary row operations, $R_5 \to R_5 + R_2$, $R_1 \to R_1 + R_2$, $R_5 \to R_5 + 2R_6$, $R_2 \to R_2 + R_6$ and $R_1 \to R_1 + R_6$ gives us \eqref{eqn-sym-HC4-fq3}. These operations are invertible, hence the solution set remains unchanged.
\begin{equation}  \label{eqn-sym-HC4-fq3}                      
    \begin{blockarray}{cccccccccccc}
        \begin{block}{(ccc|ccc|ccc|ccc@{\hspace*{5pt}})}
             1 & 0 & 0 & 0 & 0 & 1 & 0 & 1 & 0 & 1 & -1 & 1  \\
              0 & 0 & 0 & 0 & -1 & 1 & 0 & 1 & -1 & 0 & -1 & 1  \\
             0 & 1 & 0 & 0 & 0 & 1 & 0 & 1 & 0 & 1 & 0 & 0  \\
              0 & 0 & 0 & 1 & 0 & -1 & 0 & -1 & 1 & -1 & 1 & 0  \\
              0 & 0 & 1 & 0 & 0 & 1 & 0 & 2 & -1 & 1 & -1 & 1  \\
              0 & 0 & 0 & 0 & 0 & 0 & -1 & 1 & 0 & 1 & -1 & 0  \\
        \end{block}
    \end{blockarray} 
      \cdot  \vecz = 
      \begin{blockarray}{c}
          \begin{block}{(c)}
              \frac{q-1}{2} \\
              \frac{q-1}{2} \\
              0 \\
              0 \\
              0 \\
              \frac{q-1}{2} \\
          \end{block}
      \end{blockarray}
\end{equation}
Thus, from \eqref{eqn-sym-HC4-fq3}, we get that any solution $\vecz$ to the system of equations in \eqref{eqn-sym-HC4-fq1} is of the form
\begin{equation} \label{eqn-sol1-sym-HC4}
    \begin{split}
    z_{1,2} &= \frac{q-1}{2} - z_{2,4} - z_{3,2} - z_{4,1} + z_{4,2} -z_{4,3}\\
    z_{1,3} &= -z_{2,4}-z_{3,2}-z_{4,1}\\
    z_{1,4} &= -z_{2,4}-2z_{3,2}+z_{3,4}-z_{4,1} + z_{4,2} - z_{4,3} \\
    z_{2,1} &=  z_{2,4} + z_{3,2} - z_{3,4} + z_{4,1}-z_{4,2} \\
    z_{2,3} &= \frac{q-1}{2} + z_{2,4} + z_{3,2} - z_{3,4} - z_{4,2} + z_{4,3}, \\
    z_{3,1} &= \frac{q-1}{2} + z_{3,2} + z_{4,1} - z_{4,2}.
   \end{split}
\end{equation}
or is of form
\begin{equation} \label{eqn-sol2-sym-HC4}
\begin{split}
    z_{1,2} &= -z_{2,4} - z_{3,2} - z_{4,1} + z_{4,2} -z_{4,3}\\
    z_{1,3} &= -z_{2,4}-z_{3,2}-z_{4,1}\\
    z_{1,4} &= -z_{2,4}-2z_{3,2}+z_{3,4}-z_{4,1} + z_{4,2} - z_{4,3} \\
    z_{2,1} &=  z_{2,4} + z_{3,2} - z_{3,4} + z_{4,1}-z_{4,2} \\
    z_{2,3} &=  z_{2,4} + z_{3,2} - z_{3,4} - z_{4,2} + z_{4,3}, \\
    z_{3,1} &=  z_{3,2} + z_{4,1} - z_{4,2}.
\end{split}
\end{equation}
Thus, we get that $S'_{i,j} = \gamma^{z_{i,j}}$ are either as
\begin{equation} \label{eqn-sol-sym-HC4-finalform1}
\begin{split}
 S'_{1,2} &= -\frac{S'_{4,2}}{S'_{2,4}S'_{3,2}S'_{4,1}S'_{4,3}}, \quad S'_{1,3} = \frac{1}{S'_{2,4}S'_{3,2}S'_{4,1}}, \quad S'_{1,4} = \frac{S'_{3,4}S'_{4,2}}{S'_{2,4}S'^2_{3,2}S'_{4,1}S'_{4,3}}, \\ 
            S'_{2,1} &= \frac{S'_{2,4}S'_{3,2}S'_{4,1}}{S'_{3,4}S'_{4,2}}, \quad S'_{2,3} = -\frac{S'_{2,4}S'_{3,2}S'_{4,3}}{S'_{3,4}S'_{4,2}}, \quad S'_{3,1} = -\frac{S'_{3,2}S'_{4,1}}{S'_{4,2}}, \\     
\end{split}
\end{equation}
or as
\begin{equation} \label{eqn-sol-sym-HC4-finalform2}
\begin{split}
 S'_{1,2} &= -\frac{S'_{4,2}}{S'_{2,4}S'_{3,2}S'_{4,1}S'_{4,3}}, \quad S'_{1,3} = \frac{1}{S'_{2,4}S'_{3,2}S'_{4,1}}, \quad S'_{1,4} = \frac{S'_{3,4}S'_{4,2}}{S'_{2,4}S'^2_{3,2}S'_{4,1}S'_{4,3}}, \\ 
            S'_{2,1} &= \frac{S'_{2,4}S'_{3,2}S'_{4,1}}{S'_{3,4}S'_{4,2}}, \quad S'_{2,3} = -\frac{S'_{2,4}S'_{3,2}S'_{4,3}}{S'_{3,4}S'_{4,2}}, \quad S'_{3,1} = -\frac{S'_{3,2}S'_{4,1}}{S'_{4,2}}. \\     
\end{split}
\end{equation}
Hence, any scaling symmetry $S'$ is of the form $D$, $S$, or $DS$, where $D$ and $S$ are as per items \ref{item-dsym-HC4-gen} and \ref{item-csym-HC4-gen} of Observation \ref{obs-HC4-sym-charnot2} respectively. 
\end{proof}
\paragraph{Over $\C$ and general fields $\F$.} \label{para-HC4sym-F} The proof of Lemma \ref{lemma-HC4-discretesym-fq} can be adapted to work over $\C$, and general fields $\F$ where $\text{char}(\F) \neq 2$, by treating the system of equations given by $\prod_{i=1}^{4} S'_{i,\sigma(i)} = 1$ over $\C$ (resp. $\F$) as a system of linear equations over $\C^{\times}$ (resp. $\F^{\times}$) where $+$ corresponds to multiplication and $0$ is the multiplicative identity $1$. Thus, we can represent this system just like \eqref{eqn-sym-HC4-fq0}, where the entries of $\vecz$ are the $S'_{i,j}$'s and $\veczero$ is replaced by the all ones vector $\vecone$. Applying Gaussian Elimination gives \eqref{eqn-sym-HC4-fq1}. Then, $S'$ is a solution to \eqref{eqn-sym-HC4-fq1} if and only if $S'$ is a solution to \eqref{eqn-sym-HC4-fq2}, with $(q-1)/2$ replaced by $-1$ since $x^2 = 1$ has $1$ and $-1$ as the only solutions over $\F$, or $S'$ is in the nullspace (over $\F^{\times}$) of the matrix in \eqref{eqn-sym-HC4-fq2}. Then, we further perform elementary row operations to obtain \eqref{eqn-sym-HC4-fq3}, which then shows that $S'_{i,j}$'s satisfy \eqref{eqn-sol-sym-HC4-finalform1} or \eqref{eqn-sol-sym-HC4-finalform2}. Thus, $S'$ must be of form $D$,$S$, or $DS$.

Over fields $\F$ where $\text{char}(\F) = 2$, the matrix $D$ in the proof becomes the identity matrix. Hence, we get that all the scaling symmetries of $HC_4$ are continuous over such $\F$.

\subsubsection{Characterisation by symmetries} \label{subsec-charsym-HC4}
The existence of the symmetry $D$ as described in item \ref{item-dsym-HC4-gen} of Observation \ref{obs-HC4-sym-charnot2} helps show that $HC_4$ is characterised by its symmetries over large enough fields $\F$ with $\text{char}(\F) \neq 2$ as we show in Proposition \ref{prop-symchar-HC4}. We use the approach in the proof of Proposition 3.4.5 in \cite{grochowPhD}, where it is shown that the Permanent is characterised by its symmetries. When $\text{char}(\F) = 2$, then Proposition \ref{prop-nonsymchar-HC4-F2} shows that $HC_4$ is \emph{not} characterised by its symmetries. Proposition \ref{prop-symchar-HC3} shows $HC_3$ is characterised by its symmetries when $|\F| > \binom{6}{2}$.

\begin{proposition} \label{prop-symchar-HC4}
    Over any field $\F$ such that $|\F| > \binom{12}{2}$ and $\text{char}(\F) \neq 2$, $HC_4$ is characterised by its symmetries.
\end{proposition}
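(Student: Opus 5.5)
Let $g$ be a homogeneous degree-$4$ polynomial in the $12$ variables $\vecx$ with $\mathcal{G}_{HC_4} \subseteq \mathcal{G}_g$. The plan has three steps: first use the scaling symmetries of $HC_4$ to cut down the monomials that can appear in $g$, then use the discrete symmetry $D$ of Observation \ref{obs-HC4-sym-charnot2} to exclude the remaining non-cyclic monomials, and finally use the permutation symmetries $P^{(\sigma)}$ of Proposition \ref{prop-psym} to force all surviving coefficients to be equal. This follows the outline of Proposition 3.4.5 in \cite{grochowPhD}.

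\emph{Step 1 (row/column scalings).} For $r_1,\dots,r_4,c_1,\dots,c_4 \in \F^{\times}$ with $\prod_{i} r_i c_i = 1$, let $S$ be the diagonal matrix with $S_{i,j} = r_i c_j$. Every monomial $m_\sigma$, $\sigma \in C_4$, is scaled by $\prod_i r_i c_{\sigma(i)} = 1$, so $S \in \mathcal{G}_{HC_4} \subseteq \mathcal{G}_g$.

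Since $S$ is diagonal, $g(S\vecx)=g(\vecx)$ means that each monomial $\vecx^{e}$ of $g$ is fixed individually. Write $\rho_i$ and $\kappa_j$ for the row and column degrees of $e$. Invariance then reads $\prod_i r_i^{\rho_i} c_i^{\kappa_i} = 1$ for all admissible $r,c$.

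Substitute $c_4 = (\prod_i r_i \prod_{j<4} c_j)^{-1}$. This gives a Laurent-monomial identity in $7$ free variables whose exponents are $\rho_i - \kappa_4$ and $\kappa_j - \kappa_4$. Each of these exponents lies in $[-4,4]$.

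Over $\F$ infinite, or finite with $|\F|-1 > 8$ (guaranteed by $|\F| > \binom{12}{2}$), take a generator, or an element of large order, for each free variable. This forces all these exponents to vanish, so $\rho_i = \kappa_j = \kappa_4$ for all $i,j$. Since $\sum_i \rho_i = 4$, every row and column degree equals $1$.

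Hence $e$ is the support of a permutation matrix. Because diagonal variables are absent, the permutation is a derangement of $[4]$: either a $4$-cycle or one of $(1\,2)(3\,4)$, $(1\,3)(2\,4)$, $(1\,4)(2\,3)$. This exponent-bounding step is where the field size is needed, and I expect it to be the only delicate point: one must avoid exponent congruences modulo $|\F|-1$, and $|\F|-1 > 8$ suffices.

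\emph{Step 2 (killing double transpositions).} Apply $D$ from item \ref{item-dsym-HC4-gen} of Observation \ref{obs-HC4-sym-charnot2}, which has $-1$ at $(1,2),(2,3),(3,1)$ and $1$ elsewhere. Each double-transposition monomial contains exactly one of $x_{1,2},x_{2,3},x_{3,1}$: $x_{1,2}x_{2,1}x_{3,4}x_{4,3}$ contains $x_{1,2}$, $x_{1,3}x_{3,1}x_{2,4}x_{4,2}$ contains $x_{3,1}$, and $x_{1,4}x_{4,1}x_{2,3}x_{3,2}$ contains $x_{2,3}$. So $D$ multiplies each of these monomials by $-1$. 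Since $\text{char}(\F)\neq 2$, invariance of $g$ under $D$ forces their coefficients to be $0$. Therefore $g = \sum_{\sigma \in C_4} a_\sigma m_\sigma$.

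\emph{Step 3 (equalising coefficients).} By the proof of Proposition \ref{prop-psym}, $P^{(\tau)}$ sends $m_\sigma$ to $m_{\tau\sigma\tau^{-1}}$. Conjugation by $S_4$ acts transitively on the $4$-cycles, and $P^{(\tau)} \in \mathcal{G}_g$, so all $a_\sigma$ are equal. Thus $g = c\cdot HC_4$, with $c \neq 0$ whenever $g \neq 0$, which is the claimed characterisation.
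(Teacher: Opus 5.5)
Your proof is correct and follows essentially the same route as the paper. Scaling symmetries force every monomial of $g$ to be a derangement monomial, the discrete symmetry $D$ (using $\text{char}(\F)\neq 2$) eliminates the three double-transposition monomials, and the permutation symmetries $P^{(\tau)}$, acting by conjugation transitively on $C_4$, make all the remaining coefficients equal. Your version is more explicit than the paper's: you use the full row/column scaling family $S_{i,j}=r_ic_j$ with an exponent argument rather than one-parameter scalings, and you rely only on these specific matrices lying in $\mathcal{G}_{HC_4}$, not on the structure theorem the paper invokes at the start.
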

\begin{proof}
    The assumption on the field size ensures that $\mathcal{G}_{HC_4}$ is generated by permutation and scaling matrices. Suppose $f(\vecx)$ is a $12$-variate homogeneous\footnote{A polynomial $f$ is homogeneous if all the monomials in $f$ have the same degree.} degree-$4$ polynomial such that $\mathcal{G}_{HC_4} \subseteq \mathcal{G}_f$. Let $m$ be a monomial in $f(\vecx)$ and $x_{i,j}$ be a variable in $m$, with $i \neq j$. Let $c \in \F^{\times}$, $k \in [4]$ and $k \neq i$. Apply the mapping $x_{k,\ell} \mapsto c\cdot x_{k,\ell}$ for all $\ell \in [4] \backslash \{k\}$, and $x_{\ell,j} \mapsto c^{-1}\cdot x_{\ell,j}$ for all $\ell \neq j$. It is easily verified that for all $k \neq i$, the aforementioned mapping is a scaling symmetry of $HC_4$ satisfying item \ref{item-csym-HC4-gen} of Observation \ref{obs-HC4-sym-charnot2}, and therefore is also a scaling symmetry of $f$. This implies $m$ must contain for all $k \neq i$ some variable $x_{k,\ell}$ because $x_{i,j}$ is scaled down by $c$ and if all variables in $m$ are of form $x_{i, \ell}$ then the above mapping is not a scaling symmetry of $f$, a contradiction. Thus, $m$ contains, for all $k \in [4]$, some $x_{k, \ell}$. Since $f$ is homogeneous, $m$ must contain exactly one $x_{k, \ell}$ for all $k \in [4]$. 
    
    The above argument can be adapted to further show that $m$ must correspond to a permutation $\pi \in S_4$ with $\pi(i) \neq i$ for all $i \in [4]$, since $\vecx$ does \emph{not} contain $x_{i,i}$ variables. Further, consider $f(D\vecx)$, where $D$ is as described in item \ref{item-dsym-HC4-gen} of Observation \ref{obs-HC4-sym-charnot2}. If $m$ corresponded to one of the non-cyclic permutations $(1 \ 2)(3 \ 4)$, $(1 \ 3)(2 \ 4)$, or $(1 \ 4)(2 \ 3)$, then it is not hard to see that under $D$, $m$ gets mapped to $-m$, a contradiction. Thus, $m$ corresponds to a cyclic permutation in $C_4$. Finally, applying all the permutation symmetries of $HC_4$ on $f$ shows that for every $\sigma \in C_4$, there must be a monomial corresponding to it present in $f$. Hence $f = \alpha. HC_4$ for some $\alpha \in \F^{\times}$.
\end{proof}

\begin{proposition} \label{prop-nonsymchar-HC4-F2}
    Over any field $\F$ such that $|\F| > \binom{12}{2}$ and $\text{char}(\F) = 2$, $HC_4$ is \emph{not} characterised by its symmetries.
\end{proposition}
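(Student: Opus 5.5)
Over characteristic $2$ the discrete scaling symmetry $D$ of Observation \ref{obs-HC4-sym-charnot2} collapses to the identity, so the argument of Proposition \ref{prop-symchar-HC4} cannot separate the cyclic monomials from the fixed-point-free involutions. I would therefore copy the construction of Proposition \ref{prop-nonchar}. Take $\pi = (1\ 2)(3\ 4)$, set $m_\pi = x_{1,2}x_{2,1}x_{3,4}x_{4,3}$, and define
\[ g(\vecx) = x_{1,2}x_{2,1}x_{3,4}x_{4,3} + x_{1,3}x_{3,1}x_{2,4}x_{4,2} + x_{1,4}x_{4,1}x_{2,3}x_{3,2}, \]
the sum of the monomials of the three fixed-point-free involutions in $S_4$. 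Clearly $g$ is nonzero, homogeneous of degree $4$, and not a scalar multiple of $HC_4$, since the supports are disjoint. It remains to show $\mathcal{G}_{HC_4} \subseteq \mathcal{G}_g$.

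\textbf{Reduction to generators.} Since $|\F| > \binom{12}{2}$, the analogue of Proposition \ref{prop: symmetries-PS} for $HC_4$ (noted at the start of Section \ref{subsec-Sym-HC4}) shows every symmetry has the form $PS$, with $P$ a permutation symmetry and $S$ a scaling symmetry. This holds because if $PS$ is a symmetry with $P$ a permutation and $S$ diagonal, then $P$ maps monomials of $HC_4$ to monomials up to scalars, so $P$ is itself a permutation symmetry and hence $S$ is a scaling symmetry. It therefore suffices to treat the two kinds separately.

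\textbf{Permutation symmetries.} By Proposition \ref{prop-Psym-gens}, every permutation symmetry is of the form $P^{(\sigma)}$ or $P^{(\sigma)}P^{(T)}$. Conjugating a permutation by $\sigma$, or inverting it, preserves cycle type. Hence these maps permute the three involution monomials among themselves, and $g$ is fixed.

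\textbf{Scaling symmetries.} This is the main step. By the characteristic-$2$ analysis after Lemma \ref{lemma-HC4-discretesym-fq}, every scaling symmetry of $HC_4$ over such $\F$ is continuous, so it satisfies \eqref{eqn-contsym-HC4-gen} of Observation \ref{obs-HC4-sym-charnot2}. Writing each $S_{1,2}, \dots, S_{3,1}$ in terms of the free parameters $S_{2,4}, S_{3,2}, S_{3,4}, S_{4,1}, S_{4,2}, S_{4,3}$, I would check directly that each of the three products $S_{1,2}S_{2,1}S_{3,4}S_{4,3}$, $S_{1,3}S_{3,1}S_{2,4}S_{4,2}$ and $S_{1,4}S_{4,1}S_{2,3}S_{3,2}$ equals $1$. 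For the first product, the substitution gives
\[ \frac{S_{4,2}}{S_{2,4}S_{3,2}S_{4,1}S_{4,3}} \cdot \frac{S_{2,4}S_{3,2}S_{4,1}}{S_{3,4}S_{4,2}} \cdot S_{3,4}S_{4,3} = 1, \]
and the other two are similar routine cancellations.

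An alternative for this step is additive. Put $S_{i,j} = e^{z_{i,j}}$ formally, or use the Abelian group $\F^{\times}$, and check that each involution's indicator vector lies in the row span of $M^{HC_4}$ over $\Z$, or that the relation holds modulo the defining equations. The direct substitution is simplest, and it is where I expect any bookkeeping errors, so I would verify all three cases explicitly. Given these, every symmetry of $HC_4$ fixes $g$, so $HC_4$ is not characterised by its symmetries.
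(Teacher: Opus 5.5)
Your proposal is correct and is essentially the paper's proof. You use the same polynomial (the sum of the three fixed-point-free involution monomials), the reduction to $PS$ form, invariance of cycle type under $P^{(\sigma)}$ and $P^{(T)}$, and the fact that in characteristic $2$ every scaling symmetry has the form \eqref{eqn-contsym-HC4-gen}; your explicit substitution checks out for all three products and supplies the verification the paper leaves as easy.

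One caution about your additive aside: the indicator vector $v$ of $(1\,2)(3\,4)$ is \emph{not} in the $\Z$-row span of $M^{HC_4}$. The symmetry $D$, which negates this monomial when $\text{char}(\F)\neq 2$, rules that out. Only $2v$ lies in that span, namely as the sum of the rows for $(1\,2\,3\,4),(1\,2\,4\,3),(1\,3\,4\,2),(1\,4\,3\,2)$ minus the rows for $(1\,3\,2\,4),(1\,4\,2\,3)$. So the additive route needs the absence of $2$-torsion in $\F^{\times}$ in characteristic $2$, though your main argument does not depend on it.
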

\begin{proof}
The condition $|\F| > \binom{12}{2}$ ensures $\mathcal{G}_{HC_4}$ is generated by permutation and scaling symmetries. Since $\text{char}(\F)=2$, all scaling symmetries of $HC_4$ are continuous by the modifications to the proof of Lemma \ref{lemma-HC4-discretesym-fq} as suggested in the paragraph \nameref{para-HC4sym-F} in Appendix \ref{subsec-Sym-HC4}. Over such fields, $HC_4$ is \emph{not} characterised by its symmetries because for the polynomial 
\[f(\vecx) = x_{1,2}x_{2,1}x_{3,4}x_{4,3}+x_{1,3}x_{3,1}x_{2,4}x_{4,2}+x_{1,4}x_{4,1}x_{3,2}x_{2,3},\] it can be easily verified that $\mathcal{G}_{HC_4} \subseteq \mathcal{G}_{f}$ but $f \neq c HC_4$ for any $c \in \F^{\times}$.   
\end{proof}

\begin{proposition} \label{prop-symchar-HC3}
    Over any field $\F$ such that $|\F| > \binom{6}{2}$, $HC_3$ is characterised by its symmetries.
\end{proposition}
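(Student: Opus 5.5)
Following the approach of Proposition \ref{prop-symchar-HC4}, let $f(\vecx)$ be a homogeneous degree-$3$ polynomial in the six variables $x_{1,2},x_{1,3},x_{2,1},x_{2,3},x_{3,1},x_{3,2}$ with $\mathcal{G}_{HC_3} \subseteq \mathcal{G}_f$. Since $|\F| > \binom{6}{2}$, Proposition \ref{prop: symmetries-PS} shows that every symmetry of $HC_3$ has the form $PS$, with the permutation symmetries given by Proposition \ref{prop-Psym-gens} and the scaling symmetries by Proposition \ref{prop-scalesymcont} (which holds for $n=3$). The argument has three steps. First, show that every monomial of $f$ corresponds to a fixed-point-free permutation of $[3]$. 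Second, observe that the only such permutations are the two $3$-cycles. Third, use a permutation symmetry to force the two coefficients to be equal.

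\textbf{Restricting the monomials.} By Corollary \ref{corollary-lie-soln-struct}, the Lie algebra $\mathfrak{g}_{HC_3}$ is the nullspace of $M^{HC_3}$. So a diagonal $S$ is a scaling symmetry of $HC_3$ exactly when $S_{1,2}S_{2,3}S_{3,1}=1$ and $S_{1,3}S_{3,2}S_{2,1}=1$. This gives a four-parameter torus of scaling symmetries, and it contains the following one-parameter families.

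\begin{itemize}
\item Row scaling: multiply row $k$ by $c$ and column $k$ by $c^{-1}$, that is $x_{k,\ell}\mapsto c\,x_{k,\ell}$ and $x_{\ell,k}\mapsto c^{-1}x_{\ell,k}$. Each cycle uses exactly one edge out of $k$ and one edge into $k$, so this is a symmetry.
\item Cycle scaling: $x_{1,2}\mapsto c\,x_{1,2}$, $x_{2,3}\mapsto c^{-1}x_{2,3}$, with the other variables fixed, together with its analogues.
\end{itemize}

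A monomial $\prod x_{i,j}^{e_{i,j}}$ of $f$ must be invariant under every symmetry in this torus. Since $|\F|$ is large, we can choose $c$ so that $c^{d}\neq 1$ for all relevant nonzero exponents $d\le 3$. Invariance then gives linear conditions on the exponent vector $e\in\Z_{\ge 0}^6$: $e$ must lie in the $\Q$-row space of $M^{HC_3}$, i.e.\ $e=a\cdot v_{(1\,2\,3)}+b\cdot v_{(1\,3\,2)}$, where $v_\sigma$ is the $0/1$ indicator of the edges of $\sigma$. The degree is $3$, so $a+b=1$. The two supports are disjoint and the exponents are nonnegative integers, so $(a,b)\in\{(1,0),(0,1)\}$. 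This step needs care to make sure the chosen $c$ avoids the finitely many roots of unity of small order. An exponent difference $d$ with $|d|\le 3$ must satisfy $c^d=1$, and this fails for suitable $c$ provided $\F^\times$ has an element of order greater than $3$, which holds because $|\F|>15$. So the only monomials that survive are $m_{(1\,2\,3)}$ and $m_{(1\,3\,2)}$. This torus argument is the main obstacle and is where I would spend the verification. The cleaner route is to work with the torus character lattice directly, showing that the characters of the torus separate all degree-$3$ monomials except that each invariant monomial is a product of cycle monomials.

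\textbf{Fixing the coefficients.} Write $f=\alpha\, m_{(1\,2\,3)}+\beta\, m_{(1\,3\,2)}$. The permutation symmetry $P^{(T)}$ of Proposition \ref{prop-psym} (equivalently $P^{(\sigma)}$ for $\sigma$ a transposition) swaps these two monomials. Invariance of $f$ under it forces $\alpha=\beta$. Finally, $f\neq 0$ gives $\alpha\neq 0$, so $f=\alpha\cdot HC_3$ with $\alpha\in\F^\times$, as required by Definition \ref{def-char-sym}.
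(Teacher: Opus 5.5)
Your proof is correct and follows the same two-stage strategy as the paper: scaling symmetries force every monomial of $f$ to be one of the two cycle monomials, and then a permutation symmetry equalises their coefficients. The difference is which one-parameter subgroups of the scaling torus you use.

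\begin{itemize}
\item The paper reuses the maps from Proposition~\ref{prop-symchar-HC4}: scale row $k$ by $c$ and column $j$ by $c^{-1}$. These force each monomial to take exactly one variable from each row and each column, i.e.\ to be a permutation monomial. It then notes that a fixed-point-free permutation of $[3]$ is a $3$-cycle.
\item You use the cycle scalings $x_{1,2}\mapsto c\,x_{1,2}$, $x_{2,3}\mapsto c^{-1}x_{2,3}$ and their analogues. These exist because for $n=3$ the two Hamiltonian cycles partition the six variables. They force the exponent vector to be constant on each cycle's edge set, so you land directly on $m_{(1\,2\,3)}^{a}m_{(1\,3\,2)}^{b}$.
\end{itemize}

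As a result, the derangement step announced in your outline, and the row scalings, are never actually used. The paper's choice generalises to every $n$ and mirrors the Permanent argument. Yours is tailored to $n=3$ but is more explicit, notably in deriving $\alpha=\beta$ from invariance under $P^{(T)}$, which the paper leaves implicit.

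Two small corrections. First, the conditions $S_{1,2}S_{2,3}S_{3,1}=S_{1,3}S_{3,2}S_{2,1}=1$ come directly from expanding $HC_3(S\vecx)=\sum_{\sigma\in C_3}\bigl(\prod_i S_{i,\sigma(i)}\bigr)m_\sigma$. They do not follow from Corollary~\ref{corollary-lie-soln-struct}, since the Lie algebra does not determine the scaling symmetry group (compare the discrete symmetries of $HC_4$). Second, Proposition~\ref{prop: symmetries-PS} is not needed, here or in the paper. You only use that the exhibited matrices lie in $\mathcal{G}_{HC_3}$, and the field-size hypothesis serves only to supply some $c\in\F^{\times}$ of order greater than $3$.
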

\begin{proof}
    The assumption on the field size ensures that $\mathcal{G}_{HC_3}$ is generated by permutation and scaling matrices. Suppose $f(\vecx)$ is a $6$-variate homogeneous degree-$3$ polynomial such that $\mathcal{G}_{HC_3} \subseteq \mathcal{G}_f$. Applying the same scaling map as in the proof of Proposition \ref{prop-symchar-HC4}, it can be shown that if $m$ is a monomial of $f$, then $m$ corresponds to a permutation $\sigma \in S_3 \backslash C_3$, such that $\sigma(i) \neq i$ for all $i \in [3]$. This immediately implies that $\sigma \in C_3$. Thus, $m$ corresponds to a cyclic permutation. Finally, applying all the permutation symmetries of $HC_3$ on $m$ shows that for every $\sigma \in C_3$, there must be a monomial corresponding to it present in $f$. Hence $f = \alpha. HC_3$ for some $\alpha \in \F^{\times}$. 
\end{proof}

\section{Missing Proofs from Section \ref{sec-ETalgo}} \label{sec-ETalgo-proofs}
\subsection{Proof of Proposition \ref{prop:PS-equiv-reduction}} \label{proof-PSequiv}
    If $f = HC_n(A\vecx)$, then $\mathfrak{g}_f$ is $2n-2$ dimensional by Lemma \ref{lemma-lieconjug} and Proposition \ref{prop-lie-basis}. As $C = \sum_{i=1}^{2n-2} a_i B_i$, then $B' = ACA^{-1} = \sum_{i=1}^{2n-2} a_i AB_iA^{-1}$. By Lemma \ref{lemma-lieconjug}, $B'$ lies in $\mathfrak{g}_{HC_n}$ and is a diagonal matrix with its entries being an $\F$-linear combination of $a_i$'s. The matrices $A B_i A^{-1}$ are a basis for $\mathfrak{g}_{HC_n}$. Since $|\F| > \binom{n^2-n}{2}$, by Lemma \ref{prop-dist-eigen}, there is a matrix in $\mathfrak{g}_{HC_n}$ with distinct eigenvalues. Thus, there exists a choice of $a_i$'s such that $B'$ has distinct eigenvalues. By the Polynomial Identity Lemma, $B'$, over a random choice of $a_i$'s from $U$, has distinct eigenvalues with probability $\geq 1 -\frac{\binom{n^2-n}{2}}{|U|} > 1 - \frac{1}{3n}$. Thus, $C$ has distinct eigenvalues with high probability because $C$ and $B'$ are similar matrices. Hence, they also have the same characteristic polynomial which factorises over $\F$.
    
    Therefore, there exists $D \in \GL_{n^2-n}(\F)$ such that $C' =D^{-1} C D$ is diagonal. We can compute $D$ by computing the characteristic polynomial of $C$, factorising the polynomial to get the eigenvalues $\lambda_1, \dots, \lambda_{n^2-n}$, which are all distinct, and then finding the basis vector of the nullspace of $C - \lambda_iI_{(n^2-n) \times (n^2-n)}$ for all $i \in [n^2-n]$. The matrix $D$ is formed from all these basis vectors. 
    
    Note that $D$ also diagonalises the $B_i$'s. Hence, the polynomial $g = f(D\vecx) = HC_n(AD\vecx)$ is such that $\mathfrak{g}_g$ is diagonal. By Lemma \ref{lemma-lieconjug}, we get that
    \[C' = D^{-1}CD = (AD)^{-1}\cdot B' \cdot AD,\]
    with $C'$ and $B'$ being diagonal matrices with distinct entries. Then, following the same argument as in the proof of Proposition \ref{prop: symmetries-PS} (see Appendix \ref{proof-symmetries-PS}), we get that $AD = PS$ for some permutation matrix $P$ and scaling matrix $S$.

\paragraph{Complexity analysis.} Step 1 can be performed via Lemma \ref{lemma-liebasis-comp} in $n^{O(1)}$ time. For Step 2, we need access to an efficient univariate polynomial factorisation algorithm to compute $D$. The running time is randomised $n^{O(1)}$.
    
\subsection{Proof of Claim \ref{claim-Ptest-assum}} \label{proof-Ptest-assum}
    Let $P(x_{i,j}) = x_{1,2}$. Let $\sigma \in S_n$ such that $\sigma(1) = i$ and $\sigma(2)=j$. Then $P^{(\sigma)}$, defined as per Proposition \ref{prop-psym}, is a permutation symmetry of $HC_n$. Thus, $HC_n(P^{(\sigma)}\vecx) = HC_n(\vecx)$ implies $HC_n(P^{(\sigma)}P\vecx) = HC_n(P\vecx)$. It then follows that $P^{(\sigma)}P(x_{1,2}) = x_{1,2}$ because
    \[(P^{(\sigma)}P)_{(1,2),(1,2)} = \sum_{a,b \in [n], a \neq b}P^{(\sigma)}_{(1,2),(a,b)}P_{(a,b),(1,2)}\]
    \[ = P^{(\sigma)}_{(1,2),(i,j)}P_{(i,j),(1,2)} = P^{(\sigma)}_{(1,2),(\sigma(1),\sigma(2))}P_{(i,j),(1,2)} = 1.\]
    The third last equality holds because $P_{(i,j)(1,2)} = 1$ by assumption, and the definition of $\sigma$ ensures that the second last equality holds.

\subsection{Proof of Proposition \ref{prop-ptest-correct}} \label{proof-ptest-correct}
Step 1 of Algorithm \ref{alg2} errs with probability at most $\frac{n^5}{|U|}$, where $U \subset \F$, by the Polynomial Identity Lemma and union bound. For $|U| \geq 3n^5$, the error is upper bounded by $\frac{1}{3}$. Repeating the algorithm will boost the success probability. 

The correctness of Steps 2 and 3 follows from Claim \ref{claim-Ptest-assum} and Observation \ref{obs-HCn-pdzero}. By Observation \ref{obs-HCn-pdzero} we have that $P'(x_{2,1})$ must be $x_{i,j}$, proving the correctness of Step 4. Now, the set $T'_{1,2}$ is either the image of $T_{1,2}$ or that of $Q_{1,2}$ under $P$. 

In the first case, the elements of $T'_{1,2}$ are of form $P(x_{1,t})$, $t \in [3,n]$. By mapping  $x_{1,t}$'s to some permutation $\pi$ of the elements of $T'_{1,2}$, we get that $P'P(x_{1,t}) = x_{1,\pi(t)}$. Step 5 then sets $P'(x_{t,1})$ such that $P'P(x_{t,1}) = x_{\pi(t),1}$. Extend $\pi$ to a permutation on $[n]$ by defining $\pi(1) = 1, \pi(2) = 2$. For the remaining $x_{a,b}$ with $a,b \in [2,n]$, we set $P'(x_{a,b})$ consistently in Step 6 so that we get $P'P(x_{a,b}) = x_{\pi(a),\pi(b)}$. Thus, we get that $P'P = P^{(\pi)}$. 

In the second case, the elements of $T'_{1,2}$ are of form $P(x_{t,2})$, $t \in [3,n]$. By mapping  $x_{1,t}$'s to some permutation $\pi$ of the elements of $T'_{1,2}$, we get that $P'P(x_{1,t}) = x_{\pi(t),2}$. Step 5 then sets $P'(x_{t,1})$ such that $P'P(x_{t,1}) = x_{2,\pi(t)}$. Extend $\pi$ to a permutation on $[n]$ by defining $\pi(1) = 1, \pi(2) = 2$. For the remaining $x_{a,b}$ with $a,b \in [2,n]$, we set $P'(x_{a,b})$ consistently in Step 6 so that again we get $P'P(x_{a,b}) = x_{\pi(b),\pi(a)}$. Thus, we get that $P'P = P^{(\pi)}P^{(T)}$.

\paragraph{Complexity analysis.} Step 1 of Algorithm \ref{alg2} uses Lemma \ref{lemma-pd-compute} to compute black-box access to second-order partial derivatives and uses the Polynomial Identity Lemma to check if the derivatives are identically zero, and can be performed in randomised $n^{O(1)}$ time. The remaining steps can be performed in $O(n^4)$ time. Hence, Algorithm \ref{alg2} has running time randomised $n^{O(1)}$.

\subsection{Proof of Claim \ref{claim-Ssym-assum}} \label{proof-Ssym-assum}
As $f(\vecx) = HC_n(S\vecx)$, we define another scaling matrix $D$ by setting $D_{1,j} :=  S_{1,j}^{-1}$ for $j \in [2,n]$, $D_{2,3} = S^{-1}_{2,3}$ and $D_{i,1} = S^{-1}_{i,1}$, where $i \in [2,n-1]$. Set the rest of the $D_{i,j}$'s as described in \eqref{eqn-scaling-sym-lie}. Then, $D \in \mathcal{G}_{HC_n}$ by construction, hence $HC_n(D\vecx) = HC_n(\vecx)$ implies $f = HC_n(S\vecx) = HC_n(DS\vecx)$. Set $S' := DS$ and note that $S'$ satisfies the observation statement.
\subsection{Proof of Proposition \ref{prop-stest-correct}}
\label{proof-stest-correct}
Suppose $\gamma$ is a generator of $\F^{\times}_{q}$. Then we can write $S'_{i,j}$ as $\gamma^{y_{i,j}}$ and $c_{k}$ as $\gamma^{e_{k}}$. This gives the following system of equations over $\Z_{q-1}$
\[\sum_{i=1}^{n} y_{i,\sigma_k(i)} = e_{k} \text{ mod } q-1.\]
In terms of the matrix $M^{(n)}$, the above system can be written as 
\[M^{(n)}. \vecy = \vece\]
where the entries of $\vece$ are the exponents $e_k$. As $\Z_{q-1}$ is a commutative ring, by Lemma \ref{lemma-abeliangroup-general} any solution to the above system of linear equations will give a scaling matrix. Note that $y_{i,j} = 0$ for all $(i,j) \notin T$ by assumption, with $T$ as described in Step 3 of Algorithm \ref{alg3}. Thus, we get a system of $(n-1)(n-2)$ many linear equations in $(n-1)(n-2)$ many $\vecy$ variables over $\Z_{q-1}$ which can be written as
\[M.\vecy = \vece\]
 with $M = M^{(n)}_{\bullet \times T}$ and $det(M) = \beta = \pm 1$ by Lemma \ref{lemma-Mn-det1}. From Cramer's rule, we get that
\[y_{i,j} = \beta \cdot \sum_{k = 1}^{(n-1)(n-2)}e_{k}\alpha_{k,(i,j)} \text{ mod } q-1.\]
Then,
\[S'_{i,j} = \gamma^{y_{i,j}} = \gamma^{\beta \cdot \sum_{k = 1}^{(n-1)(n-2)}e_{k}\alpha_{k,(i,j)} \text{ mod } q-1}.\]
As $c_{k} = \gamma^{e_{k}}$, therefore
\[S'_{i,j} = \prod_{k = 1}^{(n-1)(n-2)}c^{\beta \cdot \alpha_{k,(i,j)} \text{ mod } q-1}_{k}. \]
Clearly, $HC_n(S'\vecx) = HC_n(S\vecx)$ which implies $S'S^{-1}$ is a scaling symmetry of $HC_n$. Note that Algorithm \ref{alg3} does \emph{not} need to compute the discrete logarithm of any element. The algorithm computes $\beta$ and $\alpha_{k,(i,j)}$'s over $\Z_{q-1}$ and raises the coefficients $c_k$ to powers as described above to compute $S'_{i,j}$.

\paragraph{Complexity analysis.} Step 2 can be executed in $n^{O(1)}$ time as shown in the proof of Proposition \ref{prop-lie-dim-lb}. Step 3 and 4 can be performed in $(n \log q)^{O(1)}$ time as $M$ is a 0/1 matrix of dimension $(n-1)(n-2) \times (n-1)(n-2)$, implying the magnitude $|\alpha_{k,(i,j)}|$ of any cofactor is at most $((n-1)(n-2))!$ and has bit complexity $O(n^2 \log n \log q)$. Since $\beta = \pm 1$ by Lemma \ref{lemma-Mn-det1}. Thus, we can compute the cofactors in $(n \log q)^{O(1)}$ time over $\Z_{q-1}$, and the powers of the coefficients $c_k$ by repeated squaring. Thus, the running time is deterministic $(n\log q)^{O(1)}$.

\subsubsection{Correctness and Complexity over $\Q$ and other fields} \label{proof-Sequiv-allfields}
To make Algorithm \ref{alg3} work over other fields, the main changes are in Steps 3 and 4, where all computations involving the matrix $M$ are over $\Z$ now. The argument given above for finite fields can be adapted to argue the correctness over all fields $\F$ by noting that the system we want to solve is 
\[\prod_{i=1}^{n} S'_{i,\sigma_k(i)} = c_{k}\]
which is a system of linear equations over the group $\F^{\times}$, where we interpret multiplication as addition and $1$ as $0$. This system of linear equations can be written as 
\[M^{(n)}.\vecs = \vecc,\] 
where the entries of $\vecs$ are $S'_{i,j}$ and $\vecc$ are the coefficients $c_k$. Lemma \ref{lemma-abeliangroup-general} shows that solving the above system of equations will produce an $S'$ such that $HC_n(S'\vecx) = f(\vecx)$. By Claim \ref{claim-Ssym-assum}, we reduce to solving
\[M.\vecs_1 = \vecc,\] 
where $M = M^{(n)}_{.\times T}$ and the entries of $\vecs_1$ are $S'_{i,j}$ with $(i,j) \in T$. Since $M$ is a 0/1 matrix, and by Lemma \ref{lemma-Mn-det1} $det(M)$ is $\pm 1$, we get that $M^{-1}$ is an integer matrix. We can thus multiply by $M^{-1}$ on the left to get
\[\vecs_1 = M^{-1}\vecc.\]
Note, since we treat $+$ as the multiplication operation, $M^{-1}\vecc$ is computed by raising $c_k$'s or their inverses to integral powers as determined by the rows of $M^{-1}$. Thus, each $S'_{i,j}$ is obtained by multiplying integral powers of the $c_k$'s.  
\paragraph{Complexity analysis.} For the complexity analysis, we can perform Steps 3 and 4 in $n^{O(1)}$ time using repeated squaring required for each $S_{i,j}$. Over $\Q$, the running time is $(nc)^{O(1)}$, where $c$ is the bit complexity of $f$. Over $\R$, $\C$, and general fields $\F$, assuming that we can perform exact arithmetic, we get $n^{O(1)}$ running time.  

\subsection{$S$-equivalence test for $HC_4$} \label{subsec-Sequiv-HC4}
In this section, we show how to perform $S$-equivalence test for $HC_4$ over $\Q$, $\R$, and finite fields $\F_q$.
\paragraph{Over $\F_q$ when $2 \nmid q - 1$.} If $2 \nmid q - 1$, then Algorithm \ref{alg3} continues to work for $HC_4$, because $2$ is invertible in the ring $\Z_{q-1}$ and Lemma \ref{lemma-ring-soln} continues to hold. 

\paragraph{Over $\Q$, $\R$, and $\F_q$ where $2 \mid q - 1$.} We will present the test over $\Q$ and the same will also hold over $\R$ and $\F_q$. Over $\Q$, we assume that rational numbers are given in the form $\frac{a}{b}$, where $a,b \in \Z$, $b \neq 0$. Over $\R$, we assume that the model of computation is able to compute positive square roots \cite{Brent76}. Over $\F_q$, where $\text{char}(\F_q) > 2$ because $2 \mid q - 1$, square root computation can be done in randomised $\log^{O(1)}(q)$ time \cite{Tonelli1891,Shanks73} and in deterministic $\log^{O(1)}(q)$ time assuming the Generalised Reimann Hypothesis \cite{Bach90}.

The main idea is to solve the system of linear equations arising from the coefficients of each monomial. Suppose $f = HC_4(S\vecx)$. Treat the permutations $\sigma \in C_4$ as 4-tuples of the form $(1,i_2,i_3,i_4)$ and order them lexicographically. Then, we have that  
    \[ \prod_{i=1}^{4} S_{i,\sigma_j(i)} = c_{j}  \quad j \in [6].\]
with $c_j \in \Q^{\times}$ as the coefficient corresponding to $\sigma_j$, which can be obtained from $f(\vecx)$ by querying it at the respective monomial. Each $c_{j}$ is of form $c_j = (-1)^{e_j} r_j$, where $r_j > 0$. We create a new diagonal matrix $S'$, with 
$S'_{i,k} = (-1)^{z_{i,k}}{w_{i,k}}$, with $z_{i,k}$ and $w_{i,k}$ as variables over $\F_2$ and $\Q^{>0}$ (the multiplicative group of positive rationals) respectively. We thus have to solve the following systems
\[ \sum_{i=1}^{4} z_{i,\sigma_j(i)} = e_j  \ \ \text{ over } \F_2, \text{ and } \prod_{i=1}^{4} w_{i,\sigma_j(i)} = r_j  \ \ \text{ over } \Q^{>0}.\]
Like in the case for general $n$, we can assume that $z_{1,2},z_{1,3},z_{1,4},z_{2,1},z_{2,3},z_{2,4}$ and $z_{3,1}$ are $0$ by Proposition \ref{prop-lie-descript-char2} and we can solve the system over $\F_2$ separately by adapting Algorithm \ref{alg3}. We now solve the system in $w_{i,k}$ variables, which we can express as follows
\begin{equation} 
\begin{blockarray}{cccccccccccc}
        \begin{block}{(ccc|ccc|ccc|ccc@{\hspace*{5pt}})}
             1 & 0 & 0 & 0 & 1 & 0 & 0 & 0 & 1 & 1 & 0 & 0  \\
             1 & 0 & 0 & 0 & 0 & 1 & 1 & 0 & 0 & 0 & 0 & 1  \\
              0 & 1 & 0 & 0 & 0 & 1 & 0 & 1 & 0 & 1 & 0 & 0  \\
              0 & 1 & 0 & 1 & 0 & 0 & 0 & 0 & 1 & 0 & 1 & 0  \\
              0 & 0 & 1 & 0 & 1 & 0 & 1 & 0 & 0 & 0 & 1 & 0  \\
             0 & 0 & 1 & 1 & 0 & 0 & 0 & 1 & 0 & 0 & 0 & 1  \\
        \end{block}
    \end{blockarray}  
    \cdot \vecw
     =  \begin{blockarray}{c}
          \begin{block}{(c)}
              r_1 \\
              r_2\\
              r_3\\
              r_4\\
              r_5\\
              r_6\\
          \end{block}
      \end{blockarray}
\end{equation}
where $\vecw$ is a vector of variables over $(\Q^{>})^{6}$, addition and subtraction correspond to multiplication and division in $\Q^{>}$, scaling by positive integer to raising to integral powers, scaling by $-1$ to taking the inverse of an element in $\Q^{>}$ and scaling by $0$ means replacing the element by $1$. Then, as in the proof of Lemma \ref{lemma-HC4-discretesym-fq}, Gaussian elimination gives
\begin{equation} \label{eqn-HC4-scalerecov-1}
    \begin{blockarray}{ccccccccccccc}
        \begin{block}{c(ccc|ccc|ccc|ccc@{\hspace*{5pt}})}
            & 1 & 0 & 0 & 0 & 1 & 0 & 0 & 0 & 1 & 1 & 0 & 0  \\
             & 0 & 0 & 0 & 0 & -1 & 1 & 1 & 0 & -1 & -1 & 0 & 1  \\
            & 0 & 1 & 0 & 0 & 0 & 1 & 0 & 1 & 0 & 1 & 0 & 0  \\
             & 0 & 0 & 0 & 1 & 0 & -1 & 0 & -1 & 1 & -1 & 1 & 0  \\
             & 0 & 0 & 1 & 0 & 1 & 0 & 1 & 0 & 0 & 0 & 1 & 0  \\
             & 0 & 0 & 0 & 0 & 0 & 0 & -2 & 2 & 0 & 2 & -2 & 0  \\
        \end{block}
    \end{blockarray} 
    \cdot \vecw
     =  \begin{blockarray}{c}
          \begin{block}{(c)}
              r_1 \\
              \frac{r_2}{r_1}\\
              r_3\\
              \frac{r_4}{r_3}\\
              r_5\\
              \frac{r_6r_3r_1}{r_5r_4r_2}\\
          \end{block}
      \end{blockarray}
\end{equation}
It is verifiable that $\frac{r_6r_3r_1}{r_5r_4r_2}$ is a square using $r_j = \prod_{i=1}^{4} S_{i,\sigma_j(i)}$. 

A solution to \eqref{eqn-HC4-scalerecov-2} is also a solution to \eqref{eqn-HC4-scalerecov-1}. As $r_i$'s are assumed to be given in the form $\frac{a_i}{b_i}$, we can compute the positive square root of $\frac{r_6r_3r_1}{r_5r_4r_2}$ by computing its numerator and denominator, and then using binary search to compute the positive square root of the numerator and the denominator.

\begin{equation} \label{eqn-HC4-scalerecov-2}
    \begin{blockarray}{ccccccccccccc}
        \begin{block}{c(ccc|ccc|ccc|ccc@{\hspace*{5pt}})}
            & 1 & 0 & 0 & 0 & 1 & 0 & 0 & 0 & 1 & 1 & 0 & 0  \\
             & 0 & 0 & 0 & 0 & -1 & 1 & 1 & 0 & -1 & -1 & 0 & 1  \\
            & 0 & 1 & 0 & 0 & 0 & 1 & 0 & 1 & 0 & 1 & 0 & 0  \\
             & 0 & 0 & 0 & 1 & 0 & -1 & 0 & -1 & 1 & -1 & 1 & 0  \\
             & 0 & 0 & 1 & 0 & 1 & 0 & 1 & 0 & 0 & 0 & 1 & 0  \\
             & 0 & 0 & 0 & 0 & 0 & 0 & -1 & 1 & 0 & 1 & -1 & 0  \\
        \end{block}
    \end{blockarray} 
    \cdot \vecw
     =  \begin{blockarray}{c}
          \begin{block}{(c)}
              r_1 \\
              \frac{r_2}{r_1}\\
              r_3\\
              \frac{r_4}{r_3}\\
              r_5\\
              \sqrt{\frac{r_6r_3r_1}{r_5r_4r_2}}\\
          \end{block}
      \end{blockarray}
\end{equation}
We can proceed further with Gaussian elimination to get

\begin{equation}  \label{eqn-HC4-scalerecov-3}                  
    \begin{blockarray}{cccccccccccc}
        \begin{block}{(ccc|ccc|ccc|ccc@{\hspace*{5pt}})}
             1 & 0 & 0 & 0 & 0 & 1 & 0 & 1 & 0 & 1 & -1 & 1  \\
              0 & 0 & 0 & 0 & -1 & 1 & 0 & 1 & -1 & 0 & -1 & 1  \\
             0 & 1 & 0 & 0 & 0 & 1 & 0 & 1 & 0 & 1 & 0 & 0  \\
              0 & 0 & 0 & 1 & 0 & -1 & 0 & -1 & 1 & -1 & 1 & 0  \\
              0 & 0 & 1 & 0 & 0 & 1 & 0 & 2 & -1 & 1 & -1 & 1  \\
              0 & 0 & 0 & 0 & 0 & 0 & -1 & 1 & 0 & 1 & -1 & 0  \\
        \end{block}
    \end{blockarray} 
        \cdot \vecw =  \begin{blockarray}{c}
          \begin{block}{(c)}
            r'_1 \\
            r'_2  \\
            r_3\\
            r'_4\\
            r'_5 \\
            r'_6\\
          \end{block}
      \end{blockarray}
\end{equation}
where $r'_1 = \sqrt{\frac{r_6r_3r_1r_2}{r_5r_4}}$, $r'_2 = \sqrt{\frac{r_6r_3r_2}{r_5r_4r_1}}$, $r'_4 = \frac{r_4}{r_3}$, $r'_5 = \frac{r_6r_3}{r_4}$, and $r'_6 = \sqrt{\frac{r_6r_3r_1}{r_5r_4r_2}}$. It can again be verified that $\frac{r_6r_3r_1r_2}{r_5r_4}$ and $\frac{r_6r_3r_2}{r_5r_4r_1}$ are both squares, and hence we can compute their positive square roots. We can then set $w_{i,j}$'s as
\begin{equation}
\begin{split}
  w_{1,2} &= \sqrt{\frac{r_6r_3r_1r_2}{r_5r_4}} \quad w_{2,1} = \frac{r_4}{r_3} \\
  w_{1,3} &= r_3 \quad w_{2,3} = \sqrt{\frac{r_5r_4r_1}{r_6r_3r_2}}\\
  w_{1,4} &= r_6\frac{r_3}{r_4} \quad w_{3,1} = \sqrt{\frac{r_5r_4r_2}{r_6r_3r_1}}\\
\end{split}    
\end{equation}
and the remaining $w_{i,j}$'s to $1$. Thus, we get $S'_{i,j} = (-1)^{z_{i,j}}w_{i,j} \in \Q^{\times}$. It is then easily verifiable that $HC_4(S'\vecx) = f(\vecx)$. Over $\R$, we can set $S'_{i,j}$'s in the same way, assuming the model of computation allows us to compute square roots exactly. Over $\F_q$, we do not need to consider signs separately and can represent the equations $\prod_{i=1}^{4} S_{i,\sigma_j(i)} = c_j$ as the system of equations in \eqref{eqn-HC4-scalerecov-1} and solve this system of equations as before using square root computations wherever necessary.

\end{document}